\Crefname{paragraph}{Section}{Sections}
\Crefname{observation}{Observation}{Observations}
\title{Collective Graph Exploration Parameterized by Vertex Cover} 
\author{Siddharth Gupta}{University of Warwick, UK}{siddharth.gupta.1@warwick.ac.uk}{https://orcid.org/0000-0003-4671-9822}{Supported by Engineering and Physical Sciences Research Council (EPSRC) grant EP/V007793/1.}
\author{Guy Sa'ar}{Ben Gurion University of the Negev, Israel}{saag@post.bgu.ac.il}{}{Supported in part by the Israeli Smart Transportation Research Center and by the Lynne and William Frankel Center for Computing Science at Ben-Gurion University.}
\author{Meirav Zehavi}{Ben Gurion University of the Negev, Israel}{meiravze@bgu.ac.il}{https://orcid.org/0000-0002-3636-5322}{Supported by the European Research Council (ERC) grant titled PARAPATH.}
\authorrunning{S. Gupta, G. Sa'ar, and M. Zehavi} 
\keywords{Collective Graph Exploration, Parameterized Complexity, Approximation Algorithm, Vertex Cover, Treedepth} 
\begin{document}

\maketitle

\begin{abstract}

We initiate the study of the parameterized complexity of the \cgF (\cg) problem. In \cg, the input consists of an undirected connected graph $G$ and a collection of $k$ robots, initially placed at the same vertex $r$ of $G$, and each one of them has an energy budget of $B$. The objective is to decide whether $G$ can be \emph{explored} by the $k$ robots in $B$ time steps, i.e., there exist $k$ closed walks in $G$, one corresponding to each robot, such that every edge is covered by at least one walk, every walk starts and ends at the vertex $r$, and the maximum length of any walk is at most $B$. Unfortunately, this problem is \NPH even on trees [Fraigniaud {\em et~al.}, 2006]. Further, we prove that the problem remains \WOH parameterized by $k$ even for trees of treedepth $3$. Due to the \textsf{para-NP}-hardness of the problem parameterized by treedepth, and motivated by real-world scenarios, we study the parameterized complexity of the problem parameterized by the vertex cover number ($\vn$) of the graph, and prove that the problem is fixed-parameter tractable (\textsf{FPT}) parameterized by $\vn$. Additionally, we study the optimization version of \cg, where we want to optimize $B$, and design an approximation algorithm with an additive approximation factor of $O(\vn)$.

\end{abstract}

\newpage


\section{Introduction}\label{sec:intro}
\cgF (\cg) is a well-studied problem in computer science and robotics, with various real-world applications such as network management and fault reporting, pickup and delivery services, searching a network, and so on. The problem is formulated as follows: given a set of robots (or agents) that are initially located at a vertex of an undirected graph, the objective is to explore the graph as quickly as possible and return to the initial vertex. A graph is \emph{explored} if each of its edges is visited by at least one robot. In each time step, every robot may move along an edge that is incident to the vertex it is placed at. The total time taken by a robot is the number of edges it traverses. The exploration time is the maximum time taken by any robot. In many real-world scenarios, the robots have limited energy resources, which motivates the minimization of the exploration time~\cite{DBLP:journals/mst/0001DK18}.

The \cg problem can be studied in two settings: \emph{offline} and \emph{online}. In the offline setting, the graph is known to the robots beforehand, while in the online setting, the graph is unknown and revealed incrementally as the robots explore it. While \cg has received considerable attention in the online setting, much less is known in the offline setting (\cref{sec:relatedWorks}). Furthermore, most of the existing results in the offline setting are restricted to trees. Therefore, in this paper, we investigate the \cg problem in the offline setting for general graphs, and present some approximation and parameterized algorithms with respect to the vertex cover number of the graph.

\subsection{Related Works}\label{sec:relatedWorks}
As previously mentioned, the \cg problem is extensively studied in the online setting, where the input graph is unknown. As we study the problem in the offline setting in this paper, we only give a brief overview of the results in the online setting, followed by the results in the offline setting.

Recall that, in the online setting, the graph is unknown to the robots and the edges are revealed to a robot once the robot reaches a vertex incident to the edge. The usual approach to analyze any online algorithm is to compute its \emph{competitive ratio}, which is the worst-case ratio between the cost of the online and the optimal offline
algorithm. Therefore, the first algorithms for \cg focused on the competitive ratios of the algorithms. In~\cite{DBLP:journals/networks/FraigniaudGKP06}, an algorithm for \cg for trees with competitive ratio $O(\frac{k}{\log k})$ was given. Later in~\cite{DBLP:journals/jco/HigashikawaKLT14}, it was shown that this competitive ratio is tight. Another line of work studied the competitive ratio as a function of the vertices and the depth of the input tree~\cite{DBLP:journals/trob/BrassCGX11,DBLP:journals/tcs/DisserMNSS20,DBLP:journals/jco/HigashikawaKLT14,DBLP:conf/mfcs/DyniaKHS06,DBLP:journals/iandc/DereniowskiDKPU15,DBLP:conf/sirocco/OrtolfS14}.
We refer the interested readers to a recent paper by Cosson {\em et~al.}~\cite{DBLP:journals/corr/abs-2301-13307}  and the references within for an in-depth discussion about the results in the online setting. 
 
We now discuss the results in the offline setting. In~\cite{DBLP:journals/dam/AverbakhB96}, it was shown that the \cg problem for edge-weighted trees is \NP-hard even for two robots. In~\cite{DBLP:journals/dam/AverbakhB97,DBLP:journals/dam/NagamochiO04}, an $(2-2/(k+1))$-approximation was given for the optimization version of \cg for edge-weighted trees where we want to optimize $B$. In~\cite{DBLP:journals/networks/FraigniaudGKP06}, the \NP-hardness was shown for \cg for unweighted trees as well. In~\cite{DBLP:conf/arcs/DyniaKS06}, a $2$-approximation was given for the optimization version of \cg for unweighted trees where we want to optimize $B$. In the same paper, it was shown that the optimization version of the problem for unweighted trees is \XP\ parameterized by the number of robots.

\subsection{Our Contribution and Methods}
In this paper, we initiate the study of the \cg problem for general unweighted graphs in the offline setting and obtain the following three results. We first prove that \cg is \FPT\ parameterized by $\vn$, where $\vn$ is the vertex cover number of the input graph. Specifically, we prove the following theorem.

\begin{restatable}[]{theorem}{fptTheorem}\label{th:fptVc}
	\cg is in FPT parameterized by $\vn(G)$, where $G$ is the input graph.
\end{restatable}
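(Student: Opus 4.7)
The plan is to exploit the structural restrictions imposed by a small vertex cover. As a first step, compute a minimum vertex cover $X$ of $G$ via a standard FPT algorithm for \textsc{Vertex Cover}; this takes time FPT in $\vn(G)$. Without loss of generality assume $\vi \in X$, at the cost of increasing the parameter by at most one. Let $I := V(G) \setminus X$, which is an independent set, and partition $I$ into twin classes $C_1, \ldots, C_t$ according to their neighborhoods $N(v) \cap X$; there are at most $t \leq 2^{\vn}$ such classes.

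The key structural observation is that, because $I$ is independent, a robot's closed walk from $\vi$ decomposes into motion inside $G[X]$ interleaved with \emph{excursions} of the form $x \to v \to x'$ with $x, x' \in N(v) \cap X$ for some $v \in I$. Moreover, vertices lying in the same twin class are fully interchangeable: permuting them throughout the robots' walks yields an equivalent solution. This motivates representing a solution by aggregate data per twin class (rather than by individual $I$-vertex identities) and grouping robots by their walk \emph{signature}.

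Building on this representation, the next step is to cast the decision version ``is there a solution of makespan at most $B$?'' as an integer linear program and binary-search on $B$. The variables are, for each signature $\sigma$, the number $n_\sigma \in \mathbb{Z}_{\geq 0}$ of robots of that signature. A signature $\sigma$ records (i) an Eulerian multigraph on $X$ (the edges of $G[X]$ used by the robot, with multiplicities) and (ii) for each pair $(x, C_j) \in X \times \{C_1, \ldots, C_t\}$, the number of excursions from $x$ into $C_j$. Constraints enforce $\sum_\sigma n_\sigma = k$, coverage of every edge of $G[X]$, coverage of every edge incident to $I$ (via sufficient excursions into each twin class, with correct parity at each $v \in C_j$), and an upper bound of $B$ on each signature's total walk length. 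If the number of signatures is bounded by some $f(\vn)$, Lenstra's algorithm solves the ILP in FPT time.

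The principal obstacle is bounding the number of distinct signatures by a function of $\vn$. This requires showing, via a rerouting argument, that one may restrict attention to optimal solutions in which each robot uses each edge of $G[X]$ at most $h(\vn)$ times and performs at most $h(\vn)$ excursions into each twin class. Intuitively, overly repetitive pieces of a long walk can be transferred to another, more lightly loaded, robot while preserving Eulerianity at every vertex and connectivity to $\vi$, and without increasing the makespan. Making this exchange argument rigorous---in particular, simultaneously maintaining parity at all odd-degree vertices and keeping each robot's multigraph connected to $\vi$---is the crux of the proof, after which the ILP has a bounded number of variables and the theorem follows.
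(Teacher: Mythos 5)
There is a genuine gap, and it sits exactly at the step you flag as the crux: the claim that one may restrict to optimal solutions in which each robot performs at most $h(\vn)$ excursions into each twin class is false, and no rerouting argument can establish it. Take $G$ a star with center $c$ (so $\vn(G)=1$, one twin class of leaves) and $k=1$: the single robot must make one excursion per leaf, i.e.\ $\Theta(|V(G)|)$ excursions into that class; with $k$ robots and $kn$ leaves each robot still needs about $n$ excursions. Since your signature stores these counts explicitly, the number of signatures — and hence the number of ILP variables — is not bounded by any $f(\vn)$, so Lenstra's theorem does not apply and the whole scheme is not FPT. The paper circumvents precisely this: the per-robot type records only bounded data (a connected, even-degree ``skeleton'' containing $\vi$, built in an auxiliary graph with boundedly many copies of each twin class, plus the numbers of cycles of each length other than $4$, each at most $2\vn^2$), while the unbounded bulk of the walk is normalized into cycles of length exactly $4$; their number is \emph{not} part of the type but appears through separate variables for (cycle type, robot type) pairs, constrained by a single linear budget inequality per robot type, and a balanced allocation at reconstruction time keeps every individual robot within $B$. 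Some analogue of this ``keep the unbounded quantity out of the type'' idea is unavoidable.

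Two secondary points would also need repair. First, connectivity of a robot's multigraph and its attachment to $\vi$ may be achieved only through independent-set vertices (there need be no edges of $G[X]$ in its walk at all, and distinct vertices of the same twin class may be needed simultaneously), so ``an Eulerian multigraph on $X$'' plus excursion counts does not capture the connectivity constraint; the paper's skeleton lives in a graph $\overline{G}$ with up to $2^{|N_{G^*}(u^*)|}+\vn^2$ copies per class for exactly this reason. Second, covering all edges incident to $I$ is not guaranteed by aggregate excursion counts ``with correct parity'': one must be able to partition the excursions assigned to a class among its vertices so that each vertex's entire neighborhood is covered, which is what the paper's vertex types and the corresponding assignment inequalities (its Equation~3) enforce; your formulation would need an equivalent mechanism.
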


We then study the optimization version of \cg where we want to optimize $B$ and design an approximation algorithm with an additive approximation factor of $O(\vn)$. Specifically, we prove the following theorem.

\begin{restatable}[]{theorem}{approxTheorem}\label{the:approx}
	There exists an approximation algorithm for \cg that runs in time $\OO((|V(G)|+|E(G)|)\cdot k)$, and returns a solution with an additive approximation of $8\cdot\vn(G)$, where $G$ is the input graph and $k$ is the number of robots. 
\end{restatable}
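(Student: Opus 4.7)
My plan is to reduce the $k$-robot problem to the construction of a single near-optimal Chinese Postman tour from $\vi$, split that tour into $k$ roughly equal pieces, and use the vertex cover only to absorb the additive overhead of reconnecting each piece to $\vi$.

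I would begin by computing a $2$-approximate vertex cover $C$ via a maximal matching, in $\mathcal{O}(|V(G)|+|E(G)|)$ time, so $|C| \le 2\vn(G)$. Because $V(G) \setminus C$ is independent, no shortest path contains two consecutive vertices outside $C$, which gives $\mathrm{diam}(G) \le 2|C| \le 4\vn(G)$. A single BFS from $\vi$ then yields a shortest $\vi$-to-$v$ path of length at most $4\vn(G)$ for every $v \in V(G)$. For the lower bound, let $L^{\star}$ denote the length of a shortest closed walk from $\vi$ covering all edges of $G$. For any feasible solution $W_1,\dots,W_k$, the edge multiset $U = \biguplus_i E(W_i)$ contains every edge of $G$ and has even degree at every vertex (since each $W_i$ is closed), so $U$ itself admits an Eulerian closed walk from $\vi$ of length $|U|$, forcing $|U| \ge L^{\star}$. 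Therefore $k\cdot B^{\star} \ge \sum_i |W_i| = |U| \ge L^{\star}$, and integrality of $B^{\star}$ gives $B^{\star} \ge \lceil L^{\star}/k \rceil$.

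Next I would construct a single closed walk $W$ from $\vi$ of length $L^{\star}$. The standard Chinese Postman route is to pair up the edges incident to each $v \in V(G) \setminus C$ into ``through-arcs'' between pairs of its $C$-neighbors (leaving one stub when $\deg(v)$ is odd), and to Eulerian-repair the resulting auxiliary multigraph, which has only $|C| \le 2\vn(G)$ vertices; unrolling the Euler tour back into $G$ yields $W$. I would then cut $W$ into $k$ contiguous segments $S_1,\dots,S_k$ of length at most $\lceil L^{\star}/k\rceil$; if $S_i$ runs from $u_i$ to $v_i$, robot $i$ executes
\[
P_i \;=\; (\text{shortest } \vi\text{-to-}u_i \text{ path}) \,\cdot\, S_i \,\cdot\, (\text{shortest } v_i\text{-to-}\vi \text{ path}).
\]
Each wrapper path has length at most $\mathrm{diam}(G) \le 4\vn(G)$, so $|P_i| \le \lceil L^{\star}/k\rceil + 8\vn(G) \le B^{\star} + 8\vn(G)$. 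The $P_i$ jointly cover $W$, and hence all of $E(G)$, so they form a valid solution; with the BFS tree computed once and reused, emitting all $k$ walks runs in $\mathcal{O}((|V(G)|+|E(G)|)\,k)$ time.

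The main obstacle I foresee is producing $W$ within $\mathcal{O}(\vn(G))$ of $L^{\star}$ inside the near-linear time budget. Exact Chinese Postman via minimum weight $T$-joins is not linear in general graphs, but the vertex cover structure confines all nontrivial matching to the $\mathcal{O}(\vn(G))$ vertices of $C$, and any residual slack of $\mathcal{O}(\vn(G))$ in the length of $W$ can be absorbed (with a slightly different constant) into the same additive $8\vn(G)$ guarantee.
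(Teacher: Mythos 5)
Your skeleton (single near-optimal postman tour, cut into $k$ segments, reattached to $\vi$ by shortest paths) is sound in outline, and your lower bound $B^{\star}\geq\lceil L^{\star}/k\rceil$ via the Eulerian union of the $k$ closed walks is correct. The genuine gap is the construction of the tour $W$ itself: your additive budget, as you account for it, is already exhausted by the two wrapper paths ($2\cdot\mathrm{diam}(G)\leq 4|C|\leq 8\vn(G)$, with zero slack), so the argument only yields the claimed $8\vn(G)$ bound if $W$ has length \emph{exactly} $L^{\star}$. But an exact Chinese Postman tour is not obtainable within $\OO((|V(G)|+|E(G)|)\cdot k)$ time by the route you sketch: the ``Eulerian repair'' of the auxiliary multigraph is a minimum $T$-join problem whose shortest-path distances and matching on $C$ already cost more than the stated budget in general, and your fallback --- accept a tour that is $\OO(\vn(G))$ longer and change the constant --- concedes the theorem as stated. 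There is also a correctness issue hidden in the sketch: after arbitrarily pairing the edges at each independent-set vertex into through-arcs, the auxiliary multigraph on $C$ need not be connected (a vertex $v\notin C$ with neighbours $a,b,c,d$ paired as $(a,b)$ and $(c,d)$ does not connect $a$ to $c$), so the repair must also restore connectivity, which costs additional edges beyond the parity fix and further erodes the constant. The fix is available --- observe that $\mathrm{diam}(G)\leq 2\vn(G)$ directly (the minimum cover can be used in the analysis even though you compute a $2$-approximate one), which frees $4\vn(G)$ of slack to absorb a linear-time, connectivity-aware postman tour of length at most $L^{\star}+4\vn(G)$ --- but your proposal neither makes this observation nor establishes such a tour construction, so the bound of $8\vn(G)$ is not proved as written.

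For comparison, the paper never builds a single global tour. It works per robot directly from the Eulerian reformulation (Lemma~\ref{obs:equivsol}): it pairs the edges at each independent-set vertex and distributes the pairs among the $k$ multisets in a balanced way, distributes the $\vc$--$\vc$ edges greedily, and argues that at this point $\max_i|\widehat{E}_i|\leq B^{\star}$ because \emph{any} feasible solution must also have even degree at every independent-set vertex in every robot graph (a per-robot parity lower bound rather than your aggregate postman lower bound). Connectivity and the parities inside the cover are then restored per robot by adding a spanning tree of a connected cover $\vc'$ plus at most $|\vc'|$ parity-fixing edges, for an additive cost of $2|\vc'|\leq 4|\vc|\leq 8\vn(G)$. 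This sidesteps exactly the step that is missing in your argument.
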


Finally, we show a border of (in-)tractability by proving that \cg is \WOH parametrized by $k$, even for trees of treedepth $3$. Specifically, we prove the following theorem.

\begin{restatable}[]{theorem}{hardTheorem}\label{th:w1hard}
	\cg is \WOH with respect to $k$ even on trees whose treedepth is bounded by $3$.	
\end{restatable}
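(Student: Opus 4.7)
The plan is to give a parameterized reduction from \textsc{Unary Bin Packing} (UBP), which is known to be \WOH parameterized by the number of bins. An instance of UBP has items $a_1,\ldots,a_n$ encoded in unary, $k$ bins, and capacity $B$; I may assume $\sum_i a_i = kB$ by padding with $kB-\sum_i a_i$ dummies of size $1$ (which preserves feasibility, since a unit item fits in any non-full bin).

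Given such an instance, I construct a tree $T$ rooted at $r$ with $N$ children $v_1,\ldots,v_N$ (one per padded item), where $v_i$ has $a_i-1$ leaves; the \cg instance has starting vertex $r$, $k$ robots, and budget $B' := 2B$. The tree has treedepth $3$ (its own rooted structure is a depth-$3$ elimination forest), the construction is polynomial as the input is in unary, and the parameter $k$ is preserved, so this is an \FPT\ reduction.

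The forward direction is direct: given a packing $S_1,\ldots,S_k$, robot $j$ performs a depth-first traversal of the subtrees indexed by $S_j$, using $2\sum_{i\in S_j}a_i \le 2B$ edges. For the reverse, suppose \cg admits a schedule of maximum walk length $2B$. The edge set $E_j$ visited by robot $j$ is a connected subtree of $T$ containing $r$, and walk length is at least $2|E_j|$, so $|E_j|\le B$. Writing $X_{i,j} := [(r,v_i)\in E_j] \in \{0,1\}$ and $Y_{i,j}$ for the number of leaves of $v_i$ inside $E_j$, connectedness yields $Y_{i,j}>0 \Rightarrow X_{i,j}=1$, and $|E_j|=\sum_i(X_{i,j}+Y_{i,j})$. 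Edge coverage requires $\sum_j X_{i,j}\ge 1$ and $\sum_j Y_{i,j}\ge a_i-1$ for each $i$, whence $\sum_j|E_j| \ge N + \sum_i(a_i-1) = \sum_i a_i = kB$. Combining with $\sum_j|E_j|\le kB$ makes all these inequalities tight: each subtree $v_i$ is entered by a unique robot $j(i)$, which must then cover all leaves of $v_i$ by itself, and every robot's load is exactly $B$. The assignment $i\mapsto j(i)$ is a valid packing.

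The main obstacle is the possibility that robots \emph{split} a subtree among themselves, which could in principle let \cg be a \yesinstance\ even when UBP is not. The device I use to rule this out is the engineered tightness $\sum_i a_i = kB$: it forces the chain of inequalities in the reverse direction to be equalities, and any split would produce $\sum_j X_{i,j}\ge 2$, breaking equality. Without the padding, slack in $|E_j|\le B$ could be absorbed by sharing subtrees and the extraction of a packing would fail; this tightness is the technical heart of the reduction.
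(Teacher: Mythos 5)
Your proof is correct and takes essentially the same approach as the paper: the same tree (root $r$ with one child $v_i$ per item, each $v_i$ having $a_i-1$ leaves), the same budget $2B$, and the same key device of padding with unit-size items to force $\sum_i a_i = kB$ — the paper packages this last step as a separate reduction to a problem it calls {\sc Exact Bin Packing} (Lemma~\ref{lem:w1h}), which you simply inline. Your reverse-direction counting argument with $X_{i,j}$ and $Y_{i,j}$ is a compact rephrasing of the paper's multiset/Eulerian-cycle argument, but the substance (each visited edge traversed at least twice, tightness from $|E(T)|=kB$ and $\sum_j|E_j|\le kB$, forced whole-subtree assignment via connectivity) is the same.
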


We first give an equivalent formulation of \cg based on Eulerian cycles (see \cref{obs:equivsol}). We obtain the \FPT\ result by using Integer Linear Programming (ILP).
By exploiting the properties of vertex cover and the conditions given by our formulation, we show that a potential solution can be encoded by a set of variables whose size is bounded by a function of vertex cover. 
%
%
%

To design the approximation algorithm, we give a greedy algorithm that satisfies the conditions given by our formulation. Again, by exploiting the properties of vertex cover, we show that we can satisfy the conditions of our formulation by making optimal decisions at the independent set vertices and using approximation only at the vertex cover vertices.
%
%

To prove the W-hardness, we give a reduction from a variant of {\sc Bin Packing}, called {\sc Exact Bin Packing} (defined in \cref{sec:prelims}). We first prove that {\sc Exact Bin Packing} is \WOH  even when the input is given in unary. We then give a reduction from this problem to \cg to obtain our result.

\subsection{Choice of Parameter}
As mentioned in the previous section, we proved that \cg is \WOH parameterized by $k$ even on trees of treedepth $3$. This implies that we cannot get an \FPT\ algorithm parameterized by treedepth and $k$ even on trees, unless $\FPT =\WO$. Thus, we study the problem parameterized by the vertex cover number of the input graph, a slightly weaker parameter than the treedepth. 

Our choice of parameter is also inspired by several practical applications. For instance, consider a delivery network of a large company. The company has a few major distributors that receive the products from the company and can exchange them among themselves. There are also many minor distributors that obtain the products only from the major ones, as this is more cost-effective. The company employs $k$ delivery persons who are responsible for delivering the products to all the distributors. The delivery persons have to start and end their routes at the company location. Since each delivery person has a maximum working time limit, the company wants to minimize the maximum delivery time among them. This problem can be modeled as an instance of \cg by constructing a graph $G$ that has a vertex for the company and for each distributor and has an edge between every pair of vertices that correspond to locations that can be reached by a delivery person. The $k$ robots represent the $k$ delivery persons and are placed at the vertex corresponding to the company. Clearly, $G$ has a small vertex cover, as the number of major distributors is much smaller than the total number of distributors.

For another real-world example where the vertex cover is small, suppose we want to cover all the streets of the city as fast as possible using $k$ agents that start and end at a specific street. The city has a few long streets and many short streets that connect to them. This situation is common in many urban areas. We can represent this problem as an instance of \cg by creating a graph $G$ that has a vertex for each street and an edge between two vertices if the corresponding streets are adjacent. The $k$ robots correspond to the $k$ agents. Clearly, $G$ has a small vertex cover, as the number of long streets is much smaller than the total number of streets.

\section{Preliminaries}\label{sec:prelims}
For $k\in \mathbb{N}$, let $[k]$ denote the set $\{1,2,\ldots, k\}$. For a multigraph $G$, we denote the set of vertices of $G$ and the multiset of edges of $G$ by $V(G)$ and $E(G)$, respectively. For $u\in V(G)$, the {\em set of neighbors} of $u$ in $G$ is $\mathsf{N}_G(u)=\{v\in V~|~\{u,v\}\in E(G)\}$. When $G$ is clear from the context, we refer to $\mathsf{N}_G(u)$ as $\mathsf{N}(u)$. The {\em multiset of neighbors} of $u$ in $G$ is the multiset $\widehat{\mathsf{N}}_G(u)=\{v\in V~|~\{u,v\}\in E(G)\}$ (with repetition). When $G$ is clear from the context, we refer to $\widehat{\mathsf{N}}_G(u)$ as $\widehat{\mathsf{N}}(u)$. The {\em degree} of $u$ in $G$ is $|\widehat{\mathsf{N}}_G(u)|$ (including repetitions). Let $\widehat{E}$ be a multiset with elements from $E(G)$. Let $\mathsf{Graph}(\widehat{E})$ denote the multigraph $(V',\widehat{E})$, where $V'=\{u~|~\{u,v\}\in \widehat{E}\}$.
A multigraph $H$ is a \emph{submultigraph} of a multigraph $G$ if $V(H) \subseteq V(G)$ and $E(H) \subseteq E(G)$. Let $V'\subseteq V(G)$. We denote the submultigraph induced by $V'$ by $G[V']$, that is, $V(G[V'])=V'$ and $E(G[V'])=\{\{u,v\}\in E(G)~|~ u,v\in V'\}$. Let $U\subseteq V(G)$. Let $G\setminus U$ denote the subgraph $G[V(G)\setminus U]$ of $G$. 

An {\em Eulerian cycle} in a multigraph $\widehat{G}$ is a cycle that visits every edge in $E(\widehat{G})$ exactly once. A {\em vertex cover} of $G$ is $V'\subseteq V(G)$ such that for every $\{u,v\}\in E(G)$, at least one among $u$ and $v$ is in $V'$. The {\em vertex cover number} of $G$ is $\vn(G)=\mathsf{min}\{|V'|~|~V'$ is a vertex cover of $G\}$. When $G$ is clear from context, we refer to $\vn(G)$ as $\vn$. A {\em path} $P$ in $G$ is $(v_0,\ldots,v_\ell)$, where (i) for every $0\leq i\leq \ell$, $v_i\in V(G)$, and (ii) for every $0\leq i\leq \ell-1$, $\{v_i,v_{i+1}\}\in E(G)$ (we allow repeating vertices). The {\em length} of a path $P=(v_0,\ldots,v_\ell)$, denoted by $|P|$, is the number of edges in $P$ (including repetitions), that is, $\ell$. The set of vertices of $P$ is $V(P)=\{v_0,\ldots,v_{\ell-1}\}$. The multiset of edges of $P$ is $E(P)=\{\{v_i,v_{i+1}\}~|~0\leq i\leq \ell-1\}$ (including repetitions). A {\em cycle} $C$ in $G$ is a path $(v_0,\ldots,v_\ell)$ such that $v_0=v_\ell$. A {\em simple cycle} is a cycle $C=(v_0,\ldots,v_\ell)$ such that for every $0\leq i<j\leq \ell-1$, $v_i\neq v_j$. An {\em isomorphism} of a multigraph $G$ into a multigraph $G'$ is a bijection $\alpha:V(G)\rightarrow V(G')$, such that $\{u,v\}$ appears in $E(G)$ $\ell$ times if and only if $\{\alpha(u),\alpha(v)\}$ appears in $E(G')$ $\ell$ times, for an $\ell\in \mathbb{N}$. For a multiset $A$, we denote by $2^A$ the {\em power set} of $A$, that is, $2^A=\{B~|~B\subseteq A\}$. Let $A$ and $B$ be two multisets. Let $A\setminus B$ be the multiset $D\subseteq A$ such that every $d\in A$ appears exactly $\mathsf{max}\{0,d_A-d_B\}$ times in $D$, where $d_A$ and $d_B$ are the numbers of times $d$ appears in $A$ and $B$, respectively. A {\em permutation} of a multiset $A$ is a bijection $\mathsf{Permut}_A:A\rightarrow [|A|]$.

\begin{definition} [{\bf $v_{\mathsf{init}}$-Robot Cycle}]\label{def:RobotWalk}
	Let $G$ be a graph, let $v_{\mathsf{init}}\in V(G)$. A {\em $v_{\mathsf{init}}$-robot cycle} is a cycle $\mathsf{RC}=(v_0=v_{\mathsf{init}},v_1,v_2,\ldots,v_\ell=v_{\mathsf{init}})$ in $G$ for some $\ell$. 
\end{definition}

When $v_{\mathsf{init}}$ is clear from the context, we refer to a $v_{\mathsf{init}}$-robot cycle as a robot cycle.


       
\begin{definition} [{\bf Solution}]\label{def:Sol}
		Let $G$ be a graph, $v_{\mathsf{init}}\in V(G)$ and $k\in \mathbb{N}$. A {\em solution} for $(G,\vi,k)$ is a set of $k$ $\vi$-robot cycles $\{\mathsf{RC}_1,\ldots,\mathsf{RC}_k\}$ with $E(G)\subseteq E(\mathsf{RC}_1)\cup E(\mathsf{RC}_2)\cup \dots \cup E(\mathsf{RC}_k)$. Its {\em value} is $\mathsf{val}(\{\mathsf{RC}_1,\ldots,\mathsf{RC}_k\})=\mathsf{max}\{|E(\mathsf{RC}_1)|,|E(\mathsf{RC}_2)|,\ldots, |E(\mathsf{RC}_k)|\}$ (see \cref{fig:robotCycleA} for an illustration).
\end{definition}

\begin{figure}
	\centering
	\begin{subfigure}[t]{.47\textwidth}
		\centering
		\includegraphics[page=6]{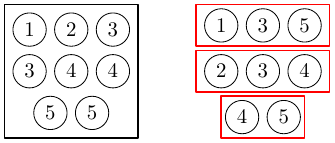}
		\caption{}
		\label{fig:robotCycleA}
	\end{subfigure}\hfill
	\begin{subfigure}[t]{.47\textwidth}
		\centering
		\includegraphics[page=7]{figures/robotExplore}
		\caption{}
		\label{fig:robotCycleB}
	\end{subfigure}
	\caption{(a) An illustration of a graph $G$ (drawn in black) and a solution for $(G, v_{\mathsf{init}}, k = 2)$. The $2$ robot cycles are shown by red and blue edges where the edge labels show the  order in which the edges were covered by the respective robots. (b) The Robot Cycle-Graph for the robot cycle drawn in blue.} 
	\label{fig:robotCycle}
\end{figure}

\begin{definition} [{\bf Collective Graph Exploration with $k$ Agents}]\label{def:kRoPro}
	The \cgF (\cg) problem with $k$ agents is: given a connected graph $G$, $v_{\mathsf{init}}\in V(G)$ and $k\in\mathbb{N}$, find the minimum $B$ such that there exists a solution $\{\mathsf{RC}_1,\ldots,\mathsf{RC}_k\}$ where $\mathsf{val}(\{\mathsf{RC}_1,\ldots,\mathsf{RC}_k\})$ $=B$.
\end{definition}

\begin{definition} [{\bf Collective Graph Exploration with $k$ Agents and Budget $B$}]\label{def:kRoProDec}
	The \cgF (\cg) problem with $k$ agents and budget $B$ is: given a 
	connected graph $G$, $v_{\mathsf{init}}\in V(G)$ and $k,B\in\mathbb{N}$, find a solution $\{\mathsf{RC}_1,\ldots,\mathsf{RC}_k\}$ where $\mathsf{val}(\{\mathsf{RC}_1,\ldots,\mathsf{RC}_k\})\leq B$, if such a solution exists; otherwise, return ``no-instance''.
\end{definition}

\begin{definition} [{\bf Bin Packing}]\label{def:BinPack}
	The {\sc Bin Packing} problem is: given a finite set $I$ of items, a size $s(i)\in \mathbb{N}$ for each $i\in I$, a positive integer $B$ called bin capacity and a positive integer $k$, decide whether there is a partition of $I$ into disjoint sets $I_1,\ldots,I_k$ such that for every $1\leq j\leq k$, $\sum_{i\in I_j} s(i)\leq B$.
\end{definition}

\begin{definition} [{\bf Exact Bin Packing}]\label{def:ExBinPack}
	The {\sc Exact Bin Packing} problem is: given a finite set $I$ of items, a size $s(i)\in \mathbb{N}$ for each $i\in I$, a positive integer $B$ called bin capacity and a positive integer $k$ such that $\sum_{i\in I}s(i)=B\cdot k$, decide whether there is a partition of $I$ into disjoint sets $I_1,\ldots,I_k$ such that for every $1\leq j\leq k$, $\sum_{i\in I_j} s(i)=B$.
\end{definition}

\begin{definition}[{\bf{Integer Linear Programming}}] In the {\sc Integer Linear Programming Feasibility} (ILP) problem, the input consists of $t$ variables $x_1, x_2, \ldots, x_t$ and a set of $m$ inequalities of the following form:
\[\begin{array}{*{9}{@{}c@{}}}
	a_{1,1}x_1 & + & a_{1,2}x_1 &+ & \cdots & + & a_{1,p}x_t & \leq & b_1 \\
	a_{2,1}x_1 & + & a_{2,2}x_2 & + & \cdots & + & a_{2,p}x_t & \leq & b_2 \\
	\vdots    &   & \vdots    &   &        &   & \vdots    &   & \vdots \\
	a_{m,1}x_1 & + & a_{m,2}x_2 & + & \cdots & + & a_{m,p}x_t & \leq & b_m \\
\end{array}\]
where all coefficients $a_{i,j}$ and $b_i$ are required to integers. The task is to check whether there exist integer values for every variable $x_i$ so that all inequalities are satisfiable. 
\end{definition}

\begin{theorem}[\cite{DBLP:journals/combinatorica/FrankT87,DBLP:journals/mor/Lenstra83,DBLP:journals/mor/Kannan87}]\label{the:runningTimeILP}
An ILP instance of size $m$ with $t$ variables can be solved in time $t^{\OO(t)}\cdot m^{\OO(1)}$. 
\end{theorem}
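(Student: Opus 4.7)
The plan is to prove this classical result via the Lenstra--Kannan algorithm combined with the coefficient-reduction procedure of Frank and Tardos. First, I would reformulate the feasibility instance as the question of whether the rational polytope $P = \{x \in \mathbb{R}^t : Ax \leq b\}$ contains a point of the integer lattice $\mathbb{Z}^t$. The overall strategy is a recursive branch-and-bound procedure whose recursion depth is at most $t$ and whose branching factor at each level is polynomial in $t$, which together yield the $t^{\OO(t)}$ factor.

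Before recursing, I would invoke the Frank--Tardos preprocessing: given any rational system of $m$ inequalities in $t$ variables, one can compute in polynomial time an equivalent system (preserving the set of integer feasible points) whose coefficients have bit-length bounded by a polynomial in $t$ alone. This step is what forces the dependence on $m$ to be a genuine $m^{\OO(1)}$ rather than a pseudo-polynomial factor, and it must be reapplied so that every sub-instance generated during the recursion inherits polynomially bounded bit-complexity.

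The core geometric step is to apply lattice basis reduction (LLL or a variant) together with Khintchine's flatness theorem from the geometry of numbers. Using a reduced basis of $\mathbb{Z}^t$, I would compute a unimodular transformation making $P$ ``round,'' i.e., sandwiched between two concentric ellipsoids whose radii differ by at most a function of $t$. If the inner ellipsoid is large enough, an integer point inside $P$ can be read off directly from the center. Otherwise, the flatness theorem guarantees a lattice direction $d \in \mathbb{Z}^t$ along which $P$ has width bounded by a function of $t$ only. I would then branch on the $t^{\OO(1)}$ possible integer values that $\langle d, x \rangle$ can take inside $P$, substituting each value into the system to obtain an ILP with one fewer free variable and recursing. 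This yields the recurrence $T(t,m) \leq t^{\OO(1)} \cdot T(t-1, \mathrm{poly}(m)) + \mathrm{poly}(m)$, which unwinds to $t^{\OO(t)} \cdot m^{\OO(1)}$.

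The main obstacle will be obtaining a sufficiently sharp constant in the flatness step: Lenstra's original argument yielded only a $2^{\OO(t^2)}$ bound, and reducing this to $t^{\OO(t)}$ is precisely Kannan's contribution. It requires a more careful choice of reduced basis and a more sophisticated enumeration of candidate flatness directions, together with a bound on the number of lattice points near a thin slab that is polynomial in $t$ rather than exponential. Combining this refined flatness estimate with the Frank--Tardos coefficient reduction (applied before each recursive call to keep $m^{\OO(1)}$ under control) is what delivers the claimed running time.
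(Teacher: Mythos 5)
The paper does not prove this statement at all---it is imported as a black box from the cited works of Lenstra, Kannan, and Frank--Tardos---and your sketch is a faithful outline of exactly that argument: Frank--Tardos coefficient reduction, ellipsoidal rounding plus lattice basis reduction, flatness-based branching on a thin lattice direction with a branching factor polynomial in $t$ per level, and a recursion of depth at most $t$, giving $t^{\OO(t)}\cdot m^{\OO(1)}$. Two minor inaccuracies are worth flagging but do not affect soundness: Lenstra's original running time is $2^{\OO(t^3)}$ rather than $2^{\OO(t^2)}$, and the recurrence $T(t,m)\leq t^{\OO(1)}\cdot T(t-1,\mathrm{poly}(m))+\mathrm{poly}(m)$ must be set up so that the instance size does not compound over the $t$ levels of recursion---which is precisely what your reapplication of the Frank--Tardos preprocessing (keeping coefficient bit-lengths bounded by a polynomial in $t$) ensures.
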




\section{Reinterpretation Based on Eulerian Cycles}\label{sec:ourapp}

Our approach to \cg with $k$ agents is as follows. Let $G$ be a connected graph, let $\vi\in V(G)$ and let $k\in \mathbb{N}$. Let $\{\mathsf{RC}_1,\ldots,\mathsf{RC}_k\}$ be a solution, let $1\leq i\leq k$ and denote $\mathsf{RC}_i=(v_0=v_{\mathsf{init}},v_1,v_2,\ldots,v_\ell=v_{\mathsf{init}})$ for some $\ell\in \mathbb{N}$. If we define a multiset $\widehat{E}_\mathsf{RC_i}=\{\{v_j,v_{j+1}\}~|~0\leq j\leq \ell-1\}$, then, clearly, $\mathsf{RC}_i=(v_0=v_{\mathsf{init}},v_1,v_2,\ldots,v_\ell=v_{\mathsf{init}})$ is an Eulerian cycle in $\gr(\widehat{E}_\mathsf{RC_i})$. We call this graph the {\em $\mathsf{RC}_i$-graph} (see \cref{fig:robotCycleB}):

\begin{definition} [{\bf Robot Cycle-Graph}]\label{def:RWGra}
	Let $G$ be a graph, let $v_{\mathsf{init}}\in V(G)$ and let $\mathsf{RC}=(v_0=v_{\mathsf{init}},v_1,v_2,\ldots,v_\ell=v_{\mathsf{init}})$ be a robot cycle. The {\em $\mathsf{RC}$-graph}, denoted by $\gr(\mathsf{RC})$, is the multigraph $\gr(\widehat{E}_\mathsf{RC})$, where $\widehat{E}_\mathsf{RC}=\{\{v_i,v_{i+1}\}~|~0\leq i\leq \ell-1\}$ is a multiset.
\end{definition}

\begin{observation}\label{obs:robIsEu}
	Let $G$ be a graph, let $v_{\mathsf{init}}\in V(G)$ and let $\mathsf{RC}=(v_0=v_{\mathsf{init}},v_1,v_2,\ldots,v_\ell=v_{\mathsf{init}})$ be a robot cycle. Then $\mathsf{RC}$ is an Eulerian cycle in $\gr(\mathsf{RC})$.
\end{observation}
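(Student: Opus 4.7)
The plan is to prove this observation by directly unfolding the definitions involved, since both the notion of a robot cycle and that of an Eulerian cycle have been set up to make this correspondence essentially tautological. Nonetheless, I would go through the unfolding explicitly to justify that nothing subtle is being swept under the rug.

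First, I would recall the definition of $\gr(\mathsf{RC})$: it is the multigraph $\gr(\widehat{E}_\mathsf{RC})$ where $\widehat{E}_\mathsf{RC} = \{\{v_i, v_{i+1}\} \mid 0 \leq i \leq \ell - 1\}$ is a multiset, and its vertex set is $\{v_0, v_1, \ldots, v_\ell\}$. The key point to make clear is that $\widehat{E}_\mathsf{RC}$ is a \emph{multiset}, so even if the same pair $\{u,w\}$ is traversed by $\mathsf{RC}$ more than once, each traversal contributes a distinct element to $\widehat{E}_\mathsf{RC}$ (and hence a distinct parallel edge in $\gr(\mathsf{RC})$).

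Next, I would verify the three conditions needed for $\mathsf{RC}$ to be an Eulerian cycle in $\gr(\mathsf{RC})$. (i) $\mathsf{RC}$ is a cycle in $\gr(\mathsf{RC})$: every consecutive pair $\{v_i, v_{i+1}\}$ is an edge of $\gr(\mathsf{RC})$ by construction of $\widehat{E}_\mathsf{RC}$, and we have $v_0 = v_\ell = \vi$ by the definition of a robot cycle, so $\mathsf{RC}$ is indeed a closed walk and hence a cycle in $\gr(\mathsf{RC})$. (ii) $\mathsf{RC}$ visits every edge of $\gr(\mathsf{RC})$: by definition, $E(\gr(\mathsf{RC})) = \widehat{E}_\mathsf{RC}$, and every element of this multiset is of the form $\{v_i, v_{i+1}\}$ for some $0 \leq i \leq \ell - 1$, which is traversed at the $i$-th step of $\mathsf{RC}$. (iii) Each such edge is visited exactly once: the $\ell$ steps of $\mathsf{RC}$ give rise to exactly $\ell$ traversals, and $|\widehat{E}_\mathsf{RC}| = \ell$ as a multiset, so the bijection between steps of $\mathsf{RC}$ and elements of $\widehat{E}_\mathsf{RC}$ shows that each edge of $\gr(\mathsf{RC})$ is traversed exactly once.

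The main (and only) obstacle is a notational one: one must be careful that $\widehat{E}_\mathsf{RC}$ is treated as a multiset throughout, so that repeated traversals of the same edge in $G$ correspond to distinct parallel edges in $\gr(\mathsf{RC})$ rather than collapsing to a single edge. Once this convention is made explicit, the proof is immediate from the definitions, and I would keep it to at most a few lines.
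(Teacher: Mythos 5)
Your proposal is correct and follows exactly the reasoning the paper uses (the paper gives no separate proof, treating the observation as immediate from the multiset definition of $\widehat{E}_\mathsf{RC}$, which is precisely the point you make explicit). Unfolding the definitions and noting the step-to-edge correspondence in the multiset is the intended argument.
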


On the opposite direction, let $\widehat{E}$ be a multiset with elements from $E(G)$, and assume that $\vi\in V(\gr(\widehat{E}))$.  Let $\mathsf{RC}=(v_0,v_1,v_2,\ldots,v_\ell=v_0)$ be an Eulerian cycle in $\gr(\widehat{E})$ and assume, without loss of generality, that $v_0=v_\ell=v_{\mathsf{init}}$. It is easy to see that $\mathsf{RC}$ is a robot cycle in $G$:

\begin{observation}\label{obs:EuiIsRob}
	Let $G$ be a graph, let $v_{\mathsf{init}}\in V(G)$, let $\widehat{E}$ be a multiset with elements from $E(G)$ and assume that $\vi\in V(\gr(\widehat{E}))$. Let $\mathsf{RC}=(v_0=v_{\mathsf{init}},v_1,v_2,\ldots,v_\ell=v_{\mathsf{init}})$ be an Eulerian cycle in $\gr(\widehat{E})$. Then, $\mathsf{RC}$ is a robot cycle in $G$.
\end{observation}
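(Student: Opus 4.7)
The plan is to verify directly that the sequence $\mathsf{RC}=(v_0,v_1,\ldots,v_\ell)$ satisfies all three requirements of a $v_{\mathsf{init}}$-robot cycle as listed in Definition \ref{def:RobotWalk}: (i) every $v_i$ lies in $V(G)$, (ii) every consecutive pair $\{v_i,v_{i+1}\}$ is an edge of $G$, and (iii) $v_0=v_\ell=v_{\mathsf{init}}$. Condition (iii) is explicitly assumed in the statement, so only (i) and (ii) require an argument, and both will follow from unwinding the definition of $\gr(\widehat{E})$.

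First I would recall that, by the definition of $\mathsf{Graph}(\widehat{E})$ given in the preliminaries, we have $V(\gr(\widehat{E}))=\{u \mid \{u,v\}\in \widehat{E}\}\subseteq V(G)$ and $E(\gr(\widehat{E}))=\widehat{E}$ as multisets, with $\widehat{E}$ drawn from $E(G)$. Since $\mathsf{RC}$ is an Eulerian cycle in $\gr(\widehat{E})$, each vertex $v_i$ appearing in $\mathsf{RC}$ belongs to $V(\gr(\widehat{E}))$, and consequently $v_i \in V(G)$, giving condition (i).

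Next, since $\mathsf{RC}$ is a cycle in $\gr(\widehat{E})$, for every $0\leq i\leq \ell-1$ the pair $\{v_i,v_{i+1}\}$ is an edge of $\gr(\widehat{E})$, that is, an element of $\widehat{E}$. Because $\widehat{E}$ is a multiset whose elements come from $E(G)$, we obtain $\{v_i,v_{i+1}\}\in E(G)$, establishing condition (ii). Combining (i), (ii), and the assumption $v_0=v_\ell=v_{\mathsf{init}}$, the sequence $\mathsf{RC}$ satisfies Definition \ref{def:RobotWalk} and is therefore a $v_{\mathsf{init}}$-robot cycle in $G$.

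There is no real obstacle here: the observation is essentially a bookkeeping statement that the two notions (Eulerian cycle in a multigraph built from edges of $G$, and a walk in $G$ returning to $v_{\mathsf{init}}$) coincide. The only subtlety worth flagging is that $\widehat{E}$ is a multiset, so $\mathsf{RC}$ may repeat edges of $G$; this is permitted by Definition \ref{def:RobotWalk}, which explicitly allows repeated vertices and, a fortiori, repeated edges.
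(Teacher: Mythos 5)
Your proposal is correct and matches the paper's (essentially omitted) argument: the paper simply remarks that the claim follows immediately from unwinding the definitions of $\gr(\widehat{E})$ and of a $v_{\mathsf{init}}$-robot cycle, which is exactly the definitional bookkeeping you carry out.
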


From \cref{obs:robIsEu,obs:EuiIsRob}, we get that finding a solution is equal to find $k$ multisets $\widehat{E}_1,\ldots,\widehat{E}_k$ such that: (i) for every $1\leq i\leq k$, $\vi\in V(\gr(\widehat{E}_i))$ (ii) for every $1\leq i\leq k$, there exists an Eulerian cycle in $\gr(\widehat{E}_i)$ and (iii) $E(G)\subseteq \widehat{E}_1\cup \ldots \cup\widehat{E}_k$, that is, each $e\in E$ appears at least once in at least one of $\widehat{E}_1,\ldots,\widehat{E}_k$.

Recall that, in a multigraph $\widehat{G}$, there exists an Eulerian cycle if and only if $\widehat{G}$ is connected and each $v\in V(\widehat{G})$ has even degree in $\widehat{G}$ \cite{bollobas1998modern}. Thus, we have the following lemma:

\begin{lemma}\label{obs:equivsol}
	Let $G$ be a connected graph, let $\vi\in V(G)$ and let $k,B\in \mathbb{N}$. Then, $(G,\vi,k,B)$ is a yes-instance of \cg if and only if there exist $k$ multisets $\widehat{E}_1,\ldots,\widehat{E}_k$ with elements from $E(G)$, such that the following conditions hold: 
	\begin{enumerate}
		\item For every $1\leq i\leq k$, $\vi\in V(\gr(\widehat{E}_i))$. \label{obs:equivsol1}
		\item\label{con:2EquiSol} For every $1\leq i\leq k$, $\gr(\widehat{E}_i)$ is connected, and every vertex in $\gr(\widehat{E}_i)$ has even degree. \label{obs:equivsol2}
		\item $E(G)\subseteq \widehat{E}_1\cup \ldots \cup\widehat{E}_k$.\label{obs:equivsol3}
		\item$ \mathsf{max}\{|\widehat{E}_1|,\ldots, |\widehat{E}_k|\}\leq B$.\label{obs:equivsol4}
	\end{enumerate} 
\end{lemma}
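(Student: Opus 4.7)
The plan is to prove both directions of the equivalence by translating between robot cycles and multisets of edges via the two observations already in place (\cref{obs:robIsEu} and \cref{obs:EuiIsRob}), together with the classical characterization of Eulerian multigraphs.

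For the forward direction, I would start with a solution $\{\mathsf{RC}_1,\ldots,\mathsf{RC}_k\}$ with $\mathsf{val}(\{\mathsf{RC}_1,\ldots,\mathsf{RC}_k\})\leq B$ and, for each $i$, define $\widehat{E}_i$ to be the edge multiset $\widehat{E}_{\mathsf{RC}_i}$ from \cref{def:RWGra}. Condition~1 follows because $\vi$ is the first (and last) vertex of $\mathsf{RC}_i$, so it belongs to $V(\gr(\widehat{E}_i))$. By \cref{obs:robIsEu}, $\mathsf{RC}_i$ is an Eulerian cycle in $\gr(\widehat{E}_i)$; the existence of an Eulerian cycle implies the multigraph is connected and every vertex has even degree, giving Condition~2. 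Condition~3 is just a restatement of $E(G)\subseteq E(\mathsf{RC}_1)\cup\cdots\cup E(\mathsf{RC}_k)$ from \cref{def:Sol}, since $E(\mathsf{RC}_i)=\widehat{E}_i$. Finally, $|\widehat{E}_i|=|E(\mathsf{RC}_i)|$, so Condition~4 is exactly the budget bound $\mathsf{val}(\{\mathsf{RC}_1,\ldots,\mathsf{RC}_k\})\leq B$.

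For the backward direction, suppose multisets $\widehat{E}_1,\ldots,\widehat{E}_k$ satisfying the four conditions are given. Condition~2 together with the classical theorem that a connected multigraph admits an Eulerian cycle iff every vertex has even degree (as cited from \cite{bollobas1998modern}) yields, for each $i$, an Eulerian cycle in $\gr(\widehat{E}_i)$. By Condition~1, $\vi\in V(\gr(\widehat{E}_i))$, so after a cyclic rotation we may assume this cycle has the form $\mathsf{RC}_i=(v_0=\vi,v_1,\ldots,v_{\ell_i}=\vi)$. By \cref{obs:EuiIsRob}, $\mathsf{RC}_i$ is a $\vi$-robot cycle in $G$. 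Since $\mathsf{RC}_i$ traverses every edge of $\gr(\widehat{E}_i)$ exactly once, $E(\mathsf{RC}_i)=\widehat{E}_i$, and Condition~3 therefore gives $E(G)\subseteq E(\mathsf{RC}_1)\cup\cdots\cup E(\mathsf{RC}_k)$, making $\{\mathsf{RC}_1,\ldots,\mathsf{RC}_k\}$ a solution. The bound $|E(\mathsf{RC}_i)|=|\widehat{E}_i|\leq B$ from Condition~4 then yields $\mathsf{val}(\{\mathsf{RC}_1,\ldots,\mathsf{RC}_k\})\leq B$.

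There is essentially no substantive obstacle here, since both directions are bookkeeping on top of \cref{obs:robIsEu}, \cref{obs:EuiIsRob}, and Euler's theorem. The only mild subtlety worth spelling out is the cyclic rotation step in the backward direction: an arbitrary Eulerian cycle in $\gr(\widehat{E}_i)$ need not start at $\vi$, but because Condition~1 guarantees $\vi\in V(\gr(\widehat{E}_i))$ and an Eulerian cycle visits every vertex of the multigraph, we may rotate the cyclic sequence of edges so that $\vi$ is the designated start/end vertex without changing the edge multiset. After this, the cycle meets the format required by \cref{def:RobotWalk}, and the rest of the argument proceeds as above.
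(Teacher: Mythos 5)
Your proof is correct and follows essentially the same route as the paper, which derives the lemma directly from \cref{obs:robIsEu}, \cref{obs:EuiIsRob}, and the classical characterization of Eulerian multigraphs (connected with all degrees even). The only difference is that you spell out the bookkeeping (including the rotation so the Eulerian cycle starts at $\vi$) in more detail than the paper, which treats these steps as immediate.
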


\section{High-Level Overview}\label{sec:overview}

\subsection{FPT Algorithm with Respect to Vertex Cover}

Our algorithm is based on a reduction to the ILP problem. 
We aim to construct linear equations that verify the conditions in \cref{obs:equivsol}.


\subsubsection{Encoding $\widehat{E}_i$ by a Valid Pair}
First, we aim to satisfy the ``local'' conditions of \cref{obs:equivsol} for each robot, that is, Conditions~\ref{obs:equivsol1} and~\ref{obs:equivsol2}. Let us focus on the ``harder'' condition of the two, that is, Condition~\ref{obs:equivsol2}. We aim to encode any potential $\widehat{E}_i$ by smaller subsets whose union is $\widehat{E}_i$. In addition, we would like the ``reverse'' direction as well: every collection of subsets that we will be able to unite must create some valid $\widehat{E}_i$. Note that we have two goals to achieve when uniting the subsets together: (i) derive a connected graph, where (ii) each vertex has even degree. In the light of this, the most natural encoding for the subsets are cycles, being the simplest graphs satisfying both aforementioned goals. Indeed, every cycle is connected, and a graph composed only of cycles is a graph where every vertex has even degree. Here, the difficulty is to maintain the connectivity of the composed graph. On the positive side, observe that every cycle in the input graph $G$ has a non-empty intersection with any vertex cover $\vc$ of $G$. So, we deal with the connectivity requirement as follows. We seek for a graph $\overline{G}$ that is essentially (but not precisely) a subgraph of $G$ that is (i) ``small'' enough, and (ii)  for every valid $\widehat{E}_i$, there exists $\mathsf{CC}\subseteq E(\overline{G})$ such that $\gr(\mathsf{CC})$ is a ``submultigraph'' of $\gr(\widehat{E}_i)$, $\gr(\mathsf{CC})$ is connected, and $V(\gr(\mathsf{CC}))\cap \vc=V(\gr(\widehat{E}_i))\cap \vc$. 

\noindent{\bf Equivalence Graph $G^*$.} A first attempt to find such a graph is as follows. We define an equivalence relation on $V(G)\setminus \vc$ based on the sets of neighbors of the vertices in $V(G)\setminus \vc$ (see \cref{def:ER}) (see the $4$ equivalence classes of the graph $G$ in \cref{fig:fptA}). We denote the set of equivalence classes induced by this equivalence relation by $\mathsf{EQ}$. Then $G^*$ is the graph defined as follows (for more details, see \cref{def:VCG}).

\begin{definition} [{\bf Equivalence Graph $G^*$}]
	Let $G^*$ be the graph that: (i) contains $\vc$, and the edges having both endpoints in $\vc$, and (ii) where every equivalence class $u^*\in \ind$ is represented by a single vertex adjacent to the neighbors of some $u\in u^*$ in $G$ (which belong to $\vc$). See \cref{fig:fptB}.
\end{definition}

Unfortunately, this attempt fails, as we might need to use more than one vertex from the same $u^*\in \ind$ in order to maintain the connectivity. E.g., see \cref{fig:fpt1B}. If we  delete $r_2$ and $y_5$, which are in the same equivalence class (in $G$) as $r_1$ and $y_6$, respectively, then the graph is no longer connected. 

\begin{figure}[t]
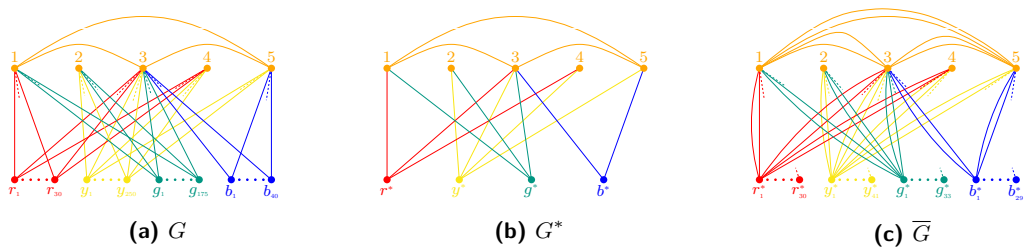

	\centering
	\begin{subfigure}[t]{.3\textwidth}
		\centering
		\includegraphics[page=27, width=\textwidth]{figures/robotExplore}
		\caption{$G$}
		\label{fig:fptA}
	\end{subfigure}\hfill
	\begin{subfigure}[t]{.3\textwidth}
		\centering
		\includegraphics[page=28, width=\textwidth]{figures/robotExplore}
		\caption{$G^*$}
		\label{fig:fptB}
	\end{subfigure}\hfill
	\begin{subfigure}[t]{.3\textwidth}
		\centering
		\includegraphics[page=29, width=\textwidth]{figures/robotExplore}
		\caption{$\overline{G}$}
		\label{fig:fptC}
	\end{subfigure}
	\caption{An illustration of a graph $G$ (in (a)), and its corresponding graphs $G^*$ (in (b)) and $\overline{G}$ (in (c)). The vertex cover vertices and their edges are shown in orange. The $4$ equivalence classes and their vertices are shown by red, yellow, green, and blue.}
	\label{fig:fpt}
\end{figure}

\noindent{\bf The Multigraph $\overline{G}$.} So, consider the following second attempt. We use the aforementioned graph $G^*$, but instead of one vertex representing each $u^*\in \ind$, we have $\mathsf{min}\{|u^*|,2^{|\mathsf{N}_{G^*}({u^*})|}\}$ vertices. Observe that given a connected subgraph $G'$ of $G$, and two vertices $u,u'\in u^*$ such that $\mathsf{N}_{G'}(u)=\mathsf{N}_{G'}(u')$, it holds that $G\setminus\{u'\}$ remains connected (e.g., see  \cref{fig:fpt1A} and \ref{fig:fpt1B}. The connectivity is still maintained even after deleting all but one vertex in the same equivalence class (in $G$) having same neighbourhood). Therefore, we have enough vertices for each $u^*\in \ind$ in the graph, and its size is a function of $|\vc|$; so, we obtained the sought graph $\overline{G}$. 
Now, we would like to have an additional property for $\mathsf{CC}$, which is that every vertex in $\gr(\mathsf{CC})$ has even degree in it. To this end, we add to $\overline{G}$ more vertices for each $u^*\in \ind$. See  \cref{fig:fpt1C}. The vertex $g_{13}$ having the same neighbours as $g_{11}$ in $H$ and being in the same equivalence class (in $G$) as $g_{11}$ is added to make the degrees of $1$ and $2$ even. We have the following definition for $\overline{G}$ (for more details, see \cref{def:GbarGraph} and the discussion before this definition). 

\begin{definition} [{\bf The Multigraph $\overline{G}$}]
	Let $\overline{G}$ be the graph that: (i) contains $\vc$, and the edges having both endpoints in $\vc$, (ii) for every equivalence class $u^*\in \ind$, there are exactly  $\mathsf{min}\{|u^*|,2^{|\mathsf{N}_{G^*}({u^*})|}+|\vc|^2\}$ vertices, adjacent to the neighbors of some $u\in u^*$ in $G$ (which belong to $\vc$), (iii) each edge in $\overline{G}$ appears exactly twice in $E(\overline{G})$ (for technical reasons). See \cref{fig:fptC}.
\end{definition} 


\noindent{\bf A Skeleton of $\widehat{E}_i$.} We think of $\gr(\mathsf{CC})$ as a ``skeleton'' of a potential $\widehat{E}_i$. By adding cycles with a vertex from $V(\gr(\mathsf{CC}))\cap \vc$, we maintain the connectivity, and since every vertex in $\gr(\mathsf{CC})$ has even degree, then by adding a cycle, this property is preserved as well. We have the following definition for a skeleton.

\begin{figure}[t]
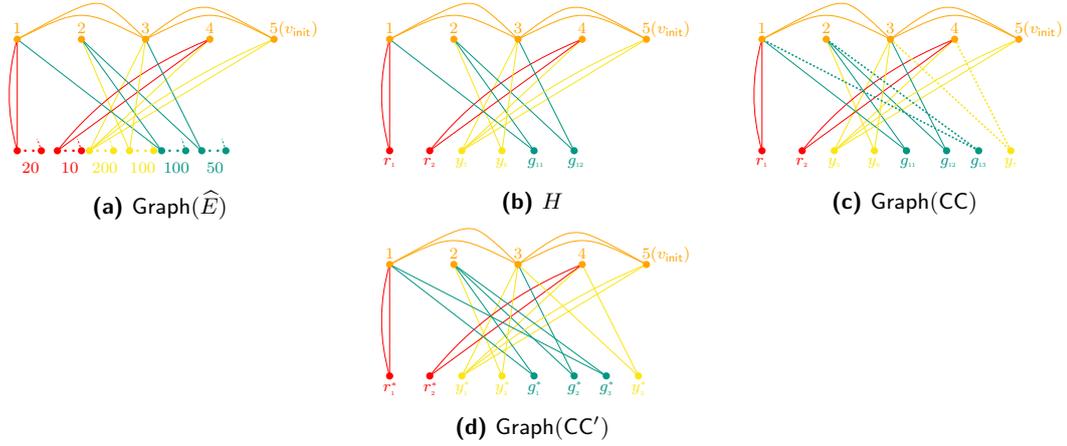

	\centering
	\begin{subfigure}[t]{.3\textwidth}
		\centering
		\includegraphics[page=30, width=\textwidth]{figures/robotExplore}
		\caption{$\gr(\widehat{E})$}
		\label{fig:fpt1A}
	\end{subfigure}\hfill
	\begin{subfigure}[t]{.3\textwidth}
		\centering
		\includegraphics[page=31, width=\textwidth]{figures/robotExplore}
		\caption{$H$}
		\label{fig:fpt1B}
	\end{subfigure}\hfill
	\begin{subfigure}[t]{.3\textwidth}
		\centering
		\includegraphics[page=32, width=\textwidth]{figures/robotExplore}
		\caption{$\gr(\mathsf{CC})$}
		\label{fig:fpt1C}
	\end{subfigure}
	\begin{subfigure}[t]{.3\textwidth}
	\centering
	\includegraphics[page=35, width=\textwidth]{figures/robotExplore}
	\caption{$\gr(\mathsf{CC}')$}
	\label{fig:robotAllocation}
\end{subfigure}

	\caption{The graphs shown here are with respect to the graph $G$ shown in \cref{fig:fptA}. An illustration of (a) a graph $\gr(\widehat{E})$, (b) the graph $H$ obtained by deleting all but one vertex from the same equivalence class in $G$ and have the same neighbours in $\gr(\widehat{E})$, (c) the graph $\gr(\mathsf{CC})$ where $\mathsf{CC}$ is a skeleton of $\widehat{E}$ obtained from the graph in (b) by adding four more edges from $\gr(\widehat{E}) \setminus H$, and (d) the graph $\gr(\mathsf{CC}')$ where $\mathsf{CC}'$ is the skeleton in $\overline{G}$ that is derived from the skeleton $\mathsf{CC}$.}
	\label{fig:fpt1}
\end{figure}

\begin{definition} [{\bf A Skeleton $\mathsf{CC}$}]
	A {\em skeleton} of $\widehat{E}_i$ is $\mathsf{CC}\subseteq \widehat{E}_i$ such that: (i) $\gr(\mathsf{CC})$ is a ``submultigraph'' of $\overline{G}$, (ii) $\gr(\mathsf{CC})$ is connected, $\vi \in V(\gr(\mathsf{CC}))$ and every vertex in $\gr(\mathsf{CC})$ has even degree, and (iii) $V(\gr(\mathsf{CC}))\cap \vc=V(\gr(\widehat{E}_i))\cap\vc$ (See \cref{fig:robotAllocation}).
\end{definition} 

\noindent{\bf An $\widehat{E}_i$-Valid Pair.} In \cref{obs:equivso}, we prove that we might assume that the $\widehat{E}_i$'s are {\em nice multisets}, that is, a multiset where every element appears at most twice. In \cref{lem:4cycandCC}, we prove that every $\widehat{E}_i$ (assuming $\widehat{E}_i$ is nice) can be encoded by a skeleton $\mathsf{CC}$ (See \cref{fig:fpt1C}.) and a multiset ${\cal C}$ of cycles (of length bounded by $2|\vc|$). We say that $(\mathsf{CC},{\cal C})$ is an {\em $\widehat{E}$-valid pair} (for more details, see \cref{def:ValPa} and \cref{sec:ValPa}).

\begin{definition} [{\bf A Valid Pair}]
	A pair $(\mathsf{CC},{\cal C})$, where $\mathsf{CC}$ is a skeleton of $\widehat{E}_i$ and ${\cal C}$ is a multiset of cycles in $\gr(\widehat{E}_i)$, is an {\em $\widehat{E}_i$-valid pair} skeleton $\mathsf{CC}$ if: 
	\begin{enumerate}
		\item The length of each cycle in ${\cal C}$ is bounded by $2|\vc|$.
		\item At most $2|\vc|^2$ cycles in ${\cal C}$ have length other than $4$. 
		\item $\mathsf{CC}\cup \bigcup_{C\in {\cal C}}E(C)=\widehat{E}_i$ (being two multisets). 
	\end{enumerate}
\end{definition}

\subsubsection{Robot and Cycle Types}
Now, obviously, the number of different cycles in $G$ (of length bounded by $2|\vc|$) is potentially huge. Fortunately, it is suffices to look at cycles in $G^*$ in order to preserve Condition~\ref{obs:equivsol2} of \cref{obs:equivsol}: assume that we have a connected $\gr(\mathsf{CC})$ such that every vertex in $\gr(\mathsf{CC})$ has even degree in it, and a multiset of cycles with a vertex from $V(\gr(\mathsf{CC}))\cap \vc$ in $G^*$. By replacing each vertex that represents $u^*\in \ind$ by any $u\in u^*$, the connectivity preserved, and the degree of each vertex is even. 

Thus, each robot is associated with a {\em robot type} $\mathsf{RobTyp}$, which includes a skeleton $\mathsf{CC}$ of the multiset $\widehat{E}_i$ associated with the robot (along other information discussed later). In order to preserve Condition~\ref{obs:equivsol1} of \cref{obs:equivsol}, we also demand that $\vi\in V(\gr(\mathsf{CC}))$. Generally, for each type we define, we will have a variable that stands for the number of elements of that type. We are now ready to present our first equation of the ILP reduction:

\smallskip\noindent{\bf Equation~\ref{ilp:1}: Robot Type for Each Robot.} In this equation, we ensure that the total sum of robots of the different robot types is exactly $k$, that is, there is exactly one robot type for each robot:

1. $\displaystyle{\sum_{\mathsf{RobTyp}\in \mathsf{RobTypS}}x_\mathsf{RobTyp}=k}$.

In addition, the other ``pieces'' of the ``puzzle'', that is, the cycles, are also represented by types: Each cycle $C$ of length at most $2|\vc|$ in $G^*$ is represented by a {\em cycle type}, of the form $\mathsf{CycTyp}=(C,\mathsf{RobTyp})$ (along other information discussed later), where $\mathsf{RobTyp}$ is a robot type that is ``able to connect to $C$'', that is, $V(\gr(\mathsf{CC}))\cap \vc\cap V(C)\neq\emptyset$ for $\mathsf{RobTyp}=\mathsf{CC}$. Similarly, we will have equations for our other types.

\noindent{\bf Satisfying the Budget Restriction.} Now, we aim to satisfy the budget condition (Condition~\ref{obs:equivsol2} of \cref{obs:equivsol}), that is, for every $i\in[k]$, $|\widehat{E}_i|\leq B$. Let $i\in [k]$ and let $(\mathsf{CC},{\cal C})$ be an $\widehat{E}$-valid pair. So, $\widehat{E}_i=\mathsf{CC}\cup(\bigcup_{C\in {\cal C}}E(C))$ (being a union of two multisets). Now, in \cref{lem:4cycandCC}, we prove that ``most'' of the cycles in ${\cal C}$ are of length $4$, that is, for every $2\leq j\leq 2|\vc|, j\neq 4$, the number of cycles of length $j$ in ${\cal C}$ is bounded by $2|\vc|^2$. Therefore, we add to the definition of a robot type also the number of cycles of length exactly $j$, encoded by a vector $\mathsf{NumOfCyc}=(N_2,N_3,N_5,N_6,\ldots,N_{2|\vc|})$. So, for now, a robot type is $\mathsf{RobTyp}=(\mathsf{CC},\mathsf{NumOfCyc})$. Thus, in order to satisfy the budget condition, we verify that the budget used by all the robots of a robot type $\mathsf{RobTyp}=(\mathsf{CC},\mathsf{NumOfCyc})$, is as expected together. First, we ensure that the number of cycles of each length $2\leq j\leq 2|\vc|, j\neq 4$, is exactly as the robot type demands, times the number of robots associated with this type, that is, $N_j\cdot x_\mathsf{RobTyp}$. So, we have the following equation:

\smallskip\noindent{\bf Equation~\ref{ilp:5}: Assigning the Exact Number of Cycles of Length Other Than $4$ to Each Robot Type.} We have the following notation: $\mathsf{CycTypS}(\mathsf{RobTyp},j)$ is the set of cycle types for cycles of length $j$ assigned to a robot of robot type $\mathsf{RobTyp}$. 

5. For every robot type $\mathsf{RobTyp}=(\mathsf{CC},\mathsf{NumOfCyc})$ and for every $2\leq j\leq 2|\vc|$, $j\neq 4$, $\displaystyle{\sum_{{\mathsf{CycTyp}\in \mathsf{CycTypS}(\mathsf{RobTyp},j)}}x_\mathsf{CycTyp}=N_j\cdot x_\mathsf{RobTyp}}$, where $\mathsf{NumOfCyc}=(N_2,N_3,N_5,N_6,\ldots,$ $N_{2|\vc|})$.

Observe that once this equation is satisfied, we are able to arbitrary allocate $N_j$ cycles of length $j$ to each robot of type $\mathsf{RobTyp}$. So, in order to verify the budget limitation, we only need to deal with the cycles of length $4$. 
Now, notice that the budget left for a robot of type $\mathsf{RobTyp}=(\mathsf{CC},\mathsf{NumOfCyc})$ for the cycles of length $4$ is $B-(|\mathsf{CC}|+\sum_{2\leq j\leq 2|\vc|,j\neq 4}N_j\cdot j)$, where $\mathsf{NumOfCyc}=(N_2,N_3,N_5,N_6,\ldots,N_{2|\vc|})$. Now, the maximum number of cycles we can add to a single robot of type $\mathsf{RobTyp}$ is the largest number which is a multiple of $4$, that is less or equal to $B-\mathsf{Bud}(\mathsf{RobTyp})$. So, for every robot type $\mathsf{RobTyp}$ let $\mathsf{CycBud}(\mathsf{RobTyp})=\lfloor(B-(|\mathsf{CC}|+\sum_{2\leq j\leq 2|\vc|,j\neq 4}N_j\cdot j))\cdot \frac{1}{4}\rfloor\cdot 4$. Notice that $\mathsf{CycBud}(\mathsf{RobTyp})$ is the budget left for the cycles of length $4$. Thus, we have the following equation:

\smallskip\noindent{\bf Equation~\ref{ilp:6}: Verifying the Budget Limitation.} This equation is defined as follows. 

\noindent6. For every $\mathsf{RobTyp}\in \mathsf{RobTypS}$,\\ $\displaystyle{\sum_{{\mathsf{CycTyp}\in \mathsf{CycTypS}(\mathsf{RobTyp},4)}}4\cdot x_\mathsf{CycTyp}\leq x_\mathsf{RobTyp}\cdot \mathsf{CycBud}(\mathsf{RobTyp})}$.

By now, we have that there exist $\widehat{E}_1,\ldots,\widehat{E}_k$ that satisfy Conditions~\ref{obs:equivsol1}, \ref{obs:equivsol2} and~\ref{obs:equivsol4} of \cref{obs:equivsol} if and only if Equations~\ref{ilp:1}, \ref{ilp:5} and~\ref{ilp:6} can be satisfied.

\noindent{\bf Covering Edges with Both Endpoints in $\vc$.} Now, we aim to satisfy Condition~\ref{obs:equivsol3} of \cref{obs:equivsol}, that is, we need to verify that every edge is covered by at least one robot. First, we deal with edges with both endpoints in $\vc$. Here, for every $\{u,v\}$ such that $u,v\in \vc$, we just need to verify that at least one cycle or one of the $\mathsf{CC}$'s contains $\{u,v\}$. This we can easily solve by the following equation:

\smallskip\noindent{\bf Equation~\ref{ilp:4}: Covering Each Edge With Both Endpoints in $\vc$.} We have the following notations:
For every $\{u,v\}\in E$ such that $u,v\in \vc$, (i) let $\mathsf{CycTypS}(\{u,v\})$ be the set of cycle types $\mathsf{CycTyp}=(C,\mathsf{RobTyp})$ where $C$ covers $\{u,v\}$,
and (ii) let $\mathsf{RobTypS}(\{u,v\})$ be the set of robot types $\mathsf{RobTyp}=(\mathsf{CC},\mathsf{NumOfCyc})$ where $\mathsf{CC}$ covers $\{u,v\}$.
In this equation, we ensure that each $\{u,v\}\in E(G)$ with both endpoints in $\vc$ is covered at least once:

4. For every $\{u,v\}\in E$ such that $u,v\in \vc$,

\noindent$\displaystyle{\sum_{{\mathsf{CycTyp}\in \mathsf{CycTypS}(\{u,v\})}}x_\mathsf{CycTyp}+\sum_{{\mathsf{RobTyp}\in \mathsf{RobTypS}(\{u,v\})}}}$ $x_\mathsf{RobTyp}\geq 1$.

Let $\mathsf{RobTypS}$ be the set of the robot types, and let $\mathsf{CycTypS}$ be the set of cycle types.

\noindent{\bf Covering Edges with an Endpoint in $V(G)\setminus \vc$.} Now, we aim to cover the edges from $E(G)$ with (exactly) one endpoint in $V(G)\setminus \vc$. Here, we need to work harder. Let $x_z$, for every $z\in \mathsf{RobTypS}\cup \mathsf{CycTypS}$, be values that satisfy Equations~\ref{ilp:1} and \ref{ilp:4}--\ref{ilp:6}. As for now, we will arbitrary allocate cycles to robots according to their types. Then, we will replace every $u^*\in V(\gr(\mathsf{CC}_i))$ and $u^*\in V(C)$, for every cycle $C$ allocated to the $i$-th robot, by an arbitrary $u\in u^*$. Then, we will define $\widehat{E}_i$ as the union of edge set of the cycles and $\mathsf{CC}_i$ we obtained. We saw that due to Equations~\ref{ilp:1}, \ref{ilp:5} and~\ref{ilp:6}, 
Conditions~\ref{obs:equivsol1}, \ref{obs:equivsol2} and~\ref{obs:equivsol4} of \cref{obs:equivsol} are satisfied. In addition, due to Equations~\ref{ilp:4}, we ensure that each $\{u,v\}\in E(G)$ with both endpoints in $\vc$ is covered.
The change we need to do in order to cover edges with an endpoint in $V(G)\setminus \vc$ is to make a smarter choices for the replacements of $u^*$ vertices.

\subsubsection{Vertex Type}
\noindent{\bf Allocation of Multisets with Elements from $\mathsf{N}_{G^*}({u^*})$.} Observe that each $u^*_j\in V(\gr(\mathsf{CC}_i))$ that is replaced by some $u\in u^*$, covers the multiset of edges $\{\{u,v\}~|~v\in \widehat{\mathsf{N}}_{\gr(\mathsf{CC}_i)}({u^*_j})\}$. In addition, every $u^*\in V(C)$ that is replaced by $u\in u^*$, covers the multiset of edges $\{\{u,v\},\{u,v'\}\}$, where $v$ and $v'$ are the vertices right before and right after $u$ in $C$, respectively. Now, in order to cover every edge with an endpoint in $V(G)\setminus \vc$, we need to cover the set $\{\{u,v\}~|~v\in \mathsf{N}_{G^*}({u^*})\}$ for every $u\in u^*\in \ind$. Therefore, we would like to ensure that the union of multisets of neighbors ``allocated'' for each $u$, when we replace some $u^*$ by $u$, contains $\{\{u,v\}~|~v\in \mathsf{N}_{G^*}({u^*})\}$. 

\begin{figure}[t]
	\centering
	\includegraphics[page=34, width=0.3\textwidth]{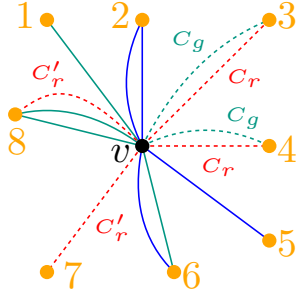}
	\caption{An illustration of the parts of a solution around an independent set vertex $v$. The three colors represent the parts of the multisets corresponding to three robots. The solid edges belong to the skeleton of the specific robot. The dashed edges belong to a cycle, labelled in the figure, of the multiset of the cycles corresponding to the specific robot. The vertex type of $v$ derived from the solution shown in the figure is $(v^*, \{\{1,6,8,8\}, \{3,4\}, \{7,8\}, \{2,2,5,6\}\})$, where $v \in v^* \in \ind$.}
	\label{fig:vertexType}
\end{figure}

\noindent{\bf The Set $\U$ of Multisets Needed to Allocate to a Vertex.} Now, the reverse direction holds as well: let $\widehat{E}_1,\ldots, \widehat{E}_k$ be multisets satisfying the conditions of \cref{obs:equivsol}, for every $i\in[k]$, let $(\mathsf{CC}_i,{\cal C}_i)$ be an $\widehat{E}_i$-valid pair, and let $u\in u^*\in \ind$. Consider the following multisets (*): (i) for every $i\in [k]$ such that $u\in V(\gr(\mathsf{CC}_i))$, the multiset $\widehat{\mathsf{N}}_{\gr(\mathsf{CC}_i)}(u)$; (ii) for every $i\in [k]$ and $C\in{\cal C}_i$ and every appearance of $u$ in $C$, the multiset $\{v,v'\}$, where $v$ and $v'$ are the vertices in $C$ right before and right after the appearance of $u$. By Condition~\ref{obs:equivsol3} of \cref{obs:equivsol}, every edge appears in at least one among $\widehat{E}_1,\ldots, \widehat{E}_k$. So, as for every $i\in[k]$, $\widehat{E}_i=\mathsf{CC}_i\bigcup_{C\in {\cal C}_i}E(C)$, the union of the multisets in (*) obviously contains $\mathsf{N}_{G^*}({u^*})$, e.g. see \cref{fig:vertexType}. We would like to store the information of these potential multisets that ensures we covered $\mathsf{N}_{G^*}({u^*})$. The issue is that there might be a lot of multisets, as $u$ might appear in many $\widehat{E}_i$'s. Clearly, it is sufficient to store one copy of each such multiset, as we only care that the union of the multisets contains $\mathsf{N}_{G^*}({u^*})$. Now, as we assume that $\widehat{E}_1,\ldots, \widehat{E}_k$ are nice multisets, each element in every multiset we derived appears at most twice in that multiset. In addition, since every edge in $E(\overline{G})$ appears at most twice, for each skeleton $\mathsf{CC}\subseteq E(\overline{G})$, each edge appears at most twice in $E(\gr(\mathsf{CC}))$. So, for each $u^*_j\in V(\gr(\mathsf{CC}))$ we replace by some $u\in u^*$, in the multiset of neighbors that are covered, every element appears at most twice. Moreover, since the degree is even, we have that the number of element in each multiset is even.

For a set $A$ we define the multiset $A\times \mathsf{2}=\{a,a~|~a\in A\}$. That is, each element in $A$ appears exactly twice in $A\times \mathsf{2}$.
Thus, we have the following definition for a vertex type (for more details, see \cref{sec:VerTyp}).

\begin{definition} [{\bf Vertex Type}]
	Let $G$ be a connected graph and let $\vc$ be a vertex cover of $G$. Let $u^*\in \ind$ and let $\mathsf{NeiSubsets}\subseteq 2^{\mathsf{N}_{G^*}(u^*)\times \mathsf{2}}$. Then, $\mathsf{VerTyp}=(u^*,\mathsf{NeiSubsets})$ is a {\em vertex type} if
	for every $\mathsf{NeiSub}\in \mathsf{NeiSubsets}$, $|\mathsf{NeiSub}|$ is even, and $\mathsf{N}_{G^*}(u^*)\subseteq \bigcup\mathsf{NeiSubsets}$.
\end{definition}

Now, given $\widehat{E}_1,\ldots, \widehat{E}_k$ satisfying the conditions of \cref{obs:equivsol}, for every $i\in[k]$, an $\widehat{E}_i$-valid pair $(\mathsf{CC}_i,{\cal C}_i)$, and $u\in u^*\in \ind$, we derive the vertex type of $u$ as follows. We take the set $\mathsf{NeiSubsets}$ of multisets as described in (*). Clearly, $(u^*,\mathsf{NeiSubsets})$ is a vertex type (for more details, see \cref{def:VTD} and \cref{lem:VerDer}). 

For the reverse direction, we will use vertex type in order to cover the edges incident to each $u\in u^*\in \ind$. Let $\mathsf{VerTypS}$ be the set of vertex types. We have a variable $x_z$ for every $z\in \mathsf{VerTypS}$. First, each $u\in u^*\in \ind$ is associated with exactly one vertex type $\mathsf{VerTyp}=(u^*,\mathsf{NeiSubsets})$, for some $\mathsf{NeiSubsets}$. To achieve this, we first ensure that for every $u^*\in \ind$, the total sum of $x_z$ for $z\in\mathsf{VerTypS}_{u^*}$, is exactly $|u^*|$, where $\mathsf{VerTypS}_{u^*}=\{(u^*,\U)\in \mathsf{VerTypS}\}$. 

\smallskip\noindent{\bf Equation~\ref{ilp:2}: Vertex Type for Each Vertex.} This equation is defined as follows.

\noindent2. For every $u^*\in \ind$, $\displaystyle{\sum_{\mathsf{VerTyp}\in \mathsf{VerTypS}_{u^*}} x_\mathsf{VerTyp}=|u^*|}$

Given values for the variables that satisfy the equation, we arbitrary determine a vertex type $(u^*,\mathsf{NeiSubsets})$ for each $u\in u^*$, such that there are exactly $x_\mathsf{VerTyp}$ vertices of type $\mathsf{VerTyp}$. 

\smallskip\noindent{\bf Allocation Functions of Multisets to Vertex Types.} Now, let $u\in u^*\in \ind$ of a vertex type $(u^*,\mathsf{NeiSubsets})$. We aim that when we do the replacements of $u^*$'s by vertices from $u^*$, each $u$ gets an allocation of at least one of any of the multisets in $\mathsf{NeiSubsets}$. This ensures that we covered all of the edges adjacent to $u$. Instead of doing this for each $u\in u^*\in \ind$, we will ensure that each $\W\in\mathsf{NeiSubsets}$ is allocated for vertices of type $(u^*,\mathsf{NeiSubsets})$ at least $x_\mathsf{VerTyp}$ times. To this end, we add more information for the robot types. For a robot type with a skeleton $\mathsf{CC}$, recall that we replace each $u^*_j\in V(\gr({\mathsf{CC})})$ by some $u\in u^*$. The robot type also determines what is the vertex type of $u$ that replaces $u^*_j$. In particular, we add to the robot type an allocation for each of  $\{(u^*_j,\widehat{\mathsf{N}}_{\gr({\mathsf{CC})}}(u^*_j))~|~u^*_j\in V(\gr({\mathsf{CC})})\}$, that is, a function $\mathsf{Alloc}_{\gr(\mathsf{CC})}$ from this set into $\mathsf{VerTypS}$ (e.g., a robot of a robot type associated with the skeleton illustrated by \cref{fig:robotAllocation}, needs to allocate the pair $(r^*_1,\{1,1\})$, along with the other pairs shown in the figure). Observe that $u_j^*$ is the vertex being replaced, and $\widehat{\mathsf{N}}_{\gr({\mathsf{CC})}}(u^*_j)$ is the multiset of neighbors that are covered. So, we demand that each $(u^*_j,\widehat{\mathsf{N}}_{\gr({\mathsf{CC})}}(u^*_j))$ is allocated to a vertex type $(u^*,\mathsf{NeiSubsets})$ that ``wants'' to get $\widehat{\mathsf{N}}_{\gr({\mathsf{CC})}}(u^*_j)$, that is, $\widehat{\mathsf{N}}_{\gr({\mathsf{CC})}}(u^*_j)\in \mathsf{NeiSubsets}$ (e.g, a robot of a robot type associated with the skeleton illustrated by \cref{fig:robotAllocation}, might allocate $(r^*_1,\{1,1\})$ to a vertex type $(r^*,\{\{1,1\},\{3,4\}\})$). Now, we are ready to define a robot type as follows (for more details, see \cref{sec:RobTyp}).


\begin{definition} [{\bf Robot Type}]
	A {\em robot type} is $\mathsf{RobTyp}=(\mathsf{CC},\mathsf{Alloc}_{\gr(\mathsf{CC})},\mathsf{NumOfCyc})$ such that:
	\begin{enumerate}
		\item $\mathsf{CC}\subseteq E(\overline{G})$.
		\item $\gr(\mathsf{CC})$ is connected, every vertex in $\gr(\mathsf{CC})$ has even degree and\\ $\vi\in V(\gr(\mathsf{CC}))$.
		\item $\mathsf{Alloc}_{\gr(\mathsf{CC})}$ is an allocation of $\{(u^*_j,\widehat{\mathsf{N}}_{\gr({\mathsf{CC})}}(u^*_j))~|~u^*_j\in V(\gr({\mathsf{CC})})\}$ to vertex types. 
		\item $\mathsf{NumOfCyc}=(N_2,N_3,N_5,N_6,\ldots,N_{2|\vc|})$, where $0\leq N_i\leq 2|\vc|^2$ for every $2\leq i\leq 2|\vc|$, $i\neq 4$.
	\end{enumerate}
\end{definition}

Similarly, we add to a cycle type with a cycle $C$ in $G^*$ an allocation of the multiset $\{\{v,v'\}~|~u^*\in V(C)$, $v$ and $v'$ are the vertices appears right before and right after $u^*\}$ to vertex types (given by a function $\mathsf{PaAlloc}_C$). Now, we are ready to define a cycle type as follows (for more details, see \cref{sec:CycTyp}).

\begin{definition} [{\bf Cycle Type}]
	Let $C\in \mathsf{Cyc}_{G^*}$, let $\mathsf{PaAlloc}_C$ be an allocation of $\{\{v,v'\}~|~u^*$ $\in V(C)$, $v$ and $v'$ are the vertices appears right before and right after $u^*\}$ to vertex types, and let $\mathsf{RobTyp}=(\mathsf{CC},\mathsf{Alloc}_{\gr(\mathsf{CC})},\mathsf{NumOfCyc})$ be a robot type. Then, $\mathsf{CycTyp}=(C,\mathsf{PaAlloc}_C,\mathsf{RobTyp})$ is a {\em cycle type} if $V(\gr(\mathsf{CC}))\cap V(C)\cap \vc\neq \emptyset$.
\end{definition}

We have the following notations.

For every $\mathsf{VerTyp}=(u^*,\U)\in \mathsf{VerTypS}$, every $\W=\{v,v'\}\in \U$ and $1\leq j\leq 2|\vc|$, $\mathsf{CycTypS}(\mathsf{VerTyp},\W,j)$ is the set of cycle types that assign $\W$ to $\mathsf{VerTyp}$ exactly $j$ times. For every $\mathsf{VerTyp}=(u^*,\U)\in \mathsf{VerTypS}$, every $\W\in \U$ and $1\leq j\leq 2^{|\vc|}+|\vc|^2$,
$\mathsf{RobTypS}(\mathsf{VerTyp},\W,j)$ is the set of robot types that assign $\W$ to $\mathsf{VerTyp}$ exactly $j$ times.
Finally, we have the following equation:

\smallskip\noindent{\bf Equation~\ref{ilp:3}: Assigning Enough Subsets for Each Vertex Type.} The equation is defined as follows.

\noindent3. For every $\mathsf{VerTyp}=(u^*,\U)\in \mathsf{VerTypS}$, and every $\W\in \U$, 

\noindent$\displaystyle{\sum_{j=1}^{2{|\vc|}}\sum_{{\mathsf{CycTyp}\in \mathsf{CycTypS}(\mathsf{VerTyp},\W,j)}}j\cdot x_\mathsf{CycTyp}+}$\\  $\displaystyle{\sum_{j=1}^{2^{|\vc|}+|\vc|^2}\sum_{{\mathsf{RobTyp}\in \mathsf{RobTypS}(\mathsf{VerTyp},\W,j)}}j\cdot x_\mathsf{RobTyp}\geq x_\mathsf{VerTyp}}$.

\subsubsection{The Correctness of The Reduction}
We denote the ILP instance associated with Equations~\ref{ilp:1}--\ref{ilp:6} by $\mathsf{Reduction}(G,\vi,k,B)$. Now, we give a proof sketch for the correctness of the reduction:

\begin{lemma}
	Let $G$ be a connected graph, let $\vi\in V(G)$ and let $k,B\in \mathbb{N}$. Then, $(G,\vi,k,B)$ is a yes-instance of \cg, if and only if $\mathsf{Reduction}(G,\vi,k,B)$ is a yes-instance of the {\sc Integer Linear Programming}.
\end{lemma}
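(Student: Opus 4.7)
The plan is to prove both directions through the reformulation in \cref{obs:equivsol}.

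\textbf{Forward direction.} Starting from a yes-instance, \cref{obs:equivsol} gives me multisets $\widehat{E}_1,\ldots,\widehat{E}_k$ meeting its four conditions. I would first replace them by nice multisets (via the referenced \cref{obs:equivso}) and then apply \cref{lem:4cycandCC} to obtain for each $i$ an $\widehat{E}_i$-valid pair $(\mathsf{CC}_i,{\cal C}_i)$. For every $u\in u^*\in\ind$, collecting the multisets $\widehat{\mathsf{N}}_{\gr(\mathsf{CC}_i)}(u)$ from the skeletons containing $u$ together with the pair $\{v,v'\}$ of cycle-neighbors at every appearance of $u$ in each $C\in\bigcup_i{\cal C}_i$ yields its vertex type, and the actual vertex placed at each slot $u_j^*$ of $\mathsf{CC}_i$ or of each $C\in{\cal C}_i$ yields the allocation functions $\mathsf{Alloc}_{\gr(\mathsf{CC}_i)}$ and $\mathsf{PaAlloc}_C$. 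Setting $x_z$ to the number of objects of type $z$ makes Equations~1, 2, 4 and~5 bookkeeping identities; Equation~6 follows from $|\widehat{E}_i|=|\mathsf{CC}_i|+\sum_{C\in{\cal C}_i}|E(C)|\leq B$ combined with the fact that only the length-$4$ cycles consume the residual budget; and Equation~3 holds because every $\W$ recorded in the vertex type of $u$ was placed there by a concrete allocation event counted on the left-hand side.

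\textbf{Reverse direction.} Given feasible $(x_z)_z$, I would use Equation~2 to assign each $u\in u^*$ a vertex type so that exactly $x_\mathsf{VerTyp}$ vertices of $u^*$ receive type $\mathsf{VerTyp}$, and Equation~1 to assign each of the $k$ robots a robot type. Equation~5 lets me distribute the cycles of each cycle type across robots matching the $N_j$ counts for $j\neq 4$, and Equation~6 ensures the remaining length-$4$ cycles fit within the residual budget $B-|\mathsf{CC}|-\sum_{j\neq 4}N_j\cdot j$ of each robot of type $(\mathsf{CC},\mathsf{Alloc}_{\gr(\mathsf{CC})},\mathsf{NumOfCyc})$. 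I would then replace each skeleton slot $u_j^*$ by an unused vertex of the type prescribed by $\mathsf{Alloc}_{\gr(\mathsf{CC})}$, and perform analogous replacements within each assigned cycle via $\mathsf{PaAlloc}_C$. Defining $\widehat{E}_i$ as the multiset union of edges of the replaced skeleton and cycles of robot $i$, Conditions~1, 2 and~4 of \cref{obs:equivsol} hold because $\vi\in V(\gr(\mathsf{CC}))$ is built into the robot type, every appended cycle attaches at a vertex cover vertex of the skeleton (preserving connectivity), both skeleton and cycles contribute even degrees at every vertex, and the budget is enforced by Equation~6. Condition~3 for edges within $\vc$ is exactly Equation~4.

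\textbf{Main obstacle.} The crux is establishing Condition~3 of \cref{obs:equivsol} for edges with one endpoint in $V(G)\setminus\vc$ while remaining consistent with the vertex-type assignment from the first step. For $u\in u^*\in\ind$ of vertex type $(u^*,\U)$, since by definition $\mathsf{N}_{G^*}(u^*)\subseteq\bigcup\U$, it suffices to show that $u$ receives at least one allocation slot labelled $\W$ for every $\W\in\U$. Equation~3 is exactly the needed counting: for each pair $(\mathsf{VerTyp},\W)$ with $\W\in\U$, the left-hand side totals, with multiplicity $j$, the $\W$-slots produced across all cycle and robot types, and it is at least $x_\mathsf{VerTyp}$. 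A Hall-style assignment then simultaneously gives every $u$ of type $\mathsf{VerTyp}$ at least one $\W$-slot for every $\W\in\U$; carrying out this assignment in lockstep with the skeleton and cycle replacements, and checking that the resulting $\gr(\widehat{E}_i)$ retains connectivity and even-degree at every vertex, constitutes the main technical content of the proof.
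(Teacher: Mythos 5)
Your proposal follows the paper's proof essentially step for step. The forward direction matches exactly: pass to nice multisets (\cref{obs:equivso}), decompose into valid pairs (\cref{lem:4cycandCC}), derive vertex/robot/cycle types from the decomposition, and set each $x_z$ to the count of objects of type $z$; the paper then verifies the equations in \cref{lem:ForDir1,lem:ForDir2} exactly as you sketch (though Equation~4 is not purely bookkeeping --- it additionally needs Condition~3 of \cref{obs:equivsol} to place each $\vc$-$\vc$ edge inside some $\mathsf{CC}_i$ or ${\cal C}_i$). The reverse direction also matches: the paper's $\mathsf{ILPDerVerTyp}$, $\mathsf{ILPDerRobTyp}$, $\mathsf{ILPDerCycAlloc}$ and $\mathsf{ILPToRobExp}$ do precisely the type-assignment, cycle-distribution, slot-allocation, and replacement you describe, and the budget calculation from Equation~6 together with a balanced distribution of length-$4$ cycles (Condition~3 of \cref{def:DRAILP}) yields Condition~4 of \cref{obs:equivsol}. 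One small clarification on your ``main obstacle'': no Hall-type combinatorial argument is needed. Every allocation slot --- a position $u_j^*$ in a skeleton with its fixed neighbour multiset, or a position of $u^*$ in a cycle with its fixed pair $\{v,v'\}$ --- carries a \emph{single} label $(\mathsf{VerTyp},\W)$, and Equation~3 says there are at least $x_{\mathsf{VerTyp}}$ slots with that label. Hence a direct greedy assignment, independently for each pair $(\mathsf{VerTyp},\W)$, already gives every vertex of type $\mathsf{VerTyp}$ at least one $\W$-slot; the assignments for different $\W$'s never compete for the same slot. This is exactly what the paper's $\mathsf{SubAlloc}_{\mathsf{VerTyp},\W}$ functions in \cref{def:DVAILP} do, and then \cref{def:DTCCILP,def:DTCILP} carry out the replacements ``in lockstep'' as you propose.
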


\begin{proof}
	Let $x_z$, for every $z\in \mathsf{VerTypS}\cup \mathsf{RobTypS}\cup \mathsf{CycTypS}$, be values satisfying Equations~\ref{ilp:1}--\ref{ilp:6}. For every vertex type $\mathsf{VerTyp}=(u^*,\mathsf{NeiSubsets})$ and each $\W\in \U$, let $\mathsf{Alloc}(\mathsf{VerTyp},\W)$ be the set of every allocation of $\W$ to $\mathsf{VerTyp}$ by cycles or robots. We arbitrary allocate each element in $\mathsf{Alloc}(\mathsf{VerTyp},\W)$ to a vertex in $u^*$, such that every vertex $u\in u^*$ of type $\mathsf{VerTyp}$ gets at least one allocation. Due to Equation~\ref{ilp:3}, we ensure we can do that. Then, we replace every $u^*\in V(\gr(\mathsf{CC}_i))$ and every $u^*\in V(C)$ (for every $C\in {\cal C}_i$) by the $u\in u^*$ derived by the allocation. This ensures we covered every edge adjacent to a vertex in $V(G)\setminus \vc$. As seen in this overview, the other conditions of \cref{obs:equivsol} hold (the full proof is in \cref{sec:forDir}).
	
	For the reverse direction, let $\widehat{E}_1,\ldots,\widehat{E}_k$ be multisets satisfying the conditions of \cref{obs:equivsol}. For every $1\leq i\leq k$ let $(\mathsf{CC}_i,{\cal C}_i)$ be an $\widehat{E}_i$-valid pair. Then, we first derive the vertex type of each $u\in V(G)\setminus \vc$, according to its equivalence class in $\ind$, and the set of multisets derived from $((\mathsf{CC}_i,{\cal C}_i))_{1\leq i\leq k}$ (e.g. see \cref{fig:vertexType})(for more details, see \cref{def:VTD} and \cref{lem:VerDer}). Then, we derive the robot type $\mathsf{RobTyp}=(\mathsf{CC},\mathsf{Alloc}_{\gr(\mathsf{CC})},\mathsf{NumOfCyc})$ for each $i\in [k]$: (i) the skeleton $\mathsf{CC}$ is determined by $\mathsf{CC}_i$ (e.g., see \cref{fig:robotAllocation})), (ii) $\mathsf{NumOfCyc}$ is determined by the number of cycle of each length in ${\cal C}_i$ and (iii) the allocation of the multisets of $\gr(\mathsf{CC}_i)$ is determined by the vertex types of $u\in V(\gr(\mathsf{CC}_i))\cap (V(G)\setminus \vc)$ we have already computed (for more details, see \cref{def:RTD} and \cref{lem:RobDer}). Then, for every $i\in[k]$ and every $C'\in {\cal C}_i$, we determine the cycle type $\mathsf{CycTyp}=(C,\mathsf{PaAlloc}_C,\mathsf{RobTyp})$ of $C'$: (i) $C$ is determined by $C'$ (we replace each $u\in u^*\in \ind$ in $C'$ by $u^*$), (ii) $\mathsf{RobTyp}$ is the robot type of $i$ we have already computed, and (iii) $\mathsf{PaAlloc}_C$  is determined by the vertex types of $u\in V(C')\cap (V(G)\setminus \vc)$ we have already computed (for more details, see \cref{def:CTD} and \cref{lem:CycDer}). Then, for every $z\in \mathsf{VerTypS}\cup \mathsf{RobTypS}\cup \mathsf{CycTypS}$, we define $x_z$ to be the number of elements of type $z$. As seen in this overview, the values of the variables satisfy Equations~\ref{ilp:1}--\ref{ilp:6} (the full proof is in \cref{sec:revDir}).
\end{proof}

Observe that the number of variables is bounded by a function of $|\vc|$, so we will get an FPT runtime with respect to $\vn$. We analyze the runtime of the algorithm in \cref{sec:runtime}. 
Thus, we conclude the correction of \cref{th:fptVc}.

\subsection{Approximation Algorithm with Additive Error of $\OO(\vn)$}
Our algorithm is based on a greedy approach.
Recall that, our new goal (from Lemma~\ref{obs:equivsol}) is to find $k$ multisets $\widehat{E}_1,\ldots,\widehat{E}_k$ such that for every $1\leq i\leq k$, $\vi\in V(\gr(\widehat{E}_i))$, $\gr(\widehat{E}_i)$ is connected and each $u\in V(\gr(\widehat{E}_i))$ has even degree in $\gr(\widehat{E}_i)$. Now, assume that we have a vertex cover $\vc$ of $G$ such that $G[\vc]$ is connected and $\vi\in \vc$ (e.g., see the orange vertices in Figure~\ref{fig:approx1B}), and let $I=V\setminus \vc$. 
We first make the degree of every vertex in $I$ even in $G$, by duplicating an arbitrary edge for vertices having odd degree (e.g., see the green edges in Figure~\ref{fig:approx1C}). Observe that, after these operations, $G$ may be a multigraph (e.g., see the graph in Figure~\ref{fig:approx1C}).

We initialize $\widehat{E}_1,\ldots,\widehat{E}_k$ with $k$ empty sets. We partition the set of edges of $G$ with one endpoint in $I$ in the following manner. We choose the next multiset from $\widehat{E}_1,\ldots,\widehat{E}_k$ in a round-robin fashion and put a pair of edges, not considered so far, incident to some vertex $v \in I$, in the multiset (e.g., see the red, blue and green edges in Figure~\ref{fig:approx1D}).
%
%
This ensures that the degree of every vertex in $I$ is even in each multiset (e.g., see the Figures~\ref{fig:approx1E}--\ref{fig:approx1G}). 
Let $\widehat{E}'_1,\ldots,\widehat{E}'_k$ be multisets satisfying the conditions of Lemma~\ref{obs:equivsol}. Then, due to Condition~\ref{con:2EquiSol} of Lemma~\ref{obs:equivsol}, the degree of every vertex is even in every multiset $\widehat{E}'_i$. 
Thus, the total number of edges (with repetition) incident to any vertex in $\gr(\widehat{E}'_1 \cup \ldots \cup \widehat{E}'_k)$ is even.
Therefore, there must be at least one additional repetition for at least one edge of every vertex with odd degree in $G$. So, adding an additional edge to each vertex with odd degree is ``must'' and it does not ``exceed'' the optimal budget. Then, we partition the edges with both endpoints in $\vc$, in a balanced fashion, as follows. We choose an edge, not considered so far, and add it to a multiset with minimum size. 

Observe that, after this step, we have that: i) every edge of the input graph belongs to at least one of the multisets $\widehat{E}_i$, ii) the degree of each vertex of $I$ in each multiset is even, and iii) we have not exceeded the optimal budget (e.g., see the Figures~\ref{fig:approx2A}--\ref{fig:approx2D}). We still need to ensure that i) $\gr(\widehat{E}_i)$ is connected, for every $i \in [k]$, ii) the degree of each vertex of $\vc$ in each multiset is even, and iii) $\vi\in V(\gr(\widehat{E}_i))$ for every $i\in [k]$.
Next, we add a spanning tree of $G[\vc]$ to each of the $\widehat{E}_i$, in order to make $\gr(\widehat{E}_i)$ connected and to ensure that $\vi\in V(\gr(\widehat{E}_i))$ (e.g., see the Figures~\ref{fig:approx2F}--\ref{fig:approx2H}). Lastly, we add at most $|\vc|$ edges, with both endpoints in $\vc$, to every $\widehat{E}_i$ in order to make the degree of each $u\in \vc$ even in each of the multiset (e.g., see the Figures~\ref{fig:approx2I}--\ref{fig:approx2K}). Observe that the multisets $\widehat{E}_1, \ldots, \widehat{E}_k$ satisfy the conditions of Lemma~\ref{obs:equivsol}.
Moreover, we added at most $\OO(|\vc|)$ additional edges to each $\widehat{E}_i$, comparing to an optimal solution.

\begin{figure}[t]
	\centering
	\begin{subfigure}[t]{.33\textwidth}
		\centering
		\includegraphics[page=9, width=\textwidth]{figures/robotExplore}
		\caption{}
		\label{fig:approx1A}
	\end{subfigure}\hfil
	\begin{subfigure}[t]{.33\textwidth}
		\centering
		\includegraphics[page=10, width=\textwidth]{figures/robotExplore}
		\caption{}
		\label{fig:approx1B}
	\end{subfigure}\hfil
	\begin{subfigure}[t]{.33\textwidth}
		\centering
		\includegraphics[page=11, width=\textwidth]{figures/robotExplore}
		\caption{}
		\label{fig:approx1C}
	\end{subfigure}\hfil
	\begin{subfigure}[t]{.33\textwidth}
		\centering
		\includegraphics[page=12, width=\textwidth]{figures/robotExplore}
		\caption{}
		\label{fig:approx1D}
	\end{subfigure}\hfil
	\begin{subfigure}[t]{.33\textwidth}
		\centering
		\includegraphics[page=13, width=\textwidth]{figures/robotExplore}
		\caption{}
		\label{fig:approx1E}
	\end{subfigure}\hfil
	\begin{subfigure}[t]{.33\textwidth}
		\centering
		\includegraphics[page=14, width=\textwidth]{figures/robotExplore}
		\caption{}
		\label{fig:approx1F}
	\end{subfigure}\hfil
	\begin{subfigure}[t]{.33\textwidth}
		\centering
		\includegraphics[page=15, width=\textwidth]{figures/robotExplore}
		\caption{}
		\label{fig:approx1G}
	\end{subfigure}
	\caption{Illustration of the execution of Lines~\ref{alg1:L2}--\ref{alg1:L20} of Algorithm~\ref{alg:AppAlg} on the instance shown in Figure~\ref{fig:approx1A}. The orange vertices in Figures~\ref{fig:approx1B}--\ref{fig:approx1G} are the vertex cover vertices.
		(a) Running example for Algorithm~\ref{alg:AppAlg}: A graph $G$, a vertex cover $\vc$ (drawn in violet) of $G$, the vertex $\vi = 9$, and $k = 3$. (b) The connected vertex cover $\vc'$ (drawn in orange) of $G$ obtained from $\vc$ after executing Lines \ref{alg1:L2}--\ref{alg1:L3} of Algorithm~\ref{alg:AppAlg}. (c) The green edges are added to make the degree of independent set vertices even (Lines~\ref{alg1:L4}--\ref{alg1:L9} of Algorithm~\ref{alg:AppAlg}). (d) Balanced partition of edges incident to independent set vertices to the three robots, shown by red, blue and green edges (Lines~\ref{alg1:L10}--\ref{alg1:L20} of Algorithm~\ref{alg:AppAlg}). (e-g) The graphs induced by the multiset corresponding to each of the three robots after Line~\ref{alg1:L20} of Algorithm~\ref{alg:AppAlg}.}
	\label{fig:approx1}
\end{figure}

\begin{figure}
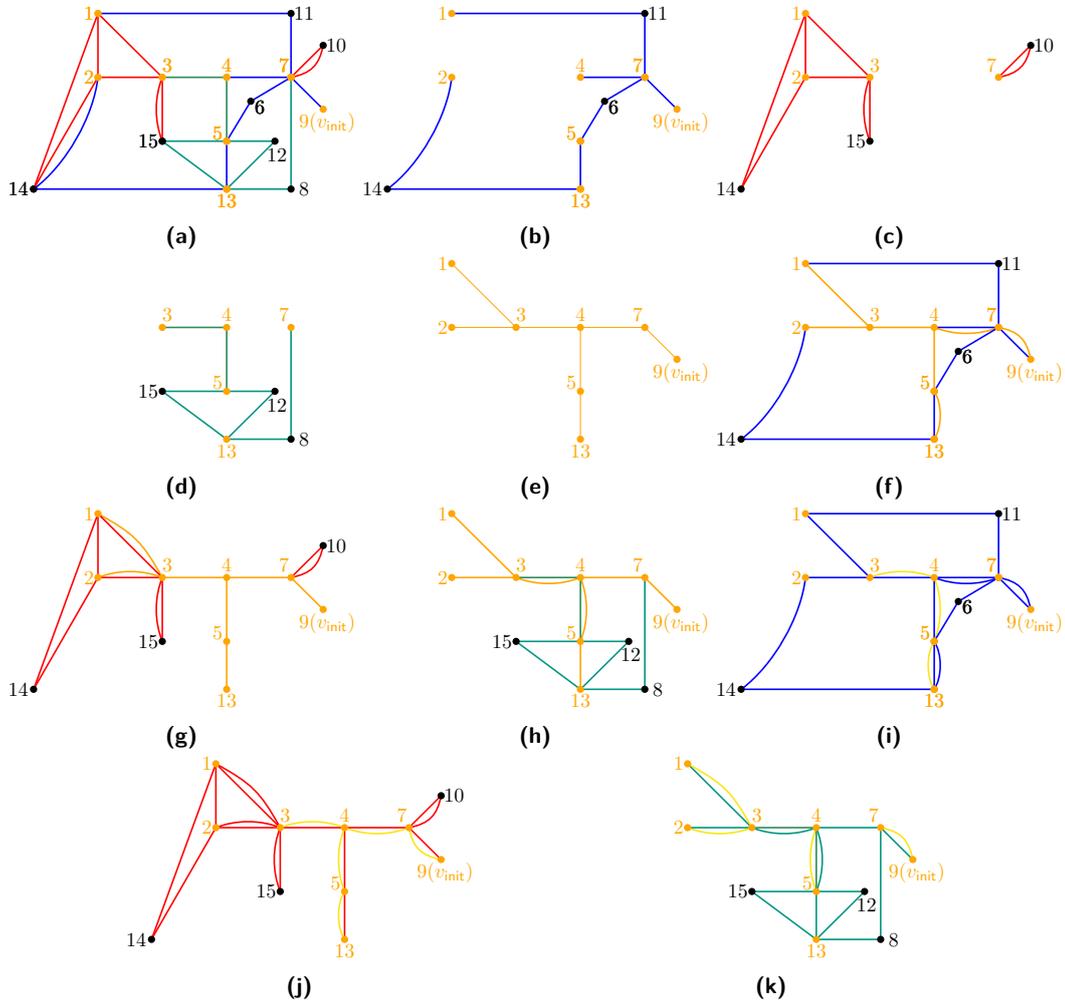

	\centering
	\begin{subfigure}[t]{.33\textwidth}
		\centering
		\includegraphics[page=16, width=\textwidth]{figures/robotExplore}
		\caption{}
		\label{fig:approx2A}
	\end{subfigure}\hfil
	\begin{subfigure}[t]{.33\textwidth}
		\centering
		\includegraphics[page=17, width=\textwidth]{figures/robotExplore}
		\caption{}
		\label{fig:approx2B}
	\end{subfigure}\hfil
	\begin{subfigure}[t]{.33\textwidth}
		\centering
		\includegraphics[page=18, width=\textwidth]{figures/robotExplore}
		\caption{}
		\label{fig:approx2C}
	\end{subfigure}\hfil
	\begin{subfigure}[t]{.33\textwidth}
		\centering
		\includegraphics[page=19, width=\textwidth]{figures/robotExplore}
		\caption{}
		\label{fig:approx2D}
	\end{subfigure}\hfil
	\begin{subfigure}[t]{.33\textwidth}
		\centering
		\includegraphics[page=20, width=\textwidth]{figures/robotExplore}
		\caption{}
		\label{fig:approx2E}
	\end{subfigure}\hfil
	\begin{subfigure}[t]{.33\textwidth}
		\centering
		\includegraphics[page=21, width=\textwidth]{figures/robotExplore}
		\caption{}
		\label{fig:approx2F}
	\end{subfigure}\hfil
	\begin{subfigure}[t]{.33\textwidth}
		\centering
		\includegraphics[page=22, width=\textwidth]{figures/robotExplore}
		\caption{}
		\label{fig:approx2G}
	\end{subfigure}\hfil
	\begin{subfigure}[t]{.33\textwidth}
		\centering
		\includegraphics[page=23, width=\textwidth]{figures/robotExplore}
		\caption{}
		\label{fig:approx2H}
	\end{subfigure}\hfil
	\begin{subfigure}[t]{.33\textwidth}
		\centering
		\includegraphics[page=24, width=\textwidth]{figures/robotExplore}
		\caption{}
		\label{fig:approx2I}
	\end{subfigure}\hfil
	\begin{subfigure}[t]{.33\textwidth}
		\centering
		\includegraphics[page=25, width=\textwidth]{figures/robotExplore}
		\caption{}
		\label{fig:approx2J}
	\end{subfigure}\hfil
	\begin{subfigure}[t]{.33\textwidth}
		\centering
		\includegraphics[page=26, width=\textwidth]{figures/robotExplore}
		\caption{}
		\label{fig:approx2K}
	\end{subfigure}
	
	\caption{Illustration of the execution of Lines~\ref{alg1:L24}--\ref{alg1:L31} of Algorithm~\ref{alg:AppAlg} on the instance shown in Figure~\ref{fig:approx1A}. The orange vertices in Figures~\ref{fig:approx2B}--\ref{fig:approx2K} are the vertex cover vertices. (a) Balanced partition of all the edges, also including the ones incident only to vertex cover vertices to the three robots, shown by red, blue and green edges (after Line~\ref{alg1:L24} of Algorithm~\ref{alg:AppAlg}). (b-d) The graphs induced by the multisets corresponding to each of the three robots after Line~\ref{alg1:L24} of Algorithm~\ref{alg:AppAlg}. (e) A spanning tree $T$ of the graph induced by the vertex cover $\vc'$ (Line~\ref{alg1:L25} of Algorithm~\ref{alg:AppAlg}). (f-h) The graphs induced by the multisets corresponding to each of the three robots after adding the spanning tree $T$ (Lines~\ref{alg1:L25}--\ref{alg1:L28} of Algorithm~\ref{alg:AppAlg}). (i-k) The graphs induced by the multisets corresponding to each of the three robots after making the degree of each vertex in $\vc'$ even (Lines~\ref{alg1:L29}--\ref{alg1:L31} of Algorithm~\ref{alg:AppAlg}). The yellow edges correspond to the edges added by Algorithm~\ref{alg:MakeVCInAEvenDeg}.}
	\label{fig:approx2}
\end{figure}


\section{FPT Algorithm with Respect to Vertex Cover}\label{sec:FPTAlg}

In this section, we present an FPT algorithm with respect to the vertex cover number of the input graph $G$:

\fptTheorem*

Our algorithm is based on a reduction to the {\sc Integer Linear Programming} (ILP) problem. 

Recall that by Lemma~\ref{obs:equivsol}, an instance $(G,\vi,k,B)$ of \cg is a yes-instance if and only if there exist $k$ multisets $\widehat{E}_1,\ldots,\widehat{E}_k$ with elements from $E(G)$ such that:
\begin{enumerate}
	\item For every $1\leq i\leq k$, $\vi\in V(\gr(\widehat{E}_i))$.
	\item For every $1\leq i\leq k$, $\gr(\widehat{E}_i)$ is connected, and every vertex in it has even degree.
	\item $E(G)\subseteq \widehat{E}_1\cup \ldots \cup\widehat{E}_k$.
	\item$ \mathsf{max}\{|\widehat{E}_1|,\ldots, |\widehat{E}_k|\}\leq B$.
\end{enumerate}

Now, we show that it is enough to look at ``simpler'' multisets satisfying the conditions of Lemma~\ref{obs:equivsol}. By picking a solution that minimizes $\sum^k_{i=1}|\widehat{E}_i|$, it holds that for every $1\leq i\leq k$ and $\{u,v\}\in \widehat{E}_i$, $\{u,v\}$ appears at most twice in $\widehat{E}_i$:

\begin{observation}\label{obs:equivso}
	Let $G$ be a connected graph, let $\vi\in V(G)$ and let $k,B\in \mathbb{N}$. Assume that there exist $\widehat{E}_1,\ldots,\widehat{E}_k$ such that the conditions of Lemma~\ref{obs:equivsol} hold. Then, there exist $\widehat{E}'_1,\ldots,\widehat{E}'_k$ such that the conditions of Lemma~\ref{obs:equivsol} hold and for every $1\leq i\leq k$, each $\{u,v\}\in \widehat{E}'_i$ appears at most twice in $\widehat{E}'_i$.
\end{observation}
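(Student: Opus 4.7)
\textbf{Proof plan for Observation~\ref{obs:equivso}.} The plan is a minimum-counterexample / exchange argument. Among all tuples $(\widehat{E}_1,\ldots,\widehat{E}_k)$ satisfying Conditions~\ref{obs:equivsol1}--\ref{obs:equivsol4} of Lemma~\ref{obs:equivsol} for the instance $(G,\vi,k,B)$, pick one that minimizes $\sum_{i=1}^{k}|\widehat{E}_i|$. Such a minimizer exists because the set of valid tuples is non-empty (by assumption) and the sum is a non-negative integer bounded below by $|E(G)|$. I will show that this minimizer already has the desired property: every edge appears at most twice in each multiset.

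Suppose for contradiction that some edge $e = \{u,v\}$ appears $t \geq 3$ times in some $\widehat{E}_i$. Define $\widehat{E}'_i$ by removing exactly two copies of $e$ from $\widehat{E}_i$, and let $\widehat{E}'_j = \widehat{E}_j$ for $j \neq i$. I will verify that $(\widehat{E}'_1,\ldots,\widehat{E}'_k)$ still satisfies all four conditions of Lemma~\ref{obs:equivsol}. Since $\sum_{j} |\widehat{E}'_j| = \sum_j |\widehat{E}_j| - 2$, this contradicts minimality and completes the proof.

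The verification is routine but needs to be carried out condition by condition:
\begin{itemize}
\item \emph{Parity (Condition~\ref{obs:equivsol2}, degrees):} Removing two copies of $\{u,v\}$ decreases the degrees of $u$ and $v$ in $\gr(\widehat{E}'_i)$ by exactly $2$ each, and leaves every other degree unchanged, so all degrees remain even.
\item \emph{Connectivity (Condition~\ref{obs:equivsol2}):} Since $t \geq 3$, at least $t - 2 \geq 1$ copy of $\{u,v\}$ remains in $\widehat{E}'_i$. Hence the edge $\{u,v\}$ is still present in the multigraph $\gr(\widehat{E}'_i)$, and no vertex is removed ($u$ and $v$ still have at least one copy of $\{u,v\}$ incident to them, and all other vertices keep their previous incidences). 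Any walk in $\gr(\widehat{E}_i)$ that used one of the deleted copies can be rerouted through a remaining copy of $\{u,v\}$, so $\gr(\widehat{E}'_i)$ remains connected.
\item \emph{Initial vertex (Condition~\ref{obs:equivsol1}):} $V(\gr(\widehat{E}'_i)) = V(\gr(\widehat{E}_i))$ by the previous point, so $\vi \in V(\gr(\widehat{E}'_i))$.
\item \emph{Coverage (Condition~\ref{obs:equivsol3}):} The edge $\{u,v\}$ is still present in $\widehat{E}'_i$, and no other edge was modified, so $E(G) \subseteq \widehat{E}'_1 \cup \cdots \cup \widehat{E}'_k$.
\item \emph{Budget (Condition~\ref{obs:equivsol4}):} $|\widehat{E}'_i| = |\widehat{E}_i| - 2 \leq |\widehat{E}_i| \leq B$, and the other sets are unchanged.
\end{itemize}

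There is no real obstacle here; the only subtle point is the connectivity check, and that is immediate once one observes that reducing the multiplicity of a multi-edge from $t \geq 3$ to $t-2 \geq 1$ neither removes the edge from the underlying multigraph nor removes any vertex. The bound ``at most twice'' (rather than ``at most once'') is tight for this argument, because removing a single copy of an edge would flip the parity of its endpoints' degrees, breaking Condition~\ref{obs:equivsol2}; pairs are the smallest safe unit to delete.
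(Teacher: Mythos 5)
Your proposal is correct and matches the paper's argument: the paper likewise picks a solution minimizing $\sum_{i=1}^{k}|\widehat{E}_i|$ (implicitly, via the remark preceding the observation) and deletes two copies of any edge appearing more than twice, noting that parities stay even and the remaining copy preserves connectivity and coverage. Your write-up just spells out the condition-by-condition verification that the paper leaves as ``easy to see.''
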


\begin{proof}
Let $\widehat{E}_1,\ldots,\widehat{E}_k$ be multisets satisfying the conditions of Lemma~\ref{obs:equivsol}. Assume that there exists $1\leq i\leq k$ and $\{u,v\}\in \widehat{E}_i$ that appears more than twice in $\widehat{E}_i$. Let $\widehat{E}'_i= \widehat{E}_i\setminus \{\{u,v\},\{u,v\}\}$. Observe that since every vertex in $\gr(\widehat{E}_i)$ has even degree, then every vertex in $\gr(\widehat{E}'_i)$ has even degree. Thus, it is easy to see that $\widehat{E}_1,\ldots,\widehat{E}_{i-1},\widehat{E}'_{i},\widehat{E}_{i+1},\ldots \widehat{E}_{k}$ are also multisets satisfying the conditions of Lemma~\ref{obs:equivsol}. 
\end{proof}

We aim to construct linear equations that verify the conditions in Lemma~\ref{obs:equivsol}.  
We begin with some definitions that we will use later, when we present our reduction. 

\subsection{Encoding $\widehat{E}_i$ by a Valid Pair}\label{sec:ValPa}

Given $\widehat{E}_1,\ldots,\widehat{E}_k$ such that conditions of Observation~\ref{obs:equivso} hold, we show how to ``encode'' each $\widehat{E}_i$ by a different structure, which will be useful later. For this purpose, we present some definitions. 
Let $G$ be a connected graph and let $\vc$ be a vertex cover of $G$, given as input. We define an equivalence relation on $V(G)\setminus \vc$ based on the sets of neighbors of the vertices in $V(G)\setminus \vc$:



\begin{definition} [{\bf Equivalence Relation for the Independent Set}]\label{def:ER}
Let $G$ be a connected graph and let $\vc$ be a vertex cover of $G$. Let $\indd=V(G)\setminus \vc$. For every $u,v\in \indd$, $u$ is {\em equal} to $v$ if $\mathsf{N}(u)=\mathsf{N}(v)$. We denote the set of equivalence classes induced by this equivalence relation by $\mathsf{EQ}_{G,\vc}$.
\end{definition}

When $G$ and $\vc$ are clear from context, we refer to $\mathsf{EQ}_{G,\vc}$ as $\mathsf{EQ}$.

Next, we define the {\em equivalence graph of $G$ and $\vc$}, denoted by $G^*$. It contains every vertex from $\vc$, and the edges having both endpoints in $\vc$. In addition, every equivalence class $u^*\in \ind$ is represented by a single vertex adjacent to the neighbors of some $u\in u^*$ in $G$ (which belong to $\vc$), e.g. see \cref{fig:fptB}.  

\begin{definition} [{\bf Equivalence Graph $G^*$}]\label{def:VCG}
	Let $G$ be a connected graph and let $\vc$ be a vertex cover of $G$. The {\em equivalence graph of $G$ and $\vc$} is $G^*=\gr(E^*)$ where $E^*=\{\{u^*,v\}~|~u^*\in \ind, \exists u \in V(G)$ s.t. $  \{u,v\}\in E(G) \wedge u\in u^*\}\cup \{\{u,v\}\in E(G)~|~u,v\in \vc\}$.
\end{definition}

To construct $k$ multisets $\widehat{E}_1,\ldots,\widehat{E}_k$ that satisfy the conditions of Lemma~\ref{obs:equivsol}, we first deal with Condition~\ref{obs:equivsol2}: given $1\leq i\leq k$, we need to ensure that $\gr(\widehat{E}_i)$ is connected, and every vertex in it has even degree. For this purpose, we ``encode'' $\widehat{E}_i$, for every $1\leq i\leq k$, as follows. First, we construct a ``small'' subset $\mathsf{CC}_i\subseteq \widehat{E}_i$ such that: (i) $\gr(\mathsf{CC}_i)$ is connected; (ii) every vertex in $\gr(\mathsf{CC}_i)$ has even degree; and (iii) $V(\gr(\mathsf{CC}_i))\cap \vc=\gr(\widehat{E}_i)\cap \vc$. In words, Condition (iii) states that every vertex from $\gr(\widehat{E}_i)$ that belongs to $\vc$ also belongs to $\gr(\mathsf{CC}_i)$. Observe that every vertex will have even degree in $\gr(\widehat{E}_i\setminus\mathsf{CC}_i)$, and every cycle in $\gr(\widehat{E}_i\setminus\mathsf{CC}_i)$ will contain at least one vertex from $\vc$. We will later use these properties to ``encode'' the edges from $\widehat{E}_i\setminus\mathsf{CC}_i$.

 Now, we turn to show how to construct $\mathsf{CC}_i$. For this purpose, we define the graph $\overline{G}$. This graph is similar to $G^*$, but has $\mathsf{NumVer}(u^*)=\mathsf{min}\{|u^*|,2^{|\mathsf{N}_{G^*}({u^*})|}+|\vc|^2\}$ vertices for each $u^*\in \ind$. In addition, each edge in $\overline{G}$ appears exactly twice in $E(\overline{G})$, e.g. see \cref{fig:fptC}. We will see later that we can construct $\mathsf{CC}_i$ as a submultigraph of $\overline{G}$.

\begin{definition} [{\bf The Multigraph $\overline{G}$}]\label{def:GbarGraph}
		Let $G$ be a connected graph and let $\vc$ be a vertex cover of $G$. For every $u^*\in \ind$, let $\mathsf{NumVer}(u^*)=\mathsf{min}\{|u^*|,2^{|\mathsf{N}_{G^*}({u^*})|}+|\vc|^2\}$. Then, $\overline{G}=\gr(\overline{E})$ where $\overline{E}=\{\{u^*_i,v\},\{u^*_i,v\}~|~u^*\in \ind, \exists u \in V(G)$ s.t. $\{u,v\}\in E(G) \wedge u\in u^*,1\leq i\leq \mathsf{NumVer}(u^*)\}\cup \{\{u,v\},\{u,v\}~|~\{u,v\}\in E(G), u,v\in \vc\}$.
	\end{definition}

Next, for a submultigraph $H$ of $G$, we define the term {\em $\overline{G}$-submultigraph} (e.g. see \cref{fig:fpt1B}):

\begin{definition} [{\bf $\overline{G}$-Submultigraph}]\label{def:Gbarsubmultigraph}
	Let $G$ be a connected graph and let $\vc$ be a vertex cover of $G$. Let $\widehat{E}$ be a multiset with elements from $E(G)$ such that every $\{u,v\}\in \widehat{E}$ appears at most twice in $\widehat{E}$. A submultigraph $H$ of $\gr(\widehat{E})$ is a {\em $\overline{G}$-submultigraph} if for every $u^*\in \ind$, $|V(H)\cap u^*|\leq \mathsf{NumVer}(u^*)$.
\end{definition}

Given a $\overline{G}$-submultigraph $H$, we define the operation $\overline{G}(H)$, which returns a submultigraph $\overline{H}$ of $\overline{G}$ isomorphic to $H$ with isomorphism $\alpha:V(H)\rightarrow V(\overline{H})$, where every vertex $u\in u^*\cap V(H)$ is mapped (by $\alpha$) to a vertex in $\overline{G}$ that belongs to the same equivalence class as $u$ (e.g., see \cref{fig:gBarOp}):

\begin{definition}[{\bf The Operation $\overline{G}$}]\label{def:GbarsubmultigraphAc}
	Let $G$ be a connected graph, let $\vc$ be a vertex cover of $G$ and let $H$ be a $\overline{G}$-submultigraph. Then, $\overline{G}(H)$ is a submultigraph $\overline{H}$ of $\overline{G}$ for which there exists a isomorphism $\alpha:V(H)\rightarrow V(\overline{H})$ satisfying the following conditions:

\begin{enumerate}
	\item For every $u\in \vc$, $\alpha(u)=u$.\label{def:GbarsubmultigraphAc1}
	\item For every $u\in u^*\in \ind$, there exists $j\in[\mathsf{NumVer}(u^*)]$ such that $\alpha(u)=u^*_j$.\label{def:GbarsubmultigraphAc2}
\end{enumerate}
\end{definition}

\begin{figure}
	\centering
	\includegraphics[page=33, width=0.5\textwidth]{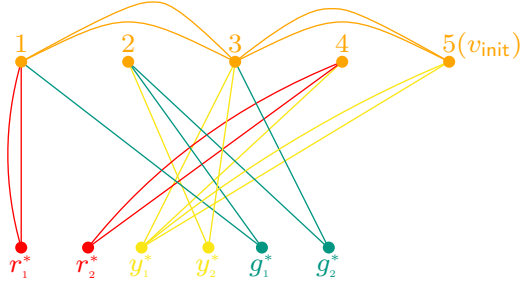}
	\caption{The graph $\overline{G}(H)$ for the graph $H$ in \cref{fig:fpt1B}. The isomorphism $\alpha$ for independent vertices is defined as follows: $\alpha(r_1) = r^*_1$, $\alpha(r_2) = r^*_2$, $\alpha(y_5) = y^*_1$, $\alpha(y_6) = y^*_2$, $\alpha(g_{11}) = g^*_1$, and $\alpha(g_{12}) = g^*_2$.}
	\label{fig:gBarOp}
\end{figure}

Observe that since $H$ is a $\overline{G}$-submultigraph, $\mathsf{NumVer}(u^*)$ and the number of appearance of each edge are large enough to ensure that $\overline{G}(H)$ is well defined (although not uniquely defined).

Assuming we have computed $\mathsf{CC}_i$ (which we will do soon in Lemma~\ref{lem:4cycandCC}), we now explain the intuition for how to encode the edges in $\widehat{E}_i\setminus\mathsf{CC}_i$. Recall that every vertex has even degree in $\gr(\widehat{E}_i\setminus\mathsf{CC}_i)$. Therefore, if $\gr(\widehat{E}_i\setminus\mathsf{CC}_i)$ is not empty, there exists a cycle $C$ in $\gr(\widehat{E}_i\setminus\mathsf{CC}_i)$, and, in particular, a simple one, as is implied by the following observation:

\begin{observation}\label{obs:existCycEven}
	Let $G$ be a non-empty multigraph. Assume that every vertex in $G$ has even degree in it. Then, there exists a simple cycle in $G$.
\end{observation}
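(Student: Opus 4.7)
The plan is to use the standard ``longest simple path'' argument, taking care with multigraph subtleties (parallel edges and length-$2$ simple cycles). I read ``non-empty'' as meaning $G$ has at least one edge, which is the intended use (an edgeless graph trivially satisfies the even-degree hypothesis but contains no cycle); in the application above, we are explicitly told that $\gr(\widehat{E}_i \setminus \mathsf{CC}_i)$ is not empty, so this is the only case we need.

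First, I would consider a simple path $P = (v_0, v_1, \ldots, v_\ell)$ in $G$ of maximum length, where simple means that $v_0, \ldots, v_\ell$ are pairwise distinct. Since $G$ contains at least one edge $\{u,v\}$, we have $\ell \geq 1$ (the single-edge walk $(u,v)$ is itself a simple path). Then I would focus on the endpoint $v_\ell$: the edge realising the step from $v_{\ell-1}$ to $v_\ell$ contributes at least $1$ to the degree of $v_\ell$, so since that degree is even and positive it is at least $2$. Hence there exists a second edge $e$ incident to $v_\ell$, distinct (as an edge of the multigraph) from the one traversed by $P$ at the last step.

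Next, I would split on the other endpoint $w$ of $e$. If $e$ is a parallel copy of the edge $\{v_{\ell-1}, v_\ell\}$ (so $w = v_{\ell-1}$), then the two distinct parallel edges between $v_{\ell-1}$ and $v_\ell$ together form a simple cycle of length $2$, namely $(v_{\ell-1}, v_\ell, v_{\ell-1})$, which is a valid simple cycle under the paper's definition (the condition $v_i \neq v_j$ is required only for $0 \leq i < j \leq \ell-1$). Otherwise $w \neq v_{\ell-1}$, and I would further split: if $w \notin \{v_0, \ldots, v_{\ell-1}\}$ then appending $w$ to $P$ would give a strictly longer simple path, contradicting the maximality of $P$; therefore $w = v_i$ for some $0 \leq i \leq \ell-2$, and the sequence $(v_i, v_{i+1}, \ldots, v_\ell, v_i)$ is a simple cycle in $G$.

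The only real obstacle is the multigraph bookkeeping: one must distinguish the edge that $P$ uses at its last step from other edges between the same endpoints, and one must be comfortable admitting cycles of length $2$ arising from parallel edges. Once these conventions are fixed, each case exhibits a concrete simple cycle, completing the proof.
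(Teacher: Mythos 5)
Your proof is correct. The paper states this observation without giving a proof (it is treated as a classical fact), and your maximal-simple-path argument is a complete and valid justification: you correctly handle the multigraph subtleties, distinguishing the last edge of the path from a possible parallel copy and observing that a length-$2$ cycle formed by two parallel edges qualifies as a simple cycle under the paper's definition. The only unstated convention is the absence of self-loops, which is harmless here since every multigraph in the paper arises as $\gr(\widehat{E})$ for a multiset $\widehat{E}$ of edges of a simple graph, so no loops can occur.
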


 Now, observe that every vertex also has even degree in $\gr(\widehat{E}_i\setminus(\mathsf{CC}_i\cup E(C)))$, as implied by the following observation:

\begin{observation}\label{obs:deletCycEven}
	Let $G$ be a multigraph. Assume that every vertex in $G$ has even degree in it. Let $C$ be a cycle in $G$. Then, every vertex in $\gr(E(G)\setminus E(C))$ has even degree in it.
\end{observation}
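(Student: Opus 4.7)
My plan is to argue that removing the edges of a cycle from a multigraph changes each vertex's degree by an even number, so evenness is preserved. The standard way to formalize this is to show that for any cycle $C$ and any vertex $v$, the number of edges of $E(C)$ incident to $v$ is even; combining this with the assumption that $v$ has even degree in $G$ then yields the conclusion.

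Concretely, write $C=(v_0,v_1,\ldots,v_\ell=v_0)$. I would count, for a fixed vertex $v\in V(G)$, how many edges in the multiset $E(C)=\{\{v_i,v_{i+1}\}\mid 0\le i\le \ell-1\}$ are incident to $v$. For each index $0\le i\le \ell-1$ with $v_i=v$, the two edges $\{v_{i-1},v_i\}$ and $\{v_i,v_{i+1}\}$ (reading indices modulo $\ell$, which is legitimate because $v_0=v_\ell$) contribute exactly two appearances of $v$ as an endpoint in $E(C)$. Summing over the positions where $v_i=v$ (there are no other contributions, because each edge of $C$ arises from exactly one index $i$), I conclude that the degree of $v$ in $\gr(E(C))$ is $2\cdot|\{i\in\{0,\ldots,\ell-1\}\mid v_i=v\}|$, which is even.

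Since $E(C)\subseteq E(G)$ as multisets, the degree of $v$ in $\gr(E(G)\setminus E(C))$ equals the degree of $v$ in $G$ minus the degree of $v$ in $\gr(E(C))$. Both quantities are even (the first by hypothesis, the second by the previous paragraph), so the difference is even, which is what the observation asserts.

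The only subtlety I expect is making the boundary case of the cycle (i.e., the identification $v_0=v_\ell$) precise, so that a vertex equal to $v_0$ is not double-counted. Treating the index of appearance as ranging over $\{0,1,\ldots,\ell-1\}$ only, and reading neighbours cyclically, handles this cleanly. No deeper combinatorial ingredient is needed, and the statement does not require $C$ to be simple.
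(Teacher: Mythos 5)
Your argument is correct and is exactly the parity argument the paper leaves implicit---the observation is stated there without proof, the intended justification being that each occurrence of a vertex among $v_0,\ldots,v_{\ell-1}$ contributes precisely two edge-endpoints of $E(C)$, so removing $E(C)$ changes every degree by an even number. Your careful treatment of the index range and the identification $v_0=v_\ell$ is fine; the only hypothesis you (like the paper) rely on is that $E(C)\subseteq E(G)$ as multisets, which indeed holds wherever the observation is invoked (the cycles there are formed from distinct edges of the multigraph).
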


So, the same reasoning can be reapplied. Therefore, we can encode $\widehat{E}_i\setminus\mathsf{CC}_i$ by a multiset of cycles. In particular, in the following lemma, we show that we can encode $\widehat{E}_i\setminus\mathsf{CC}_i$ by a multiset of cycles ${\cal C}$ such that all the cycles except for ``few'' of them are of length $4$; in addition, the cycles from ${\cal C}$ that are of length other than $4$ are simple.

\begin{lemma}\label{lem:4cyc}
Let $G$ be a multigraph such that each $\{u,v\}\in E(G)$ appears at most twice in $E(G)$, and let $\vc$ be a vertex cover of $G$. Assume that every vertex in $G$ has even degree. Then, there exists a multiset of cycles ${\cal C}$ in $G$ such that:
\begin{enumerate}
	\item At most $2|\vc|^2$ cycles in ${\cal C}$ have length other than $4$.
	\item The cycles in ${\cal C}$ that have length other than $4$ are simple.
	\item $\bigcup_{C\in {\cal C}}E(C)=E(G)$ (being two multisets).
\end{enumerate}
\end{lemma}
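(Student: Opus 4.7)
The plan is to build $\mathcal{C}$ in two phases: a greedy phase that extracts length-$4$ cycles via pairs of independent-set vertices, and a cleanup phase that decomposes whatever remains into simple cycles, using \cref{obs:existCycEven} and \cref{obs:deletCycEven}.

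For the first phase, I would pair up the incident edges at every $u \in V(G)\setminus \vc$ arbitrarily; this is possible because $\deg_G(u)$ is even. Each pairing, consisting of two edges $\{u,v_1\}$ and $\{u,v_2\}$ with $v_1, v_2 \in \vc$ (possibly $v_1 = v_2$ if $u$ has a parallel edge to $v_1$), will be treated as a length-$2$ segment $(v_1, u, v_2)$. Then, for every unordered pair $\{v_1,v_2\}$ of vertices in $\vc$ (including the diagonal case $v_1 = v_2$), I would pair up the collected segments of that type: each pair of segments $(v_1, u, v_2), (v_1, u', v_2)$ yields the length-$4$ cycle $(v_1, u, v_2, u', v_1)$ in $G$. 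Since there are $\binom{|\vc|}{2}+|\vc|=|\vc|(|\vc|+1)/2$ such unordered pairs, at most $|\vc|(|\vc|+1)/2$ segments remain unmatched.

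For the second phase, let $G''$ be the multigraph whose edge multiset is the union of (i) all edges of $G$ with both endpoints in $\vc$, and (ii) the edges of the unmatched segments. Because every length-$4$ cycle removed in the first phase contributes an even amount to each vertex's degree, $G''$ still has all even degrees. I would then iteratively invoke \cref{obs:existCycEven} to produce a simple cycle of $G''$, delete its edges (retaining even degrees everywhere via \cref{obs:deletCycEven}), and repeat until $G''$ is empty.

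The multiset $\mathcal{C}$ then consists of the length-$4$ cycles from the first phase together with the simple cycles from the second. Properties (2) and (3) of the lemma hold by construction. For property (1), the residual satisfies $|E(G'')| \le 2\binom{|\vc|}{2} + 2\cdot |\vc|(|\vc|+1)/2 = 2|\vc|^2$ (bounding $\vc$-internal edges by their multiplicity-$2$ cap, and summing two edges per unmatched segment), and every cycle uses at least two edges, so the second phase produces at most $|\vc|^2 \le 2|\vc|^2$ cycles of length $\ne 4$. The main bookkeeping subtlety will be the diagonal case $v_1 = v_2$: since $\{u, v_1\}$ has multiplicity at most two, any single $u$ can yield at most one ``loop'' segment $(v_1, u, v_1)$, which keeps the leftover-segment count at $|\vc|(|\vc|+1)/2$ and makes the diagonal length-$4$ cycles $(v_1, u, v_1, u', v_1)$ well-defined (requiring $u \ne u'$).
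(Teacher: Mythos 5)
Your proof is correct and takes essentially the same approach as the paper: both pair up the edges incident to each vertex of $V(G)\setminus \vc$ into segments keyed by their (multi)pair of endpoints in $\vc$, extract length-$4$ cycles from two segments of the same type, and observe that the residual (edges inside $\vc$ plus unmatched segments) has at most $2|\vc|^2$ edges and can be greedily decomposed into simple cycles via \cref{obs:existCycEven,obs:deletCycEven}. The only difference is presentational: you fix one pairing and match all same-type segments in a single batch, whereas the paper repeatedly applies a pigeonhole argument to find one $4$-cycle at a time while more than $2|\vc|^2$ edges remain.
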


\begin{proof}
	First, assume that there are more than $2|\vc|^2$ edges in $E(G)$. Let $\indd=V(G)\setminus \vc$. Every $u\in V(G)$ has even degree, so let $2d_u$ be the degree of $u$ in $G$, where $d_u\in \mathbb{N}$. For every $u\in \indd$, let $\widehat{A}_u=\{\{\{u,v_i\},\{u,v_i'\}\}_{i=1}^{d_u}\}$ be a partition of the edges incident to $u$ in $G$ into multisets of two elements, and let $\widehat{A}=\bigcup_{u\in \indd} \widehat{A}_u$. Observe that there are more than $2|\vc|^2$ edges in $G$, and the number of edges in $G$ with both endpoints in $\vc$ is bounded by $2{|\vc| \choose 2}\leq |\vc|^2$ (${|\vc| \choose 2}$ different edges, each appears at most twice). So, there are more than $|\vc|^2$ edges with one endpoint in $\indd$. This implies that we have more than $\frac{1}{2}|\vc|^2$ multisets in $\widehat{A}$. Now, every multiset in $\widehat{A}$ consists of two edges, each edge is incident to exactly one vertex from $\vc$. Notice that the number of different options to choose two vertices from $\vc$ is bounded by $\frac{1}{2}|\vc|^2$. Therefore, since there are more than $\frac{1}{2}|\vc|^2$ multisets in $\widehat{A}$, there exist two multisets $\{\{u,v_i\},\{u,v_i'\}\},\{\{\tilde{u},v_j\},\{\tilde{u},v_j'\}\}$ such that $v_i=v_j$ and $v'_i=v'_j$. So, $C=(u,v_i,\tilde{u},v'_i,u)$ is a cycle of length $4$ in $G$. From Observation~\ref{obs:deletCycEven}, the degree of every vertex in $G$ remains even after deleting the edges of $C$ from $G$. Then, after deleting the edges of $C$ from $G$ (and inserting it into ${\cal C}$), we can use the same argument to find yet another cycle. 
	
	 If $G$ has $2|\vc|^2$ edges or less (and it is not empty), then since the degree of every vertex is even, from Observation~\ref{obs:existCycEven}, we can find a cycle---and, in particular, a simple one---in the multigraph and delete it (and inserting it into ${\cal C}$). 
	 
	The process ends when there are no edges in the multigraph. Clearly, by our construction, the conditions of the lemma hold. This ends the proof. 
\end{proof}

We have the following observation, which will be useful later:

\begin{observation}\label{obs:cycleVc}
	Let $G$ be a connected multigraph, let $\vc$ be a vertex cover of $G$ and let $C$ be a cycle in $G$. Then, $V(C)\cap \vc\neq\emptyset$.
\end{observation}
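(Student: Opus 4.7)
The plan is to observe that this is essentially an immediate consequence of the definition of vertex cover, together with the fact that any cycle contains at least one edge. More concretely, I would proceed as follows.

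First, I would unpack the definition of a cycle from the preliminaries: $C = (v_0, v_1, \ldots, v_\ell = v_0)$ is a path with $v_0 = v_\ell$, and the edges of $C$ are $\{\{v_i, v_{i+1}\} \mid 0 \le i \le \ell-1\}$, with vertex set $V(C) = \{v_0, \ldots, v_{\ell-1}\}$. Since $C$ is a genuine cycle in the setting where this observation is invoked (for instance, in Lemma~\ref{lem:4cyc}, the cycles in question have edges and are non-degenerate), we have $\ell \ge 1$, so there is at least one edge, say $e = \{v_0, v_1\} \in E(C) \subseteq E(G)$.

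Next, I would invoke the defining property of a vertex cover: since $\vc$ is a vertex cover of $G$ and $e \in E(G)$, at least one of $v_0, v_1$ belongs to $\vc$. Because both endpoints $v_0$ and $v_1$ lie in $V(C)$ by construction, this forces $V(C) \cap \vc \neq \emptyset$, which is exactly the desired conclusion.

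There is no real obstacle here; the only subtlety is making sure one is working with a cycle that contains at least one edge, so that the vertex cover property can be invoked on some edge of $C$. This is consistent with the usage of the observation in the surrounding lemmas (for example, inside the proof of Lemma~\ref{lem:4cyc}, where cycles always arise as non-trivial cycles in a multigraph with all even positive degrees, and in particular contain edges). Hence the proof is essentially a one-liner and the statement follows immediately.
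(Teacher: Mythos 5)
Your proof is correct and matches the obvious intended argument; the paper offers no explicit proof of this observation, treating it as self-evident. The one technicality you flag---that the cycle must contain at least one edge for the vertex-cover property to be invocable---is real under the paper's liberal definition of cycle (which permits $\ell=0$), and you handle it appropriately by noting that all invocations of the observation concern cycles arising from edge decompositions and hence are non-degenerate.
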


Let $\widehat{E}_1,\ldots,\widehat{E}_k$ satisfy the conditions of Lemma~\ref{obs:equivsol}. We encode each $\widehat{E}_i$ as a pair $(\mathsf{CC}_i,{\cal C}_i)$ where $\mathsf{CC}_i\cup \bigcup_{C\in {\cal C}_i}E(C)=\widehat{E}_i$, and: (i) $\gr(\mathsf{CC}_i)$ is a $\overline{G}$-submultigraph; (ii) $\gr(\mathsf{CC}_i)$ is connected, $\vi \in V(\gr(\mathsf{CC}_i))$ and every vertex in $\gr(\mathsf{CC}_i)$ has even degree; and (iii) ${\cal C}_i$ is a multiset of cycles satisfying the conditions of Lemma~\ref{lem:4cyc}. We call such a pair an {\em $\widehat{E}_i$-valid pair}:

\begin{definition}[{\bf Valid Pair}]\label{def:ValPa}
	Let $G$ be a connected graph, let $\vi\in V(G)$ and let $\vc$ be a vertex cover of $G$. Let $\widehat{E}$ be a multiset with elements from $E(G)$ such that $\gr(\widehat{E})$ is connected, $\vi\in V(\gr(\widehat{E}))$, and every vertex in $\gr(\widehat{E})$ has even degree. Let $\mathsf{CC}\subseteq \widehat{E}$, and let ${\cal C}$ be a multiset of cycles in $\gr(\widehat{E})$. Then, $(\mathsf{CC},{\cal C})$ is an {\em $\widehat{E}$-valid pair} if the following conditions are satisfied:
	\begin{enumerate}
		\item $\gr(\mathsf{CC})$ is a $\overline{G}$-submultigraph. \label{lemCC:con1}
		\item $\gr(\mathsf{CC})$ is connected, $\vi \in V(\gr(\mathsf{CC}))$ and every vertex in $\gr(\mathsf{CC})$ has even degree.  \label{lemCC:con2}
		\item $V(\gr(\mathsf{CC}))\cap \vc=V(\gr(\widehat{E}))\cap\vc$. \label{lemCC:con3}
		\item At most $2|\vc|^2$ cycles in ${\cal C}$ have length other than $4$. \label{lemCC:con4}
		\item The cycles in ${\cal C}$ of length other than $4$ are simple. \label{lemCC:con5}
		\item $\mathsf{CC}\cup \bigcup_{C\in {\cal C}}E(C)=\widehat{E}$. \label{lemCC:con6}
	\end{enumerate}
\end{definition}

Next, we invoke Lemma~\ref{lem:4cyc} in order to prove the following. Let $\widehat{E}_1,\ldots,\widehat{E}_k$ satisfy the conditions of Lemma~\ref{obs:equivsol}, and for every $1\leq i\leq k$ and $\{u,v\}\in \widehat{E}_i$, $\{u,v\}$ appears at most twice in $\widehat{E}_i$ (see Observation~\ref{obs:equivso}). Then, there exists $(\mathsf{CC}_i,{\cal C}_i)$ such that $(\mathsf{CC}_i,{\cal C}_i)$ is an $\widehat{E}_i$-valid pair:

\begin{lemma}\label{lem:4cycandCC}
	Let $G$ be a connected graph, let $\vi\in V(G)$ and let $\vc$ be a vertex cover of $G$. Let $\widehat{E}$ be a multiset with elements from $E(G)$. Assume that $\gr(\widehat{E})$ is connected, $\vi\in V(\gr(\widehat{E}))$, every vertex in $\gr(\widehat{E})$ has even degree and every $\{u,v\}\in \widehat{E}$ appears at most twice in $\widehat{E}$. Then, there exist $\mathsf{CC}\subseteq \widehat{E}$ and a multiset of cycles ${\cal C}$ in $\gr(\widehat{E})$ such that $(\mathsf{CC},{\cal C})$ is an $\widehat{E}$-valid pair.
\end{lemma}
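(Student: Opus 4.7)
The plan is to construct $\mathsf{CC}$ first, satisfying Conditions~\ref{lemCC:con1}--\ref{lemCC:con3} of Definition~\ref{def:ValPa}, and then apply Lemma~\ref{lem:4cyc} to the residual multiset $\widehat{E}\setminus\mathsf{CC}$ to obtain $\mathcal{C}$ satisfying Conditions~\ref{lemCC:con4}--\ref{lemCC:con6}. I initialize $\mathsf{CC}:=\widehat{E}$; by the hypotheses on $\widehat{E}$, Conditions~\ref{lemCC:con2} and~\ref{lemCC:con3} hold at the start. The remaining task is to shrink $\mathsf{CC}$ until Condition~\ref{lemCC:con1} also holds, while preserving Conditions~\ref{lemCC:con2} and~\ref{lemCC:con3} throughout.

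For the shrinking step, note that a violation of Condition~\ref{lemCC:con1} at some $u^*\in\ind$ can only occur when $|V(\gr(\mathsf{CC}))\cap u^*|>2^{|\mathsf{N}_{G^*}(u^*)|}+|\vc|^2$, since otherwise $|V(\gr(\mathsf{CC}))\cap u^*|\leq|u^*|=\mathsf{NumVer}(u^*)$ automatically. By pigeonhole on the possible neighborhood sets (subsets of $\mathsf{N}_{G^*}(u^*)$, of which there are $2^{|\mathsf{N}_{G^*}(u^*)|}$), there exist many pairs of vertices $u,u'\in u^*$ sharing the same neighborhood set in $\gr(\mathsf{CC})$. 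I excise a $4$-cycle $(u,v_1,u',v_2,u)$ through two common neighbors $v_1,v_2\in\vc$ from $\mathsf{CC}$; by Observation~\ref{obs:deletCycEven}, all even degrees are preserved, and iterating such excisions eventually drops the degree of $u'$ in $\gr(\mathsf{CC})$ to $0$, removing $u'$ from $V(\gr(\mathsf{CC}))\cap u^*$. The $|\vc|^2$ slack in $\mathsf{NumVer}(u^*)$ is calibrated so that, among the excess $u^*$-vertices, we can always choose a pair $(u,u')$ whose associated $4$-cycle excision disturbs neither the connectivity of $\gr(\mathsf{CC})$ nor the presence of $\vi$ nor the set $V(\gr(\mathsf{CC}))\cap\vc$: the number of $u^*$-vertices that could be ``critical'' for any such reason is bounded by a polynomial in $|\vc|$ (essentially, counting bridging witnesses between the at most $|\vc|$ vertex cover vertices and the edges among them), so beyond the slack there always remain safely-removable pairs.

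Once Stage~1 terminates, $\gr(\mathsf{CC})$ satisfies Conditions~\ref{lemCC:con1}--\ref{lemCC:con3}. Since both $\gr(\widehat{E})$ and $\gr(\mathsf{CC})$ have even degree at every vertex and $\mathsf{CC}\subseteq\widehat{E}$, the multiset difference $\gr(\widehat{E}\setminus\mathsf{CC})$ again has every vertex of even degree. Invoking Lemma~\ref{lem:4cyc} on $\gr(\widehat{E}\setminus\mathsf{CC})$ with the same vertex cover $\vc$ yields a multiset $\mathcal{C}$ of cycles with $\bigcup_{C\in\mathcal{C}}E(C)=\widehat{E}\setminus\mathsf{CC}$, at most $2|\vc|^2$ cycles of length other than $4$, and all such non-$4$-cycles simple. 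This gives Conditions~\ref{lemCC:con4}, \ref{lemCC:con5}, and~\ref{lemCC:con6} directly, completing the construction of an $\widehat{E}$-valid pair $(\mathsf{CC},\mathcal{C})$. The main obstacle is the pigeonhole-based excision in Stage~1: plain pigeonhole only guarantees two vertices with a common neighborhood, but we additionally need the excision to preserve Conditions~\ref{lemCC:con2}--\ref{lemCC:con3}, and the choice of the slack $|\vc|^2$ in the definition of $\mathsf{NumVer}$ is exactly what makes this always possible.
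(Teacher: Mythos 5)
Your overall skeleton is the same as the paper's: build $\mathsf{CC}$ satisfying Conditions~\ref{lemCC:con1}--\ref{lemCC:con3} of Definition~\ref{def:ValPa}, then apply Lemma~\ref{lem:4cyc} to $\gr(\widehat{E}\setminus\mathsf{CC})$ (your Stage~2 is exactly the paper's second step and is fine). The gap is in Stage~1, at the step you yourself flag as "the main obstacle": the claim that whenever a class $u^*$ exceeds its quota, there is a same-neighborhood pair $u,u'$ whose $4$-cycle excision preserves connectivity, $V(\gr(\mathsf{CC}))\cap\vc$, and $\vi$. This is asserted via a count of "critical $u^*$-vertices bounded by a polynomial in $|\vc|$", but no such bound is proved, and the assertion as stated does not deliver what you need. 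First, "criticality" is not a property of individual vertices: whether an excision is dangerous depends on the pair $(u,u')$, on the chosen common neighbors $v_1,v_2$, and on the current (evolving) multigraph, so counting critical vertices does not yield a safe pair. Second, the slack in $\mathsf{NumVer}(u^*)$ is exactly $|\vc|^2$, so "a polynomial in $|\vc|$" is not automatically enough. Third, the danger is real: removing the four edges $\{u,v_1\},\{u',v_1\},\{u,v_2\},\{u',v_2\}$ can disconnect $\gr(\mathsf{CC})$ or isolate a $\vc$-vertex (e.g.\ when $u,u'$ both have degree $2$ and are the only connection between the part of the graph attached at $v_1$ and the part attached at $v_2$), and pigeonhole may only guarantee that each neighborhood set is realized by a single pair, so you cannot simply "pick another pair with the same set". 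Turning your sketch into a proof would require a nontrivial structural lemma (that unsafe pairs cannot exhaust all pigeonhole pairs once the count exceeds $2^{|\mathsf{N}_{G^*}(u^*)|}+|\vc|^2$), which you have not supplied and which is not obviously true with this particular slack.

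For comparison, the paper sidesteps this entirely by giving up parity instead of connectivity: it repeatedly deletes a whole vertex $u'$ whenever some $u$ in the same class has the identical neighborhood in the current graph $H$. Connectivity and Condition~\ref{lemCC:con3} are then trivially preserved (only independent-set vertices are deleted, and the twin $u$ reroutes every path), at the cost that $\vc$-vertices may acquire odd degree. Parity is repaired afterwards: odd-degree vertices of $\vc$ are paired up and joined by edge-disjoint simple paths taken from $\widehat{E}\setminus E(H)$, each using at most $|\vc|$ independent-set vertices; since at most $|\vc|/2$ such paths are needed, at most $|\vc|^2$ vertices per class are re-introduced. That is the actual role of the $+|\vc|^2$ term in $\mathsf{NumVer}$ -- it is not there to guarantee "safe" cycle excisions. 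If you want to keep your parity-preserving excision idea, you must prove the existence of safe excisions; otherwise the cleaner route is the paper's delete-then-repair argument.
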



\begin{proof}
First, we construct $\mathsf{CC}$. For this purpose, we obtain a multigraph $H$ from $\gr(\widehat{E})$ as follows: Starting with $H=\gr(\widehat{E})$. While there exist $u,u'\in u^*\cap V(H) $ such that $\mathsf{N}_{H}(u)=\mathsf{N}_{H}(u')$, delete $u'$. Observe that, since $\gr(\widehat{E})$ is connected, then so is $H$. Also, observe that for every $u\in u^*\in \ind$, the number of options for $\mathsf{N}_{H}(u)$ is bounded by $2^{|\mathsf{N}_{G^*}(u^*)|}$. So, for every $u^*\in \ind$, $|u^*\cap V(H)|\leq 2^{|\mathsf{N}_{G^*}(u^*)|}$. Notice that for every $u^*\in \ind$, we might have more than one vertex from $u^*$ in $H$ since vertices that have the same neighbors in $G$, might have different neighbors in $H$. In addition, each $u\in u^*\in \ind$ has even degree in $H$ (being the same as its degree in $\gr(\widehat{E})$), e.g. see \cref{fig:fpt1B}.

 Now, let $v\in \vc$ that has odd degree in $H$ (if one exists). Notice that since the number of vertices with odd degree is even in every multigraph, there exists $v'\in \vc\cap V(H)$ such that $v'\neq v$ and $v$ has odd degree in $H$. We build a path $P$ from $v$ to such a $v'$ in $G$ using edges in $\widehat{E}\setminus E(H)$ as follows. Since $v$ has odd degree in $H$ and even degree in $\gr(\widehat{E})$, there exists $\{u,v\}\in \widehat{E}\setminus E(H)$. We add $\{u,v\}$ to $P$. Now, if $u$ has odd degree in $H$, then $u\in \vc$ and we finish; otherwise, $u$ has even degree in $H$, so it has odd degree in $\gr(E(H)\cup E(P))$, and so there exists $\{u,u'\}\in \widehat{E}\setminus (E(H)\cup E(P))$. This process is finite, and ends when $P$ reaches $v'\in \vc$ with odd degree in $H$. Now, since there exists a path $P$ from $v$ to $v'$ in $\gr(\widehat{E})$ such that $E(P)\cap E(H)=\emptyset$, there exists a simple path $P'$ from $v$ to $v'$ in $\gr(\widehat{E})$, such that $E(P')\cap E(H)=\emptyset$. Observe that $|P'|\leq 2|\vc|$, and there are at most $|\vc|$ vertices in $P'$ from $V(G)\setminus \vc$. We add the edges of $P'$ to $H$, that is, $H\gets \gr(E(H)\cup E(P'))$ (e.g. see the dashed paths in \cref{fig:fpt1C}), and continue with this process until every vertex in $H$ has even degree. Observe that, by the end of this process, we have added at most $|\vc|^2$ vertices from $V(G)\setminus \vc$ to $H$. So, in particular, for every $u^*\in \ind$, we have added at most $|\vc|^2$ vertices from $(V(G)\setminus \vc)\cap u^*$ to $H$.
 
 Overall, for every $u^*\in \ind$, $|u^*\cap V(H)|\leq 2^{|\mathsf{N}_{G^*}(u^*)|}+|\vc|^2$. In addition, notice that $H$ is a submultigraph of $\gr(\widehat{E})$, so each $\{u,v\}\in E(H)$ appears at most twice in $E(H)$. Moreover, observe that $V(H)\cap \vc= V(\gr(\widehat{E}))\cap \vc$, and since we assume that $\vi\in \vc$, then $\vi \in V(H)$. We define $\mathsf{CC}=E(H)$, so Conditions \ref{lemCC:con1}--\ref{lemCC:con3} of Definition~\ref{def:ValPa} are satisfied. Now, observe that every vertex in $\gr(\widehat{E}\setminus \mathsf{CC})$ has even degree, and $\vc\cap V(\gr(\widehat{E}\setminus \mathsf{CC}))$ is a vertex cover of $\gr(\widehat{E}\setminus \mathsf{CC})$. Therefore, from Lemma~\ref{lem:4cyc}, there exists a multiset of cycles ${\cal C}$ in $\gr(\widehat{E}\setminus \mathsf{CC})$ such that Conditions~\ref{lemCC:con4}--\ref{lemCC:con6} of Definition~\ref{def:ValPa} are satisfied. So, $(\mathsf{CC},{\cal C})$ is an $\widehat{E}$-valid pair. This completes the proof.
\end{proof}


\subsection{Vertex Type}\label{sec:VerTyp}

Now, to construct the equations for the ILP instance, we present the variables. First, we have a variable for each {\em vertex type}. We begin by showing how we derive the vertex type of a vertex from a solution, and then we define formally the term vertex type. For this purpose, we introduce some definitions.

An intuition for a vertex type is as follows. Let $\widehat{E}_1,\ldots,\widehat{E}_k$ such that the conditions of Lemma~\ref{obs:equivsol} hold, and for every $1\leq i\leq k$ let $(\mathsf{CC}_i,{\cal C}_i)$ be an $\widehat{E}_i$-valid pair. For every $u\in u^*\in \ind$, we derive multisets with elements from $\mathsf{N}_{G^*}(u^*)$ as follows. First, for every $1\leq i\leq k$, we derive the multiset of neighbors of $u$ in $\gr(\mathsf{CC}_i)$, that is $\widehat{\mathsf{N}}_{\gr(\mathsf{CC}_i)}(u)$. Second, for every $1\leq i\leq k$ and $C\in {\cal C}_i$ such that $u\in V(C)$, we derive the multiset $\{v,v'\}$, where $v$ and $v'$ are the vertices appear right before and right after $u$ in $C$, respectively. Now, recall that in a solution, every edge is covered by at least one robot. So, given a vertex $u\in u^*\in \ind$ and an edge $\{u,v\}\in E(G)$, there exists $1\leq i\leq k$ such that $\{u,v\}$ is covered by the $i$-th robot, that is, $\{u,v\}\in \widehat{E}_i$. Since $\mathsf{CC}_i\cup \bigcup_{C\in {\cal C}_i}E(C)=\widehat{E}_i$, either $\{u,v\}\in \mathsf{CC}_i$ or $\{u,v\}\in E(C)$, for some $C\in {\cal C}_i$. Thus, $u$ is in at least one multiset we derived. In the vertex types, we consider all the possible options for such multisets.


In the next definition, for a given cycle $C$, we define the {\em pairs of edges} of $C$, denoted $\mathsf{EdgePairs}(C)$: for each $u\in V(G)\setminus \vc$, we have the pair $(\{u,v\},\{u,v'\})$, where $v$ and $v'$ are the vertices appear right before and right after $u$ in $C$. Later, we will derive for each such a pair, the multiset $\{v,v'\}$ and ``allocate'' it to $u$.

Let $G$ be a connected graph and let $\vc$ be a vertex cover of $G$. We denote the set of cycles of length at most $\mathsf{max}\{4,2|\vc|\}$ in $G$ by $\mathsf{Cyc}_G$. Let $C=(v_0,\ldots,v_\ell=v_0)\in \mathsf{Cyc}_G$. We denote by $\ind(C)$ the cycle in $G^*$ obtained from $C$ by replacing each $v_i\in V(C)\cap (V(G)\setminus \vc)$ by $u^*\in \ind$, where $v_i\in u^*$, for every $1\leq i\leq \ell$.

\begin{definition} [{\bf Pairs of a Cycle}]\label{def:POC}
	Let $G$ be a connected graph, let $\vc$ be a vertex cover of $G$ and let $C=(v_0,\ldots,v_\ell=v_0)\in \mathsf{Cyc}_{G}\cup \mathsf{Cyc}_{G^*}$. Then, the {\em pairs of edges} of $C$ is the multiset $\mathsf{EdgePairs}(C)=\{\{\{v_{i-1},v_{i}\},\{v_{i},v_{i+1}\}\}~|~1\leq i\leq \ell-1, v_i\in \mathsf{IND}\cup \ind\}$. 
\end{definition}


 Recall that, given a multigraph $G$ and $u\in V(G)$, the multiset of neighbors of $u$ in $G$ is denoted by $\widehat{\mathsf{N}}_G(u)=\{v\in V~|~\{u,v\}\in E(G)\}$ (with repetition). 

\begin{definition} [{\bf Deriving Vertex Types From a Solution}]\label{def:VTD}
	Let $G$ be a connected graph, let $\vi\in V(G)$, let $k,B\in \mathbb{N}$, let $\vc$ be a vertex cover of $G$ and let $\widehat{E}_1,\ldots,\widehat{E}_k$ such that conditions of Lemma~\ref{obs:equivsol} hold. For every $1\leq i\leq k$ let $(\mathsf{CC}_i,{\cal C}_i)$ be an $\widehat{E}_i$-valid pair. In addition, for every $1\leq i\leq k$ and $u\in u^*\in \ind$ such that $u\in V(\gr(\widehat{E}_i))$, let $\mathsf{NeiPairs}_{{\cal C}_i}(u)=\{\{v,v'\}~|~C\in {\cal C}_i, \{\{u,v\},\{u,v'\}\}\in \mathsf{EdgePairs}(C)\}$ be a set. For every $u\in u^*\in \ind$, let $\mathsf{NeiSubsets}(u)=\bigcup_{1\leq i\leq k}(\{\widehat{\mathsf{N}}_{\gr(\mathsf{CC}_i)}(u)\}\cup \mathsf{NeiPairs}_{{\cal C}_i}(u))$ be a set. Then, for every $u\in u^*\in\ind$, $\mathsf{DerVerTyp}(\{(\mathsf{CC}_i,{\cal C}_i)\}_{1\leq i\leq k},u)=(u^*,\mathsf{NeiSubsets}(u))$.
\end{definition}



Whenever $\{(\mathsf{CC}_i,{\cal C}_i)\}_{1\leq i\leq k}$ is clear from context, we refer to $\mathsf{DerVerTyp}(\{(\mathsf{CC}_i,{\cal C}_i)\}_{1\leq i\leq k},$ $u)$ as $\mathsf{DerVerTyp}(u)$. Observe that every element in $\mathsf{NeiPairs}_{{\cal C}_i}(u)$ is a multiset of two vertices (which might be the same vertex).

Now, observe that each vertex in every multiset derived in Lemma~\ref{def:VTD} appears at most twice: each edge appears at most twice in $\mathsf{CC}_i$, and every multiset derived from a cycle has exactly two elements in it. In addition, since the degree of each vertex is even in $\gr(\mathsf{CC}_i)$, every multiset derived in Lemma~\ref{def:VTD} has an even number of elements. So, we will consider only multisets with these restriction.

For a set $A$ we define the multiset $A\times \mathsf{2}=\{a,a~|~a\in A\}$. That is, each element in $A$ appears exactly twice in $A\times \mathsf{2}$.

Now, we define the term vertex type:

\begin{definition} [{\bf Vertex Type}]\label{def:VT}
	Let $G$ be a connected graph and let $\vc$ be a vertex cover of $G$. Let $u^*\in \ind$ and let $\mathsf{NeiSubsets}\subseteq 2^{\mathsf{N}_{G^*}(u^*)\times \mathsf{2}}$. Then, $\mathsf{VerTyp}=(u^*,\mathsf{NeiSubsets})$ is a {\em vertex type} if
 for every $\mathsf{NeiSub}\in \mathsf{NeiSubsets}$, $|\mathsf{NeiSub}|$ is even, and $\mathsf{N}_{G^*}(u^*)\subseteq \bigcup\mathsf{NeiSubsets}$.
\end{definition}

We denote the set of vertex types by $\mathsf{VerTypS}$.

In the following lemma we show the ``correctness'' of Definition~\ref{def:VTD}, that is, for every $u\in \mathsf{IND}$, $\mathsf{DerVerTyp}(u)$ is indeed a vertex type. 



\begin{lemma}\label{lem:VerDer}
Let $G$ be a connected graph, let $\vi\in V(G)$, let $k,B\in \mathbb{N}$ and let $\vc$ be a vertex cover of $G$. Let $\widehat{E}_1,\ldots,\widehat{E}_k$ such that conditions of Lemma~\ref{obs:equivsol} hold and for every $1\leq i\leq k$ let $(\mathsf{CC}_i,{\cal C}_i)$ be an $\widehat{E}_i$-valid pair. Then, for every $u\in \mathsf{IND}$, $\mathsf{DerVerTyp}(u)$ is a vertex type.
\end{lemma}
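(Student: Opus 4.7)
The plan is to directly verify the three requirements of Definition~\ref{def:VT} for $\mathsf{DerVerTyp}(u) = (u^*, \mathsf{NeiSubsets}(u))$, for a fixed $u \in u^* \in \ind$. By Definition~\ref{def:VTD}, $\mathsf{NeiSubsets}(u)$ is the union over $i \in [k]$ of $\{\widehat{\mathsf{N}}_{\gr(\mathsf{CC}_i)}(u)\}$ and of $\mathsf{NeiPairs}_{{\cal C}_i}(u)$, so it suffices to analyze these two kinds of multisets separately.

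First, I would show $\mathsf{NeiSubsets}(u) \subseteq 2^{\mathsf{N}_{G^*}(u^*) \times \mathsf{2}}$. Every neighbor $v$ of $u$ in $\gr(\mathsf{CC}_i)$ satisfies $\{u,v\} \in E(G)$, so $v \in \mathsf{N}_G(u) = \mathsf{N}_{G^*}(u^*)$ by Definition~\ref{def:VCG}; the same holds for the vertices $v, v'$ of any pair $\{v, v'\} \in \mathsf{NeiPairs}_{{\cal C}_i}(u)$, since they are cycle neighbors of $u$ in $G$. It remains to check that no element appears more than twice in any of these multisets. Since $(\mathsf{CC}_i,{\cal C}_i)$ is an $\widehat{E}_i$-valid pair, $\gr(\mathsf{CC}_i)$ is a $\overline{G}$-submultigraph by Condition~\ref{lemCC:con1} of Definition~\ref{def:ValPa}, so each edge in $\mathsf{CC}_i$ appears at most twice, and hence every $v$ appears at most twice in $\widehat{\mathsf{N}}_{\gr(\mathsf{CC}_i)}(u)$. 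For a pair $\{v,v'\}$ the bound is immediate as the multiset has only two entries.

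Second, I would check that each $\mathsf{NeiSub} \in \mathsf{NeiSubsets}(u)$ has even cardinality. For $\mathsf{NeiSub} = \widehat{\mathsf{N}}_{\gr(\mathsf{CC}_i)}(u)$, the cardinality equals the degree of $u$ in $\gr(\mathsf{CC}_i)$, which is even by Condition~\ref{lemCC:con2} of Definition~\ref{def:ValPa}; for a pair $\{v,v'\}$ the cardinality is exactly $2$.

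Finally, I would verify $\mathsf{N}_{G^*}(u^*) \subseteq \bigcup \mathsf{NeiSubsets}(u)$. Let $v \in \mathsf{N}_{G^*}(u^*)$, so $\{u,v\} \in E(G)$. By Condition~\ref{obs:equivsol3} of Lemma~\ref{obs:equivsol}, there exists $i \in [k]$ with $\{u,v\} \in \widehat{E}_i$, and by Condition~\ref{lemCC:con6} of Definition~\ref{def:ValPa}, $\widehat{E}_i = \mathsf{CC}_i \cup \bigcup_{C \in {\cal C}_i} E(C)$. Hence either $\{u,v\} \in \mathsf{CC}_i$, giving $v \in \widehat{\mathsf{N}}_{\gr(\mathsf{CC}_i)}(u) \in \mathsf{NeiSubsets}(u)$, or $\{u,v\} \in E(C)$ for some $C \in {\cal C}_i$, in which case $v$ belongs to the pair $\{v,v'\} \in \mathsf{NeiPairs}_{{\cal C}_i}(u)$ formed by the two edges of $C$ incident to the corresponding occurrence of $u$. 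The entire argument is a direct unpacking of the definitions; the only mildly delicate point is that the ``at most twice'' bound on each element of $\widehat{\mathsf{N}}_{\gr(\mathsf{CC}_i)}(u)$ relies specifically on $\gr(\mathsf{CC}_i)$ being a $\overline{G}$-submultigraph, which is why the valid-pair formulation was set up that way.
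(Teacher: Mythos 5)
Your proof is correct and follows essentially the same route as the paper's: both arguments verify the membership and evenness conditions of Definition~\ref{def:VT} separately for the multisets $\widehat{\mathsf{N}}_{\gr(\mathsf{CC}_i)}(u)$ (using Conditions~\ref{lemCC:con1} and~\ref{lemCC:con2} of Definition~\ref{def:ValPa}) and the pairs from $\mathsf{NeiPairs}_{{\cal C}_i}(u)$, then deduce the covering condition $\mathsf{N}_{G^*}(u^*)\subseteq \bigcup\mathsf{NeiSubsets}(u)$ from Condition~\ref{obs:equivsol3} of Lemma~\ref{obs:equivsol} combined with Condition~\ref{lemCC:con6} of Definition~\ref{def:ValPa}. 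The only cosmetic difference is that the paper interleaves the membership and evenness checks in a single case split while you treat them in two consecutive passes; the substance is identical.
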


\begin{proof}
We show that, for every $u\in u^*\in \ind$, $\mathsf{DerVerTyp}(u)=(u^*,\mathsf{NeiSubsets}(u))$ is a vertex type. Let $u\in u^*\in \ind$. First, we show that for every $\mathsf{NeiSub}\in \mathsf{NeiSubsets}(u)$, $\mathsf{NeiSub}\in 2^{\mathsf{N}_{G^*}(u^*)\times \mathsf{2}}$ and $|\mathsf{NeiSub}|$ is even. For every $\mathsf{NeiSub}\in \mathsf{NeiSubsets}(u)$ there exists $1\leq i\leq k$ such that $\mathsf{NeiSub}\in \{\widehat{\mathsf{N}}_{\gr(\mathsf{CC}_i)}(u)\}\cup \mathsf{NeiPairs}_{{\cal C}_i}(u)$. 
\begin{itemize}
	\item If $\mathsf{NeiSub}=\widehat{\mathsf{N}}_{\gr(\mathsf{CC}_i)}(u)$, then $\mathsf{NeiSub}=\{v\in V(\gr(\mathsf{CC}_i))~|~\{u,v\}\in \mathsf{CC}_i\}$. Since $\mathsf{CC}_i$ is a $\overline{G}$-submultigraph, each $\{u,v\}\in \mathsf{CC}_i$ appears at most twice in $\mathsf{CC}_i$. Then, each $v\in \mathsf{NeiSub}$ appears at most twice in $\mathsf{NeiSub}$, and so $\mathsf{NeiSub}\in 2^{\mathsf{N}_{G^*}(u^*)\times \mathsf{2}}$. In addition, the degree of $u$ in $\gr(\mathsf{CC}_i)$ is even, so $|\mathsf{NeiSub}|$ is even.
	\item  Otherwise, $\mathsf{NeiSub}\in \mathsf{NeiPairs}_{{\cal C}_i}(u)$, and every multiset in $\mathsf{NeiPairs}_{{\cal C}_i}(u)$ has exactly two vertices, so $\mathsf{NeiSub}\in 2^{\mathsf{N}_{G^*}(u^*)\times \mathsf{2}}$ and $|\mathsf{NeiSub}|$ is even.
\end{itemize} Now, we show that $\mathsf{N}_{G^*}(u^*)\subseteq \bigcup \mathsf{NeiSubsets}(u)$. Let $v\in \mathsf{N}_{G^*}(u^*)$. From Condition~\ref{obs:equivsol3} of Lemma~\ref{obs:equivsol}, $E(G)\subseteq \widehat{E}_1\cup \ldots \cup\widehat{E}_k$. Then, there exists $1\leq i\leq k$ such that $\{u,v\}\in \widehat{E}_i$. Since  $(\mathsf{CC}_i,{\cal C}_i)$ is an $\widehat{E}_i$-valid pair, $\widehat{E}_i=\mathsf{CC}_i\cup \bigcup_{C\in {\cal C}_i}E(C)$. So, if $\{u,v\}\in \mathsf{CC}_i$, then $v\in \widehat{\mathsf{N}}_{\gr(\mathsf{CC}_i)}(u)$; otherwise, there exists $C\in {\cal C}_i$ such that $\{u,v\}\in E(C)$. Therefore, there exists $v'\in V(G)$ such that $\{\{u,v\},\{u,v'\}\}\in \mathsf{EdgePairs}(C)$, so $\{v,v'\}\in \mathsf{NeiPairs}_{{\cal C}_i}(u) $, thus $v\in \bigcup \mathsf{NeiSubsets}(u)$. So, for every $u\in u^*\in \ind$, $\mathsf{DerVerTyp}(u)=(u^*,\mathsf{NeiSubsets}(u))$ is a vertex type. 
\end{proof}

\subsection{Robot Type}\label{sec:RobTyp}

Now, we continue to introduce the variables we need for the ILP instance. We have a variable for each {\em robot type}. First, we show how we derive a robot type for each robot from a solution, and then we will present the definition of an abstract type. An intuition for a robot type is as follows. In Definition~\ref{def:VTD}, where we derive vertex types, we saw how we derive $\mathsf{NeiSubsets}(u)$ for each $u\in V(G)\setminus \vc$. Some of the multisets in $\mathsf{NeiSubsets}(u)$ are derived from $\gr(\mathsf{CC}_i)$, where $(\mathsf{CC}_i,{\cal C}_i)$ is an $\widehat{E}_i$-valid pair associated with each robot $i$. Now, we look on the ``puzzle'' from the perspective of the robots. For each $u\in V(\gr(\mathsf{CC}_i)) \setminus \vc$, the multiset $\widehat{\mathsf{N}}_{\gr(\mathsf{CC}_i)}(u)$ is ``allocated'' for the vertex type of $u$.
For an induced submultigraph  $\overline{H}$ of $\overline{G}$, we present two notations: {\em the multiset $\mathsf{NeiOfInd}(\overline{H})$}, and an {\em allocation} for this multiset. When $\overline{H}=\overline{G}(\gr(\mathsf{CC}_i))$, $\mathsf{NeiOfInd}(\overline{H})$ is the set of pairs $(u^*_i,\W)$ we will later allocate to vertex types.



	
	\begin{definition} [{\bf $\mathsf{NeiOfInd}(\overline{H})$}]\label{def:SOP}
		Let $G$ be a connected graph and let $\vc$ be a vertex cover of $G$. Let $\overline{H}$ be a submultigraph of $\overline{G}$. Then, $\mathsf{NeiOfInd}(\overline{H})=\{(u^*_i,\widehat{\mathsf{N}}_{\overline{H}}(u^*_i))~|~u^*_i\in V(\overline{H})\}$ as a multiset.
	\end{definition}
	
	A {\em vertex allocation} of $\mathsf{NeiOfInd}(\overline{H})$ is a function that assigns a vertex type to each $(u^*_i,\W)\in \mathsf{NeiOfInd}(\overline{H})$. We allocate $(u^*_i,\W)$ to a vertex type that ``expects'' to get $\W$, that is, a vertex type $(u^*,\U)$ where $\W\in \U$.
	
		\begin{definition} [{\bf Vertex Allocation of $\mathsf{NeiOfInd}(\overline{H})$}]\label{def:SOPP}
		Let $G$ be a connected graph and let $\vc$ be a vertex cover of $G$. Let $\overline{H}$ be a submultigraph of $\overline{G}$. A {\em vertex allocation} of $\mathsf{NeiOfInd}(\overline{H})$ is a function $\mathsf{Alloc}_{\overline{H}}:\mathsf{NeiOfInd}(\overline{H})\rightarrow \mathsf{VerTypeS}$ such that for every $(u^*_i,\W)\in \mathsf{NeiOfInd}(\overline{H})$, $\mathsf{Alloc}_{\overline{H}}(u^*_i,\W)=(u^*,\U)$ where\\ $\W \in \U$.
	\end{definition}
	


In addition to the vertex allocation, the type of each robot $i$ is associated with a vector of non-negative integers $\mathsf{NumOfCyc}=(N_{i,2},N_{i,3},N_{i,5},N_{i,6},\ldots,N_{i,2|\vc|})$: for every $2\leq j\leq 2|\vc|$, $j\neq 4$, $N_{i,j}$ is the number of cycles of length exactly $j$ in ${\cal C}_i$, where $(\mathsf{CC}_i,{\cal C}_i)$ be an $\widehat{E}_i$-valid pair.

\begin{definition} [{\bf Deriving Robot Types From a Solution}]\label{def:RTD}
	Let $G$ be a connected graph, let $\vi\in V(G)$, let $k,B\in \mathbb{N}$, let $\vc$ be a vertex cover of $G$ and let $\widehat{E}_1,\ldots,\widehat{E}_k$ such that the conditions of Lemma~\ref{obs:equivsol} hold. For every $1\leq i\leq k$, let $(\mathsf{CC}_i,{\cal C}_i)$ be an $\widehat{E}_i$-valid pair. For every $1\leq i\leq k$, let $G'_i=\overline{G}(\gr(\mathsf{CC}_i))$ with an isomorphism $\alpha_i:V(G_i')\rightarrow V(\gr(\mathsf{CC}_i))$, and let $\mathsf{CC}'_i=E(G_i')$. For every $1\leq i\leq k$ and $(u^*_j,\W)\in \mathsf{NeiOfInd}(G'_i)$, let $\mathsf{Alloc}_{G_i'}((u^*_j,\W))=\mathsf{DerVerTyp}(\alpha_i(u^*_j))$. For every $1\leq i\leq k$ and $2\leq j\leq 2|\vc|$, $j\neq 4$, let $N_{i,j}$ be the number of cycles of size $j$ in ${\cal C}_i$, and for every $1\leq i\leq k$ let $\mathsf{NumOfCyc}_i=(N_{i,2},N_{i,3},N_{i,5},N_{i,6},\ldots,N_{i,2|\vc|})$. Then, for every $1\leq i\leq k$, let $\mathsf{DerRobTyp}(\{(\mathsf{CC}_j,{\cal C}_j)\}_{1\leq j\leq k},i)=(\mathsf{CC}'_i,\mathsf{Alloc}_{G_i'},\mathsf{NumOfCyc}_i)$.
\end{definition}

Whenever $\{(\mathsf{CC}_i,{\cal C}_i)\}_{1\leq i\leq k}$ is clear from context, we refer to $\mathsf{DerRobTyp}(\{(\mathsf{CC}_i,{\cal C}_i)\}_{1\leq i\leq k},$ $i)$ as $\mathsf{DerRobTyp}(i)$.


Now, we define the term robot type. As mentioned, a robot type is first associated with $\mathsf{CC}\subseteq E(\overline{G})$ and a vertex allocation of $\mathsf{NeiOfInd}(\gr(\mathsf{CC}))$. We demand that $\gr(\mathsf{CC})$ is connected, every vertex in $\gr(\mathsf{CC})$ has even degree and $\vi\in V(\gr(\mathsf{CC}))$, similarly to Conditions~\ref{lemCC:con2} and~\ref{lemCC:con3} of Definition~\ref{def:ValPa}. This way we ensure that the multiset $\widehat{E}$ we will build for the a robot makes $\gr(\widehat{E})$ connected: we will later associate with a robot only cycles $C$ such that $V(C)\cap V(\gr(\mathsf{CC}))\neq \emptyset$. In addition, we also ensure that every vertex in $\gr(\widehat{E})$ will have even degree in $\gr(\widehat{E})$: each such vertex has even degree in $\gr(\mathsf{CC})$, and we will add only cycles to this graph, so the degree of each vertex will remain even. Therefore, we ensure that the ``local'' properties of each robot, given by Conditions~\ref{obs:equivsol1} and~\ref{obs:equivsol2} of Lemma~\ref{obs:equivsol}, are preserved. The vector of non-negative integers $\mathsf{NumOfCyc}=(N_2,N_3,N_5,N_6,\ldots,N_{2|\vc|})$ determines how many cycles of each length (other than $4$) we will associate with the robot. Observe that, due to Condition~\ref{lemCC:con4} of Definition~\ref{def:ValPa}, each $N_j$ is bounded by $2|\vc|^2$.

\begin{definition} [{\bf Robot Type}]\label{def:RT}
	Let $G$ be a connected graph, let $\vi\in V(G)$ and let $\vc$ be a vertex cover of $G$. Then, $\mathsf{RobTyp}=(\mathsf{CC},\mathsf{Alloc}_{\gr(\mathsf{CC})},\mathsf{NumOfCyc})$ is a {\em robot type} if the following conditions are satisfied:
	\begin{enumerate}
		\item $\mathsf{CC}\subseteq E(\overline{G})$.\label{def:RT1}
		\item $\gr(\mathsf{CC})$ is connected, every vertex in $\gr(\mathsf{CC})$ has even degree and\\ $\vi\in V(\gr(\mathsf{CC}))$.\label{def:RT2}
		\item $\mathsf{Alloc}_{\gr(\mathsf{CC})}$ is a vertex allocation of $\mathsf{NeiOfInd}(\gr(\mathsf{CC}))$.\label{def:RT3}
		\item $\mathsf{NumOfCyc}=(N_2,N_3,N_5,N_6,\ldots,N_{2|\vc|})$, where $0\leq N_i\leq 2|\vc|^2$ for every $2\leq i\leq 2|\vc|$, $i\neq 4$.\label{def:RT4}
	\end{enumerate}
\end{definition}

We denote the set of robot types by $\mathsf{RobTypS}$.

In the following lemma we show the ``correctness'' of Definition~\ref{def:RTD}, that is, for every $1\leq i \leq k$, $\mathsf{DerRobTyp}(i)$ is indeed a robot type.

\begin{lemma}\label{lem:RobDer}
	Let $G$ be a connected graph, let $\vi\in V(G)$, let $k,B\in \mathbb{N}$ and let $\vc$ be a vertex cover of $G$. Let $\widehat{E}_1,\ldots,\widehat{E}_k$ such that conditions of Lemma~\ref{obs:equivsol} hold, and for every $1\leq i\leq k$, let $(\mathsf{CC}_i,{\cal C}_i)$ be an $\widehat{E}_i$-valid pair. Then, for every $1\leq i\leq k$, $\mathsf{DerVerTyp}(i)$ is a robot type.
\end{lemma}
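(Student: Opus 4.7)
The plan is to verify the four defining properties of a robot type (Definition~\ref{def:RT}) one by one for the tuple $(\mathsf{CC}'_i,\mathsf{Alloc}_{G_i'},\mathsf{NumOfCyc}_i)$ produced by Definition~\ref{def:RTD}. Each of these will follow almost directly from the corresponding property of the $\widehat{E}_i$-valid pair $(\mathsf{CC}_i,{\cal C}_i)$ given by Lemma~\ref{lem:4cycandCC}, transported along the isomorphism $\alpha_i: V(G'_i)\to V(\gr(\mathsf{CC}_i))$ provided by the operation $\overline{G}(\cdot)$.

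First, Condition~\ref{def:RT1} is immediate: by Definition~\ref{def:GbarsubmultigraphAc}, $G'_i=\overline{G}(\gr(\mathsf{CC}_i))$ is a submultigraph of $\overline{G}$, so $\mathsf{CC}'_i = E(G'_i)\subseteq E(\overline{G})$. To apply this definition we must first know that $\gr(\mathsf{CC}_i)$ is a $\overline{G}$-submultigraph, which is exactly Condition~\ref{lemCC:con1} of Definition~\ref{def:ValPa}; combined with Observation~\ref{obs:equivso}, each edge of $\mathsf{CC}_i$ appears at most twice, so $\overline{G}(\gr(\mathsf{CC}_i))$ is well-defined. Condition~\ref{def:RT2} follows from Condition~\ref{lemCC:con2} of Definition~\ref{def:ValPa} together with the fact that $\alpha_i$ is a multigraph isomorphism: isomorphisms preserve connectivity and vertex degrees, so $\gr(\mathsf{CC}'_i)=G'_i$ is connected and every vertex in it has even degree. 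For the requirement $\vi\in V(G'_i)$, we use Condition~\ref{def:GbarsubmultigraphAc1} of Definition~\ref{def:GbarsubmultigraphAc}: since $\vi\in \vc$, the isomorphism fixes $\vi$, and $\vi\in V(\gr(\mathsf{CC}_i))$ by Condition~\ref{lemCC:con2} of Definition~\ref{def:ValPa}.

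The main step is Condition~\ref{def:RT3}: I must show that $\mathsf{Alloc}_{G'_i}$ is indeed a vertex allocation of $\mathsf{NeiOfInd}(G'_i)$ in the sense of Definition~\ref{def:SOPP}. Fix $(u^*_j,\W)\in \mathsf{NeiOfInd}(G'_i)$, so $u^*_j\in V(G'_i)$ with $u^*_j\in u^*$ for some equivalence class $u^*\in\ind$ (by Condition~\ref{def:GbarsubmultigraphAc2} of Definition~\ref{def:GbarsubmultigraphAc}) and $\W=\widehat{\mathsf{N}}_{G'_i}(u^*_j)$. By Definition~\ref{def:RTD}, $\mathsf{Alloc}_{G'_i}(u^*_j,\W)=\mathsf{DerVerTyp}(\alpha_i(u^*_j))$, which by Lemma~\ref{lem:VerDer} is a vertex type, and by Condition~\ref{def:GbarsubmultigraphAc2} is of the form $(u^*,\mathsf{NeiSubsets}(\alpha_i(u^*_j)))$. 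It remains to check that $\W\in \mathsf{NeiSubsets}(\alpha_i(u^*_j))$. Because $\alpha_i$ is a multigraph isomorphism that fixes $\vc$ pointwise (Condition~\ref{def:GbarsubmultigraphAc1}) and because every neighbor of an independent-set vertex lies in $\vc$, we have $\widehat{\mathsf{N}}_{G'_i}(u^*_j)=\widehat{\mathsf{N}}_{\gr(\mathsf{CC}_i)}(\alpha_i(u^*_j))$ as multisets of vertex-cover vertices. By Definition~\ref{def:VTD}, $\widehat{\mathsf{N}}_{\gr(\mathsf{CC}_i)}(\alpha_i(u^*_j))\in \mathsf{NeiSubsets}(\alpha_i(u^*_j))$, so $\W$ belongs to the second component of $\mathsf{DerVerTyp}(\alpha_i(u^*_j))$, as required.

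Finally, Condition~\ref{def:RT4} follows from Condition~\ref{lemCC:con4} of Definition~\ref{def:ValPa}: the multiset ${\cal C}_i$ contains at most $2|\vc|^2$ cycles of length other than $4$, so for each $2\leq j\leq 2|\vc|$ with $j\neq 4$ the count $N_{i,j}$ of cycles of length exactly $j$ in ${\cal C}_i$ satisfies $0\leq N_{i,j}\leq 2|\vc|^2$, which is exactly the bound required. Putting the four items together yields the lemma. The only mildly delicate point is Condition~\ref{def:RT3}, because one must carefully track the two directions of the isomorphism $\alpha_i$ and confirm that the ``derived'' neighborhood multiset in $G'_i$ coincides with the one recorded when $\mathsf{DerVerTyp}$ was computed in the original graph $\gr(\mathsf{CC}_i)$; once the fact that $\alpha_i$ is the identity on $\vc$ is invoked, everything aligns.
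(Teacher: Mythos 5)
Your proof is correct and follows essentially the same route as the paper's: verify the four conditions of Definition~\ref{def:RT} one by one, using Conditions~\ref{lemCC:con1}, \ref{lemCC:con2}, \ref{lemCC:con4} of Definition~\ref{def:ValPa}, the isomorphism $\alpha_i$ provided by $\overline{G}(\cdot)$, and Lemma~\ref{lem:VerDer} for the vertex-allocation condition. You are slightly more explicit than the paper in justifying the identification $\widehat{\mathsf{N}}_{G'_i}(u^*_j)=\widehat{\mathsf{N}}_{\gr(\mathsf{CC}_i)}(\alpha_i(u^*_j))$ via the fact that $\alpha_i$ fixes $\vc$ pointwise, but the argument is otherwise the same.
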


\begin{proof}
Let $1\leq i\leq k$. We show that $\mathsf{DerVerTyp}(i)$ is a robot type by proving that the conditions of Definition~\ref{def:RT} hold. 
Since $(\mathsf{CC}_i,{\cal C}_i)$ is an $\widehat{E}_i$-valid pair, from Condition~\ref{lemCC:con1} of Definition~\ref{def:ValPa},  $\gr(\mathsf{CC}_i)$ is a $\overline{G}$-submultigraph. So, $G'_i=\overline{G}(\gr(\mathsf{CC}_i))$ is a submultigraph of $\overline{G}$, with an isomorphism $\alpha_i:V(\overline{G})\rightarrow V(\gr(\mathsf{CC}_i))$ satisfies the conditions of Definition~\ref{def:GbarsubmultigraphAc}. Thus, $\mathsf{CC}'_i=E(G_i')\subseteq E(\overline{G})$, and therefore, Condition~\ref{def:RT1} of Definition~\ref{def:RT} holds.

 From Condition~\ref{lemCC:con2} of Definition~\ref{def:ValPa}, $\gr(\mathsf{CC}_i)$ is connected, $\vi \in V(\gr(\mathsf{CC}_i))$ and every vertex in $\gr(\mathsf{CC}_i)$ has even degree. So, $\gr(\mathsf{CC}'_i)$ is connected, every vertex in $\gr(\mathsf{CC}'_i)$ has even degree and $\vi\in V(\gr(\mathsf{CC}'_i))$. Thus, Condition~\ref{def:RT2} of Definition~\ref{def:RT} holds. 
 
 Now, let $(u^*_j,\mathsf{NeiSub})\in \mathsf{NeiOfInd}(G'_i)$. Then, $\mathsf{Alloc}_{G_i'}((u^*_j,\mathsf{NeiSub}))=\mathsf{DerVerTyp}(\alpha_i(u^*_j))$, so $\mathsf{Alloc}_{G_i'}((u^*_j,\mathsf{NeiSub}))=(z^*,\mathsf{NeiSubsets}(\alpha_i(u^*_j)))$, where $\mathsf{NeiSubsets}(\alpha_i(u^*_j))=\bigcup_{1\leq t\leq k}$ $(\{\widehat{\mathsf{N}}_{\gr(\mathsf{CC}_t)}(\alpha_i(u^*_j))\}\cup \mathsf{NeiPairs}_{{\cal C}_t}(\alpha_i(u^*_j)))$(see Definition~\ref{def:VTD}). Observe that $\widehat{\mathsf{N}}_{\gr(\mathsf{CC}_i)}$ $(\alpha_i(u^*_j))=\mathsf{NeiSub}$, thus $\mathsf{NeiSub}\in \mathsf{NeiSubsets}(\alpha_i(u^*_j))$. In addition, from Condition~\ref{def:GbarsubmultigraphAc2} of Definition~\ref{def:GbarsubmultigraphAc}, $\alpha_i(u^*_j)\in u^*$, so $ z^*= u^*$. Also, from Lemma~\ref{lem:VerDer}, $\mathsf{DerVerTyp}(\alpha_i(u^*_j))\in \mathsf{VerTypS}$. Therefore, $\mathsf{Alloc}_{G_i'}$ is a vertex allocation of $\mathsf{NeiOfInd}(G_i')=\mathsf{NeiOfInd}(\gr(\mathsf{CC}_i'))$, so Condition~\ref{def:RT3} of Definition~\ref{def:RT} holds.
 
 Now, since $(\mathsf{CC}_i,{\cal C}_i)$ is an $\widehat{E}_i$-valid pair, from Condition~\ref{lemCC:con1} of Definition~\ref{def:ValPa}, at most $2|\vc|^2$ cycles in ${\cal C}_i$ have length other than $4$. Therefore, for every $2\leq j\leq 2|\vc|$, $0\leq N_j\leq 2|\vc|^2$. So, Condition~\ref{def:RT4} of Definition~\ref{def:RT} holds.
 
 We proved that all of the conditions of Definition~\ref{def:RT} hold, so  $\mathsf{DerVerTyp}(i)$ is a robot type. This ends the proof.
\end{proof}                                                                                                  

\subsection{Cycle Type}\label{sec:CycTyp}
Lastly, we have a variable for each {\em cycle type}. For every $1\leq i\leq k$, let $(\mathsf{CC}_i,{\cal C}_i)$ be an $\widehat{E}_i$-valid pair. First, we show how we derive a cycle type for each $C\in {\cal C}_i$, for every $1\leq i\leq k$, and then we will present the definition.
 An intuition for a cycle type is as follows. In Lemma~\ref{def:VTD}, where we derive vertex types, we saw how we derive $\mathsf{NeiSubsets}(u)$ for each $u\in V(G)\setminus \vc$. Some of the multisets in $\mathsf{NeiSubsets}(u)$ are derived from cycles in $\mathsf{CC}_i$, for some $1\leq i\leq k$. Now, we look on the ``puzzle'' from the perspective of the cycles. For each $\{\{u,v\},\{u,v'\}\}\in \mathsf{EdgePairs}(C)$, the multiset $\{v,v'\}$ is ``allocated'' for the vertex type of $u$.



Similarly to the definition of a vertex allocation of $\mathsf{NeiOfInd}(G')$ (Definition~\ref{def:SOPP}), we have the following definition, for allocation of $\mathsf{EdgePairs}(C)$:


\begin{definition} [{\bf Vertex Allocation of $\mathsf{EdgePairs}(C)$}]\label{def:PPOC}
	Let $G$ be a connected graph, let $\vc$ be a vertex cover of $G$ and let $C\in \mathsf{Cyc}_{G}\cup \mathsf{Cyc}_{G^*}$. A {\em vertex allocation} of $\mathsf{EdgePairs}(C)$ is a function $\mathsf{PaAlloc}_C:\mathsf{EdgePairs}(C)\rightarrow \mathsf{VerType}$, such that for every $\{\{v_{i-1},v_i\},\{v_i,v_{i+1}\}\}\in \mathsf{EdgePairs}(C)$, $\{v_{i-1},v_{i+1}\}\in \mathsf{NeiSubsets}$ and $v_i\in u^*$ or $v_i=u^*$, where $\mathsf{PaAlloc}_C(\{\{v_{i-1},v_i\},$ $\{v_i,v_{i+1}\}\})=(u^*,\mathsf{NeiSubsets})$.
\end{definition}

Let $C=(v_0,\ldots,v_\ell=v_0)\in \mathsf{Cyc}_G$. Recall that we denote by $\ind(C)$ the cycle in $G^*$ obtained from $C$ by replacing each $v_i\in V(C)\cap (V(G)\setminus \vc)$ by $u^*\in \ind$, where $v_i\in u^*$, for every $1\leq i\leq \ell$. Similarly, let $\mathsf{PaAlloc}_C$ be a vertex allocation of $\mathsf{EdgePairs}(C)$. We denote by $\ind(\mathsf{PaAlloc}_C)$ the function $\ind(\mathsf{PaAlloc}_C):\mathsf{EdgePairs}(\ind(C))\rightarrow \mathsf{VerType}$ defined as follows: for every $\{\{v_{i-1},u^*\},\{u^*,v_{i+1}\}\}\in \mathsf{EdgePairs}(\ind(C))$, $\ind(\mathsf{PaAlloc}_C)(\{\{v_{i-1},u^*\},\{u^*,v_{i+1}\}$ $\})=\mathsf{Alloc}_{C}(\{\{v_{i-1},v_i\},\{v_i,v_{i+1}\}\})$. Observe that $\ind(\mathsf{PaAlloc}_C)$ is a vertex allocation of $\mathsf{EdgePairs}(\ind(C))$:

\begin{observation}\label{obs:indAll}
Let $C\in \mathsf{Cyc}_G$ and let $\mathsf{PaAlloc}_C$ be a vertex allocation of $\mathsf{EdgePairs}(C)$. Then, $\ind(\mathsf{PaAlloc}_C)$ is a vertex allocation of $\mathsf{EdgePairs}(\ind(C))$.
\end{observation}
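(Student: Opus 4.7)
The plan is a direct verification that unwraps the two definitions. The target is to show that the function $\ind(\mathsf{PaAlloc}_C)$ satisfies the requirements of \cref{def:PPOC} with $\ind(C)$ in place of $C$. Concretely, for every pair $\{\{v_{i-1},u^*\},\{u^*,v_{i+1}\}\} \in \mathsf{EdgePairs}(\ind(C))$ I have to certify that the vertex type assigned to it, say $(z^*,\mathsf{NeiSubsets})$, satisfies $z^* = u^*$ (or $u^* \in z^*$, which in this case coincides) and $\{v_{i-1},v_{i+1}\} \in \mathsf{NeiSubsets}$.

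The first step is to exhibit the natural bijection between the multisets $\mathsf{EdgePairs}(C)$ and $\mathsf{EdgePairs}(\ind(C))$. By \cref{def:POC}, the elements of each multiset are indexed by positions $1 \le i \le \ell-1$ whose middle vertex lies in the independent set (of $G$ or $G^*$, respectively). Since $\ind(C)$ is obtained from $C$ by replacing each $v_i \in V(G)\setminus \vc$ with the unique $u^* \in \ind$ such that $v_i \in u^*$, the position-$i$ pair in $\mathsf{EdgePairs}(C)$ exists if and only if the position-$i$ pair in $\mathsf{EdgePairs}(\ind(C))$ exists, and the bijection preserves multiplicity. The definition of $\ind(\mathsf{PaAlloc}_C)$ literally transports the assignment along this bijection.

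The verification then runs as follows. Fix $\{\{v_{i-1},u^*\},\{u^*,v_{i+1}\}\} \in \mathsf{EdgePairs}(\ind(C))$, and let $v_i \in u^*$ be the corresponding middle vertex of $C$. By construction $\ind(\mathsf{PaAlloc}_C)(\{\{v_{i-1},u^*\},\{u^*,v_{i+1}\}\}) = \mathsf{PaAlloc}_C(\{\{v_{i-1},v_i\},\{v_i,v_{i+1}\}\})$. Denote this common value by $(z^*,\mathsf{NeiSubsets})$. Since $\mathsf{PaAlloc}_C$ is a vertex allocation of $\mathsf{EdgePairs}(C)$, \cref{def:PPOC} yields $\{v_{i-1},v_{i+1}\} \in \mathsf{NeiSubsets}$ and $v_i \in z^*$ (the ``or $v_i = z^*$'' case is impossible here because $v_i \in V(G) \setminus \vc$, whereas a $z^*$ arising from a vertex type belongs to $\ind$). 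As $v_i \in u^*$ and the equivalence classes partition $V(G) \setminus \vc$, we conclude $z^* = u^*$, which is exactly the required condition for $\mathsf{EdgePairs}(\ind(C))$.

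I do not anticipate a genuine obstacle: the only subtlety is to match the two notions of ``middle vertex'' (the actual vertex in $G$ versus its class in $G^*$) and to read off the uniqueness $z^* = u^*$ from the fact that the equivalence classes in $\ind$ are disjoint. Both are essentially bookkeeping, so the proof is just a careful application of Definitions~\ref{def:POC} and~\ref{def:PPOC}.
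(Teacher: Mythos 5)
Your proof is correct, and it follows the only natural route: unwind Definitions~\ref{def:POC} and~\ref{def:PPOC}, exhibit the position-indexed bijection between $\mathsf{EdgePairs}(C)$ and $\mathsf{EdgePairs}(\ind(C))$ (which is well defined precisely because the independent-set vertices of $C$ are exactly the positions that become $\ind$-vertices of $\ind(C)$, and their two neighbors lie in $\vc$ and hence are unchanged), and then transport the two conditions of a vertex allocation across that bijection. Your observation that the disjointness of the equivalence classes in $\ind$ forces $z^*=u^*$, and that the ``$v_i=u^*$'' branch of \cref{def:PPOC} cannot fire on the $G$ side, is exactly the bookkeeping needed. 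The paper states \cref{obs:indAll} without a proof, treating it as immediate from the definitions, so there is no alternative argument to compare against; your writeup simply supplies the omitted verification.
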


\begin{definition} [{\bf Deriving Cycle Types From a Solution}]\label{def:CTD}
	Let $G$ be a connected graph, let $\vi\in V(G)$, let $k,B\in \mathbb{N}$, let $\vc$ be a vertex cover of $G$ and let $\widehat{E}_1,\ldots,\widehat{E}_k$ such that conditions of Lemma~\ref{obs:equivsol} hold. For every $1\leq i\leq k$, let $(\mathsf{CC}_i,{\cal C}_i)$ be an $\widehat{E}_i$-valid pair. For every $1\leq i\leq k$, $C\in {\cal C}_i$ and $\{\{u,v\},\{u,v'\}\}\in\mathsf{EdgePairs}(C)$, let $\mathsf{PaAlloc}_{i,C}(\{\{v,u\},\{u,v'\}\})=\mathsf{DerVerTyp}(u)$. Then, for every $1\leq i\leq k$ and $C\in {\cal C}_i$, let 
	
	\noindent $\mathsf{DerCycTyp}(\{(\mathsf{CC}_j,{\cal C}_j)\}_{1\leq j\leq k},i,C)=(\ind(C),\ind(\mathsf{PaAlloc}_{i,C}),\mathsf{DerRobTyp}(i))$. 
\end{definition}

Whenever $\{(\mathsf{CC}_i,{\cal C}_i)\}_{1\leq i\leq k}$ is clear from context, we refer to $\mathsf{DerCycTyp}(\{(\mathsf{CC}_j,{\cal C}_j)\}_{1\leq j\leq k},$ $i,C)$ as $\mathsf{DerCycTyp}(i,C)$.


Now, we define the term cycle type. In addition to the vertex allocation of $\mathsf{EdgePairs}(C)$, we have the robot type $\mathsf{RobTyp}=(\mathsf{CC},\mathsf{Alloc}_{\gr(\mathsf{CC})},\mathsf{NumOfCyc})$ associated with the cycle type. In order to maintain the connectivity of $\gr(\widehat{E})$, we demand that $V(\gr(\mathsf{CC}))\cap V(C)\cap \vc\neq \emptyset$ (see the discussion before Definition~\ref{def:RT}).





\begin{definition} [{\bf Cycle Type}]\label{def:CT}
	Let $G$ be a connected graph, let $\vi\in V(G)$ and let $\vc$ be a vertex cover of $G$. Let $C\in \mathsf{Cyc}_{G^*}$, let $\mathsf{PaAlloc}_C$ be a vertex allocation of $\mathsf{EdgePairs}(C)$ and let $\mathsf{RobTyp}=(\mathsf{CC},\mathsf{Alloc}_{\gr(\mathsf{CC})},\mathsf{NumOfCyc})$ be a robot type. Then, $\mathsf{CycTyp}=(C,\mathsf{PaAlloc}_C,\mathsf{RobTyp})$ is a {\em cycle type} if $V(\gr(\mathsf{CC}))\cap V(C)\cap \vc\neq \emptyset$.
\end{definition}


We denote the set of cycle types by $\mathsf{CycTypS}$.

In the following lemma we show the ``correctness'' of Definition~\ref{def:CTD}, that is, for every $1\leq i\leq k$, $C\in {\cal C}_i$, $\mathsf{DerCycTyp}(i,C)$ is indeed a cycle type.

\begin{lemma}\label{lem:CycDer}
	Let $G$ be a connected graph, let $\vi\in V(G)$, let $k,B\in \mathbb{N}$ and let $\vc$ be a vertex cover of $G$. Let $\widehat{E}_1,\ldots,\widehat{E}_k$ such that conditions of Lemma~\ref{obs:equivsol} hold and for every $1\leq i\leq k$ let $(\mathsf{CC}_i,{\cal C}_i)$ be an $\widehat{E}_i$-valid pair. Then, for every $1\leq i\leq k$, $C\in {\cal C}_i$, $\mathsf{DerCycTyp}(i,C)$ is a cycle type. 
\end{lemma}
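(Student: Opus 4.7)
The plan is to verify each requirement of Definition~\ref{def:CT} for the triple $\mathsf{DerCycTyp}(i,C)=(\mathsf{IND}(C), \mathsf{IND}(\mathsf{PaAlloc}_{i,C}), \mathsf{DerRobTyp}(i))$ in order. First I would check that $\mathsf{IND}(C)\in \mathsf{Cyc}_{G^*}$. Since $(\mathsf{CC}_i,\mathcal{C}_i)$ is an $\widehat{E}_i$-valid pair, by Conditions~\ref{lemCC:con4}--\ref{lemCC:con5} of Definition~\ref{def:ValPa} the cycle $C$ is either of length $4$ or is simple. Any simple cycle in $\gr(\widehat{E}_i)$ must alternate between vertices of $\vc$ and of $\mathsf{IND}$ (since $\mathsf{IND}$ is independent), hence has length at most $2|\vc|$. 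Thus $|C|\leq \max\{4,2|\vc|\}$, so $C\in \mathsf{Cyc}_G$, and applying the $\mathsf{IND}$-operation preserves length, giving $\mathsf{IND}(C)\in \mathsf{Cyc}_{G^*}$.

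Next I would show that $\mathsf{PaAlloc}_{i,C}$ is a vertex allocation of $\mathsf{EdgePairs}(C)$ in the sense of Definition~\ref{def:PPOC}. Take any $\{\{v_{j-1},v_j\},\{v_j,v_{j+1}\}\}\in \mathsf{EdgePairs}(C)$; by definition $v_j\in\mathsf{IND}$, so let $v_j\in u^*\in\mathsf{EQ}$. Then $\mathsf{PaAlloc}_{i,C}(\{\{v_{j-1},v_j\},\{v_j,v_{j+1}\}\}) = \mathsf{DerVerTyp}(v_j) = (u^*,\mathsf{NeiSubsets}(v_j))$, which by Lemma~\ref{lem:VerDer} is a vertex type. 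Moreover, since $C\in\mathcal{C}_i$, the pair $\{v_{j-1},v_{j+1}\}$ lies in $\mathsf{NeiPairs}_{\mathcal{C}_i}(v_j)\subseteq \mathsf{NeiSubsets}(v_j)$ by Definition~\ref{def:VTD}. So the allocation condition is met. Invoking Observation~\ref{obs:indAll}, it follows that $\mathsf{IND}(\mathsf{PaAlloc}_{i,C})$ is a vertex allocation of $\mathsf{EdgePairs}(\mathsf{IND}(C))$.

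Then I would note that $\mathsf{DerRobTyp}(i)\in\mathsf{RobTypS}$ is already established by Lemma~\ref{lem:RobDer}, so the third component of $\mathsf{DerCycTyp}(i,C)$ is a valid robot type.

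Finally, I would verify the nonemptiness condition $V(\gr(\mathsf{CC}'_i))\cap V(\mathsf{IND}(C))\cap \vc\neq\emptyset$, where $\mathsf{DerRobTyp}(i) = (\mathsf{CC}'_i,\mathsf{Alloc}_{G'_i},\mathsf{NumOfCyc}_i)$ with $G'_i=\overline{G}(\gr(\mathsf{CC}_i))$. The main observation is that, by Condition~\ref{def:GbarsubmultigraphAc1} of Definition~\ref{def:GbarsubmultigraphAc}, the isomorphism $\alpha_i$ fixes $\vc$-vertices, so $V(\gr(\mathsf{CC}'_i))\cap \vc = V(\gr(\mathsf{CC}_i))\cap\vc$; likewise $V(\mathsf{IND}(C))\cap\vc = V(C)\cap\vc$. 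Hence the required intersection equals $V(\gr(\mathsf{CC}_i))\cap V(C)\cap\vc$. Since $C$ is a cycle in $\gr(\widehat{E}_i\setminus \mathsf{CC}_i)\subseteq \gr(\widehat{E}_i)$, Observation~\ref{obs:cycleVc} gives some $w\in V(C)\cap \vc$, and this $w$ lies in $V(\gr(\widehat{E}_i))\cap\vc$, which by Condition~\ref{lemCC:con3} of Definition~\ref{def:ValPa} equals $V(\gr(\mathsf{CC}_i))\cap\vc$. Thus $w$ witnesses nonemptiness of the intersection, completing the verification. I expect the main subtlety to be the last step, specifically using Condition~\ref{lemCC:con3} of the valid-pair definition to transfer $\vc$-vertices of $C$ into $\gr(\mathsf{CC}_i)$; the rest is essentially definition unwinding.
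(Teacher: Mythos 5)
Your proof is correct and follows essentially the same route as the paper's: bounding $|C|$ by $\mathsf{max}\{4,2|\vc|\}$ via simplicity of the non-length-$4$ cycles, verifying the allocation via Lemma~\ref{lem:VerDer}, Definition~\ref{def:VTD} and Observation~\ref{obs:indAll}, citing Lemma~\ref{lem:RobDer} for the robot-type component, and establishing the $\vc$-intersection through Condition~\ref{lemCC:con3} of Definition~\ref{def:ValPa}, Condition~\ref{def:GbarsubmultigraphAc1} of Definition~\ref{def:GbarsubmultigraphAc} and Observation~\ref{obs:cycleVc}. One small wording nitpick: a simple cycle need not literally alternate between $\vc$ and $\mathsf{IND}$ vertices (it may pass through consecutive $\vc$ vertices); the relevant fact is only that no two $\mathsf{IND}$ vertices are consecutive, which already gives the $2|\vc|$ length bound you use.
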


\begin{proof}
Let $1\leq i\leq k$ and $C\in {\cal C}_i$. We show that $\mathsf{DerCycTyp}(i,C)$ is a cycle type. First, from Condition~\ref{lemCC:con4} of Definition~\ref{def:ValPa}, every cycle in ${\cal C}_i$ of length other than $4$ is simple. So, the length of $C$ is at most $\mathsf{max}\{4,2|\vc|\}$, and thus, $\ind(C)\in \mathsf{Cyc}_{G^*}$. 

Second, we show that $\mathsf{PaAlloc}_{i,C}$, defined in Definition~\ref{def:CTD}, is a vertex allocation of $\mathsf{EdgePairs}(C)$. Let $\{\{u,v\},\{u,v'\}\}\in\mathsf{EdgePairs}(C)$. Then, $\mathsf{PaAlloc}_{i,C}(\{\{v,u\},\{u,v'\}\})=\mathsf{DerVerTyp}(u)$. First, from Lemma~\ref{lem:VerDer}, $\mathsf{DerVerTyp}(u)$ is a vertex type. Now, by Definition~\ref{def:VTD}, $\mathsf{DerVerTyp}(u)=(u^*,\mathsf{NeiSubsets}(u))$, where $u\in u^*$, $\mathsf{NeiSubsets}(u)=\bigcup_{1\leq j\leq k}($ $\{\widehat{\mathsf{N}}_{\gr(\mathsf{CC}_j)}(u)\}\cup \mathsf{NeiPairs}_{{\cal C}_j}(u))$ and  $\mathsf{NeiPairs}_{{\cal C}_j}(u)=\{\{v,v'\}~|~C\in {\cal C}_j, \{\{u,v\},\{u,v'\}\}\in \mathsf{EdgePairs}(C)\}$. So, since $\{\{u,v\},\{u,v'\}\}\in\mathsf{EdgePairs}(C)$, then $\{v,v'\}\in \mathsf{NeiPairs}_{{\cal C}_i}(u)\subseteq \mathsf{NeiSubsets}(u)$. Therefore, $\mathsf{PaAlloc}_{i,C}$ is a vertex allocation of $\mathsf{EdgePairs}(C)$, and from Observation~\ref{obs:indAll}, $\ind(\mathsf{PaAlloc}_{i,C})$ is a vertex allocation of $\mathsf{EdgePairs}(\ind(C))$.

Now, since $(\mathsf{CC}_i,{\cal C}_i)$ is an $\widehat{E}_i$-valid pair, from Condition~\ref{lemCC:con3} of Definition~\ref{def:ValPa}, $V(\gr(\mathsf{CC}_i))$ $\cap \vc=V(\gr(\widehat{E}_i))\cap\vc$. From Definition~\ref{def:RTD}, $\mathsf{DerRobTyp}(i)=((\mathsf{CC}'_i),\mathsf{Alloc}_{G_i'},$\\$\mathsf{NumOfCyc}_i)$, where $\mathsf{CC}'_i=E(G_i')$ and $G'_i=\overline{G}(\gr(\mathsf{CC}_i))$. Now, Condition~\ref{def:GbarsubmultigraphAc1} of Definition~\ref{def:GbarsubmultigraphAc} implies that $V(G'_i)\cap \vc=V(\gr(\widehat{E}_i))\cap\vc$, so $V(\gr(\mathsf{CC}'))\cap \vc=V(\gr(\widehat{E}_i))$ $\cap\vc$. From Observation~\ref{obs:cycleVc}, $V(\gr(\widehat{E}_i))\cap\vc\cap V(C)\neq \emptyset$, thus $V(\gr(\mathsf{CC}'))\cap\vc\cap V(C)\neq \emptyset$. In addition, from Lemma~\ref{lem:RobDer}, $\mathsf{DerRobTyp}(i)$ is a robot type. Overall, $\mathsf{DerCycTyp}(i,C)$ is a cycle type.
\end{proof}


\subsection{The Instance $\mathsf{Reduction}(G,\vi,k,B)$ of the ILP Problem}

Now, we are ready to present our reduction to the ILP problem. We have the following variables:
\begin{itemize}
	\item For every $\mathsf{VerTyp}\in \mathsf{VerTypS}$, we have the variable $x_\mathsf{VerTyp}$.
	\item For every $\mathsf{RobTyp}\in \mathsf{RobTypS}$, we have the variable $x_\mathsf{RobTyp}$.
	\item For every $\mathsf{CycTyp}\in \mathsf{CycTypS}$, we have the variable $x_\mathsf{CycTyp}$.
\end{itemize}

Each variable stands for the number of elements of each type.
We give intuition for each of the following equations: 

\smallskip\noindent{\bf Equation~\ref{ilp:1}: Robot Type for Each Robot.} In this equation, we make sure that the total sum of robot types is exactly $k$, that is, there is exactly one robot type for each robot:

1. $\ds{\sum_{\mathsf{RobTyp}\in \mathsf{RobTypS}}x_\mathsf{RobTyp}=k}$.

\smallskip\noindent{\bf Equation~\ref{ilp:2}: Vertex Type for Each Vertex.} For every $u^*\in \ind$, we denote the set of $\mathsf{VerTyp}=(u^*,\U)\in \mathsf{VerTypS}$ by $\mathsf{VerTypS}_{u^*}$. In this equation, we make sure, for every $u^*\in \ind$, that the total sum of vertex types in $\mathsf{VerTypS}_{u^*}$ is exactly $|u^*|$. That is, there is exactly one vertex type $\mathsf{VerTyp}=(u^*,\U)$ for each $u\in u^*$:

2. For every $u^*\in \ind$, $\ds{\sum_{\mathsf{VerTyp}\in \mathsf{VerTypS}_{u^*}} x_\mathsf{VerTyp}=|u^*|}$

\smallskip\noindent{\bf Equation~\ref{ilp:3}: Assigning Enough Subsets to Each Vertex Type.} We have the following notations:
\begin{itemize}
	\item For every $\mathsf{VerTyp}=(u^*,\U)\in \mathsf{VerTypS}$, every $\W=\{v,v'\}\in \U$ and $1\leq j\leq 2|\vc|$, we denote
	
	 \noindent$\mathsf{CycTypS}(\mathsf{VerTyp},\W,j)=\{\mathsf{CycTyp}=(C,\mathsf{PaAlloc}_C,\mathsf{RobTyp})\in \mathsf{CycTypS}~|~$
	 
	 \noindent$|\{\{\{u^*,v\},\{u^*,v'\}\}\in \mathsf{EdgePairs}(C)~|~$ $\mathsf{PaAlloc}_C(\{\{u^*,v\},\{u^*,v'\}\})=\mathsf{VerTyp}\}|=j\}$. That is, $\mathsf{CycTypS}(\mathsf{VerTyp},\W,j)$ is the set of cycle types that assign $\W$ to $\mathsf{VerTyp}$ exactly $j$ times.
	\item For every $\mathsf{VerTyp}=(u^*,\U)\in \mathsf{VerTypS}$, every $\W\in \U$ and $1\leq j\leq 2^{|\vc|}+|\vc|^2$, 
	
	\noindent$\mathsf{RobTypS}(\mathsf{VerTyp},\W,j)=\{\mathsf{RobTyp}=(\mathsf{CC},\mathsf{Alloc}_{\gr(\mathsf{CC})},\mathsf{NumOfCyc})\in \mathsf{RobTypS}~|~$
	
	\noindent$|\{(u^*_i,\W)\in \mathsf{NeiOfInd}(\gr(\mathsf{CC}))~|~\mathsf{Alloc}_{\gr(\mathsf{CC})}((u^*_i,\W))=\mathsf{VerTyp}\}|=j\}$. That is, $\mathsf{RobTypS}(\mathsf{VerTyp},\W,j)$ is the set of robot types that assign $\W$ to $\mathsf{VerTyp}$ exactly $j$ times.
	\end{itemize}
 Recall that, for a vertex $u\in u^*\in\ind$ of vertex type $\mathsf{VerTyp}=(u^*,\U)$, $\U$ encodes ``how'' the edges incident to $u$ are covered. That is, for every $\W\in \U$, there exists a robot $i\in [k]$ which covers the multiset of edges with one endpoint in $u$ and the other being a vertex in $\W$. A robot is able to cover this exact multiset of edges if $(u^*_i,\W)\in \mathsf{NeiOfInd}(\gr(\mathsf{CC}_i)$ or $\W=\{v,v'\}$ and $\{\{u^*,v\},\{u^*,v'\}\}\in \mathsf{EdgePairs}(C)$ for $C\in {\cal C}_i$, where $\mathsf{CC}_i$ and ${\cal C}_i$ satisfy the conditions of Lemma~\ref{lem:4cycandCC}. Therefore, for every $\mathsf{VerTyp}=(u^*,\U)\in \mathsf{VerTypS}$ and every $\W\in \U$, we ensure we have at least one $\W$ assigned to each vertex of $\mathsf{VerTyp}$:
 
 3. For every $\mathsf{VerTyp}=(u^*,\U)\in \mathsf{VerTypS}$, and every $\W\in \U$, 
 
 \noindent$\ds{\sum_{j=1}^{2{|\vc|}}\sum_{{\mathsf{CycTyp}\in \mathsf{CycTypS}(\mathsf{VerTyp},\W,j)}}j\cdot x_\mathsf{CycTyp}+}$\\ $\ds{\sum_{j=1}^{2^{|\vc|}+|\vc|^2}\sum_{{\mathsf{RobTyp}\in \mathsf{RobTypS}(\mathsf{VerTyp},\W,j)}}j\cdot x_\mathsf{RobTyp}\geq x_\mathsf{VerTyp}}$.

Observe that Equations~\ref{ilp:1}--\ref{ilp:3} ensure that every edge with one endpoint in $V\setminus \vc$ is covered.


\smallskip\noindent{\bf Equation~\ref{ilp:4}: Covering Each Edge with Both Endpoints in $\vc$.} We have the following notations:
\begin{itemize}
	\item  For every $\{u,v\}\in E$ such that $u,v\in \vc$,
	
	 \noindent$\mathsf{CycTypS}(\{u,v\})=\{\mathsf{CycTyp}=(C,\mathsf{PaAlloc}_C,\mathsf{RobTyp})\in \mathsf{CycTypS}~|~\{u,v\}\in E(C)\}$. That is, $\mathsf{CycTypS}(\{u,v\})$ is the set of cycle types $\mathsf{CycTyp}=(C,\mathsf{PaAlloc}_C,\mathsf{RobTyp})$ where $C$ covers $\{u,v\}$.
	\item For every $\{u,v\}\in E$ such that $u,v\in \vc$, 
	
	 \noindent$\mathsf{RobTypS}(\{u,v\})=\{\mathsf{RobTyp}=(\mathsf{CC},\mathsf{Alloc}_{\gr(\mathsf{CC})},$ $\mathsf{NumOfCyc})\in \mathsf{RobTypS}~|~\{u,v\}\in \mathsf{CC}\}$. That is, $\mathsf{RobTypS}(\{u,v\})$ is the set of robot types $\mathsf{RobTyp}=(\mathsf{CC},\mathsf{Alloc}_{\gr(\mathsf{CC})})$ where $\mathsf{CC}$ covers $\{u,v\}$.
\end{itemize}
In this equation we ensure that each $\{u,v\}\in E$ with both endpoints in $\vc$ is covered at least once:

4.  For every $\{u,v\}\in E$ such that $u,v\in \vc$, 

\noindent$\ds{\sum_{{\mathsf{CycTyp}\in \mathsf{CycTypS}(\{u,v\})}}x_\mathsf{CycTyp}+\sum_{{\mathsf{RobTyp}\in \mathsf{RobTypS}(\{u,v\})}}x_\mathsf{RobTyp}\geq 1}$.

 Observe that Equations~\ref{ilp:1}--\ref{ilp:4} ensure that every edge in $G$ is covered.

\smallskip\noindent{\bf Equation~\ref{ilp:5}: Assigning the Exact Amount of Cycles with Length Other Than $4$ to Each Robot Type.} We have the following notation: 
\begin{itemize}
	\item For every $\mathsf{RobTyp}\in \mathsf{RobTypS}$ and for every $2\leq j\leq 2|\vc|$, 
	
	\noindent$\mathsf{CycTypS}(\mathsf{RobTyp},j)=\{\mathsf{CycTyp}=(C,\mathsf{PaAlloc}_C,\mathsf{RobTyp})\in \mathsf{CycTypS}~|~|C|=j\}$. That is, $\mathsf{CycTypS}(\mathsf{RobTyp},j)$ is the set of cycle types, stand for a cycle of length $j$ and assigned to a robot with robot type $\mathsf{RobTyp}$.  
\end{itemize}
Let $\mathsf{RobTyp}=(\mathsf{CC},\mathsf{Alloc}_{\gr(\mathsf{CC})},\mathsf{NumOfCyc})\in \mathsf{RobTypS}$ be a robot type. In this equation, we verify that the number of cycles of length other than $4$ assigned to robots with robot type $\mathsf{RobTyp}$ is exactly as determined in $\mathsf{NumOfCyc}$:

5. For every $\mathsf{RobTyp}=(\mathsf{CC},\mathsf{Alloc}_{\gr(\mathsf{CC})},\mathsf{NumOfCyc})\in \mathsf{RobTypS}$ and for every 

\noindent$2\leq j\leq 2|\vc|$, $j\neq 4$, $\ds{\sum_{{\mathsf{CycTyp}\in \mathsf{CycTypS}(\mathsf{RobTyp},j)}}x_\mathsf{CycTyp}=N_j\cdot x_\mathsf{RobTyp}}$, 

\noindent where $\mathsf{NumOfCyc}=(N_2,N_3,N_5,N_6,\ldots,N_{2|\vc|})$.

\smallskip\noindent{\bf Equation~\ref{ilp:6}: Verifying the Budget Limit.} 
Let $\mathsf{RobTyp}=(\mathsf{CC},\mathsf{Alloc}_{\gr(\mathsf{CC})},\mathsf{NumOfCyc})$ $\in \mathsf{RobTypS}$ be a robot type, and let $i\in [k]$ be robot with robot type $\mathsf{RobTyp}$. From Lemma~\ref{lem:4cycandCC}, there exist $\mathsf{CC}_i\subseteq \widehat{E}_i$ and a multiset of cycles ${\cal C}_i$ in $\gr(\widehat{E}_i)$, such that $\mathsf{CC}_i\cup \bigcup_{C\in {\cal C}_i}E(C)=\widehat{E}_i$. Now, as we determine the robot type of the $i$-th robot to be $\mathsf{RobTyp}$, $|\mathsf{CC}_i|=|\mathsf{CC}|$, and the number of cycles in ${\cal C}_i$ of length other than $4$ is fixed by $\mathsf{NumOfCyc}$. We have the following notation:
\begin{itemize}
	\item $\ds{\mathsf{Bud}(\mathsf{RobTyp})=|\mathsf{CC}|+\sum_{2\leq j\leq 2|\vc|,j\neq 4}N_j\cdot j}$, where $\mathsf{NumOfCyc}=(N_2,N_3,N_5,N_6,\ldots,$\\$N_{2|\vc|})$.
\end{itemize}
That is, $\mathsf{Bud}(\mathsf{RobTyp})$ is the number of edges in $\mathsf{CC}_i\cup \bigcup_{C\in {\cal C}_i}E(C)$ excluding the number of edges in cycles of length $4$ in ${\cal C}_i$.
Therefore, $B-\mathsf{Bud}(\mathsf{RobTyp})$ is the budget left for the robot for the cycles in ${\cal C}_i$ of length $4$. Now, we take the largest number which is a multiple of $4$ and less or equal to $B-\mathsf{Bud}(\mathsf{RobTyp})$, to be the budget left for cycles of length $4$. So, we have the following notation:
\begin{itemize}
	\item For every $\mathsf{RobTyp}\in \mathsf{RobTypS}$,  $\mathsf{CycBud}(\mathsf{RobTyp})=\lfloor(B-\mathsf{Bud}(\mathsf{RobTyp}))\cdot \frac{1}{4}\rfloor\cdot 4$.
\end{itemize}

\noindent6. For every $\mathsf{RobTyp}\in \mathsf{RobTypS}$,\\ $\ds{\sum_{{\mathsf{CycTyp}\in \mathsf{CycTypS}(\mathsf{RobTyp},4)}}4\cdot x_\mathsf{CycTyp}\leq x_\mathsf{RobTyp}\cdot \mathsf{CycBud}(\mathsf{RobTyp})}$.

\medskip\noindent{\bf Summary of Equations.} In conclusion, given an instance $(G,\vi,k,B)$ of \cg problem, $\mathsf{Reduction}(G,\vi,k,B)$ is the instance of the ILP problem described by the following equations:


\begin{enumerate}
	\item $\ds{\sum_{\mathsf{RobTyp}\in \mathsf{RobTypS}}x_\mathsf{RobTyp}=k}$. \label{ilp:1}
	\item For every $u^*\in \ind$, $\ds{\sum_{\mathsf{VerTyp}\in \mathsf{VerTypS}_{u^*}} x_\mathsf{VerTyp}=|u^*|}$.\label{ilp:2}
	\item For every $\mathsf{VerTyp}=(u^*,\U)\in \mathsf{VerTypS}$, and every $\W\in \U$, 
	
	\noindent $\ds{\sum_{j=1}^{2{|\vc|}}\sum_{{\mathsf{CycTyp}\in \mathsf{CycTypS}(\mathsf{VerTyp},\W,j)}}j\cdot x_\mathsf{CycTyp}+}$\\
		$\ds{\sum_{j=1}^{2^{|\vc|}+|\vc|^2}\sum_{{\mathsf{RobTyp}\in \mathsf{RobTypS}(\mathsf{VerTyp},\W,j)}}j\cdot x_\mathsf{RobTyp}\geq x_\mathsf{VerTyp}}$.\label{ilp:3}
	\item For every $\{u,v\}\in E$ such that $u,v\in \vc$,\\ $\ds{\sum_{{\mathsf{CycTyp}\in \mathsf{CycTypS}(\{u,v\})}}x_\mathsf{CycTyp}+\sum_{{\mathsf{RobTyp}\in \mathsf{RobTypS}(\{u,v\})}}x_\mathsf{RobTyp}\geq 1}$.\label{ilp:4}
	\item For every $\mathsf{RobTyp}=(\mathsf{CC},\mathsf{Alloc}_{\gr(\mathsf{CC})},\mathsf{NumOfCyc})\in \mathsf{RobTypS}$ and for every 
	
	\noindent$2\leq j\leq 2|\vc|$, $j\neq 4$, $\ds{\sum_{{\mathsf{CycTyp}\in \mathsf{CycTypS}(\mathsf{RobTyp},j)}}x_\mathsf{CycTyp}=N_j\cdot x_\mathsf{RobTyp}}$, 
	
	\noindent where $\mathsf{NumOfCyc}=(N_2,N_3,N_5,N_6,\ldots,N_{2|\vc|})$.\label{ilp:5}
	\item For every $\mathsf{RobTyp}\in \mathsf{RobTypS}$,\\ $\ds{\sum_{{\mathsf{CycTyp}\in \mathsf{CycTypS}(\mathsf{RobTyp},4)}}4\cdot x_\mathsf{CycTyp}\leq x_\mathsf{RobTyp}\cdot \mathsf{CycBud}(\mathsf{RobTyp})}$.\label{ilp:6}
\end{enumerate}

\subsection{Correctness: Forward Direction}\label{sec:forDir}
Now, we turn to prove the correctness of the reduction. In particular, we have the following lemma:

\begin{lemma}\label{lem:fpt}
	Let $G$ be a connected graph, let $\vi\in V(G)$ and let $k,B\in \mathbb{N}$. Then, $(G,\vi,k,B)$ is a yes-instance of the \cg, if and only if $\mathsf{Reduction}(G,\vi,k,B)$ is a yes-instance of the {\sc Integer Linear Programming}.
\end{lemma}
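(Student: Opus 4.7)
The plan is to prove both directions by translating between combinatorial solutions and integer assignments, as outlined in the High-Level Overview. The forward direction (CGE yes $\Rightarrow$ ILP yes) is essentially constructive: starting from a CGE solution, I would first invoke Observation~\ref{obs:equivso} to assume each $\widehat{E}_i$ is nice (every edge appears at most twice), then apply Lemma~\ref{lem:4cycandCC} to obtain, for each $i\in[k]$, an $\widehat{E}_i$-valid pair $(\mathsf{CC}_i,{\cal C}_i)$. Using Definitions~\ref{def:VTD}, \ref{def:RTD} and~\ref{def:CTD} (whose outputs are indeed types by Lemmas~\ref{lem:VerDer}, \ref{lem:RobDer} and~\ref{lem:CycDer}), I attach a vertex type to each $u\in V(G)\setminus\vc$, a robot type to each $i\in[k]$, and a cycle type to each $C\in{\cal C}_i$. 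Then I set $x_z$ to be the multiplicity of type $z$ across these derivations. Verifying each of Equations~\ref{ilp:1}--\ref{ilp:6} is then a direct bookkeeping exercise: Equation~\ref{ilp:1} records that there are $k$ robots; Equation~\ref{ilp:2} records that each $u^*\in\ind$ has exactly $|u^*|$ vertices; Equation~\ref{ilp:3} follows because every $\W\in\U$ in $\mathsf{DerVerTyp}(u)=(u^*,\U)$ is obtained from at least one cycle or skeleton that allocates $\W$ to $u$, giving at least one allocation per vertex of that type; Equation~\ref{ilp:4} follows from Condition~\ref{obs:equivsol3} of Lemma~\ref{obs:equivsol} restricted to edges inside $\vc$; and Equations~\ref{ilp:5}--\ref{ilp:6} follow from the definitions of $\mathsf{NumOfCyc}$ and the crude upper bound $|\widehat{E}_i|\le B$, noting that the budget for length-$4$ cycles is naturally a multiple of $4$.

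For the reverse direction (ILP yes $\Rightarrow$ CGE yes), I would build the multisets $\widehat{E}_1,\ldots,\widehat{E}_k$ explicitly and then verify Lemma~\ref{obs:equivsol}. Given feasible values $x_z$, I first assign each $u\in u^*\in\ind$ a vertex type: by Equation~\ref{ilp:2}, the total number of vertex types in $\mathsf{VerTypS}_{u^*}$ equals $|u^*|$, so there is a bijection between vertices of $u^*$ and their types. Next, for each robot type $\mathsf{RobTyp}=(\mathsf{CC},\mathsf{Alloc}_{\gr(\mathsf{CC})},\mathsf{NumOfCyc})$, I assign $x_\mathsf{RobTyp}$ robots to it (consistent with Equation~\ref{ilp:1}), and, for each such robot, I take a fresh copy of $\mathsf{CC}$ and also allocate $N_j$ cycles of each length $j\ne 4$ together with some number of length-$4$ cycles totalling at most $\mathsf{CycBud}(\mathsf{RobTyp})$; the counts of cycle types are exactly those dictated by Equations~\ref{ilp:5}--\ref{ilp:6}. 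The key remaining step is to replace each abstract vertex $u^*_j$ appearing in any assigned $\gr(\mathsf{CC})$ or in any assigned cycle $C\in\mathsf{Cyc}_{G^*}$ by a concrete $u\in u^*$ in such a way that every $u$ receives at least one allocation for each $\W\in\U$, where $\mathsf{VerTyp}(u)=(u^*,\U)$; this will guarantee that every edge incident to $u$ in $G$ is covered.

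The main obstacle is carrying out this last replacement coherently; this is where Equation~\ref{ilp:3} does the real work. For each $(u^*,\W)$, Equation~\ref{ilp:3} states that the total number of allocations of $\W$ to the vertex type $(u^*,\U)$ (summed over cycle types, weighted by how many times $\W$ occurs in each cycle, plus the analogous contribution from robot types) is at least $x_\mathsf{VerTyp}$, i.e., at least one allocation per vertex of that type. So for each vertex type $\mathsf{VerTyp}=(u^*,\U)$ and each $\W\in\U$, I can distribute the allocations among the $x_\mathsf{VerTyp}$ vertices so that each receives at least one copy of $\W$ (pigeonhole), and I do this independently for every $\W\in\U$. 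This defines, for every appearance of an abstract $u^*_j$ inside a robot's $\gr(\mathsf{CC})$ or inside an allocated cycle, a concrete vertex $u\in u^*$ such that the neighbourhood multiset allocated by that occurrence lies in $\U$. I then define $\widehat{E}_i$ to be the union (as multisets) of the skeleton edges and cycle edges assigned to robot $i$ after these replacements.

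It remains to verify Lemma~\ref{obs:equivsol} for the constructed $(\widehat{E}_i)_{i\in[k]}$. Condition~\ref{obs:equivsol1} ($\vi\in V(\gr(\widehat{E}_i))$) holds by Condition~\ref{def:RT2} of Definition~\ref{def:RT}. Condition~\ref{obs:equivsol2} (connectivity and even degrees) holds because $\gr(\mathsf{CC})$ is connected with even degrees, and every added cycle is connected, shares a $\vc$-vertex with $\gr(\mathsf{CC})$ (by Definition~\ref{def:CT}), and contributes even degree at every vertex. Condition~\ref{obs:equivsol3} (edge coverage) splits into the two types of edges: edges with both endpoints in $\vc$ are covered by Equation~\ref{ilp:4}; edges with an endpoint $u\in V(G)\setminus\vc$ are covered because $u$'s vertex type $(u^*,\U)$ satisfies $\mathsf{N}_{G^*}(u^*)\subseteq\bigcup\U$ and each $\W\in\U$ is allocated at least once to $u$ by construction. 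Finally, Condition~\ref{obs:equivsol4} (budget) holds because the total size $|\widehat{E}_i|$ for a robot of type $\mathsf{RobTyp}$ equals $\mathsf{Bud}(\mathsf{RobTyp})$ plus the length-$4$ contribution, which by Equation~\ref{ilp:6} averaged over the $x_\mathsf{RobTyp}$ robots of that type is at most $\mathsf{CycBud}(\mathsf{RobTyp})\le B-\mathsf{Bud}(\mathsf{RobTyp})$, and we distribute length-$4$ cycles to individual robots so that no robot exceeds its share. The deferred details are routine counting, handled in Sections~\ref{sec:forDir} and~\ref{sec:revDir}.
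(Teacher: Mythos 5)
Your plan is correct and follows essentially the same route as the paper: the forward direction uses Observation~\ref{obs:equivso}, Lemma~\ref{lem:4cycandCC}, and the type-derivation definitions (validated by Lemmas~\ref{lem:VerDer}, \ref{lem:RobDer}, \ref{lem:CycDer}) to set each $x_z$ to the multiplicity of type $z$, exactly as the paper's $\mathsf{RobExpToILP}$ and Lemmas~\ref{lem:ForDir1}--\ref{lem:ForDir2}; the reverse direction mirrors the paper's $\mathsf{ILPDer}$/$\mathsf{ILPToRobExp}$ construction, using Equation~\ref{ilp:3} for the allocation-based replacement of abstract vertices and the even distribution of length-$4$ cycles (with $\mathsf{CycBud}$ a multiple of $4$) for the budget, as in Lemmas~\ref{lem:ILPCon1}--\ref{lem:ILPCon3}.
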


We split the proof of the correctness of Lemma~\ref{lem:fpt} to two lemmas. We begin with the proof of the first direction: 

\begin{lemma}\label{lem:fpt1}
	Let $G$ be a connected graph, let $\vi\in V(G)$ and let $k,B\in \mathbb{N}$. If $(G,\vi,k,B)$ is a yes-instance of the \cg problem, then $\mathsf{Reduction}(G,\vi,k,B)$ is a yes-instance of the {\sc Integer Linear Programming}.
\end{lemma}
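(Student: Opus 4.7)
The plan is to follow the roadmap already laid out in the high-level overview, and to show that the natural variable assignment derived from a solution satisfies all six ILP equations. First, given that $(G,\vi,k,B)$ is a \yesinstance\ of \cg, Lemma~\ref{obs:equivsol} gives us multisets $\widehat{E}_1,\ldots,\widehat{E}_k$ satisfying its four conditions, and by Observation~\ref{obs:equivso} I may assume that each edge appears at most twice in each $\widehat{E}_i$. Then Lemma~\ref{lem:4cycandCC} provides, for every $1\le i\le k$, an $\widehat{E}_i$-valid pair $(\mathsf{CC}_i,{\cal C}_i)$. With these pairs fixed, I use Definitions~\ref{def:VTD},~\ref{def:RTD},~\ref{def:CTD} to derive, respectively, a vertex type $\mathsf{DerVerTyp}(u)$ for every $u\in \mathsf{IND}$, a robot type $\mathsf{DerRobTyp}(i)$ for every $i\in[k]$, and a cycle type $\mathsf{DerCycTyp}(i,C)$ for every $i\in[k]$ and $C\in{\cal C}_i$. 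Lemmas~\ref{lem:VerDer},~\ref{lem:RobDer},~\ref{lem:CycDer} ensure these are legitimate members of $\mathsf{VerTypS}$, $\mathsf{RobTypS}$, $\mathsf{CycTypS}$, respectively.

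Next I set the variables: for $z\in \mathsf{VerTypS}\cup \mathsf{RobTypS}\cup \mathsf{CycTypS}$, let $x_z$ be the number of derived elements of type $z$ (counting vertex types over all $u\in \mathsf{IND}$, robot types over all $i\in[k]$, and cycle types over all pairs $(i,C)$ with $C\in{\cal C}_i$). The bulk of the proof is then the routine verification of Equations~\ref{ilp:1}--\ref{ilp:6}, which I will handle one by one. Equation~\ref{ilp:1} holds because each of the $k$ robots contributes one robot type. Equation~\ref{ilp:2} holds because each $u\in u^*$ contributes exactly one vertex type in $\mathsf{VerTypS}_{u^*}$, and there are $|u^*|$ such vertices. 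Equation~\ref{ilp:5} is immediate from the definition of $\mathsf{NumOfCyc}_i$ inside $\mathsf{DerRobTyp}(i)$, which records exactly the number of cycles of each length $j\ne 4$ in ${\cal C}_i$, summed over robots of the same type. Equation~\ref{ilp:4} holds because every $\{u,v\}\in E(G)$ with $u,v\in\vc$ lies in some $\widehat{E}_i = \mathsf{CC}_i\cup\bigcup_{C\in{\cal C}_i}E(C)$, hence either in some $\mathsf{CC}_i$ (contributing to the corresponding $\mathsf{RobTypS}(\{u,v\})$ term) or in some $E(C)$ for $C\in{\cal C}_i$ (contributing to the $\mathsf{CycTypS}(\{u,v\})$ term).

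The more delicate verifications are for Equations~\ref{ilp:3} and~\ref{ilp:6}, which I expect to be the main technical steps. For Equation~\ref{ilp:3}, fix a derived vertex type $\mathsf{VerTyp}=(u^*,\U)$ and a multiset $\W\in \U$. By Definition~\ref{def:VTD}, every $u\in u^*$ of type $\mathsf{VerTyp}$ ``demands'' at least one occurrence of $\W$ among its derived neighborhood multisets; conversely, by the way $\mathsf{DerRobTyp}$ and $\mathsf{DerCycTyp}$ were constructed, each occurrence of $\W$ in $\mathsf{NeiSubsets}(u)$ is produced either by some $\widehat{\mathsf{N}}_{\gr(\mathsf{CC}_i)}(u)=\W$ (a robot contribution) or by some pair $\{\{u,v\},\{u,v'\}\}\in\mathsf{EdgePairs}(C)$ with $\{v,v'\}=\W$ (a cycle contribution). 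Counting these contributions weighted by their multiplicity $j$ gives the left-hand side, which therefore is at least $x_\mathsf{VerTyp}$. For Equation~\ref{ilp:6}, fix a robot type $\mathsf{RobTyp}$ and sum $|\mathsf{CC}_i|+\sum_{C\in{\cal C}_i}|E(C)|=|\widehat{E}_i|\le B$ over the $x_\mathsf{RobTyp}$ robots of this type; splitting off the cycles of length exactly $4$ and using that the remaining budget used by each such robot equals $\mathsf{Bud}(\mathsf{RobTyp})$ (and hence $4$ divides the remaining slack up to $\mathsf{CycBud}(\mathsf{RobTyp})$), I conclude that the total length contributed by length-$4$ cycles is at most $x_\mathsf{RobTyp}\cdot\mathsf{CycBud}(\mathsf{RobTyp})$.

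The principal obstacle is Equation~\ref{ilp:3}: one must carefully match the ``demands'' coming from the vertex side (all $u\in u^*$ of type $\mathsf{VerTyp}$) with the ``supply'' coming from the robot and cycle sides, and verify that the bookkeeping of multiplicities $j$ in $\mathsf{CycTypS}(\mathsf{VerTyp},\W,j)$ and $\mathsf{RobTypS}(\mathsf{VerTyp},\W,j)$ is consistent with the isomorphism $\alpha_i$ used inside Definition~\ref{def:RTD} and with Observation~\ref{obs:indAll}. Once this is done, all six equations are satisfied by the derived integer values, proving that $\mathsf{Reduction}(G,\vi,k,B)$ is a \yesinstance\ of ILP.
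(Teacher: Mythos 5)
Your proposal is correct and follows essentially the same route as the paper: deriving valid pairs via Observation~\ref{obs:equivso} and Lemma~\ref{lem:4cycandCC}, setting each $x_z$ to the count of derived elements of type $z$ (the paper's $\mathsf{RobExpToILP}$), and verifying Equations~\ref{ilp:1}--\ref{ilp:6} exactly as in Lemmas~\ref{lem:ForDir1} and~\ref{lem:ForDir2}. Your treatment of the delicate points (the multiplicity bookkeeping in Equation~\ref{ilp:3} and the $\mathsf{Bud}$/$\mathsf{CycBud}$ accounting in Equation~\ref{ilp:6}) matches the paper's argument.
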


Towards the proof of Lemma~\ref{lem:fpt1}, we present the function $\mathsf{RobExpToILP}$. This function gets as input, for every $1\leq i\leq k$, $(\mathsf{CC}_i,{\cal C}_i)$ that is an $\widehat{E}_i$-valid pair. Then, for every $z\in \mathsf{VerTypS}\cup \mathsf{RobTypS}\cup \mathsf{CycTypS}$, $\mathsf{RobExpToILP}(z)$ is the number of elements of type $z$ derived by Definitions~\ref{def:VTD}, \ref{def:RTD} and~\ref{def:CTD}:


\begin{definition} [{\bf $\mathsf{RobExpToILP}$}]\label{def:REToILP}
	Let $G$ be a connected graph, let $\vi\in V(G)$, let $\vc$ be a vertex cover of $G$ and let $\widehat{E}_1,\ldots,\widehat{E}_k$ such that conditions of Lemma~\ref{obs:equivsol} hold. For every $1\leq i\leq k$, let $(\mathsf{CC}_i,{\cal C}_i)$ be an $\widehat{E}_i$-valid pair. Then:
	\begin{enumerate}
			\item For every $\mathsf{VerTyp}\in \mathsf{VerTypS}$, $\mathsf{RobExpToILP}(\{(\mathsf{CC}_j,{\cal C}_j)\}_{1\leq j\leq k},\mathsf{VerTyp})=|\{u\in V(G)\setminus \vc~|~\mathsf{DerVerTyp}(u)=\mathsf{VerTyp}\}|$.
				\item For every $\mathsf{RobTyp}\in \mathsf{RobTypS}$, $\mathsf{RobExpToILP}(\{(\mathsf{CC}_j,{\cal C}_j)\}_{1\leq j\leq k},\mathsf{RobTyp})=|\{1\leq i\leq k~|~\mathsf{DerRobTyp}(i)=\mathsf{RobTyp}\}|$.
		\item For every $\mathsf{CycTyp}\in \mathsf{CycTypS}$, $\mathsf{RobExpToILP}(\{(\mathsf{CC}_j,{\cal C}_j)\}_{1\leq j\leq k},\mathsf{CycTyp})=|\{(i,C)~|~1\leq i\leq k,C\in {\cal C}_i, \mathsf{DerCycTyp}(i,C)=\mathsf{CycTyp}\}|$.
	\end{enumerate} 
\end{definition}

Whenever $\{(\mathsf{CC}_i,{\cal C}_i)\}_{1\leq i\leq k}$ is clear from context, we refer to\\ $\mathsf{RobExpToILP}(\{(\mathsf{CC}_j,{\cal C}_j)\}_{1\leq j\leq k},\mathsf{VerTyp})$, $\mathsf{RobExpToILP}(\{(\mathsf{CC}_j,{\cal C}_j)\}_{1\leq j\leq k},\mathsf{RobTyp})$ and\\ $\mathsf{RobExpToILP}(\{(\mathsf{CC}_j,{\cal C}_j)\}_{1\leq j\leq k},\mathsf{CycTyp})$ as $\mathsf{RobExpToILP}(\mathsf{VerTyp})$, $\mathsf{RobExpToILP}(\mathsf{RobTyp})$\\ and $\mathsf{RobExpToILP}(\mathsf{CycTyp})$, respectively.

In the next lemma, we prove that the values given by Definition~\ref{def:REToILP} satisfy the inequalities of $\mathsf{Reduction}(G,\vi,k,B)$:

\begin{lemma}\label{lem:ForDir}
		Let $G$ be a connected graph, let $\vi\in V(G)$, let $\vc$ be a vertex cover of $G$ and let $\widehat{E}_1,\ldots,\widehat{E}_k$ such that conditions of Lemma~\ref{obs:equivsol} hold. For every $1\leq i\leq k$, let $(\mathsf{CC}_i,{\cal C}_i)$ be an $\widehat{E}_i$-valid pair. Then, the values $x_z=\mathsf{RobExpToILP}(z)$, for every $z\in \mathsf{VerTypS}\cup \mathsf{RobTypS}\cup \mathsf{CycTypS}$, satisfy the inequalities of $\mathsf{Reduction}(G,\vi,k,B)$.
\end{lemma}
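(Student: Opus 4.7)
My plan is to verify each of the six inequalities defining $\mathsf{Reduction}(G,\vi,k,B)$ in turn, leveraging Lemmas~\ref{lem:VerDer}, \ref{lem:RobDer}, and \ref{lem:CycDer} which already guarantee that the derived types are well-defined. Throughout, I will use the defining identity $\widehat{E}_i = \mathsf{CC}_i \cup \bigcup_{C\in {\cal C}_i} E(C)$ (Condition~\ref{lemCC:con6} of Definition~\ref{def:ValPa}), and the fact that the robot (resp.\ vertex, cycle) counts defined by $\mathsf{RobExpToILP}$ partition the set $\{1,\ldots,k\}$ (resp.\ $V(G)\setminus\vc$, $\{(i,C) : 1\leq i\leq k,\, C\in {\cal C}_i\}$) according to the derived type.

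For Equation~\ref{ilp:1} and Equation~\ref{ilp:2}, the partition observations above yield the required equalities immediately: the sum of $x_\mathsf{RobTyp}$ is $|\{1,\dots,k\}|=k$, and for each $u^*\in \ind$ the sum over $\mathsf{VerTypS}_{u^*}$ counts $|u^*|$ independent-set vertices (noting that $\mathsf{DerVerTyp}(u)$ always has first coordinate equal to the equivalence class of $u$). For Equation~\ref{ilp:5}, I will observe that in $\mathsf{DerRobTyp}(i)=(\mathsf{CC}'_i,\mathsf{Alloc}_{G'_i},\mathsf{NumOfCyc}_i)$, the vector $\mathsf{NumOfCyc}_i$ literally records the number of cycles of each length $j\neq 4$ in ${\cal C}_i$; hence for every robot $i$ of type $\mathsf{RobTyp}=(\mathsf{CC},\mathsf{Alloc}_{\gr(\mathsf{CC})},\mathsf{NumOfCyc}=(N_2,\ldots))$, exactly $N_j$ elements of ${\cal C}_i$ contribute to $\mathsf{CycTypS}(\mathsf{RobTyp},j)$, so summing over all $i$ gives $N_j\cdot x_\mathsf{RobTyp}$ as required.

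For Equation~\ref{ilp:4} (coverage of edges inside $\vc$), I take $\{u,v\}\in E(G)$ with $u,v\in \vc$ and apply Condition~\ref{obs:equivsol3} of Lemma~\ref{obs:equivsol} to obtain some $i$ with $\{u,v\}\in \widehat{E}_i$. If $\{u,v\}\in \mathsf{CC}_i$, then by Condition~\ref{def:GbarsubmultigraphAc1} of Definition~\ref{def:GbarsubmultigraphAc} the isomorphism fixes $u$ and $v$, so $\{u,v\}\in \mathsf{CC}'_i$ and $\mathsf{DerRobTyp}(i)\in \mathsf{RobTypS}(\{u,v\})$. Otherwise $\{u,v\}\in E(C)$ for some $C\in {\cal C}_i$, and since $u,v\in \vc$ the operator $\ind(\cdot)$ keeps them, so $\{u,v\}\in E(\ind(C))$ and $\mathsf{DerCycTyp}(i,C)\in \mathsf{CycTypS}(\{u,v\})$; either way, the appropriate sum is at least $1$. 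For Equation~\ref{ilp:6}, I fix a robot type $\mathsf{RobTyp}$ and, for each robot $i$ of this type, use the budget bound $|\widehat{E}_i|\leq B$ together with $|\widehat{E}_i|=|\mathsf{CC}_i|+\sum_{C\in{\cal C}_i}|C|=\mathsf{Bud}(\mathsf{RobTyp})+4\cdot|\{C\in{\cal C}_i:|C|=4\}|$ to deduce $4\cdot|\{C\in {\cal C}_i:|C|=4\}|\leq B-\mathsf{Bud}(\mathsf{RobTyp})$; since the left-hand side is a multiple of $4$, we may replace the right-hand side by $\mathsf{CycBud}(\mathsf{RobTyp})$, and summing over all such $i$ yields the inequality.

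The main obstacle is Equation~\ref{ilp:3}, which states that for every vertex type $\mathsf{VerTyp}=(u^*,\U)$ and every $\W\in \U$, the total number of ``allocations'' of $\W$ to $\mathsf{VerTyp}$ by cycles and robot skeletons is at least $x_\mathsf{VerTyp}$. The plan is to argue vertex-by-vertex: for each $u\in u^*$ with $\mathsf{DerVerTyp}(u)=\mathsf{VerTyp}$, since $\W\in \mathsf{NeiSubsets}(u)$ by Definition~\ref{def:VTD} there exists $i\in [k]$ producing $\W$, either via $\W=\widehat{\mathsf{N}}_{\gr(\mathsf{CC}_i)}(u)$ (in which case one checks that in $\mathsf{DerRobTyp}(i)$ the pair $(\alpha_i^{-1}(u),\W)$ is mapped by $\mathsf{Alloc}_{G'_i}$ to $\mathsf{VerTyp}$, contributing $1$ to the robot-type side) or via $\W\in \mathsf{NeiPairs}_{{\cal C}_i}(u)$ (in which case the corresponding pair of $\mathsf{EdgePairs}(\ind(C))$ in $\mathsf{DerCycTyp}(i,C)$ is mapped to $\mathsf{VerTyp}$, contributing $1$ to the cycle-type side). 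Summing over $u$, we obtain at least $|\{u:\mathsf{DerVerTyp}(u)=\mathsf{VerTyp}\}|=x_\mathsf{VerTyp}$ allocations, exactly what the inequality demands. The delicate point, which I will handle carefully, is that the ``$j$'' in the sums is the multiplicity of $\W$-allocations to $\mathsf{VerTyp}$ within a single robot/cycle type, so the double sum $\sum_j j\cdot (\text{count})$ recovers precisely the total number of allocations attributable to all such vertices $u$.
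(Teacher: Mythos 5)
Your proposal is correct and follows essentially the same route as the paper's proof (which is split into two lemmas covering Equations~\ref{ilp:1}--\ref{ilp:3} and \ref{ilp:4}--\ref{ilp:6}): the same partition counting for Equations~\ref{ilp:1}, \ref{ilp:2} and \ref{ilp:5}, the same case split (skeleton versus cycle) for Equations~\ref{ilp:3} and \ref{ilp:4}, and the same multiple-of-$4$ budget argument for Equation~\ref{ilp:6}. Your explicit handling of the multiplicity index $j$ in Equation~\ref{ilp:3} matches the paper's accounting that each vertex of a given type contributes at least one distinct allocation to the double sum.
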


For the sake of readability, we split Lemma~\ref{lem:ForDir} and its proof into two lemmas: In Lemma~\ref{lem:ForDir1} we prove that the values $x_z=\mathsf{RobExpToILP}(z)$, for every $z\in \mathsf{VerTypS}\cup \mathsf{RobTypS}\cup \mathsf{CycTypS}$, satisfy inequalities \ref{ilp:1}--\ref{ilp:3} of $\mathsf{Reduction}(G,\vi,k,B)$, and in Lemma~\ref{lem:ForDir2} we prove that these values satisfy inequalities \ref{ilp:4}--\ref{ilp:6}.

\begin{lemma}\label{lem:ForDir1}
	Let $G$ be a connected graph, let $\vi\in V(G)$, let $\vc$ be a vertex cover of $G$ and let $\widehat{E}_1,\ldots,\widehat{E}_k$ such that conditions of Lemma~\ref{obs:equivsol} hold. For every $1\leq i\leq k$, let $(\mathsf{CC}_i,{\cal C}_i)$ be an $\widehat{E}_i$-valid pair. Then, the values $x_z=\mathsf{RobExpToILP}(z)$, for every $z\in \mathsf{VerTypS}\cup \mathsf{RobTypS}\cup \mathsf{CycTypS}$, satisfy inequalities \ref{ilp:1}--\ref{ilp:3} of $\mathsf{Reduction}(G,\vi,k,B)$.
\end{lemma}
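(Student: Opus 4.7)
The plan is to verify each of the three constraints in turn, by unfolding Definition~\ref{def:REToILP} and appealing to Lemmas~\ref{lem:VerDer} and~\ref{lem:RobDer} for the well-definedness of the derived types, together with Definitions~\ref{def:VTD}, \ref{def:RTD}, and~\ref{def:CTD} for the explicit formulas.

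Inequality~\ref{ilp:1} is immediate: Lemma~\ref{lem:RobDer} guarantees $\mathsf{DerRobTyp}(i) \in \mathsf{RobTypS}$ for every $i \in [k]$, so the family $\bigl(\{i \in [k] : \mathsf{DerRobTyp}(i) = \mathsf{RobTyp}\}\bigr)_{\mathsf{RobTyp} \in \mathsf{RobTypS}}$ partitions $[k]$, and its block sizes sum to $k$. Inequality~\ref{ilp:2} is analogous: by Lemma~\ref{lem:VerDer}, $\mathsf{DerVerTyp}(u) \in \mathsf{VerTypS}$ for every $u \in V(G) \setminus \vc$, and by Definition~\ref{def:VTD} the first coordinate of $\mathsf{DerVerTyp}(u)$ is the equivalence class of $u$; hence, fixing $u^* \in \ind$, the family indexed by $\mathsf{VerTyp} \in \mathsf{VerTypS}_{u^*}$ partitions $u^*$, so the corresponding sum equals $|u^*|$.

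Inequality~\ref{ilp:3} is where the substance lies. Fix $\mathsf{VerTyp} = (u^*, \U)$ and $\W \in \U$; my approach is double counting. First, interchanging the outer sums, the LHS rewrites as $\sum_{\mathsf{RobTyp}} a_{\mathsf{RobTyp}} \cdot x_{\mathsf{RobTyp}} + \sum_{\mathsf{CycTyp}} b_{\mathsf{CycTyp}} \cdot x_{\mathsf{CycTyp}}$, where $a_{\mathsf{RobTyp}}$ counts the pairs $(u^*_j, \W) \in \mathsf{NeiOfInd}(\gr(\mathsf{CC}))$ that $\mathsf{Alloc}_{\gr(\mathsf{CC})}$ sends to $\mathsf{VerTyp}$, and $b_{\mathsf{CycTyp}}$ counts the elements $\{\{u^*, v\}, \{u^*, v'\}\} \in \mathsf{EdgePairs}(C)$ with $\{v, v'\} = \W$ that $\mathsf{PaAlloc}_C$ sends to $\mathsf{VerTyp}$. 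Substituting the definitions of $x_{\mathsf{RobTyp}}$ and $x_{\mathsf{CycTyp}}$ regroups the LHS as a sum over $i \in [k]$ and $C \in {\cal C}_i$; using Definitions~\ref{def:RTD} and~\ref{def:CTD} with the isomorphism $\alpha_i : V(G'_i) \to V(\gr(\mathsf{CC}_i))$, the LHS identifies with the total count of (i) vertices $u \in u^* \cap V(\gr(\mathsf{CC}_i))$ with $\widehat{\mathsf{N}}_{\gr(\mathsf{CC}_i)}(u) = \W$ and $\mathsf{DerVerTyp}(u) = \mathsf{VerTyp}$, plus (ii) positions of $u \in u^*$ in some $C \in {\cal C}_i$ whose neighbor-pair is $\W$ and for which $\mathsf{DerVerTyp}(u) = \mathsf{VerTyp}$, summed over all $i$.

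To deduce LHS $\geq x_{\mathsf{VerTyp}}$, I will inject each vertex $u \in u^*$ with $\mathsf{DerVerTyp}(u) = \mathsf{VerTyp}$ into this count. Since $\W \in \U = \mathsf{NeiSubsets}(u)$, Definition~\ref{def:VTD} furnishes some $i$ with either $\W = \widehat{\mathsf{N}}_{\gr(\mathsf{CC}_i)}(u)$, contributing to the count of type~(i), or $\W \in \mathsf{NeiPairs}_{{\cal C}_i}(u)$, witnessed by $C \in {\cal C}_i$ and a position of $u$ in $C$ whose neighbor-pair equals $\W$, contributing to type~(ii). Because each contribution is tagged by $u$ itself (as $\alpha_i^{-1}(u)$ in case~(i), or explicitly as the middle vertex of the edge pair in case~(ii)), distinct vertices produce distinct contributions, and the required lower bound follows. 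The main obstacle I foresee is the bookkeeping needed to align the $j$-weighted outer sums on the LHS with the multiset semantics of $\mathsf{NeiOfInd}$ and $\mathsf{EdgePairs}$, and to check that the $\ind(\cdot)$ operation preserves the neighbor-pair $\{v, v'\}$ intact (which it does, because $v, v' \in \vc$ are fixed by $\ind$). Once this is carefully spelled out, the three constraints follow directly from the definitions.
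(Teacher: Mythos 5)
Your proof is correct and follows essentially the same strategy as the paper's: dispatch Equations~\ref{ilp:1} and \ref{ilp:2} immediately from Lemmas~\ref{lem:RobDer} and \ref{lem:VerDer}, then for Equation~\ref{ilp:3} show that each vertex $u$ with $\mathsf{DerVerTyp}(u) = \mathsf{VerTyp}$ injects into the $j$-weighted count on the left-hand side via the witness $i$ (and cycle $C$, if applicable) provided by Definition~\ref{def:VTD}. Your explicit double-counting reformulation of the LHS and the remark that $\ind(\cdot)$ preserves the pair $\{v,v'\}$ (since $v,v' \in \vc$) make the injectivity argument cleaner than the paper's more informal "contributes at least one" phrasing, but it is the same underlying argument.
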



\begin{proof}
	By Lemma~\ref{lem:RobDer}, for every $1\leq i\leq k$, $\mathsf{DerVerTyp}(i)$ is a robot type, and thus, Equation~\ref{ilp:1} is satisfied. 
	
	Similarly, from Lemma~\ref{lem:VerDer}, for every $u\in u^*\in \ind$, $\mathsf{DerVerTyp}(u)$ is a vertex type, where $\mathsf{DerVerTyp}(u)=(u^*,\U)$, for some $\mathsf{NeiSubsets}\subseteq 2^{\mathsf{N}_{G^*}(u^*)\times \mathsf{2}}$. Therefore, Equations~\ref{ilp:2} are satisfied. 
	
	Now, let $\mathsf{VerTyp}=(u^*,\U)\in \mathsf{VerTypS}$, let $u\in u^*$ such that $\mathsf{DerVerTyp}(u)=\mathsf{VerTyp}$, and let $\W\in \U$. So, by Definition~\ref{def:VTD}, $\mathsf{NeiSubsets}=\mathsf{NeiSubsets}(u)=\bigcup_{1\leq i\leq k}(\{\widehat{\mathsf{N}}_{\gr(\mathsf{CC}_i)}(u)\}\cup \mathsf{NeiPairs}_{{\cal C}_i}(u))$, where $\mathsf{NeiPairs}_{{\cal C}_i}(u)=\{\{v,v'\}~|~C\in {\cal C}_i, \{\{u,v\},$ $\{u,v'\}\}\in \mathsf{EdgePairs}(C)\}$. Thus, there exists $1\leq i\leq k$ such that at least one among the following conditions holds:
	\begin{enumerate}
		\item $\W\in \mathsf{NeiPairs}_{{\cal C}_i}(u)$. Therefore, there exists $C\in {\cal C}_i$ such that $\{\{u,v\},\{u,v'\}\}\in $ $\mathsf{EdgePairs}(C)$ and $\W=\{v,v'\}$. Thus, by Definition~\ref{def:CTD}, $\mathsf{DerCycTyp}(i,C)=(\ind(C),\ind(\mathsf{PaAlloc}_{i,C})$, $\mathsf{DerRobTyp}(i))$, where $\mathsf{PaAlloc}_{i,C}$ is a vertex allocation of\\ $\mathsf{EdgePairs}(C)$, and $\mathsf{PaAlloc}_{i,C}(\{\{v,u\},\{u,v'\}\})=\mathsf{DerVerTyp}(u)=\mathsf{VerTyp}$. So, $u$ contributes at least one to $|\{\{\{u,v\},\{u,v'\}\}\in \mathsf{EdgePairs}(C)~|~$ $\mathsf{PaAlloc}_{i,C}(\{\{u,v\},\{u,v'\}\})=\mathsf{VerTyp}\}|$, thus, $u$ contributes at least one to $|\{\{\{u^*,v\},\{u^*,v'\}\}\in \mathsf{EdgePairs}(\ind(C))~|~$ $\ind(\mathsf{PaAlloc}_{i,C})(\{\{u^*,v\},\{u^*,v'\}\})=\mathsf{VerTyp}\}|$. Now, since $C\in \mathsf{Cyc}_G$, $\ind(C)\in \mathsf{Cyc}_{G^*}$, so the length of $\ind(C)$ is bounded by $\mathsf{max}\{4,2|\vc|\}$, and thus, $|\mathsf{EdgePairs}(\ind(C))|\leq 2|\vc|$. In addition, by Lemma~\ref{lem:CycDer}, $\mathsf{DerCycTyp}(i,C)=\mathsf{CycTyp}$ is a cycle type. Therefore, there exists $1\leq t\leq 2|\vc|$ such that $\mathsf{CycTyp}\in \mathsf{CycTypS}(\mathsf{VerTyp},\W,t)$. Thus, $u$ contributes at least one to $\ds{\sum_{r=1}^{2{|\vc|}}\sum_{{\mathsf{CycTyp}\in \mathsf{CycTypS}(\mathsf{VerTyp},\W,r)}}r\cdot x_\mathsf{CycTyp}}$.
		\item $\W\in \widehat{\mathsf{N}}_{\gr(\mathsf{CC}_i)}(u)$. By Definition~\ref{def:RTD}, $\mathsf{DerCycTyp}(i)=(\mathsf{CC}'_i,\mathsf{Alloc}_{G_i'},\mathsf{NumOfCyc}_i)$, where (i) $\mathsf{CC}'_i=E(G_i')$, (ii) $G'_i=\overline{G}(\gr(\mathsf{CC}_i))$ with an isomorphism $\alpha_i:V(G_i')\rightarrow V(\gr(\mathsf{CC}_i))$ and (iii) for every $(u^*_j,\W)\in \mathsf{NeiOfInd}(G'_i)$, $\mathsf{Alloc}_{G_i'}((u^*_j,\W))=\mathsf{DerVerTyp}(\alpha_i(u^*_j))$. Let $j\in \mathbb{N}$ such that $\alpha_i(u^*_j)=u$, so $\mathsf{Alloc}_{G_i'}((u^*_j,\W))=\mathsf{DerVerTyp}(\alpha_i(u^*_j))=\mathsf{VerTyp}$. Therefore, $u$ contributes at least one to $|\{(u^*_t,\W)\in \mathsf{NeiOfInd}(\gr(\mathsf{CC}_i))~|~$ $\mathsf{Alloc}_{\gr(\mathsf{CC}_i)}((u^*_t,\W))=\mathsf{VerTyp}\}|$. In addition, by Lemma \ref{lem:RobDer}, $\mathsf{DerRobTyp}(i)=\mathsf{RobTyp}$ is a robot type. Now, since $\gr(\mathsf{CC}_i)$ is a $\overline{G}$-submultigraph (see Definition~\ref{def:GbarGraph}), $|V(\gr(\mathsf{CC}_i))\cap u^*|\leq 2^{|\mathsf{N}_{G^*}({u^*})|}+|\vc|^2\leq 2^{|\vc|}+|\vc|^2$, so $|\{(u^*_t,\W)\in \mathsf{NeiOfInd}(\gr(\mathsf{CC}_i))|\leq 2^{|\vc|}+|\vc|^2$. Therefore, there exists $1\leq t\leq 2^{|\vc|}+|\vc|^2$ such that $\mathsf{RobTyp}\in \mathsf{RobTypS}(\mathsf{VerTyp},\W,t)$. Thus, $u$ contributes at least one to $\ds{\sum_{r=1}^{2^{|\vc|}+|\vc|^2}\sum_{{\mathsf{RobTyp}\in \mathsf{RobTypS}(\mathsf{VerTyp},\W,r)}}r\cdot x_\mathsf{RobTyp}}$. 
	\end{enumerate}

Therefore, each $u\in V(G)\setminus \vc$ such that $\mathsf{DerVerTyp}(u)=\mathsf{VerTyp}=(u^*,\U)$, contributes at least one to 

\noindent$\ds{\sum_{j=1}^{2{|\vc|}}\sum_{{\mathsf{CycTyp}\in \mathsf{CycTypS}(\mathsf{VerTyp},\W,j)}}j\cdot x_\mathsf{CycTyp}+\sum_{j=1}^{2^{|\vc|}+|\vc|^2}}$ 
$\ds{\sum_{{\mathsf{RobTyp}\in \mathsf{RobTypS}(\mathsf{VerTyp},\W,j)}}j\cdot x_\mathsf{RobTyp}}$. 

\noindent Thus, since $\mathsf{RobExpToILP}(\mathsf{VerTyp})=|\{u\in V(G)\setminus \vc~|~\mathsf{DerVerTyp}(u)=\mathsf{VerTyp}\}|$ and $x_\mathsf{VerTyp}=\mathsf{RobExpToILP}(\mathsf{VerTyp})$, we get that Equations~\ref{ilp:3} are satisfied.
\end{proof}

\begin{lemma}\label{lem:ForDir2}
	Let $G$ be a connected graph, let $\vi\in V(G)$, let $\vc$ be a vertex cover of $G$ and let $\widehat{E}_1,\ldots,\widehat{E}_k$ such that conditions of Lemma~\ref{obs:equivsol} hold. For every $1\leq i\leq k$, let $(\mathsf{CC}_i,{\cal C}_i)$ be an $\widehat{E}_i$-valid pair. Then, the values $x_z=\mathsf{RobExpToILP}(z)$, for every $z\in \mathsf{VerTypS}\cup \mathsf{RobTypS}\cup \mathsf{CycTypS}$, satisfy inequalities \ref{ilp:4}--\ref{ilp:6} of $\mathsf{Reduction}(G,\vi,k,B)$.
\end{lemma}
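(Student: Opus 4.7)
\textbf{Proof plan for Lemma~\ref{lem:ForDir2}.} I will verify Equations~\ref{ilp:4}, \ref{ilp:5}, \ref{ilp:6} in turn, in each case translating a property of the given decomposition $\{(\mathsf{CC}_i,{\cal C}_i)\}_{1\leq i\leq k}$ into the corresponding statement about type counts via Definition~\ref{def:REToILP}.

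\textbf{Equation~\ref{ilp:4}.} Fix $\{u,v\}\in E(G)$ with $u,v\in\vc$. By Condition~\ref{obs:equivsol3} of Lemma~\ref{obs:equivsol}, there exists $i\in[k]$ with $\{u,v\}\in\widehat{E}_i$, and since $(\mathsf{CC}_i,{\cal C}_i)$ is an $\widehat{E}_i$-valid pair (Condition~\ref{lemCC:con6} of Definition~\ref{def:ValPa}), either $\{u,v\}\in\mathsf{CC}_i$ or $\{u,v\}\in E(C)$ for some $C\in{\cal C}_i$. In the first case, the isomorphism $\alpha_i:V(G_i')\rightarrow V(\gr(\mathsf{CC}_i))$ used in Definition~\ref{def:RTD} fixes $\vc$ pointwise (Condition~\ref{def:GbarsubmultigraphAc1} of Definition~\ref{def:GbarsubmultigraphAc}), so $\{u,v\}\in\mathsf{CC}'_i$, which means $\mathsf{DerRobTyp}(i)\in\mathsf{RobTypS}(\{u,v\})$, contributing at least $1$ to $\sum_{\mathsf{RobTyp}\in\mathsf{RobTypS}(\{u,v\})}x_\mathsf{RobTyp}$. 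In the second case, the map $\ind$ leaves $u,v\in\vc$ fixed, so $\{u,v\}\in E(\ind(C))$ and thus $\mathsf{DerCycTyp}(i,C)\in\mathsf{CycTypS}(\{u,v\})$, contributing at least $1$ to $\sum_{\mathsf{CycTyp}\in\mathsf{CycTypS}(\{u,v\})}x_\mathsf{CycTyp}$. Either way, Equation~\ref{ilp:4} holds.

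\textbf{Equation~\ref{ilp:5}.} Fix a robot type $\mathsf{RobTyp}=(\mathsf{CC},\mathsf{Alloc}_{\gr(\mathsf{CC})},\mathsf{NumOfCyc})$ with $\mathsf{NumOfCyc}=(N_2,N_3,N_5,\ldots,N_{2|\vc|})$, and fix $2\leq j\leq 2|\vc|$, $j\neq 4$. For each robot $i\in[k]$ with $\mathsf{DerRobTyp}(i)=\mathsf{RobTyp}$, by Definition~\ref{def:RTD} the number $N_{i,j}$ of cycles of length $j$ in ${\cal C}_i$ equals $N_j$, and each such $C\in{\cal C}_i$ satisfies $|\ind(C)|=|C|=j$, so $\mathsf{DerCycTyp}(i,C)\in\mathsf{CycTypS}(\mathsf{RobTyp},j)$. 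Conversely, any $(i,C)$ contributing to the multiplicity of some $\mathsf{CycTyp}\in\mathsf{CycTypS}(\mathsf{RobTyp},j)$ must have $\mathsf{DerRobTyp}(i)=\mathsf{RobTyp}$ and $|C|=j$. Double counting gives
\[\sum_{\mathsf{CycTyp}\in\mathsf{CycTypS}(\mathsf{RobTyp},j)}\!\!\!\!x_\mathsf{CycTyp}\;=\;\sum_{i:\,\mathsf{DerRobTyp}(i)=\mathsf{RobTyp}}N_{i,j}\;=\;N_j\cdot x_\mathsf{RobTyp},\]
which is Equation~\ref{ilp:5}.

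\textbf{Equation~\ref{ilp:6}.} Again fix $\mathsf{RobTyp}=(\mathsf{CC},\mathsf{Alloc}_{\gr(\mathsf{CC})},\mathsf{NumOfCyc})$ and any robot $i$ with $\mathsf{DerRobTyp}(i)=\mathsf{RobTyp}$. Since $\gr(\mathsf{CC}_i)$ is isomorphic to $G_i'=\overline{G}(\gr(\mathsf{CC}_i))$, we have $|\mathsf{CC}_i|=|\mathsf{CC}|$. Let $M_i$ denote the number of cycles of length $4$ in ${\cal C}_i$. Condition~\ref{lemCC:con6} of Definition~\ref{def:ValPa} combined with Equation~\ref{ilp:5} gives
\[|\widehat{E}_i|\;=\;|\mathsf{CC}_i|+\sum_{C\in{\cal C}_i}|E(C)|\;=\;|\mathsf{CC}|+\sum_{\substack{2\leq j\leq 2|\vc|\\ j\neq 4}}N_j\cdot j\;+\;4M_i\;=\;\mathsf{Bud}(\mathsf{RobTyp})+4M_i.\]
By Condition~\ref{obs:equivsol4} of Lemma~\ref{obs:equivsol}, $|\widehat{E}_i|\leq B$, so $4M_i\leq B-\mathsf{Bud}(\mathsf{RobTyp})$; since $4M_i$ is a multiple of $4$, $4M_i\leq\lfloor(B-\mathsf{Bud}(\mathsf{RobTyp}))/4\rfloor\cdot 4=\mathsf{CycBud}(\mathsf{RobTyp})$. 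Summing over all $i$ with $\mathsf{DerRobTyp}(i)=\mathsf{RobTyp}$, and applying the same double-counting identity as for Equation~\ref{ilp:5} (now with $j=4$), yields
\[\sum_{\mathsf{CycTyp}\in\mathsf{CycTypS}(\mathsf{RobTyp},4)}\!\!\!\!4\cdot x_\mathsf{CycTyp}\;=\;\sum_{i:\,\mathsf{DerRobTyp}(i)=\mathsf{RobTyp}}4M_i\;\leq\;x_\mathsf{RobTyp}\cdot\mathsf{CycBud}(\mathsf{RobTyp}),\]
establishing Equation~\ref{ilp:6}. No step presents a genuine obstacle; the only subtlety is keeping track of how the maps $\alpha_i$ and $\ind$ preserve the identity of $\vc$-vertices so that edges in $\vc\times\vc$ are carried faithfully between $\mathsf{CC}_i$ and $\mathsf{CC}'_i$, and between $C$ and $\ind(C)$, which is used crucially in handling Equation~\ref{ilp:4}.
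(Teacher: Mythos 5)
Your proposal is correct and follows essentially the same route as the paper's proof: the case split between skeleton edges and cycle edges (using that $\alpha_i$ and $\ind$ fix $\vc$-vertices) for Equation~\ref{ilp:4}, the double-counting of pairs $(i,C)$ over robots of a fixed type for Equation~\ref{ilp:5}, and the per-robot budget computation $|\widehat{E}_i|=\mathsf{Bud}(\mathsf{RobTyp})+4M_i\leq B$ with the divisibility-by-$4$ rounding for Equation~\ref{ilp:6}. No gaps; your write-up is in fact slightly more explicit than the paper about the double-counting identity and the role of Condition~\ref{def:GbarsubmultigraphAc1} of Definition~\ref{def:GbarsubmultigraphAc}.
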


\begin{proof}
	Let $\{u,v\}\in E$ such that $u,v\in \vc$. By Condition~\ref{obs:equivsol3} of Lemma~\ref{obs:equivsol}, $E(G)\subseteq \widehat{E}_1\cup \ldots \cup\widehat{E}_k$. So, there exists $1\leq i\leq k$ such that $\{u,v\}\in \widehat{E}_i$. Since $(\mathsf{CC}_i,{\cal C}_i)$ is an $\widehat{E}_i$-valid pair, by Condition~\ref{lemCC:con6} of Definition~\ref{def:ValPa}, $\mathsf{CC}_i\cup \bigcup_{C\in {\cal C}_i}E(C)=\widehat{E}_i$. So, at least one among the following two cases holds:
	\begin{enumerate}
		\item $\{u,v\}\in \mathsf{CC}_i$. By Definition~\ref{def:RTD}, $\mathsf{DerRobTyp}(i)=(\mathsf{CC}'_i,\mathsf{Alloc}_{G_i'},\mathsf{NumOfCyc}_i)$, where $\mathsf{CC}'_i=E(G_i')$ and $G'_i=\overline{G}(\gr(\mathsf{CC}_i))$. So, $\{u,v\}\in \mathsf{CC}'_i$. Now, by Lemma~\ref{lem:RobDer}, $\mathsf{DerRobTyp}(i)=\mathsf{RobTyp}$ is a robot type. Therefore, $\mathsf{RobTyp}\in\mathsf{RobTypS}(\{u,v\})$, thus $\ds{\sum_{{\mathsf{RobTyp}\in \mathsf{RobTypS}(\{u,v\})}}x_\mathsf{RobTyp}\geq 1}$.
		\item There exists $C\in {\cal C}_i$ such that $\{u,v\}\in E(C)$. By Definition~\ref{def:CTD}, $\mathsf{DerCycTyp}(i,C)=(\ind(C),\ind(\mathsf{PaAlloc}_{i,C})$, and observe that $\{u,v\}\in \ind(C)$. In addition, by Lemma~\ref{lem:CycDer}, $\mathsf{DerCycTyp}(i,C)=\mathsf{CycTyp}$ is a cycle type. Therefore, $\mathsf{CycTyp}\in \mathsf{CycTypS}(\{u,v\})$, so $\ds{\sum_{{\mathsf{CycTyp}\in \mathsf{CycTypS}(\{u,v\})}}x_\mathsf{CycTyp}\geq 1}$.
	\end{enumerate}
	Either way, we get that $\ds{\sum_{{\mathsf{CycTyp}\in \mathsf{CycTypS}(\{u,v\})}}x_\mathsf{CycTyp}+\sum_{{\mathsf{RobTyp}\in \mathsf{RobTypS}(\{u,v\})}}}$ $x_\mathsf{RobTyp}\geq 1$, so Equations~\ref{ilp:4} are satisfied.
	
	Now, let $\mathsf{RobTyp}=(\mathsf{CC},\mathsf{Alloc}_{\gr(\mathsf{CC})},\mathsf{NumOfCyc})\in \mathsf{RobTypS}$, let $2\leq j\leq 2|\vc|$, $j\neq 4$, and let $1\leq i\leq k$ such that $\mathsf{DerRobTyp}(i)=\mathsf{RobTyp}$. By Definition~\ref{def:RTD}, $N_j$ is the number of cycles of length $j$ in ${\cal C}_i$. Now, for every $C\in {\cal C}_i$, by Definition~\ref{def:CTD}, $\mathsf{DerRobTyp}(i,C)=(\ind(C),\ind(\mathsf{PaAlloc}_{i,C})$, and in particular, $|C|=|\ind(C)|$.  In addition, by Lemma~\ref{lem:CycDer}, for every $C\in {\cal C}_i$, $\mathsf{DerCycTyp}(i,C)=\mathsf{CycTyp}$ is a cycle type. Therefore, $|\{C\in {\cal C}_i~|~\mathsf{DerRobTyp}(i,C)=(\ind(C),\ind(\mathsf{PaAlloc}_{i,C}), |\ind(C)|=j\}|=N_j$. So, Equations~\ref{ilp:5} are satisfied.
	
	
	Now, let $\mathsf{RobTyp}=(\mathsf{CC},\mathsf{Alloc}_{\gr(\mathsf{CC})},\mathsf{NumOfCyc})\in \mathsf{RobTypS}$, and let $1\leq i\leq k$ such that $\mathsf{DerRobTyp}(i)=\mathsf{RobTyp}$. Since $(\mathsf{CC}_i,{\cal C}_i)$ is an $\widehat{E}_i$-valid pair, by Condition~\ref{lemCC:con6} of Definition~\ref{def:ValPa}, $\mathsf{CC}_i\cup \bigcup_{C\in {\cal C}_i}E(C)=\widehat{E}_i$. So, $|\widehat{E}_i|=|\mathsf{CC}_i|+\sum_{C\in{\cal C}_i}|C|$. By Definition~\ref{def:RTD}, $\mathsf{DerRobTyp}(i)=(\mathsf{CC}'_i,\mathsf{Alloc}_{G_i'},\mathsf{NumOfCyc}_i)$, where $\mathsf{CC}'_i=E(G_i')$ and $G'_i=\overline{G}(\gr(\mathsf{CC}_i))$. Thus, $|\mathsf{CC}|=|\mathsf{CC}'_i|=|\mathsf{CC}_i|$. In addition, by Definition~\ref{def:RTD}, for every $2\leq j\leq 2|\vc|$, $j\neq 4$, $N_{i,j}$ is the number of cycles of size $j$ in ${\cal C}_i$, where $\mathsf{NumOfCyc}=(N_{2},N_{3},N_{5},N_{6},\ldots,N_{2|\vc|})$. Moreover, by Condition~\ref{lemCC:con5} of Definition~\ref{def:ValPa}, the cycles in ${\cal C}_i$ of length other than $4$ are simple. So, for every $C\in {\cal C}_i$, $|C|=4$ or $2\leq|C|\leq {2|\vc|}$.
	Therefore, 
	
	\noindent$\ds{|\widehat{E}_i|=|\mathsf{CC}_i|+\sum_{C\in{\cal C}_i}|C|=|\mathsf{CC}|+\sum_{2\leq j\leq 2|\vc|,j\neq 4}\sum_{C\in{\cal C}_i,|C|=j}j+\sum_{C\in{\cal C}_i,|C|=4}4=}$
		
		$\ds{|\mathsf{CC}|+\sum_{2\leq j\leq 2|\vc|,j\neq 4}N_j+\sum_{C\in{\cal C}_i,|C|=4}4}$. 
	
	Recall that $\ds{\mathsf{Bud}(\mathsf{RobTyp})=|\mathsf{CC}|+\sum_{2\leq j\leq 2|\vc|,j\neq 4}N_j}$. Thus,
	
	 \noindent$\ds{|\widehat{E}_i|=\mathsf{Bud}(\mathsf{RobTyp})+\sum_{C\in{\cal C}_i,|C|=4}4=\mathsf{Bud}(\mathsf{RobTyp})+|\{C\in{\cal C}_i~|~|C|=4\}|\cdot 4}$. Now, by Condition~\ref{obs:equivsol4} of Lemma~\ref{obs:equivsol}, $|\widehat{E}_i|\leq B$. So, $\mathsf{Bud}(\mathsf{RobTyp})+|\{C\in{\cal C}_i~|~|C|=4\}|\cdot 4\leq B$, implies that $|\{C\in{\cal C}_i~|~|C|=4\}|\leq (B-\mathsf{Bud}(\mathsf{RobTyp}))\cdot  \frac{1}{4}$. Observe that $|\{C\in{\cal C}_i~|~|C|=4\}|\in \mathbb{N}$, thus $|\{C\in{\cal C}_i~|~|C|=4\}|\leq \lfloor(B-\mathsf{Bud}(\mathsf{RobTyp}))\cdot  \frac{1}{4}\rfloor$, so $|\{C\in{\cal C}_i~|~|C|=4\}|\cdot 4\leq \lfloor(B-\mathsf{Bud}(\mathsf{RobTyp}))\cdot  \frac{1}{4}\rfloor\cdot 4=\mathsf{CycBud}(\mathsf{RobTyp})$. Therefore, the number of cycles of length $4$ in ${\cal C}_i$ is bounded by $\mathsf{CycBud}(\mathsf{RobTyp})\cdot \frac{1}{4}$. 
	
	Now, for every $C\in {\cal C}_i$, by Definition~\ref{def:CTD}, $\mathsf{DerCycTyp}(i,C)=(\ind(C),\ind(\mathsf{PaAlloc}_{i,C})$, $\mathsf{DerRobTyp}(i))=(\ind(C),\ind(\mathsf{PaAlloc}_{i,C})$, $\mathsf{RobTyp})$. In addition, by Lemma~\ref{lem:CycDer}, $\mathsf{DerCycTyp}($ $i,C)$ is a cycle type. So, for every $C\in {\cal C}_i$, $\mathsf{DerCycTyp}(i,C)\in \mathsf{CycTypS}(\mathsf{RobTyp},4)$ if and only if $|C|=4$. Therefore, $|\{C\in {\cal C}_i~|~\mathsf{DerCycTyp}(i,C)\in \mathsf{CycTypS}(\mathsf{RobTyp},4)\}|=|\{C\in{\cal C}_i~|~|C|=4\}|\leq \mathsf{CycBud}(\mathsf{RobTyp})\cdot \frac{1}{4}$. Now, for every $\mathsf{CycTyp}\in \mathsf{CycTypS}$, $x_\mathsf{CycTyp}=\mathsf{RobExpToILP}(\mathsf{CycTyp})=|\{(i,C)~|~1\leq i\leq k,C\in {\cal C}_i, \mathsf{DerCycTyp}(i,C)=\mathsf{CycTyp}\}|$. Thus, $i$ contributes at most $\mathsf{CycBud}(\mathsf{RobTyp})\cdot \frac{1}{4}$ to $\ds{\sum_{{\mathsf{CycTyp}\in \mathsf{CycTypS}(\mathsf{RobTyp},4)}}x_\mathsf{CycTyp}}$. So, 
	
	\noindent$\ds{\sum_{{\mathsf{CycTyp}\in \mathsf{CycTypS}(\mathsf{RobTyp},4)}}4\cdot x_\mathsf{CycTyp}\leq x_\mathsf{RobTyp}\cdot \mathsf{CycBud}(\mathsf{RobTyp})}$.
	Therefore, Equations~\ref{ilp:6} are satisfied. This completes the proof.
\end{proof}

Lemmas~\ref{lem:ForDir1} and~\ref{lem:ForDir2} conclude the correctness of Lemma~\ref{lem:ForDir}.

Now, we invoke Lemma~\ref{lem:ForDir} in order to prove the correctness of Lemma~\ref{lem:fpt1}:

\begin{proof}
Assume that $(G,\vi,k,B)$ is a yes-instance of the \cg problem. By Lemma~\ref{obs:equivsol}, there exist $k$ multisets $\widehat{E}'_1,\ldots,\widehat{E}'_k$ such that the conditions of Lemma~\ref{obs:equivsol} hold. So, by Observation~\ref{obs:equivso}, there exist $\widehat{E}_1,\ldots,\widehat{E}_k$ such that the conditions of Lemma~\ref{obs:equivsol} hold and for every $1\leq i\leq k$, each $\{u,v\}\in \widehat{E}_i$ appears at most twice in $\widehat{E}_i$. Thus, for every $1\leq i\leq k$, by Lemma~\ref{lem:4cycandCC}, there exists $(\mathsf{CC}_i,{\cal C}_i)$ that is an $\widehat{E}_i$-valid pair. By Lemma~\ref{lem:ForDir}, the values $x_z=\mathsf{RobExpToILP}(z)$, for every $z\in \mathsf{VerTypS}\cup \mathsf{RobTypS}\cup \mathsf{CycTypS}$, satisfy the inequalities of $\mathsf{Reduction}(G,\vi,k,B)$. Therefore, $\mathsf{Reduction}(G,\vi,k,B)$ is a yes-instance of the {\sc Integer Linear Programming}. 
\end{proof}


\subsection{Correctness: Reverse Direction}\label{sec:revDir}
In the next lemma, we state the revers direction of Lemma~\ref{lem:fpt1}:

\begin{lemma}\label{lem:fpt2}
	Let $G$ be a connected graph, let $\vi\in V(G)$ and let $k,B\in \mathbb{N}$. If $\mathsf{Reduction}(G,\vi,k,B)$ is a yes-instance of {\sc Integer Linear Programming}, then $(G,\vi,$ $k,B)$ is a yes-instance of the \cg problem.
\end{lemma}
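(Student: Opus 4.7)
The plan is to invert the construction from Definition~\ref{def:REToILP}. Let $(x_z)_{z \in \mathsf{VerTypS}\cup\mathsf{RobTypS}\cup\mathsf{CycTypS}}$ be non-negative integers satisfying Equations~\ref{ilp:1}--\ref{ilp:6}; we aim to construct multisets $\widehat{E}_1,\ldots,\widehat{E}_k$ witnessing the four conditions of Lemma~\ref{obs:equivsol}. The first step is a combinatorial bookkeeping stage. Using Equation~\ref{ilp:1}, I would assign each of the $k$ robots a robot type $\mathsf{RobTyp}_i=(\mathsf{CC}_i,\mathsf{Alloc}_{\gr(\mathsf{CC}_i)},\mathsf{NumOfCyc}_i)$, respecting the counts $x_\mathsf{RobTyp}$; using Equation~\ref{ilp:2}, I would assign each $u\in u^*\in\ind$ a vertex type $(u^*,\U)$ so that exactly $x_\mathsf{VerTyp}$ vertices of $u^*$ are of type $\mathsf{VerTyp}$; and, using Equations~\ref{ilp:5} and~\ref{ilp:6}, I would partition the chosen cycle types among the robots so that every robot of type $\mathsf{RobTyp}$ receives exactly $N_j$ cycles of length $j\ne 4$ (possible by Equation~\ref{ilp:5}) and at most $\mathsf{CycBud}(\mathsf{RobTyp})/4$ cycles of length $4$ (possible since Equation~\ref{ilp:6} bounds the total contribution to all robots of this type).

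The central step is the \emph{unfolding}: replacing each equivalence-class vertex $u^*_j$ appearing in some $\gr(\mathsf{CC}_i)$, and each $u^*$ appearing in any cycle assigned to a robot, by an actual vertex of $u^*\subseteq V(G)$, in a manner consistent with the stored vertex-allocations. To achieve this, for each vertex type $\mathsf{VerTyp}=(u^*,\U)$ and each $\W\in\U$, I would appeal to Equation~\ref{ilp:3}: the total number of allocations of $\W$ to $\mathsf{VerTyp}$, summed over the $\mathsf{Alloc}$-images of chosen robots (weighted by their count) and the $\mathsf{PaAlloc}$-images of chosen cycles, is at least $x_\mathsf{VerTyp}$. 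By a routine matching argument performed independently for each $\W\in\U$, I can assign each of the $x_\mathsf{VerTyp}$ vertices of type $\mathsf{VerTyp}$ a distinct allocation of $\W$; this designates, for the robot or cycle producing that allocation, which actual vertex $u$ the corresponding $u^*_j$ (or occurrence of $u^*$ in a cycle) must be mapped to. Any remaining $u^*_j$'s whose pairs were not singled out can be sent to any vertex of $u^*$ of the matching vertex type, which is feasible because $\mathsf{NumVer}(u^*)\le|u^*|$ and the allocation targets a particular vertex type. I then define $\widehat{E}_i$ to be the multiset union of the edges of $\mathsf{CC}_i$ and of all cycles assigned to robot $i$, each taken after this unfolding.

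Finally, I would verify Conditions~\ref{obs:equivsol1}--\ref{obs:equivsol4} of Lemma~\ref{obs:equivsol}. Condition~\ref{obs:equivsol1} holds since $\vi\in\vc\cap V(\gr(\mathsf{CC}_i))$ (Definition~\ref{def:RT}, part~\ref{def:RT2}) is unaffected by unfolding. For Condition~\ref{obs:equivsol2}, connectivity follows because $\gr(\mathsf{CC}_i)$ is connected and every assigned cycle shares a vertex of $\vc$ with $V(\gr(\mathsf{CC}_i))$ (Definition~\ref{def:CT}), while every vertex in $\gr(\widehat{E}_i)$ has even degree since this is true in $\gr(\mathsf{CC}_i)$ and every appended cycle contributes $+2$ to the degree of each vertex it visits. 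Condition~\ref{obs:equivsol3} splits into two cases: edges with both endpoints in $\vc$ are covered by Equation~\ref{ilp:4}; an edge $\{u,v\}$ with $u\in u^*\in\ind$ has $v\in\mathsf{N}_{G^*}(u^*)\subseteq\bigcup\U$ where $(u^*,\U)$ is $u$'s vertex type, so the unfolding guarantees that some allocation of a $\W\in\U$ containing $v$ is realized at $u$, placing $\{u,v\}$ in the corresponding $\widehat{E}_i$ (using Definitions~\ref{def:SOPP} and~\ref{def:PPOC}). Condition~\ref{obs:equivsol4} follows from direct calculation: $|\widehat{E}_i|=|\mathsf{CC}_i|+\sum_{j\ne 4}j\cdot N_j+4\cdot|\{\text{length-}4\text{ cycles at }i\}|\le \mathsf{Bud}(\mathsf{RobTyp}_i)+\mathsf{CycBud}(\mathsf{RobTyp}_i)\le B$ by construction. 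The main obstacle is the unfolding step: orchestrating the replacements consistently so that each robot's skeleton and cycles unfold coherently while each independent-set vertex receives every multiset listed in its type. Equation~\ref{ilp:3} supplies exactly the inequality needed to carry out the matching $\W$ by $\W$, and the independence of different vertex types lets us treat them separately.
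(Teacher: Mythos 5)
Your proposal is correct and follows essentially the same route as the paper: assign types via Equations~\ref{ilp:1} and~\ref{ilp:2}, distribute cycles per Equations~\ref{ilp:5} and~\ref{ilp:6} (with the near-equal split of length-$4$ cycles so each robot gets at most $\mathsf{CycBud}(\mathsf{RobTyp})/4$ of them), use Equation~\ref{ilp:3} to match, multiset by multiset, allocations to the vertices of each vertex type, unfold the skeletons and cycles accordingly, and then verify the four conditions of Lemma~\ref{obs:equivsol} exactly as the paper does via its $\mathsf{ILPDer}$/$\mathsf{ILPToRobExp}$ machinery. The only cosmetic difference is that the paper lets leftover allocations be mapped to arbitrary vertices of $u^*$ (not necessarily of the matching type), which is the cleaner way to phrase your "remaining $u^*_j$" step.
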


Towards the proof of Lemma~\ref{lem:fpt2}, given values $x_z$, for every $z\in \mathsf{VerTypS}\cup \mathsf{RobTypS}\cup \mathsf{CycTypS}$, that satisfy the inequalities of $\mathsf{Reduction}(G,\vi,k,B)$, we show how to construct multisets $\widehat{E}_1,\ldots,\widehat{E}_k$ satisfy the conditions of Lemma~\ref{obs:equivsol}. Recall that intuitively, for every $z\in \mathsf{VerTypS}\cup \mathsf{RobTypS}\cup \mathsf{CycTypS}$, $x_z$ stands for the number of elements of type $z$ in a solution. First, we define a vertex type for each vertex in $V(G)\setminus \vc$. To this end, for every $u\in u^*\in \ind$, we arbitrary picks a vertex type $(u^*,\U)\in\mathsf{VerTypS}$ such that the total number of vertices we chose for a vertex type $\mathsf{VerTyp}$ is exactly $x_\mathsf{VerTyp}$:

\begin{definition} [{\bf Deriving Vertex Types from an ILP Solution}]\label{def:DVTILP}
	Let $G$ be a connected graph, let $\vi\in V(G)$, let $k,B\in \mathbb{N}$ and let $\vc$ be a vertex cover of $G$. Let $x_z$, for every $z\in \mathsf{VerTypS}\cup \mathsf{RobTypS}\cup \mathsf{CycTypS}$, be values that satisfy the inequalities of $\mathsf{Reduction}(G,\vi,k,B)$. For every $u^*\in \ind$, let $\mathsf{VerTypS}_{u^*}=\{(u^*,\U)\in\mathsf{VerTypS}\}$ and $\mathsf{VerTypFun}_{u^*}:u^*\rightarrow \mathsf{VerTypS}_{u^*}$ be such that for every $\mathsf{VerTyp}\in \mathsf{VerTypS}_{u^*}$, $|\{u\in u^*~|~\mathsf{VerTypS}_{u^*}(u)=\mathsf{VerTyp}\}|=x_\mathsf{VerTyp}$. Then, for every $u\in u^*\in \ind$, $\mathsf{ILPDerVerTyp}(\{x_z~|~z\in \mathsf{VerTypS}\cup \mathsf{RobTypS}\cup \mathsf{CycTypS}\},u)=\mathsf{VerTypS}_{u^*}(u)$.
\end{definition}

Whenever $\{x_z~|~z\in \mathsf{VerTypS}\cup \mathsf{RobTypS}\cup \mathsf{CycTypS}\}$ is clear from context, we refer to $\mathsf{ILPDerVerTyp}(\{x_z~|~z\in \mathsf{VerTypS}\cup \mathsf{RobTypS}\cup \mathsf{CycTypS}\},u)$ as $\mathsf{ILPDerVerTyp}(u)$. 

Observe that since $x_z$, for every $z\in \mathsf{VerTypS}\cup \mathsf{RobTypS}\cup \mathsf{CycTypS}$, satisfy the inequalities of $\mathsf{Reduction}(G,\vi,k,B)$, then, in particular, Equations~\ref{ilp:2} are satisfied. That is, for every $u^*\in \ind$, $\sum_{\mathsf{VerTyp}\in \mathsf{VerTypS}_{u^*}} x_\mathsf{VerTyp}=|u^*|$. Therefore, for every $u^*\in \ind$, there exists a function $\mathsf{VerTypS}_{u^*}$ such that for every $\mathsf{VerTyp}\in \mathsf{VerTypS}_{u^*}$ $|\{u\in u^*~|~\mathsf{VerTypS}_{u^*}(u)=\mathsf{VerTyp}\}|=x_\mathsf{VerTyp}$, as defined in Definition~\ref{def:DVTILP}. Thus, $\mathsf{ILPDerVerTyp}$ is well defined.

Next, we define a robot type for each robot. For every $i\in [k]$, we arbitrary determine a robot type for the $i$-th robot such that the total number of robots we choose for a robot type $\mathsf{RobTyp}$ is exactly $x_\mathsf{RobTyp}$:

\begin{definition} [{\bf Deriving Robot Types from an ILP Solution}]\label{def:DRTILP}
	Let $G$ be a connected graph, let $\vi\in V(G)$, let $k,B\in \mathbb{N}$ and let $\vc$ be a vertex cover of $G$. Let $x_z$, for every $z\in \mathsf{VerTypS}\cup \mathsf{RobTypS}\cup \mathsf{CycTypS}$, be values that satisfy the inequalities of $\mathsf{Reduction}(G,\vi,k,B)$. Let $\mathsf{RobTypFun}:[k]\rightarrow \mathsf{RobTyps}$ be such that for every $\mathsf{RobTyp}\in \mathsf{RobTypS}$, $|\{i\in[k]~|~\mathsf{RobTypFun}(i)=\mathsf{RobTyp}\}|=x_\mathsf{RobTyp}$. Then, for every $1\leq i\leq k$, $\mathsf{ILPDerRobTyp}(\{x_z~|~z\in \mathsf{VerTypS}\cup \mathsf{RobTypS}\cup \mathsf{CycTypS}\},i)=\mathsf{RobTypFun}(i)$.
\end{definition}

 Whenever $\{x_z~|~z\in \mathsf{VerTypS}\cup \mathsf{RobTypS}\cup \mathsf{CycTypS}\}$ is clear from context, we refer to $\mathsf{ILPDerRobTyp}(\{x_z~|~z\in \mathsf{VerTypS}\cup \mathsf{RobTypS}\cup \mathsf{CycTypS}\},i)$ as $\mathsf{ILPDerRobTyp}(i)$. 

Observe that since $x_z$, for every $z\in \mathsf{VerTypS}\cup \mathsf{RobTypS}\cup \mathsf{CycTypS}$, satisfy Equation~\ref{ilp:1}, so $\sum_{\mathsf{RobTyp}\in \mathsf{RobTypS}}x_\mathsf{RobTyp}=k$. Thus, there exists a function $\mathsf{RobTypFun}:[k]\rightarrow \mathsf{RobTyps}$ such that for every $\mathsf{RobTyp}\in \mathsf{RobTypS}$, $|\{i\in[k]~|~\mathsf{RobTypFun}(i)=\mathsf{RobTyp}\}|=x_\mathsf{RobTyp}$, as defined in Definition~\ref{def:DRTILP}. Therefore, $\mathsf{ILPDerRobTyp}$ is well defined.

Let $\mathsf{VerTyp}=(u^*,\U)\in\mathsf{VerTypS}$. Recall that in order to cover the edges adjacent to a vertex $u$ of type $\mathsf{VerTyp}$, we need to ``allocate'' each $\W\in\U$ to $u$. An allocation of $\W$ can be derived from a robot of a robot type $\mathsf{RobTyp}=(\mathsf{CC},\mathsf{Alloc}_{\gr(\mathsf{CC})},\mathsf{NumOfCyc})\in \mathsf{RobTypS}$, if $(u^*_j,\W)\in \mathsf{NeiOfInd}(\gr(\mathsf{CC})$ and 

\noindent$\mathsf{Alloc}_{\gr(\mathsf{CC})}(u^*_j,\W)=\mathsf{VerTyp}$. In addition, allocation of $\W$ can be derived from a cycle of a cycle type $\mathsf{CycTyp}=(C,\mathsf{PaAlloc}_C,\mathsf{RobTyp})\in \mathsf{CycTypS}$, if $\W=\{v,v'\}$, $\{\{u^*,v\},\{u^*,v'\}\}\in \mathsf{EdgePairs}(C)$ and $\mathsf{PaAlloc}_C(\{\{u^*,v\},\{u^*,v'\}\})=\mathsf{VerTyp}$. Observe that a robot or a cycle might allocate $\W$ to a vertex type $\mathsf{VerTyp}$ multiple times. In the following definition, we derive for every $\mathsf{VerTyp}=(u^*,\U)\in\mathsf{VerTypS}$ and $\W\in\U$ the set of total allocations of $\W$ to $\mathsf{VerTyp}$ from both robots and cycles. 

\begin{definition} [{\bf Deriving Subsets from an ILP Solution}]\label{def:DSILP}
	Let $G$ be a connected graph, let $\vi\in V(G)$, let $k,B\in \mathbb{N}$ and let $\vc$ be a vertex cover of $G$. Let $x_z$, for every $z\in \mathsf{VerTypS}\cup \mathsf{RobTypS}\cup \mathsf{CycTypS}$, be values that satisfy the inequalities of $\mathsf{Reduction}(G,\vi,k,B)$. Then, for every $\mathsf{VerTyp}=(u^*,\U)\in \mathsf{VerTypS}$, and every $\W\in \U$, $\mathsf{ILPDerSubToAlloc}(\{x_z~|~z\in \mathsf{VerTypS}\cup \mathsf{RobTypS}\cup \mathsf{CycTypS}\},\mathsf{VerTyp},\W)=\{(\mathsf{CycTyp},i,t)$ $|~1\leq j\leq 2|\vc|, \mathsf{CycTyp}\in \mathsf{CycTypS}(\mathsf{VerTyp},\W,j), 1\leq i\leq x_\mathsf{CycTyp}, 1\leq t\leq j\}\cup \{(i,t)~|~1\leq j\leq 2^{|\vc|}+|\vc|^2, 1\leq i\leq k,\mathsf{ILPDerRobTyp}(i)\in \mathsf{RobTypS}(\mathsf{VerTyp},\W,j), 1\leq t\leq j\}$.
\end{definition}

 Whenever $\{x_z~|~z\in \mathsf{VerTypS}\cup \mathsf{RobTypS}\cup \mathsf{CycTypS}\}$ is clear from context, we refer to $\mathsf{ILPDerSubToAlloc}(\{x_z~|~z\in \mathsf{VerTypS}\cup \mathsf{RobTypS}\cup \mathsf{CycTypS}\},\mathsf{VerTyp},\W)$ as 

 \noindent$\mathsf{ILPDerSubToAlloc}(\mathsf{VerTyp},\W)$.

Now, we allocate the subsets we derived in Definition~\ref{def:DSILP} to vertices. For every $\mathsf{VerTyp}=(u^*,\U)\in\mathsf{VerTypS}$ and $\W\in\U$, we arbitrary allocate each $\mathsf{Sub}\in \mathsf{ILPDerSubToAlloc}$ $(\mathsf{VerTyp},\W)$ to a vertex in $u^*$, while ensuring that each vertex of a vertex type $\mathsf{VerTyp}$ gets at least one item allocated.

\begin{definition} [{\bf Deriving Vertex Allocation of $\mathsf{ILPDerSubToAlloc}$ from an ILP Solution}]\label{def:DVAILP}
	Let $G$ be a connected graph, let $\vi\in V(G)$, let $k,B\in \mathbb{N}$ and let $\vc$ be a vertex cover of $G$. Let $x_z$, for every $z\in \mathsf{VerTypS}\cup \mathsf{RobTypS}\cup \mathsf{CycTypS}$, be values that satisfy the inequalities of $\mathsf{Reduction}(G,\vi,k,B)$. For every $\mathsf{VerTyp}=(u^*,\U)\in \mathsf{VerTypS}$ and every $\W\in \U$, let $\mathsf{SubAlloc}_{\mathsf{VerTyp},\W}: \mathsf{ILPDerSubToAlloc}(\mathsf{VerTyp},\W)\rightarrow u^*$ be a function such that:
	 \begin{itemize}
	 	\item For every $u\in u^*$ such that $\mathsf{ILPDerVerTyp}(u)=\mathsf{VerTyp}$, there exists\\ $\mathsf{Sub}\in \mathsf{ILPDerSubToAlloc}$ $(\mathsf{VerTyp},\W)$ such that $\mathsf{SubAlloc}_{\mathsf{VerTyp},\W}(\mathsf{Sub})=u$. 
	 \end{itemize}
Then, for every $\mathsf{VerTyp}=(u^*,\U)\in \mathsf{VerTypS}$, $\W\in \U$ and $\mathsf{Sub}\in \mathsf{ILPDerSubToAlloc}$ $(\mathsf{VerTyp},\W)$, 

\noindent$\mathsf{ILPDerSubAlloc}(\{x_z~|~z\in \mathsf{VerTypS}\cup \mathsf{RobTypS}\cup \mathsf{CycTypS}\},\mathsf{VerTyp},\W,\mathsf{Sub})=$\\ $\mathsf{SubAlloc}_{\mathsf{VerTyp},\W}(\mathsf{Sub})$.
\end{definition}

 Whenever $\{x_z~|~z\in \mathsf{VerTypS}\cup \mathsf{RobTypS}\cup \mathsf{CycTypS}\}$ is clear from context, we refer to $\mathsf{ILPDerSubAlloc}(\{x_z~|~z\in \mathsf{VerTypS}\cup \mathsf{RobTypS}\cup \mathsf{CycTypS}\},\mathsf{VerTyp},\W,\mathsf{Sub})$ as 
 
 \noindent$\mathsf{ILPDerSubAlloc}(\mathsf{VerTyp},\W,\mathsf{Sub})$.

Recall that, by Equation~\ref{ilp:3}, for every $\mathsf{VerTyp}=(u^*,\U)\in \mathsf{VerTypS}$, and every $\W\in \U$, 

\noindent$\displaystyle{\sum_{j=1}^{2{|\vc|}}\sum_{{\mathsf{CycTyp}\in \mathsf{CycTypS}(\mathsf{VerTyp},\W,j)}}j\cdot x_\mathsf{CycTyp}+}$\\$\ds{\sum_{j=1}^{2^{|\vc|}+|\vc|^2}}$ $\displaystyle{\sum_{{\mathsf{RobTyp}\in \mathsf{RobTypS}(\mathsf{VerTyp},\W,j)}}j\cdot x_\mathsf{RobTyp}\geq x_\mathsf{VerTyp}}$. 

Observe that the left member of the equation equals to the number of allocations of $\W$ we have for vertices of vertex type $\mathsf{VerTyp}$. That is,
 
  \noindent$\displaystyle{|\mathsf{ILPDerSubAlloc}(\mathsf{VerTyp},\W,\mathsf{Sub})|=\sum_{j=1}^{2{|\vc|}}\sum_{{\mathsf{CycTyp}\in \mathsf{CycTypS}(\mathsf{VerTyp},\W,j)}}j\cdot x_\mathsf{CycTyp}+}$\\$\ds{\sum_{j=1}^{2^{|\vc|}+|\vc|^2}\sum_{{\mathsf{RobTyp}\in \mathsf{RobTypS}(\mathsf{VerTyp},\W,j)}}j\cdot x_\mathsf{RobTyp}}$. So, since $x_z$, for every $z\in \mathsf{VerTypS}\cup \mathsf{RobTypS}\cup \mathsf{CycTypS}$, satisfy Equations~\ref{ilp:3}, we have enough subsets to allocate. In addition, $u^*\neq \emptyset$. So, there exists a function $\mathsf{SubAlloc}_{\mathsf{VerTyp},\W}$ as defined in Definition~\ref{def:DVAILP}. Therefore, $\mathsf{ILPDerSubAlloc}$ is well defined.

 Now, we look on the vertex allocation of $\mathsf{ILPDerSubToAlloc}$, defined in Definition~\ref{def:DVAILP}, from the perspective of the robots. Let $1\leq i\leq k$, and let $\mathsf{RobTyp}=(\mathsf{CC},\mathsf{Alloc}_{\gr(\mathsf{CC})},\mathsf{NumOfCyc})$ $\in \mathsf{RobTypS}$ such that $\mathsf{ILPDerRobTyp}(i)=\mathsf{RobTyp}$. In addition, $\W\in 2^{\mathsf{N}_{G^*}(u^*)\times \mathsf{2}}$, $\mathsf{VerTyp}\in \mathsf{VerTypS}$ and $1\leq r\leq 2^{|\vc|}+|\vc|^2$ such that $\mathsf{RobTyp}\in \mathsf{RobTypS}(\mathsf{VerTyp},\W,r)$. Notice that this implies $\mathsf{SetOfSub}(u^*,\W,\mathsf{VerTyp})=\{(u^*_j,\W)\in \mathsf{NeiOfInd}(\gr(\mathsf{CC})$ $|~\mathsf{Alloc}_{\gr(\mathsf{CC})}$ $((u^*_j,\W))=\mathsf{VerTyp}\}$ has exactly $r$ elements. So, there exists exactly $r$ allocations of $\W$ associated with the $i$-th robot in $\mathsf{ILPDerSubToAlloc}(\mathsf{VerTyp},\W)$. In the following definition, we ``execute'' the allocations associated with the $i$-th robot, that is, we replace each $u_j^*\in V(\gr(\mathsf{CC}))$ by a vertex $u\in u^*$, derived by $\mathsf{ILPDerSubAlloc}$. Observe that the allocations of the $i$-th robot in $\mathsf{ILPDerSubToAlloc}(\mathsf{VerTyp},\W)$ are labeled $1$ to $r$. That is, $(i,1),\ldots,(i,r)\in \mathsf{ILPDerSubToAlloc}(\mathsf{VerTyp},\W)$. So, first we arbitrary label each element in $\mathsf{SetOfSub}(u^*,\W,\mathsf{VerTyp})$ by some unique $t\in[r]$, and then we replace the vertex by $\mathsf{ILPDerSubAlloc}(\mathsf{VerTyp},\W,(i,t))$.
 
 Recall that a permutation of a multiset $A$ is a bijection $\mathsf{Permut}_A:A\rightarrow [A]$.

 \begin{definition} [{\bf Deriving a Transformation of $\mathsf{CC}$ from an ILP Solution}]\label{def:DTCCILP}
 	Let $G$ be a connected graph, let $\vi\in V(G)$, let $k,B\in \mathbb{N}$ and let $\vc$ be a vertex cover of $G$. Let $x_z$, for every $z\in \mathsf{VerTypS}\cup \mathsf{RobTypS}\cup \mathsf{CycTypS}$, be values that satisfy the inequalities of $\mathsf{Reduction}(G,\vi,k,B)$. Let $1\leq i\leq k$, and let $\mathsf{RobTyp}=(\mathsf{CC},\mathsf{Alloc}_{\gr(\mathsf{CC})},\mathsf{NumOfCyc})\in \mathsf{RobTypS}$ such that $\mathsf{ILPDerRobTyp}(i)=\mathsf{RobTyp}$. For every $\W\in 2^{\mathsf{N}_{G^*}(u^*)\times \mathsf{2}}$, $\mathsf{VerTyp}\in \mathsf{VerTypS}$ and $1\leq r\leq 2^{|\vc|}+|\vc|^2$ such that $\mathsf{RobTyp}\in \mathsf{RobTypS}(\mathsf{VerTyp},\W,r)$, let $\mathsf{SetOfSub}(u^*,\W,\mathsf{VerTyp})=\{(u^*_j,\W)\in \mathsf{NeiOfInd}(\gr(\mathsf{CC})~|~\mathsf{Alloc}_{\gr(\mathsf{CC})}((u^*_j,$ $\W))=\mathsf{VerTyp}\}$ and let $\mathsf{Permut}_{u^*,\W,\mathsf{VerTyp}}:\mathsf{SetOfSub}(u^*,\W,\mathsf{VerTyp})\rightarrow [r]$ be a permutation. Now, let $G'$ be the multigraph obtained from $\gr(\mathsf{CC})$ by replacing each $u^*_j$ by $\mathsf{ILPDerSubAlloc}$ $(\mathsf{VerTyp},\W,(i,\mathsf{Permut}_{u^*,\W,\mathsf{VerTyp}}(u^*_j,\W))$, where $(u^*_j,\W)\in \mathsf{NeiOfInd}(\gr(\mathsf{CC}))$ and $\mathsf{Alloc}_{\gr(\mathsf{CC})}(u^*_j,\W)=\mathsf{VerTyp}$. Then, $\mathsf{ILPDerCCTransf}(\{x_z~|$ $z\in \mathsf{VerTypS}\cup \mathsf{RobTypS}\cup \mathsf{CycTypS}\},i)=E(G')$.
 \end{definition}

  Whenever $\{x_z~|~z\in \mathsf{VerTypS}\cup \mathsf{RobTypS}\cup \mathsf{CycTypS}\}$ is clear from context, we refer to $\mathsf{ILPDerCCTransf}(\{x_z~|~z\in \mathsf{VerTypS}\cup \mathsf{RobTypS}\cup \mathsf{CycTypS}\},i)$ as $\mathsf{ILPDerCCTransf}(i)$.
  
  Let $1\leq i\leq k$, and let $\mathsf{RobTyp}=(\mathsf{CC},\mathsf{Alloc}_{\gr(\mathsf{CC})},\mathsf{NumOfCyc})\in \mathsf{RobTypS}$ such that $\mathsf{ILPDerRobTyp}(i)=\mathsf{RobTyp}$. In the next lemma, we show that $\mathsf{ILPDerCCTransf}(i)$ has the properties which will be useful later: (i) $\mathsf{ILPDerCCTransf}(i)$ is a multiset with elements from $E(G)$, (ii) $|\mathsf{ILPDerCCTransf}(i)|=|\mathsf{CC}|$, (iii) vertices from $\vc$ are not replaced, and (iv) the properties of $\mathsf{CC}$ given by Condition~\ref{def:RT2} of Definition~\ref{def:RT} are maintained also in $\mathsf{ILPDerCCTransf}(i)$.

  \begin{lemma}\label{lem:TranCCPro}
  Let $G$ be a connected graph, let $\vi\in V(G)$, let $k,B\in \mathbb{N}$ and let $\vc$ be a vertex cover of $G$. Let $x_z$, for every $z\in \mathsf{VerTypS}\cup \mathsf{RobTypS}\cup \mathsf{CycTypS}$, be values that satisfy the inequalities of $\mathsf{Reduction}(G,\vi,k,B)$. Let $1\leq i\leq k$, and let $\mathsf{RobTyp}=(\mathsf{CC},\mathsf{Alloc}_{\gr(\mathsf{CC})},\mathsf{NumOfCyc})\in \mathsf{RobTypS}$ such that $\mathsf{ILPDerRobTyp}(i)=\mathsf{RobTyp}$. Then, the following conditions hold:
  \begin{enumerate}
  	\item $\mathsf{ILPDerCCTransf}(i)$ is a multiset with elements from $E(G)$.\label{lem:TranCCPro1}
  	\item $|\mathsf{ILPDerCCTransf}(i)|=|\mathsf{CC}|$.\label{lem:TranCCPro2}
  	\item $V(\gr(\mathsf{CC}))\cap \vc= V(\gr(\mathsf{ILPToRobExp}(i))\cap \vc$. \label{lem:TranCCPro3}
  	\item $\vi\in V(\gr(\mathsf{ILPDerCCTransf}(i)))$, $\gr(\mathsf{ILPDerCCTransf}(i))$ is connected and every vertex in $\gr(\mathsf{ILPDerCCTransf}(i))$ has even degree in it.\label{lem:TranCCPro4}
  \end{enumerate}
  \end{lemma}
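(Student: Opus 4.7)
The plan is to observe that the construction in Definition~\ref{def:DTCCILP} is nothing more than a relabelling of the non-vertex-cover vertices of $\gr(\mathsf{CC})$: every vertex $u^*_j\in V(\gr(\mathsf{CC}))\setminus\vc$ is replaced by some vertex of the equivalence class $u^*\ni u^*_j$ (determined by $\mathsf{ILPDerSubAlloc}$), while all vertices of $\vc$ (in particular $\vi$, which we may assume lies in $\vc$) are untouched, and the multiset of edges is carried through without duplication or deletion. All four conditions then follow from this single observation together with Conditions~\ref{def:RT1} and~\ref{def:RT2} of Definition~\ref{def:RT}.

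First I would verify Conditions~\ref{lem:TranCCPro1} and~\ref{lem:TranCCPro2}. Each $\{u^*_j,v\}\in \mathsf{CC}\subseteq E(\overline{G})$ has $v\in \mathsf{N}_{G^*}(u^*)\subseteq \vc$ (or both endpoints in $\vc$, in which case the edge is unchanged). Since all vertices of the equivalence class $u^*$ have, by definition, the same neighborhood in $G$, replacing $u^*_j$ by any $u\in u^*$ yields an edge $\{u,v\}\in E(G)$. Because the construction processes one edge of $\gr(\mathsf{CC})$ at a time and outputs exactly one edge of $E(G)$, we obtain a multiset with elements in $E(G)$ of cardinality $|\mathsf{CC}|$. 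For Condition~\ref{lem:TranCCPro3}, the construction only modifies endpoints that are $u^*_j$ vertices, and every replacement vertex $\mathsf{ILPDerSubAlloc}(\mathsf{VerTyp},\W,(i,t))$ lies in $u^*\subseteq V(G)\setminus \vc$ by Definition~\ref{def:DVAILP}; hence no element of $\vc$ is added or removed, so $V(\gr(\mathsf{CC}))\cap\vc=V(\gr(\mathsf{ILPDerCCTransf}(i)))\cap\vc$.

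For Condition~\ref{lem:TranCCPro4}, I would argue as follows. Formally, there is a surjection $\pi:V(\gr(\mathsf{CC}))\to V(\gr(\mathsf{ILPDerCCTransf}(i)))$ that is the identity on $\vc$ and sends each $u^*_j$ to its allocated image in $u^*$, and $\gr(\mathsf{ILPDerCCTransf}(i))$ is obtained from $\gr(\mathsf{CC})$ by identifying vertices that have the same image under $\pi$. Identification of vertices never disconnects a multigraph (any walk between two vertices in $\gr(\mathsf{CC})$ projects to a walk between their images), so by Condition~\ref{def:RT2} of Definition~\ref{def:RT} the graph $\gr(\mathsf{ILPDerCCTransf}(i))$ is connected. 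For the degree condition, each vertex $w$ of $\gr(\mathsf{ILPDerCCTransf}(i))$ satisfies $\deg(w)=\sum_{x\in\pi^{-1}(w)}\deg_{\gr(\mathsf{CC})}(x)$, which is a sum of even integers and hence even. Finally, since $\vi\in \vc$ and $\vi\in V(\gr(\mathsf{CC}))$ by Condition~\ref{def:RT2}, the identity map on $\vc$ gives $\vi\in V(\gr(\mathsf{ILPDerCCTransf}(i)))$.

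The only mildly delicate point is making sure the ``identification'' viewpoint is justified — that is, checking that $\pi$ is well defined on every $u^*_j\in V(\gr(\mathsf{CC}))$. This follows because, by Definition~\ref{def:RT}, $\mathsf{Alloc}_{\gr(\mathsf{CC})}$ assigns a vertex type $\mathsf{VerTyp}$ to every pair $(u^*_j,\widehat{\mathsf{N}}_{\gr(\mathsf{CC})}(u^*_j))\in\mathsf{NeiOfInd}(\gr(\mathsf{CC}))$, and then $\mathsf{Permut}_{u^*,\W,\mathsf{VerTyp}}$ together with $\mathsf{ILPDerSubAlloc}$ pick out a vertex of $u^*$ for it; no other $u^*_j$ is left unmapped. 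Once this is in place, Conditions~\ref{lem:TranCCPro1}--\ref{lem:TranCCPro4} follow as above.
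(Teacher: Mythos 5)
Your proposal is correct and follows essentially the same route as the paper's proof: the key step in both is tracing the allocation chain ($\mathsf{Alloc}_{\gr(\mathsf{CC})}$, $\mathsf{RobTypS}(\mathsf{VerTyp},\W,r)$, $\mathsf{ILPDerSubToAlloc}$, $\mathsf{ILPDerSubAlloc}$) to confirm that every $u^*_j\in V(\gr(\mathsf{CC}))$ is replaced by some $u\in u^*$ while $\vc$-vertices (including $\vi$) are untouched, after which the edge count, the $\vc$-intersection, connectivity, and evenness of degrees (as a sum of even degrees over the preimage) follow exactly as in the paper. Your quotient-map phrasing via $\pi$ is just a compact repackaging of the paper's argument, not a different approach.
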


  \begin{proof}
  We prove that $\mathsf{ILPDerCCTransf}(i)$ is a multiset with elements from $E(G)$. By Condition~\ref{def:RT1} of Definition~\ref{def:RT}, $\mathsf{CC}\subseteq E(\overline{G})$. Therefore, it is enough to show that each $u_j^*\in V(\gr(\mathsf{CC}))$ is replaced by some $u\in u^*$. Let $u_j^*\in V(\gr(\mathsf{CC}))$. By Definition~\ref{def:SOP}, $(u^*_i,\widehat{\mathsf{N}}_{\gr(\mathsf{CC})}(u^*_i))\in \mathsf{NeiOfInd}(\gr(\mathsf{CC}))$. Since $\mathsf{Alloc}_{\gr(\mathsf{CC})}$ is a vertex allocation of $\mathsf{NeiOfInd}(\gr(\mathsf{CC}))$, by Definition~\ref{def:SOPP}, there exists $\mathsf{VerTyp}=(u^*,\U)\in \mathsf{VerTypS}$ such that $\mathsf{Alloc}_{\gr(\mathsf{CC})}((u^*_j,\W))=\mathsf{VerTyp}$, where $\W=\widehat{\mathsf{N}}_{\gr(\mathsf{CC})}(u^*_i)\in \U$. Thus, there exists $1\leq r\leq 2^{|\vc|}+|\vc|^2$ such that $\mathsf{RobTyp}\in \mathsf{RobTypS}(\mathsf{VerTyp},\W,r)$. 
  
  Now, let $\mathsf{SetOfSub}(u^*,\W,\mathsf{VerTyp})=\{(u^*_j,\W)\in \mathsf{NeiOfInd}(\gr(\mathsf{CC})~|$\\$\mathsf{Alloc}_{\gr(\mathsf{CC})}$ $((u^*_j,\W))=\mathsf{VerTyp}\}$ and let $\mathsf{Permut}_{u^*,\W,\mathsf{VerTyp}}:\mathsf{SetOfSub}(u^*,\W,$ $\mathsf{VerTyp})\rightarrow [r]$ be the permutation defined in Definition~\ref{def:DTCCILP}. Let $t\in[r]$ be such that $\mathsf{Permut}_{u^*,\W,\mathsf{VerTyp}}(u^*_j,\W)=t$. Since $\mathsf{RobTyp}\in \mathsf{RobTypS}(\mathsf{VerTyp},\W,r)$, by Definition~\ref{def:DSILP}, $(i,1),\ldots,(i,r)\in\mathsf{ILPDerSubToAlloc}$ $(\mathsf{VerTyp},\W)$. By Definition~\ref{def:DTCCILP}, $u_j^*$ is replaced by $\mathsf{ILPDerSubAlloc}(\mathsf{VerTyp},\W,(i,t))$, and by Definition~\ref{def:DVAILP}, $\mathsf{ILPDerSubAlloc}($ $\mathsf{VerTyp},\W,(i,t))\in u^*$. Therefore, Condition~\ref{lem:TranCCPro1} holds.
  
  Observe that $|\mathsf{ILPDerCCTransf}(i)|=|\mathsf{CC}|$, so Condition~\ref{lem:TranCCPro2} holds. In addition, it is easy to see, by Definition~\ref{def:DTCCILP}, that only vertices from $V(G)\setminus \vc$ are replaced, so Condition~\ref{lem:TranCCPro3} holds.
  
  By Condition~\ref{def:RT2} of Definition~\ref{def:RT}, $\vi\in V(\gr(\mathsf{CC})$. Recall that we assume $\vi\in \vc$, so $\vi\in V(\gr(\mathsf{ILPDerCCTransf}(i)))$. In addition, by Condition~\ref{def:RT2} of Definition~\ref{def:RT}, $\gr(\mathsf{CC})$ is connected. Thus, observe that $\gr(\mathsf{ILPDerCCTransf}(i))$ is also connected. Now, we show that every vertex in $\gr(\mathsf{ILPDerCCTransf}(i))$ has even degree in it. By Condition~\ref{def:RT2} of Definition~\ref{def:RT}, every vertex in $\gr(\mathsf{CC})$ has even degree in it. Let $u\in V(\gr(\mathsf{ILPDerCCTransf}(i)))$. If $u\in \vc$, observe that its degree in\\ $\gr(\mathsf{ILPDerCCTransf}(i))$ is equal to its degree in $\gr(\mathsf{CC})$, so it is even. Otherwise $u\in u^*\in \ind$, and its degree in $\gr(\mathsf{ILPDerCCTransf}(i))$ equals to the sum of degrees of the vertices $u^*_j$ in $\gr(\mathsf{CC})$, which are replaced by $u$, by Definition~\ref{def:DTCCILP}. The degree of each such $u^*_j$ is even in $\gr(\mathsf{CC})$, so the degree of $u$ in $\gr(\mathsf{ILPDerCCTransf}(i))$ is also even. Therefore, Condition~\ref{lem:TranCCPro4} holds.
  \end{proof}

Now, we look at the vertex allocation of $\mathsf{ILPDerSubToAlloc}$, defined in Definition~\ref{def:DVAILP}, from the perspective of the cycles. In the next definition we execute a processes similar to Definition~\ref{def:DTCCILP}. Let $\mathsf{CycTyp}=(C,\mathsf{PaAlloc}_C,\mathsf{RobTyp})\in \mathsf{CycTypS}$ and let $1\leq i\leq x_\mathsf{CycTyp}$. In addition, let $\W=\{v,v'\}\in 2^{\mathsf{N}_{G^*}(u^*)\times \mathsf{2}}$, $\mathsf{VerTyp}\in \mathsf{VerTypS}$ and let $1\leq r\leq 2{|\vc|}$ such that $\mathsf{CycTyp}\in \mathsf{CycTypS}(\mathsf{VerTyp},\W,r)$. Notice that this implies $\mathsf{SetOfSub}(u^*,\W,\mathsf{VerTyp})=\{\{\{u^*,v\},\{u^*,v'\}\}\in \mathsf{EdgePairs}(C)~|~\mathsf{PaAlloc}_{C}=\mathsf{VerTyp}\}$ has exactly $r$ elements. So, there exists exactly $r$ allocations of $\W$ associated with the $i$-th cycle of $\mathsf{CycTyp}$ in $\mathsf{ILPDerSubToAlloc}(\mathsf{VerTyp},\W)$. In the following definition, we ``execute'' the allocations associated with the $i$-th cycle of $\mathsf{CycTyp}$, that is, we replace each $u^*\in V(C)$ by a vertex $u\in u^*$, derived by $\mathsf{ILPDerSubAlloc}$. Observe that the allocations of the $i$-th cycle of $\mathsf{CycTyp}$ in $\mathsf{ILPDerSubToAlloc}(\mathsf{VerTyp},\W)$ are labeled $1$ to $r$. That is, $(\mathsf{CycTyp},i,1),\ldots,(\mathsf{CycTyp},i,r)\in \mathsf{ILPDerSubToAlloc}(\mathsf{VerTyp},\W)$. So, first we arbitrary label each element in $\mathsf{SetOfSub}(u^*,\W,\mathsf{VerTyp})$ by some unique $t\in[r]$, and then we replace the vertex by $\mathsf{ILPDerSubAlloc}(\mathsf{VerTyp},\W,(\mathsf{CycTyp},i,t))$.

 \begin{definition} [{\bf Deriving a Transformation of a Cycle from an ILP Solution}]\label{def:DTCILP}
	Let $G$ be a connected graph, let $\vi\in V(G)$, let $k,B\in \mathbb{N}$ and let $\vc$ be a vertex cover of $G$. Let $x_z$, for every $z\in \mathsf{VerTypS}\cup \mathsf{RobTypS}\cup \mathsf{CycTypS}$, be values that satisfy the inequalities of $\mathsf{Reduction}(G,\vi,k,B)$. Let $\mathsf{CycTyp}=(C,\mathsf{PaAlloc}_C,\mathsf{RobTyp})\in \mathsf{CycTypS}$ and let $1\leq i\leq x_\mathsf{CycTyp}$. For every $\W=\{v,v'\}\in 2^{\mathsf{N}_{G^*}(u^*)\times \mathsf{2}}$, $\mathsf{VerTyp}\in \mathsf{VerTypS}$ and $1\leq r\leq 2{|\vc|}$ such that $\mathsf{CycTyp}\in \mathsf{CycTypS}(\mathsf{VerTyp},\W,r)$, let $\mathsf{SetOfSub}(u^*,\W,\mathsf{VerTyp})=\{\{\{u^*,v\},\{u^*,v'\}\}\in \mathsf{EdgePairs}(C)~|~\mathsf{PaAlloc}_{C}=\mathsf{VerTyp}\}$ and let $\mathsf{Permut}_{u^*,\W,\mathsf{VerTyp}}:\mathsf{SetOfSub}(u^*,\W,\mathsf{VerTyp})\rightarrow [r]$ be a permutation. Then, $\mathsf{ILPDerCycTransf}(\{x_z~|~z\in \mathsf{VerTypS}\cup \mathsf{RobTypS}\cup \mathsf{CycTypS}\},\mathsf{CycTyp},i)$ is the cycle obtained from $C$ by replacing each $u^*\in V(C)$ by $\mathsf{ILPDerSubAlloc}(\mathsf{VerTyp},\{v,v'\},(\mathsf{CycTyp},i,$ $\mathsf{Permut}_{u^*,\{v,v'\},\mathsf{VerTyp}}(u^*,\{\{u^*,v\},\{u^*,$\\$v'\}\})$, where $v$ and $v'$ are the two vertices that appear before and after $u^*$ in $C$, respectively. 
\end{definition}


 Whenever $\{x_z~|~z\in \mathsf{VerTypS}\cup \mathsf{RobTypS}\cup \mathsf{CycTypS}\}$ is clear from context, we refer to $\mathsf{ILPDerCycTransf}(\{x_z~|~z\in \mathsf{VerTypS}\cup \mathsf{RobTypS}\cup \mathsf{CycTypS}\},\mathsf{CycTyp},i)$ as $\mathsf{ILPDerCycTransf}($ $\mathsf{CycTyp},i)$.
 
 Let $\mathsf{CycTyp}=(C,\mathsf{PaAlloc}_C,\mathsf{RobTyp})\in \mathsf{CycTypS}$ and let $1\leq i\leq x_\mathsf{CycTyp}$. In the next lemma, we show that $\mathsf{ILPDerCycTransf}(\mathsf{CycTyp},i)$ has some properties that will be useful later.
 
 \begin{lemma}\label{lem:TranCPro}
 	Let $G$ be a connected graph, let $\vi\in V(G)$, let $k,B\in \mathbb{N}$ and let $\vc$ be a vertex cover of $G$. Let $x_z$, for every $z\in \mathsf{VerTypS}\cup \mathsf{RobTypS}\cup \mathsf{CycTypS}$, be values that satisfy the inequalities of $\mathsf{Reduction}(G,\vi,k,B)$. Let $\mathsf{CycTyp}=(C,\mathsf{PaAlloc}_C,\mathsf{RobTyp})\in \mathsf{CycTypS}$ and let $1\leq i\leq x_\mathsf{CycTyp}$. Then, the following conditions hold:
 	\begin{enumerate}
 		\item $\mathsf{ILPDerCycTransf}(\mathsf{CycTyp},i)$ is a cycle in $G$.\label{lem:TranCPro1}
 			\item $|C|=|\mathsf{ILPDerCycTransf}(\mathsf{CycTyp},i)|$.\label{lem:TranCPro2}
 		\item $V(C)\cap \vc=V(\mathsf{ILPDerCycTransf}(\mathsf{CycTyp},i))\cap \vc$.\label{lem:TranCPro3}
 	\end{enumerate}
 \end{lemma}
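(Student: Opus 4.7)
The plan is to mirror the proof of Lemma~\ref{lem:TranCCPro}, which handles the analogous transformation on skeletons. The argument for cycles is structurally similar, but we must additionally argue that consecutive edges of the transformed sequence remain edges of $G$, so that the result is truly a cycle in $G$ (rather than merely a multiset of edges).

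For Condition~\ref{lem:TranCPro1}, I would first verify that every $u^* \in V(C) \cap \ind$ is replaced by some $u \in u^*$. Fix such a $u^*$ and let $v, v'$ be the vertices appearing immediately before and after $u^*$ in $C$, so that $\{\{u^*,v\},\{u^*,v'\}\} \in \mathsf{EdgePairs}(C)$. Since $\mathsf{CycTyp} = (C,\mathsf{PaAlloc}_C,\mathsf{RobTyp})$ is a cycle type (Definition~\ref{def:CT}), $\mathsf{PaAlloc}_C$ is a vertex allocation of $\mathsf{EdgePairs}(C)$, so by Definition~\ref{def:PPOC} there is $\mathsf{VerTyp} = (u^*,\U) \in \mathsf{VerTypS}$ with $\mathsf{PaAlloc}_C(\{\{u^*,v\},\{u^*,v'\}\}) = \mathsf{VerTyp}$ and $\{v,v'\} \in \U$. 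Consequently, there exists $1 \le r \le 2|\vc|$ such that $\mathsf{CycTyp} \in \mathsf{CycTypS}(\mathsf{VerTyp},\{v,v'\},r)$, and then by Definition~\ref{def:DSILP}, the triple $(\mathsf{CycTyp},i,t) \in \mathsf{ILPDerSubToAlloc}(\mathsf{VerTyp},\{v,v'\})$ for $t = \mathsf{Permut}_{u^*,\{v,v'\},\mathsf{VerTyp}}(\{\{u^*,v\},\{u^*,v'\}\})$. By Definition~\ref{def:DVAILP}, $\mathsf{ILPDerSubAlloc}(\mathsf{VerTyp},\{v,v'\},(\mathsf{CycTyp},i,t)) \in u^*$, which is precisely the vertex chosen by Definition~\ref{def:DTCILP} to replace $u^*$.

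Next, I would argue that each consecutive pair of vertices in the transformed sequence is connected by an edge in $E(G)$. For every edge $\{u^*,w\}$ of $C$ (with $u^* \in \ind$ and $w \in \vc$), Definition~\ref{def:VCG} of $G^*$ gives that $\{u,w\} \in E(G)$ for every $u \in u^*$; for edges of $C$ with both endpoints in $\vc$, Definition~\ref{def:VCG} again ensures they already lie in $E(G)$. Since the transformation only substitutes $u^*$ by some $u \in u^*$ while preserving the cyclic order of vertices, the resulting sequence starts and ends at the same vertex and every consecutive pair is an edge of $G$. Hence it is a cycle in $G$, establishing Condition~\ref{lem:TranCPro1}.

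Conditions~\ref{lem:TranCPro2} and~\ref{lem:TranCPro3} are then immediate from Definition~\ref{def:DTCILP}. The transformation only renames vertices and does not change the length of the walk, yielding $|\mathsf{ILPDerCycTransf}(\mathsf{CycTyp},i)| = |C|$. Moreover, only vertices $u^* \in V(C) \cap \ind$ are replaced, and each replacement lies in $u^* \subseteq V(G) \setminus \vc$, while vertices in $V(C) \cap \vc$ are left untouched; therefore $V(C) \cap \vc = V(\mathsf{ILPDerCycTransf}(\mathsf{CycTyp},i)) \cap \vc$. The only nontrivial part of the proof is the bookkeeping needed to justify that $(\mathsf{CycTyp},i,t)$ lies in the domain of $\mathsf{ILPDerSubAlloc}$, which follows because $\mathsf{ILPDerSubToAlloc}$ is defined to include exactly such triples whenever $\mathsf{CycTyp} \in \mathsf{CycTypS}(\mathsf{VerTyp},\W,r)$ and $1 \le t \le r$; no use of Equation~\ref{ilp:3} is required here, as we are not asserting surjectivity onto $u^*$ but only that the chosen vertex lies in $u^*$.
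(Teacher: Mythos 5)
Your proposal is correct and takes essentially the same route as the paper's proof: the same chain through Definitions~\ref{def:CT}, \ref{def:PPOC}, \ref{def:DSILP}, \ref{def:DVAILP} and~\ref{def:DTCILP} to show that each $u^*\in V(C)$ is replaced by a vertex of $u^*$, with Conditions~\ref{lem:TranCPro2} and~\ref{lem:TranCPro3} then immediate from the definition of the transformation. The only difference is that you additionally spell out why adjacency is preserved under the substitution (every $u\in u^*$ has the same neighbourhood in $G$, by Definition~\ref{def:VCG}), a point the paper leaves implicit when it says it suffices to check that each $u^*$ is replaced by some $u\in u^*$.
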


\begin{proof}
We prove that $\mathsf{ILPDerCycTransf}(\mathsf{CycTyp},i)$ is a cycle in $G$. By Definition~\ref{def:CT}, $C\in \mathsf{Cyc}_{G^*}$, so it is enough to show that every vertex $u^*\in V(C)$ is replaced by some $u\in u^*$. Let $u^*\in V(C)$. Let $v$ and $v'$ be the two vertices that appear before and after $u^*$ in $C$, respectively, and let $\W=\{v,v'\}$.
By Definition~\ref{def:POC}, $\{\{u^*,v\},\{u^*,v'\}\}\in \mathsf{EdgePairs}(C)$. Since $\mathsf{PaAlloc}_C$ is a vertex allocation of $\mathsf{EdgePairs}(C)$, by Definition~\ref{def:PPOC}, there exists $\mathsf{VerTyp}=(u^*,\U)\in \mathsf{VerTypS}$ such that $\W\in\U$ and $\mathsf{PaAlloc}_C(\{\{u^*,v\},\{u^*,v'\}\})=\mathsf{VerTyp}$. Thus, there exists $1\leq r\leq |\vc|^2$ such that $\mathsf{CycTyp}\in \mathsf{CycTypS}(\mathsf{VerTyp},\W,r)$. 

Now, let $\mathsf{SetOfSub}(u^*,\W,\mathsf{VerTyp})=\{\{\{u^*,v\},\{u^*,v'\}\}\in \mathsf{EdgePairs}(C)~|~\mathsf{PaAlloc}_{C}$ $=\mathsf{VerTyp}\}$ and let $\mathsf{Permut}_{u^*,\W,\mathsf{VerTyp}}:\mathsf{SetOfSub}(u^*,\W,\mathsf{VerTyp})\rightarrow [r]$ be the permutation defined in Definition~\ref{def:DTCILP}. Let $t\in[r]$ be such that $\mathsf{Permut}_{u^*,\W,\mathsf{VerTyp}}(u^*_j,\W)=t$. 
Since $\mathsf{CycTyp}\in \mathsf{CycTypS}(\mathsf{VerTyp},\W,r)$, by Definition~\ref{def:DSILP}, $(\mathsf{CycTyp},i,1),\ldots,(\mathsf{CycTyp},i,$ $r)\in\mathsf{ILPDerSubToAlloc}$ $(\mathsf{VerTyp},\W)$. By Definition~\ref{def:DTCILP}, $u_j^*$ is replaced by\\ $\mathsf{ILPDerSubAlloc}(\mathsf{VerTyp},\W,(\mathsf{CycTyp},i,t))$, and by Definition~\ref{def:DVAILP},\\ $\mathsf{ILPDerSubAlloc}(\mathsf{VerTyp},\W,(\mathsf{CycTyp},i,t))\in u^*$. Therefore, Condition~\ref{lem:TranCPro1} holds.

  Observe that $|C|=|\mathsf{ILPDerCycTransf}(\mathsf{CycTyp},i)|$, so Condition~\ref{lem:TranCCPro2} holds. In addition, it is easy to see, by Definition~\ref{def:DTCILP}, that only vertices from $V(G)\setminus \vc$ are replaced, so Condition~\ref{lem:TranCPro3} holds.
\end{proof}

 Next, we allocate the cycles to the robots. We allocate cycles of type $\mathsf{CycTyp}=(C,\mathsf{PaAlloc}_C,\mathsf{RobTyp})\in \mathsf{CycTypS}$ to robots of type $\mathsf{RobTyp}$, while preserving the budget limitation of the robots:
  Each robot of type $\mathsf{RobTyp}=(\mathsf{CC},\mathsf{Alloc}_{\gr(\mathsf{CC})},\mathsf{NumOfCyc})\in \mathsf{RobTypS}$ gets exactly $N_j$ cycles of length $j$, for every $2\leq j\leq 2|\vc|$, $j\neq 4$, where $\mathsf{NumOfCyc}=(N_2,N_3,N_5,N_6,\ldots,N_{2|\vc|})$. Then, the cycles of type $\mathsf{CycTyp}\in \mathsf{CycTypS}(\mathsf{RobTyp},4)$ are allocated to robots of type $\mathsf{RobTyp}$ ``equally''. That is, the number of cycles of length $4$ allocated to a robot $i$ of type $\mathsf{RobTyp}$ is larger by at most $1$ than the number of cycles of length $4$ allocated to other robot $i'$ of type $\mathsf{RobTyp}$.

\begin{definition} [{\bf Deriving Robot Allocation of Cycles from an ILP Solution}]\label{def:DRAILP}
	Let $G$ be a connected graph, let $\vi\in V(G)$, let $k,B\in \mathbb{N}$ and let $\vc$ be a vertex cover of $G$. Let $x_z$, for every $z\in \mathsf{VerTypS}\cup \mathsf{RobTypS}\cup \mathsf{CycTypS}$, be values that satisfy the inequalities of $\mathsf{Reduction}(G,\vi,k,B)$. For every $\mathsf{CycTyp}=(C,\mathsf{PaAlloc}_C,\mathsf{RobTyp})\in \mathsf{CycTypS}$, let  $\mathsf{CycAlloc}_{\mathsf{CycTyp}}:[x_\mathsf{CycTyp}]\rightarrow [k]$ be a function such that the following conditions hold:
	\begin{enumerate}
		\item For every $i\in [x_\mathsf{CycTyp}], \mathsf{ILPDerRobTyp}(\mathsf{CycAlloc}_{\mathsf{CycTyp}}(i))=\mathsf{RobTyp}$. \label{def:DRAILPCon1}
		\item For every $\mathsf{RobTyp}=(\mathsf{CC},\mathsf{Alloc}_{\gr(\mathsf{CC})},\mathsf{NumOfCyc})\in \mathsf{RobTypS}$, $2\leq j\leq 2|\vc|$, $j\neq 4$, and $i\in [k]$ such that $\mathsf{ILPDerRobTyp}(i)=\mathsf{RobTyp}$, it holds that 
		
		\noindent$\displaystyle{\sum_{{\mathsf{CycTyp}\in \mathsf{CycTypS}(\mathsf{RobTyp},j)}}|\{t\in[x_\mathsf{CycTyp}]~|~\mathsf{CycTypFun}_{\mathsf{CycTyp}}(t)=i\}|=N_j}$,  
		
		\noindent where $\mathsf{NumOfCyc}=(N_2,N_3,N_5,N_6,\ldots,N_{2|\vc|})$.\label{def:DRAILPCon2}
		\item For every $i,i'\in [k]$ such that $\mathsf{ILPDerRobTyp}(i)=\mathsf{ILPDerRobTyp}(i')=\mathsf{RobTyp}$, it holds that
		
		 \noindent$\displaystyle{|\sum_{{\mathsf{CycTyp}\in \mathsf{CycTypS}(\mathsf{RobTyp},4)}}}$ $\displaystyle{|\{t\in[x_\mathsf{CycTyp}]~|~\mathsf{CycAlloc}_{\mathsf{CycTyp}}(t)=i\}|-}$ 
		 
		 $\displaystyle{\sum_{{\mathsf{CycTyp}\in \mathsf{CycTypS}(\mathsf{RobTyp},4)}}|\{t\in[x_\mathsf{CycTyp}]~|~\mathsf{CycAlloc}_{\mathsf{CycTyp}}(t)=i'\}||\leq 1}$.\label{def:DRAILPCon3}
	\end{enumerate}
Then, for every $\mathsf{CycTyp}\in\mathsf{CycTypS}$ and $i\in [x_\mathsf{CycTyp}]$, 

\noindent$\mathsf{ILPDerCycAlloc}(\{x_z~|~z\in \mathsf{VerTypS}\cup \mathsf{RobTypS}\cup \mathsf{CycTypS}\},\mathsf{CycTyp},i)=\mathsf{CycAlloc}_{\mathsf{CycTyp}}(i)$. 
\end{definition}

Whenever $\{x_z~|~z\in \mathsf{VerTypS}\cup \mathsf{RobTypS}\cup \mathsf{CycTypS}\}$ is clear from context, we refer to $\mathsf{ILPDerCycAlloc}(\{x_z~|~z\in \mathsf{VerTypS}\cup \mathsf{RobTypS}\cup \mathsf{CycTypS}\},\mathsf{CycTyp},i)$ as $\mathsf{ILPDerCycAlloc}($ $\mathsf{CycTyp},i)$.

Observe that we assume $x_z$, for every $z\in \mathsf{VerTypS}\cup \mathsf{RobTypS}\cup \mathsf{CycTypS}$, satisfy Equations~\ref{ilp:5} and~\ref{ilp:6}. So, for every $\mathsf{CycTyp}=(C,\mathsf{PaAlloc}_C,\mathsf{RobTyp})\in \mathsf{CycTypS}$ such that $x_\mathsf{CycTyp}\geq 1$, there exists $j\in[k]$ such that $\mathsf{ILPDerRobTyp}(j)=\mathsf{RobTyp}$. In addition, by Equations~\ref{ilp:5}, for every $\mathsf{RobTyp}=(\mathsf{CC},\mathsf{Alloc}_{\gr(\mathsf{CC})},\mathsf{NumOfCyc})\in \mathsf{RobTypS}$ and for every $2\leq j\leq 2|\vc|$, $j\neq 4$, $\displaystyle{\sum_{{\mathsf{CycTyp}\in \mathsf{CycTypS}(\mathsf{RobTyp},j)}}x_\mathsf{CycTyp}=N_j\cdot x_\mathsf{RobTyp}}$,

\noindent where $\mathsf{NumOfCyc}=(N_2,N_3,N_5,N_6,\ldots,N_{2|\vc|})$.
Thus, there exist some functions\\ $\mathsf{CycAlloc}_{\mathsf{CycTyp}}:[x_\mathsf{CycTyp}]\rightarrow [k]$, for every $\mathsf{CycTyp}\in \mathsf{CycTypS}$, as defined in Definition~\ref{def:DRAILP}. Therefore, $\mathsf{ILPDerCycAlloc}$ is well defined.

Lastly, we present the function $\mathsf{ILPToRobExp}$. This function gets $i\in[k]$ as input, and returns the multiset of edges, derived by the functions defined in this section, for the $i$-th robot. In particular, each robot gets the ``transformed'' $\mathsf{CC}$, where $\mathsf{ILPDerRobTyp}(i)=(\mathsf{CC},\mathsf{Alloc}_{\gr(\mathsf{CC})},\mathsf{NumOfCyc})$, and the ``transformed'' cycles, allocated to the $i$-th robot by Definition~\ref{def:DRAILP}.

\begin{definition} [{\bf $\mathsf{ILPToRobExp}$}]\label{def:ILPToREP}
	Let $G$ be a connected graph, let $\vi\in V(G)$, let $k,B\in \mathbb{N}$ and let $\vc$ be a vertex cover of $G$. Let $x_z$, for every $z\in \mathsf{VerTypS}\cup \mathsf{RobTypS}\cup \mathsf{CycTypS}$, be values that satisfy the inequalities of $\mathsf{Reduction}(G,\vi,k,B)$. 
Then, for every $1\leq i\leq k$, $\mathsf{ILPToRobExp}(\{x_z~|~z\in \mathsf{VerTypS}\cup \mathsf{RobTypS}\cup \mathsf{CycTypS}\},i)=\mathsf{ILPDerCCTransf}(i)\cup \{E(\mathsf{ILPDerCycTransf}(\mathsf{CycTyp},j))~|~\mathsf{CycTyp}\in\mathsf{CycTypS}, j\in[x_\mathsf{CycTyp}]$ such that $\mathsf{ILPDerCycAlloc}($ $\mathsf{CycTyp},j)=i\}$.
\end{definition}

Whenever $\{x_z~|~z\in \mathsf{VerTypS}\cup \mathsf{RobTypS}\cup \mathsf{CycTypS}\}$ is clear from context, we refer to $\mathsf{ILPToRobExp}(\{x_z~|~z\in \mathsf{VerTypS}\cup \mathsf{RobTypS}\cup \mathsf{CycTypS}\},i)$ as $\mathsf{ILPToRobExp}(i)$.

Towards the proof of Lemma~\ref{lem:fpt2}, we prove that the multisets defined in Definition~\ref{def:ILPToREP} satisfy the conditions of Lemma~\ref{obs:equivsol}:

\begin{lemma}\label{lem:ILPCon}
	Let $G$ be a connected graph, let $\vi\in V(G)$, let $k,B\in \mathbb{N}$ and let $\vc$ be a vertex cover of $G$. Let $x_z$, for every $z\in \mathsf{VerTypS}\cup \mathsf{RobTypS}\cup \mathsf{CycTypS}$, be values that satisfy the inequalities of $\mathsf{Reduction}(G,\vi,k,B)$. 
	Then, $\mathsf{ILPToRobExp}(1),\ldots, \mathsf{ILPToRobExp}(k)$ are multisets which satisfy the conditions of Lemma~\ref{obs:equivsol}.
\end{lemma}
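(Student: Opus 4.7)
The plan is to verify the four conditions of \cref{obs:equivsol} for $\widehat{E}_i := \mathsf{ILPToRobExp}(i)$ in turn, leaning heavily on the structural guarantees already established in \cref{lem:TranCCPro,lem:TranCPro} together with the six equations of $\mathsf{Reduction}(G,\vi,k,B)$.

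\textbf{Conditions~\ref{obs:equivsol1} and~\ref{obs:equivsol2} (initial vertex, connectivity, parity).} Fix $i\in[k]$ and let $\mathsf{ILPDerRobTyp}(i)=\mathsf{RobTyp}=(\mathsf{CC},\mathsf{Alloc}_{\gr(\mathsf{CC})},\mathsf{NumOfCyc})$. By \cref{lem:TranCCPro}(\ref{lem:TranCCPro4}), $\gr(\mathsf{ILPDerCCTransf}(i))$ is connected, contains $\vi$, and every vertex in it has even degree; thus $\vi\in V(\gr(\widehat{E}_i))$, giving \ref{obs:equivsol1}. For connectivity of $\gr(\widehat{E}_i)$, each additional cycle $\mathsf{ILPDerCycTransf}(\mathsf{CycTyp},j)$ placed into $\widehat{E}_i$ comes from a cycle type $\mathsf{CycTyp}=(C,\mathsf{PaAlloc}_C,\mathsf{RobTyp})$, so by \cref{def:CT} we have $V(\gr(\mathsf{CC}))\cap V(C)\cap\vc\neq\emptyset$; combining with Conditions~\ref{lem:TranCCPro3} of \cref{lem:TranCCPro} and~\ref{lem:TranCPro3} of \cref{lem:TranCPro} (replacements only touch $V(G)\setminus\vc$), this intersection remains nonempty between $\gr(\mathsf{ILPDerCCTransf}(i))$ and $\gr(\mathsf{ILPDerCycTransf}(\mathsf{CycTyp},j))$, so each glued cycle attaches to the already-connected core via a shared $\vc$-vertex. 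Each added cycle contributes $0$ or $2$ to the degree of every vertex, preserving the even-degree property from \cref{lem:TranCCPro}(\ref{lem:TranCCPro4}). Hence \ref{obs:equivsol2} holds.

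\textbf{Condition~\ref{obs:equivsol3} (edge coverage).} We split on the two cases. If $\{u,v\}\in E(G)$ has $u,v\in\vc$, then Equation~\ref{ilp:4} forces either $x_\mathsf{CycTyp}\geq 1$ for some $\mathsf{CycTyp}\in\mathsf{CycTypS}(\{u,v\})$ or $x_\mathsf{RobTyp}\geq 1$ for some $\mathsf{RobTyp}\in\mathsf{RobTypS}(\{u,v\})$. In the first case, $\{u,v\}\in E(\ind(C))$ and since both endpoints lie in $\vc$ neither is replaced by \cref{def:DTCILP}, so $\{u,v\}$ appears in the transformed cycle allocated to some robot; in the second case $\{u,v\}\in\mathsf{CC}$ for some robot, and by \cref{def:DTCCILP} it survives unchanged in $\mathsf{ILPDerCCTransf}(i)$ for that robot. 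If instead $\{u,v\}\in E(G)$ has $u\in u^*\in\ind$ and $v\in\vc$, let $\mathsf{VerTyp}=(u^*,\U)=\mathsf{ILPDerVerTyp}(u)$. By \cref{def:VT}, $\mathsf{N}_{G^*}(u^*)\subseteq\bigcup\U$, so some $\W\in\U$ contains $v$. By \cref{def:DVAILP} some element $\mathsf{Sub}\in\mathsf{ILPDerSubToAlloc}(\mathsf{VerTyp},\W)$ is allocated to $u$. Tracing back through \cref{def:DTCCILP} (if $\mathsf{Sub}=(i,t)$) or \cref{def:DTCILP} (if $\mathsf{Sub}=(\mathsf{CycTyp},i,t)$), the corresponding occurrence of $u^*_j$ (resp.\ $u^*$) that carries the allocation of $\W$ to $\mathsf{VerTyp}$ is replaced precisely by $u$; hence all edges $\{u,w\}$ with $w\in\W$, and in particular $\{u,v\}$, appear in $\widehat{E}_i$ for some robot.

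\textbf{Condition~\ref{obs:equivsol4} (budget).} Again fix $i$ with $\mathsf{ILPDerRobTyp}(i)=\mathsf{RobTyp}=(\mathsf{CC},\mathsf{Alloc}_{\gr(\mathsf{CC})},(N_2,N_3,N_5,\ldots,N_{2|\vc|}))$. By \cref{lem:TranCCPro}(\ref{lem:TranCCPro2}) we have $|\mathsf{ILPDerCCTransf}(i)|=|\mathsf{CC}|$, and by \cref{lem:TranCPro}(\ref{lem:TranCPro2}) each added cycle has the same length as its cycle-type cycle. Condition~\ref{def:DRAILPCon2} of \cref{def:DRAILP} ensures that robot $i$ receives exactly $N_j$ cycles of length $j$ for each $j\neq 4$, contributing $\sum_{j\neq 4}N_j\cdot j$ edges; thus the contribution of non-$4$-cycles plus $\mathsf{CC}$ equals $\mathsf{Bud}(\mathsf{RobTyp})$. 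For length-$4$ cycles, Condition~\ref{def:DRAILPCon3} of \cref{def:DRAILP} (equal distribution across the $x_\mathsf{RobTyp}$ robots of type $\mathsf{RobTyp}$) combined with Equation~\ref{ilp:6} bounds the number received by $i$ by $\lceil \mathsf{CycBud}(\mathsf{RobTyp})/4\rceil=\mathsf{CycBud}(\mathsf{RobTyp})/4$ (using that $\mathsf{CycBud}$ is a multiple of $4$ by definition), contributing at most $\mathsf{CycBud}(\mathsf{RobTyp})$ edges. Summing gives $|\widehat{E}_i|\leq\mathsf{Bud}(\mathsf{RobTyp})+\mathsf{CycBud}(\mathsf{RobTyp})\leq B$.

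\textbf{Main obstacle.} The routine pieces are Conditions~\ref{obs:equivsol1}--\ref{obs:equivsol2}, which follow quickly from \cref{lem:TranCCPro,lem:TranCPro} and the cycle-type definition. The delicate step is Condition~\ref{obs:equivsol3} for independent-set edges: one must carefully unwind four nested definitions (\cref{def:DVTILP,def:DSILP,def:DVAILP,def:DTCCILP,def:DTCILP}) to confirm that the \emph{specific} vertex $u$ named by $\mathsf{ILPDerSubAlloc}$ is exactly the vertex into which $u^*_j$ (or $u^*$) gets transformed, so that the multiset $\W$ of covered neighbours is literally realised as edges incident to $u$. The budget argument is the second subtlety, hinging on the integrality of $\mathsf{CycBud}(\mathsf{RobTyp})/4$ to convert the $\leq$-bound of Equation~\ref{ilp:6} into a per-robot bound through the balanced distribution in \cref{def:DRAILP}.
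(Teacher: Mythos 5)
Your proposal is correct and follows essentially the same route as the paper, which likewise splits the verification into Conditions~\ref{obs:equivsol1}--\ref{obs:equivsol2} via \cref{lem:TranCCPro,lem:TranCPro} and \cref{def:CT}, Condition~\ref{obs:equivsol3} via Equation~\ref{ilp:4} and the unwinding of \cref{def:DVAILP,def:DTCCILP,def:DTCILP}, and Condition~\ref{obs:equivsol4} via the balanced allocation of \cref{def:DRAILP} together with Equation~\ref{ilp:6} and the integrality of $\mathsf{CycBud}(\mathsf{RobTyp})/4$. The only cosmetic imprecision is the claim that each added cycle contributes ``$0$ or $2$'' to a vertex's degree -- a cycle may revisit a vertex, but each visit contributes $2$, so the parity argument stands unchanged.
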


For the sake of readability, we split Lemma~\ref{lem:ILPCon} and its proof into an observation and three lemmas. In Observation~\ref{obs:ilpCon} we prove that $\mathsf{ILPToRobExp}(i)$ is a multiset with elements from $E(G)$. In Lemmas~\ref{lem:ILPCon1} we prove that  Conditions~\ref{obs:equivsol1} and~\ref{obs:equivsol2} of Lemma~\ref{obs:equivsol} hold. In Lemmas~\ref{lem:ILPCon2} and~\ref{lem:ILPCon3} we prove that Conditions~\ref{obs:equivsol3} and~\ref{obs:equivsol4} of Lemma~\ref{obs:equivsol} hold, respectively.

\begin{observation}\label{obs:ilpCon}
Let $1\leq i\leq k$. Then, $\mathsf{ILPToRobExp}(i)$ is a multiset with elements from $E(G)$. 
\end{observation}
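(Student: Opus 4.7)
The plan is to unwind the definition of $\mathsf{ILPToRobExp}(i)$ and invoke the two preparatory lemmas that already certify the two ``building blocks'' of that definition. Recall from Definition~\ref{def:ILPToREP} that
\[
\mathsf{ILPToRobExp}(i)=\mathsf{ILPDerCCTransf}(i)\,\cup\,\bigcup\,\{E(\mathsf{ILPDerCycTransf}(\mathsf{CycTyp},j))\},
\]
where the union ranges over all $\mathsf{CycTyp}\in\mathsf{CycTypS}$ and $j\in[x_\mathsf{CycTyp}]$ with $\mathsf{ILPDerCycAlloc}(\mathsf{CycTyp},j)=i$. So it suffices to verify that each of the two kinds of summands is a multiset with elements from $E(G)$, because the (multiset) union of multisets with elements from $E(G)$ is again such a multiset.

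For the first summand, Condition~\ref{lem:TranCCPro1} of Lemma~\ref{lem:TranCCPro} states precisely that $\mathsf{ILPDerCCTransf}(i)$ is a multiset with elements from $E(G)$, so there is nothing further to check. For the second kind of summand, fix $\mathsf{CycTyp}\in\mathsf{CycTypS}$ with $x_\mathsf{CycTyp}\geq 1$ and some $j\in[x_\mathsf{CycTyp}]$. By Condition~\ref{lem:TranCPro1} of Lemma~\ref{lem:TranCPro}, $\mathsf{ILPDerCycTransf}(\mathsf{CycTyp},j)$ is a cycle in $G$; hence its edge multiset $E(\mathsf{ILPDerCycTransf}(\mathsf{CycTyp},j))$ is a multiset with elements from $E(G)$.

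Putting the two points together, $\mathsf{ILPToRobExp}(i)$ is obtained as a multiset union of multisets whose elements all lie in $E(G)$, so it is itself a multiset with elements from $E(G)$. No subtlety arises here beyond invoking the two lemmas; the only thing one should be slightly careful about is interpreting the ``$\cup$'' in Definition~\ref{def:ILPToREP} as multiset union (as is done consistently throughout the paper, e.g.\ in Condition~\ref{lemCC:con6} of Definition~\ref{def:ValPa}), which preserves the property of having elements only from $E(G)$.
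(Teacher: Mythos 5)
Your proposal is correct and follows essentially the same route as the paper: it unwinds Definition~\ref{def:ILPToREP} and cites Condition~\ref{lem:TranCCPro1} of Lemma~\ref{lem:TranCCPro} for $\mathsf{ILPDerCCTransf}(i)$ and Condition~\ref{lem:TranCPro1} of Lemma~\ref{lem:TranCPro} for the transformed cycles, exactly as the paper does. The only difference is that you make the (harmless) closing remark about multiset union explicit, which the paper leaves implicit.
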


\begin{proof}
By Condition~\ref{lem:TranCCPro1} of Definition~\ref{lem:TranCCPro}, $\mathsf{ILPDerCCTransf}(i)$ is a multiset with elements from $E(G)$.
In addition, for every $\mathsf{CycTyp}\in\mathsf{CycTypS}$ and $j\in [x_\mathsf{CycTyp}]$, by Condition~\ref{lem:TranCPro1} of Definition~\ref{lem:TranCPro}, $\mathsf{ILPDerCycTransf}(\mathsf{CycTyp},j)$ is a cycle in $G$, so, indeed, $E(\mathsf{ILPDerCycTransf}(\mathsf{CycTyp},$ $j))$ is a multiset with elements from $E(G)$. 
\end{proof}

\begin{lemma}\label{lem:ILPCon1}
	Let $G$ be a connected graph, let $\vi\in V(G)$, let $k,B\in \mathbb{N}$ and let $\vc$ be a vertex cover of $G$. Let $x_z$, for every $z\in \mathsf{VerTypS}\cup \mathsf{RobTypS}\cup \mathsf{CycTypS}$, be values that satisfy the inequalities of $\mathsf{Reduction}(G,\vi,k,B)$. 
	Then, $\mathsf{ILPToRobExp}(1),\ldots, \mathsf{ILPToRobExp}(k)$ are multisets that satisfy Conditions~\ref{obs:equivsol1} and~\ref{obs:equivsol2} of Lemma~\ref{obs:equivsol}.
\end{lemma}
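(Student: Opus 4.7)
The plan is to derive Lemma~\ref{lem:ILPCon1} directly from the structural properties already established in Lemmas~\ref{lem:TranCCPro} and~\ref{lem:TranCPro}, together with the definition of a cycle type (Definition~\ref{def:CT}). Fix $i\in[k]$, let $\mathsf{RobTyp}=(\mathsf{CC},\mathsf{Alloc}_{\gr(\mathsf{CC})},\mathsf{NumOfCyc})=\mathsf{ILPDerRobTyp}(i)$, and recall from Definition~\ref{def:ILPToREP} that $\mathsf{ILPToRobExp}(i)$ is the multiset union of $\mathsf{ILPDerCCTransf}(i)$ and the edge multisets of the transformed cycles $\mathsf{ILPDerCycTransf}(\mathsf{CycTyp},j)$ allocated to robot $i$. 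In particular $\mathsf{ILPDerCCTransf}(i)\subseteq \mathsf{ILPToRobExp}(i)$ as multisets, so $V(\gr(\mathsf{ILPDerCCTransf}(i)))\subseteq V(\gr(\mathsf{ILPToRobExp}(i)))$.

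Condition~\ref{obs:equivsol1} of Lemma~\ref{obs:equivsol} is then immediate: by Condition~\ref{lem:TranCCPro4} of Lemma~\ref{lem:TranCCPro} we have $\vi\in V(\gr(\mathsf{ILPDerCCTransf}(i)))$, and the inclusion above gives $\vi\in V(\gr(\mathsf{ILPToRobExp}(i)))$.

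For Condition~\ref{obs:equivsol2}, I would handle the even-degree requirement and connectivity separately. For even degrees, Condition~\ref{lem:TranCCPro4} of Lemma~\ref{lem:TranCCPro} already guarantees that every vertex in $\gr(\mathsf{ILPDerCCTransf}(i))$ has even degree. Each added cycle $C'=(w_0,\ldots,w_\ell=w_0)$ contributes to a vertex $w$ exactly twice the number of indices $0\le t\le \ell-1$ with $w_t=w$, which is an even number; summing over all added cycles preserves parity, so every vertex of $\gr(\mathsf{ILPToRobExp}(i))$ has even degree. For connectivity, I start from the fact that $\gr(\mathsf{ILPDerCCTransf}(i))$ is connected (again by Lemma~\ref{lem:TranCCPro}(4)), and then show inductively that adding each transformed cycle leaves the graph connected. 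Given a cycle $\mathsf{ILPDerCycTransf}(\mathsf{CycTyp},j)$ with $\mathsf{CycTyp}=(C,\mathsf{PaAlloc}_C,\mathsf{RobTyp})$ assigned to $i$ (via Definition~\ref{def:DRAILPCon1}), Definition~\ref{def:CT} yields some $v\in V(\gr(\mathsf{CC}))\cap V(C)\cap \vc$. Since $v\in\vc$, Condition~\ref{lem:TranCCPro3} of Lemma~\ref{lem:TranCCPro} gives $v\in V(\gr(\mathsf{ILPDerCCTransf}(i)))$, and Condition~\ref{lem:TranCPro3} of Lemma~\ref{lem:TranCPro} gives $v\in V(\mathsf{ILPDerCycTransf}(\mathsf{CycTyp},j))$. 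Thus every added cycle meets the existing connected graph in a common vertex $v$, preserving connectivity.

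There is no real obstacle here: the heavy lifting has already been done in Lemmas~\ref{lem:TranCCPro} and~\ref{lem:TranCPro}, and the vertex-cover invariance of those transformations is precisely what allows the connecting vertex $v\in\vc$ supplied by Definition~\ref{def:CT} to survive on both sides. The only point to be careful about is to phrase the connectivity argument as an induction over the added cycles (or equivalently, to observe that every added cycle is connected and shares a vertex with $\gr(\mathsf{ILPDerCCTransf}(i))$, which is itself connected), so that the union remains a single connected multigraph.
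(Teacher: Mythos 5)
Your proposal is correct and follows essentially the same route as the paper: Condition~\ref{obs:equivsol1} and even degrees via Condition~\ref{lem:TranCCPro4} of Lemma~\ref{lem:TranCCPro} plus the fact that the added objects are cycles (Lemma~\ref{lem:TranCPro}), and connectivity via the shared vertex-cover vertex guaranteed by Definition~\ref{def:CT} together with Condition~\ref{def:DRAILPCon1} of Definition~\ref{def:DRAILP} and Conditions~\ref{lem:TranCCPro3} and~\ref{lem:TranCPro3} of Lemmas~\ref{lem:TranCCPro} and~\ref{lem:TranCPro}. The only cosmetic difference is that you phrase connectivity as an induction over added cycles while the paper argues directly that every vertex outside $\gr(\mathsf{ILPDerCCTransf}(i))$ has a path into it; both rest on exactly the same facts.
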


\begin{proof}
Let $1\leq i\leq k$. By Condition~\ref{lem:TranCCPro4} of Lemma~\ref{lem:TranCCPro}, $\vi\in V(\gr(\mathsf{ILPDerCCTransf}(i)))$, so $\vi\in V(\gr(\mathsf{ILPToRobExp}(i))$. Therefore, Condition~\ref{obs:equivsol1} of Lemma~\ref{obs:equivsol} is satisfied.

Let $\mathsf{ILPDerRobTyp}(i)=\mathsf{RobTyp}=(\mathsf{CC},\mathsf{Alloc}_{\gr(\mathsf{CC})},\mathsf{NumOfCyc})$.
We prove that $\gr($ $\mathsf{ILPToRobExp}(i))$ is connected.  By Condition~\ref{lem:TranCCPro4} of Lemma~\ref{lem:TranCCPro}, $\gr$ $(\mathsf{ILPDerCCTransf}(i))$ is connected; so, it is enough to show that, for every $v\in V(\gr(\mathsf{ILPToRobExp}(i)))\setminus V(\gr(\mathsf{ILPDerCCTransf}(i)))$, there exists a path from $v$ to some\\ $u\in V(\gr(\mathsf{ILPDerCCTransf}(i)))$. Since $\mathsf{ILPToRobExp}(i)=\mathsf{ILPDerCCTransf}(i)\cup$\\$ \{E(\mathsf{ILPDerCycTransf}(\mathsf{CycTyp},j))~|~\mathsf{CycTyp}\in\mathsf{CycTypS}, j\in[x_\mathsf{CycTyp}]$ such that\\ $\mathsf{ILPDerCycAlloc}(\mathsf{CycTyp},j)=i\}$, there exists $\mathsf{CycTyp}=(C,\mathsf{PaAlloc}_C,\mathsf{RobTyp})\in\mathsf{CycTypS}, j\in[x_\mathsf{CycTyp}]$ such that $\mathsf{ILPDerCycAlloc}(\mathsf{CycTyp},j)=i$ and $v\in V(\mathsf{ILPDerCycTransf}(\mathsf{CycTyp},j))$. Now, since $\mathsf{ILPDerCycAlloc}(\mathsf{CycTyp},j)=i$, from Condition~\ref{def:DRAILPCon1} of Definition~\ref{def:DRAILP},\\ $\mathsf{ILPToRobExp}(i)=\mathsf{RobTyp}$. So, from Definition~\ref{def:CT}, $V(\gr(\mathsf{CC}))\cap V(C)\cap \vc\neq \emptyset$. 

Now, by Condition~\ref{lem:TranCPro3} of Lemma~\ref{lem:TranCPro}, $V(C)\cap \vc=V(\mathsf{ILPDerCycTransf}(\mathsf{CycTyp},j))\cap \vc$, and from Condition~\ref{lem:TranCCPro3} of Lemma~\ref{lem:TranCCPro}, $V(\gr(\mathsf{CC}))\cap \vc= V(\gr(\mathsf{ILPToRobExp}(i))\cap \vc$. This implies that $V(\gr(\mathsf{ILPToRobExp}(i))\cap \vc\cap V(\gr(\mathsf{CC}))\neq \emptyset$, so let $u\in V(\gr(\mathsf{ILPToRobExp}(i))\cap \vc\cap V(\gr(\mathsf{CC}))$. In addition, from Condition~\ref{lem:TranCPro1} of Lemma~\ref{lem:TranCPro}, $\mathsf{ILPDerCycTransf}$ is a cycle in $G$.\\ Thus, there exists a path from $v$ to some $u\in V(\gr(\mathsf{ILPDerCCTransf}(i)))$, and therefore, $\gr(\mathsf{ILPDerCCTransf}(i))$ is connected.

Now, we show that every $u\in V(\gr(\mathsf{ILPDerCCTransf}(i)))$ has even degree in $\gr($ $\mathsf{ILPDerCCTransf}(i))$.
From Condition~\ref{lem:TranCCPro4} of Lemma~\ref{lem:TranCCPro}, every vertex in\\ $\gr(\mathsf{ILPDerCCTransf}(i))$ has even degree, and from Condition~\ref{lem:TranCPro1} of Lemma~\ref{lem:TranCPro},\\ $\mathsf{ILPDerCycTransf}(\mathsf{CycTyp},i)$ is a cycle in $G$.\\ Therefore, every $u\in V(\gr(\mathsf{ILPDerCCTransf}(i)))$ has even degree in\\ $\gr(\mathsf{ILPDerCCTransf}(i))$. Thus, Condition~\ref{obs:equivsol2} of Lemma~\ref{obs:equivsol} is satisfied.
\end{proof}

\begin{lemma}\label{lem:ILPCon2}
	Let $G$ be a connected graph, let $\vi\in V(G)$, let $k,B\in \mathbb{N}$ and let $\vc$ be a vertex cover of $G$. Let $x_z$, for every $z\in \mathsf{VerTypS}\cup \mathsf{RobTypS}\cup \mathsf{CycTypS}$, be values that satisfy the inequalities of $\mathsf{Reduction}(G,\vi,k,B)$. 
	Then, $\mathsf{ILPToRobExp}(1),\ldots, \mathsf{ILPToRobExp}(k)$ are multisets that satisfy Condition~\ref{obs:equivsol3} of Lemma~\ref{obs:equivsol}.
\end{lemma}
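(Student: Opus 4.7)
The plan is to split the edges of $G$ into two classes according to their relationship to the vertex cover $\vc$, and for each class identify, via the ILP equations, a robot $i$ such that the edge lies in $\mathsf{ILPToRobExp}(i)$. Throughout, the key observation is that the transformations $\mathsf{ILPDerCCTransf}$ and $\mathsf{ILPDerCycTransf}$ replace only vertices from $V(G)\setminus \vc$ (Conditions~\ref{lem:TranCCPro3} and \ref{lem:TranCPro3}), so any edge incident only to $\vc$ is preserved verbatim, while an edge incident to $v\in v^*\in \ind$ is created from a $v^*$-endpoint exactly when the replacement function chooses $v$ for that occurrence.

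First, consider an edge $\{u,v\}\in E(G)$ with $u,v\in \vc$. By Equation~\ref{ilp:4}, either there exists $\mathsf{CycTyp}=(C,\mathsf{PaAlloc}_C,\mathsf{RobTyp})\in \mathsf{CycTypS}(\{u,v\})$ with $x_{\mathsf{CycTyp}}\geq 1$, or there exists $\mathsf{RobTyp}=(\mathsf{CC},\mathsf{Alloc}_{\gr(\mathsf{CC})},\mathsf{NumOfCyc})\in \mathsf{RobTypS}(\{u,v\})$ with $x_{\mathsf{RobTyp}}\geq 1$. In the first case, pick any $j\in[x_{\mathsf{CycTyp}}]$ and let $i=\mathsf{ILPDerCycAlloc}(\mathsf{CycTyp},j)$; since $\{u,v\}\in E(C)$ and both endpoints lie in $\vc$, the edge is preserved under the transformation, so $\{u,v\}\in E(\mathsf{ILPDerCycTransf}(\mathsf{CycTyp},j))\subseteq \mathsf{ILPToRobExp}(i)$. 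In the second case, by Equation~\ref{ilp:1} and Definition~\ref{def:DRTILP} pick $i\in[k]$ with $\mathsf{ILPDerRobTyp}(i)=\mathsf{RobTyp}$; then $\{u,v\}\in \mathsf{CC}$ and, being an edge of $\vc$, survives the transformation, so $\{u,v\}\in \mathsf{ILPDerCCTransf}(i)\subseteq \mathsf{ILPToRobExp}(i)$.

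Next, consider an edge $\{u,v\}\in E(G)$ with $u\in \vc$ and $v\in v^*\in \ind$, so that $u\in \mathsf{N}_{G^*}(v^*)$. Let $\mathsf{VerTyp}=(v^*,\U)=\mathsf{ILPDerVerTyp}(v)$. By Definition~\ref{def:VT}, $\mathsf{N}_{G^*}(v^*)\subseteq \bigcup \U$, so fix any $\W\in \U$ with $u\in \W$. By Definition~\ref{def:DVAILP}, $\mathsf{SubAlloc}_{\mathsf{VerTyp},\W}$ assigns to $v$ at least one element $\mathsf{Sub}\in \mathsf{ILPDerSubToAlloc}(\mathsf{VerTyp},\W)$. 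I would now split into two subcases according to whether $\mathsf{Sub}$ comes from a cycle type or a robot type (the two possibilities in Definition~\ref{def:DSILP}), and in each trace the edge through the definitions.

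In the cycle subcase, $\mathsf{Sub}=(\mathsf{CycTyp},j,t)$ with $\mathsf{CycTyp}=(C,\mathsf{PaAlloc}_C,\mathsf{RobTyp})\in \mathsf{CycTypS}(\mathsf{VerTyp},\W,r)$ for some $r\geq t$. By definition of $\mathsf{CycTypS}(\mathsf{VerTyp},\W,r)$, there are exactly $r$ pairs $\{\{v^*,v'\},\{v^*,v''\}\}\in \mathsf{EdgePairs}(C)$ with $\W=\{v',v''\}$ that are mapped by $\mathsf{PaAlloc}_C$ to $\mathsf{VerTyp}$; the permutation $\mathsf{Permut}_{v^*,\W,\mathsf{VerTyp}}$ in Definition~\ref{def:DTCILP} labels them by $[r]$, and the pair labeled $t$ has its $v^*$-occurrence replaced by $\mathsf{ILPDerSubAlloc}(\mathsf{VerTyp},\W,(\mathsf{CycTyp},j,t))=v$. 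Since $u\in \W=\{v',v''\}$, the edge $\{u,v\}$ appears in $E(\mathsf{ILPDerCycTransf}(\mathsf{CycTyp},j))$; letting $i=\mathsf{ILPDerCycAlloc}(\mathsf{CycTyp},j)$ gives $\{u,v\}\in \mathsf{ILPToRobExp}(i)$. In the robot subcase, $\mathsf{Sub}=(i,t)$ with $\mathsf{ILPDerRobTyp}(i)=\mathsf{RobTyp}=(\mathsf{CC},\mathsf{Alloc}_{\gr(\mathsf{CC})},\mathsf{NumOfCyc})\in \mathsf{RobTypS}(\mathsf{VerTyp},\W,r)$ for some $r\geq t$. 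Then there are exactly $r$ pairs $(v^*_s,\W)\in \mathsf{NeiOfInd}(\gr(\mathsf{CC}))$ with $\mathsf{Alloc}_{\gr(\mathsf{CC})}((v^*_s,\W))=\mathsf{VerTyp}$; the permutation in Definition~\ref{def:DTCCILP} labels them by $[r]$ and sends the one labeled $t$ to $v$, so by Definition~\ref{def:SOP} the neighbors of that occurrence in $\gr(\mathsf{CC})$ form the multiset $\W$, which contains $u$. Hence $\{u,v\}\in \mathsf{ILPDerCCTransf}(i)\subseteq \mathsf{ILPToRobExp}(i)$.

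The main obstacle is purely notational: there are four cascading layers of definitions ($\mathsf{ILPDerVerTyp}$, $\mathsf{ILPDerSubToAlloc}$, $\mathsf{ILPDerSubAlloc}$, and the two transformations), and one must verify that the index $(\mathsf{CycTyp},j,t)$ or $(i,t)$ chosen at the allocation level corresponds correctly to the occurrence of $v^*$ selected for replacement by $v$ under the permutation used in the transformation. Once the chain of identifications is set up, each step is immediate from the definition it references, so the proof reduces to carefully instantiating the permutations and invoking Conditions~\ref{lem:TranCCPro3} and~\ref{lem:TranCPro3} to guarantee $u\in \vc$ stays in place.
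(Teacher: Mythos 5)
Your proposal is correct and follows essentially the same two-case structure as the paper's proof: edges internal to $\vc$ are handled directly via Equation~\ref{ilp:4}, and edges meeting $V(G)\setminus\vc$ are traced through $\mathsf{ILPDerVerTyp}$, $\mathsf{ILPDerSubToAlloc}$, $\mathsf{ILPDerSubAlloc}$, and the cycle/robot transformation permutations. The only notable (and harmless) deviation is that you cite Definition~\ref{def:VT} for the containment $\mathsf{N}_{G^*}(v^*)\subseteq\bigcup\U$, which is in fact the more apt reference than the paper's citation of Definition~\ref{def:VTD} at the corresponding step.
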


\begin{proof}
We show that for every $\{u,v\}\in E(G)$, there exists $1\leq i\leq k$ such that $\{u,v\}\in \mathsf{ILPToRobExp}(i)$. Let $\{u,v\}\in E$. We have the following two cases:

\smallskip\noindent{\bf Case 1: $u,v\in \vc$.}
By Equation~\ref{ilp:4}, 

\noindent$\ds{\sum_{{\mathsf{CycTyp}\in \mathsf{CycTypS}(\{u,v\})}}x_\mathsf{CycTyp}+\sum_{{\mathsf{RobTyp}\in \mathsf{RobTypS}(\{u,v\})}}}$ $x_\mathsf{RobTyp}\geq 1$.\\ If $\ds{\sum_{{\mathsf{CycTyp}\in \mathsf{CycTypS}(\{u,v\})}}x_\mathsf{CycTyp}\geq 1}$, then there exists $\mathsf{CycTyp}\in \mathsf{CycTypS}$ such that $x_\mathsf{CycTyp}\geq 1$ and $\mathsf{CycTyp}\in \mathsf{CycTypS}(\{u,v\})$.\\ Therefore, there exists $1\leq i\leq k$ such that $\mathsf{ILPDerCycAlloc}(\mathsf{CycTyp},1)=i$. Thus, $\{u,v\}\in E(\mathsf{ILPDerCycTransf}(\mathsf{CycTyp},1))\subseteq \mathsf{ILPToRobExp}(i)$.

Otherwise, $\ds{\sum_{{\mathsf{CycTyp}\in \mathsf{CycTypS}(\{u,v\})}}x_\mathsf{CycTyp}=0}$, so $\ds{\sum_{{\mathsf{RobTyp}\in \mathsf{RobTypS}(\{u,v\})}}}$ $\ds{x_\mathsf{RobTyp}\geq 1}$. Therefore, there exists $1\leq i\leq k$ and $\mathsf{RobTyp}=(\mathsf{CC},\mathsf{Alloc}_{\gr(\mathsf{CC})},\mathsf{NumOfCyc})\in \mathsf{RobTypS}(\{u,v\})$, such that $\mathsf{ILPDerRobTyp}(i)=\mathsf{RobTyp}$. Thus, $\{u,v\}\in \mathsf{CC}$, so, $\{u,v\}\in \mathsf{ILPDerCCTransf}(i)\subseteq \mathsf{ILPToRobExp}(i)$, and therefore $\{u,v\}\in \mathsf{ILPToRobExp}(i)$.  

\smallskip\noindent{\bf Case 2: $u\in u^*\in\ind, v\in \vc$.} Therefore, there exists $\mathsf{VerTyp}=(u^*,\U)\in\mathsf{VerTypS}_{u^*}$ such that $\mathsf{ILPDerVerTyp}(u)=\mathsf{VerTyp}$. By Definition~\ref{def:VTD}, there exists $\W\in \U$ such that $v\in \W$. In addition, there exists $\mathsf{Sub}\in \mathsf{ILPDerSubToAlloc}(\mathsf{VerTyp},$ $\W)$ such that $\mathsf{ILPDerSubAlloc}(\mathsf{VerTyp},\W,\mathsf{Sub})=u$. We have the following two subcases:

\smallskip\noindent{\bf Case 2.1: $\mathsf{Sub}=(\mathsf{CycTyp},j,t)$ for some $\mathsf{CycTyp}=(C,\mathsf{PaAlloc}_C,\mathsf{RobTyp})\in \mathsf{CycTypS}$, where $\mathsf{CycTyp}\in \mathsf{CycTypS}(\mathsf{VerTyp},\W,r)$, $1\leq j\leq x_\mathsf{CycTyp}, 1\leq t\leq r$.} Observe that, in this case $\W=\{v,v'\}$ for some $v\in V(G)$, and $\{\{u^*,v\},\{u^*,v'\}\}\in \mathsf{EdgePairs}(C)$. So, let $\mathsf{SetOfSub}(u^*,\W,\mathsf{VerTyp})=\{\{\{u^*,v\},\{u^*,v'\}\}\in \mathsf{EdgePairs}(C)~|~\mathsf{PaAlloc}_{C}=\mathsf{VerTyp}\}$ and let $\mathsf{Permut}_{u^*,\W,\mathsf{VerTyp}}:\mathsf{SetOfSub}(u^*,\W,\mathsf{VerTyp})\rightarrow [|\mathsf{SetOfSub}(u^*,\W,$\\$\mathsf{VerTyp})|]$ be the permutation defined in Definition~\ref{def:DTCILP}. Notice that $|\mathsf{SetOfSub}(u^*,\W,$\\$\mathsf{VerTyp})|=r$. Thus, there exists $(\{u^*,v\},\{u^*,v'\})\in \mathsf{SetOfSub}(u^*,\W,\mathsf{VerTyp})$ such that $\mathsf{Permut}_{u^*,\W,\mathsf{VerTyp}}$ $((\{u^*,v\},\{u^*,v'\}))=t$. Then, by Definition~\ref{def:DTCILP}, there exists $u^*\in V(C)$, where $v$ and $v'$ are the vertices come before and after $u^*$ in $C$, respectively, that is replaced by $\mathsf{ILPDerSubAlloc}(\mathsf{VerTyp},\W,$ $(\mathsf{CycTyp},j,$ $\mathsf{Permut}_{u^*,\W,\mathsf{VerTyp}}(u^*,(\{u^*,v\},$ $\{u^*,v'\}))=\mathsf{ILPDerSubAlloc}(\mathsf{VerTyp},\W,$ $(\mathsf{CycTyp},j,t))=u$.\\ So, $\{u,v\}\in E(\mathsf{ILPDerCycTransf}(\mathsf{CycTyp},j))$. Now, by Definition~\ref{def:DRAILP}, there exists $1\leq i\leq k$ such that $\mathsf{ILPDerCycAlloc}(\mathsf{CycTyp},j)=i$. Thus, by Definition~\ref{def:ILPToREP},\\ $E(\mathsf{ILPDerCycTransf}(\mathsf{CycTyp},j))\subseteq \mathsf{ILPToRobExp}(i)$, and therefore, $\{u,v\}\in \mathsf{ILPToRobExp}(i)$.  

\smallskip\noindent{\bf Case 2.2: $\mathsf{Sub}=(i,t)$, where $\mathsf{ILPDerRobTyp}(i)\in \mathsf{RobTypS}(\mathsf{VerTyp},\W,r)$, $1\leq r\leq 2^{|\vc|}+|\vc|^2, 1\leq i\leq k$ and $1\leq t\leq r$.}  Let $\mathsf{SetOfSub}(u^*,\W,\mathsf{VerTyp})=\{(u^*_j,\W)\in \mathsf{NeiOfInd}(\gr(\mathsf{CC})~|~\mathsf{Alloc}_{\gr(\mathsf{CC})}$ $((u^*_j,\W))=\mathsf{VerTyp}\}$ and let $\mathsf{Permut}_{u^*,\W,\mathsf{VerTyp}}:\mathsf{SetOfSub}(u^*,\W,\mathsf{VerTyp})\rightarrow [|\mathsf{SetOfSub}(u^*,\W,\mathsf{VerTyp})|]$ be the permutation defined in Definition~\ref{def:DTCCILP}. Notice that $|\mathsf{SetOfSub}(u^*,\W,\mathsf{VerTyp})|=r$. Thus, there exists $(u^*_j,\W)\in \mathsf{SetOfSub}(u^*,\W,\mathsf{VerTyp})$ such that $\mathsf{Permut}_{u^*,\W',\mathsf{VerTyp}}(u^*_j,\W)=t$. Therefore, by Definition~\ref{def:DTCCILP}, $u^*_j$ is replaced by $\mathsf{ILPDerSubAlloc}(\mathsf{VerTyp},\W,$\\$(i,\mathsf{Permut}_{u^*,\W,\mathsf{VerTyp}}(u^*_j,\W))=\mathsf{ILPDerSubAlloc}(\mathsf{VerTyp},\W,(i,t))=u$. Observe that since $(u^*_j,\W)\in \mathsf{NeiOfInd}(\gr(\mathsf{CC})$, by Definition~\ref{def:SOP}, it follows that $\widehat{\mathsf{N}}_{\gr(\mathsf{CC})}($ $u^*_j)=\W$. So, $\{u,v\}\in \mathsf{ILPDerCCTransf}(i)$. Therefore, by Definition~\ref{def:ILPToREP},\\ $\mathsf{ILPDerCCTransf}(i)\subseteq \mathsf{ILPToRobExp}(i)$, and therefore, $\{u,v\}\in \mathsf{ILPToRobExp}(i)$.  

Therefore, we proved that for every $\{u,v\}\in E(G)$, there exists $1\leq i\leq k$ such that $\{u,v\}\in \mathsf{ILPToRobExp}(i)$, so, Condition~\ref{obs:equivsol3} of Lemma~\ref{obs:equivsol} is satisfied.
\end{proof}

\begin{lemma}\label{lem:ILPCon3}
	Let $G$ be a connected graph, let $\vi\in V(G)$, let $k,B\in \mathbb{N}$ and let $\vc$ be a vertex cover of $G$. Let $x_z$, for every $z\in \mathsf{VerTypS}\cup \mathsf{RobTypS}\cup \mathsf{CycTypS}$, be values that satisfy the inequalities of $\mathsf{Reduction}(G,\vi,k,B)$. 
	Then, $\mathsf{ILPToRobExp}(1),\ldots, \mathsf{ILPToRobExp}(k)$ are multisets that satisfy Condition~\ref{obs:equivsol4} of Lemma~\ref{obs:equivsol}.
\end{lemma}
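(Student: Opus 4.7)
Fix $i\in[k]$ and set $\mathsf{RobTyp}=\mathsf{ILPDerRobTyp}(i)=(\mathsf{CC},\mathsf{Alloc}_{\gr(\mathsf{CC})},\mathsf{NumOfCyc})$ with $\mathsf{NumOfCyc}=(N_2,N_3,N_5,\ldots,N_{2|\vc|})$. The plan is to bound $|\mathsf{ILPToRobExp}(i)|$ by splitting it into three contributions according to Definition~\ref{def:ILPToREP}: (i) the edges of $\mathsf{ILPDerCCTransf}(i)$, (ii) the edges of cycles allocated to $i$ of length $\neq 4$, and (iii) the edges of cycles allocated to $i$ of length exactly $4$. The first two contributions are completely determined by $\mathsf{RobTyp}$; the third is where the ``round-robin'' distribution of Definition~\ref{def:DRAILP} meets the budget inequality of Equation~\ref{ilp:6}.

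First I would handle (i) and (ii). By Condition~\ref{lem:TranCCPro2} of Lemma~\ref{lem:TranCCPro}, $|\mathsf{ILPDerCCTransf}(i)|=|\mathsf{CC}|$. For a cycle type $\mathsf{CycTyp}=(C,\mathsf{PaAlloc}_C,\mathsf{RobTyp}) \in \mathsf{CycTypS}(\mathsf{RobTyp},j)$ and $t\in[x_\mathsf{CycTyp}]$, Condition~\ref{lem:TranCPro2} of Lemma~\ref{lem:TranCPro} gives $|E(\mathsf{ILPDerCycTransf}(\mathsf{CycTyp},t))|=|C|=j$. By Condition~\ref{def:DRAILPCon2} of Definition~\ref{def:DRAILP}, for every $2\le j\le 2|\vc|$, $j\neq 4$, the number of cycles of length $j$ allocated to $i$ is exactly $N_j$. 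Hence the total contribution from (i) and (ii) equals $|\mathsf{CC}|+\sum_{2\le j\le 2|\vc|,\, j\neq 4} N_j\cdot j=\mathsf{Bud}(\mathsf{RobTyp})$.

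For the main step (iii), let $T=\sum_{\mathsf{CycTyp}\in \mathsf{CycTypS}(\mathsf{RobTyp},4)} x_\mathsf{CycTyp}$ be the total number of length-$4$ cycles assigned to the $x_\mathsf{RobTyp}$ robots of type $\mathsf{RobTyp}$, and let $c_i$ denote the number of such cycles allocated to robot $i$. By Condition~\ref{def:DRAILPCon3} of Definition~\ref{def:DRAILP}, any two values $c_i,c_{i'}$ (for robots of the same type) differ by at most $1$, which together with $\sum c_i=T$ forces $c_i\le\lceil T/x_\mathsf{RobTyp}\rceil$. By Equation~\ref{ilp:6}, $4T\le x_\mathsf{RobTyp}\cdot\mathsf{CycBud}(\mathsf{RobTyp})$, so $T/x_\mathsf{RobTyp}\le \mathsf{CycBud}(\mathsf{RobTyp})/4$. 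Since $\mathsf{CycBud}(\mathsf{RobTyp})=\lfloor(B-\mathsf{Bud}(\mathsf{RobTyp}))/4\rfloor\cdot 4$ is a multiple of $4$, the quantity $\mathsf{CycBud}(\mathsf{RobTyp})/4$ is an integer, and hence $\lceil T/x_\mathsf{RobTyp}\rceil\le \mathsf{CycBud}(\mathsf{RobTyp})/4$. Therefore the contribution from (iii) is at most $4c_i\le \mathsf{CycBud}(\mathsf{RobTyp})$.

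Combining the three bounds and using $\mathsf{CycBud}(\mathsf{RobTyp})\le B-\mathsf{Bud}(\mathsf{RobTyp})$ by definition,
\[
|\mathsf{ILPToRobExp}(i)|\;\le\;\mathsf{Bud}(\mathsf{RobTyp})+\mathsf{CycBud}(\mathsf{RobTyp})\;\le\;B,
\]
which yields Condition~\ref{obs:equivsol4} of Lemma~\ref{obs:equivsol}. The only subtlety I anticipate is justifying the ceiling step $\lceil T/x_\mathsf{RobTyp}\rceil\le\mathsf{CycBud}(\mathsf{RobTyp})/4$; this is precisely why $\mathsf{CycBud}$ is rounded down to a multiple of $4$ in its definition, making the round-robin allocation of length-$4$ cycles compatible with the integer budget.
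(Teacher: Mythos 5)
Your proof is correct and follows essentially the same route as the paper: both decompose $|\mathsf{ILPToRobExp}(i)|$ into the $\mathsf{CC}$-part (bounded via Lemma~\ref{lem:TranCCPro}), the cycles of length $\neq 4$ (counted exactly via Condition~\ref{def:DRAILPCon2} of Definition~\ref{def:DRAILP}, giving $\mathsf{Bud}(\mathsf{RobTyp})$ for the first two parts), and the length-$4$ cycles, whose per-robot count is bounded by $\lceil T/x_\mathsf{RobTyp}\rceil$ using Condition~\ref{def:DRAILPCon3} and then by $\mathsf{CycBud}(\mathsf{RobTyp})/4$ using Equation~\ref{ilp:6} and the fact that $\mathsf{CycBud}$ is a multiple of $4$. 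The subtlety you flagged — that the floor-to-a-multiple-of-$4$ in $\mathsf{CycBud}$ is exactly what makes the ceiling step go through — is indeed the key observation the paper's algebra is carrying out.
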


\begin{proof}
We show that for every $i\in [k]$, $|\mathsf{ILPToRobExp}(i)|\leq B$. Let $i\in [k]$ and let $\mathsf{ILPDerRobTyp}(i)=(\mathsf{CC},\mathsf{Alloc}_{\gr(\mathsf{CC})},\mathsf{NumOfCyc})$. By Definition~\ref{def:ILPToREP}, $\mathsf{ILPToRobExp}(i)=\mathsf{ILPDerCCTransf}(i)\cup \{E(\mathsf{ILPDerCycTransf}(\mathsf{CycTyp},j))~|~\mathsf{CycTyp}\in\mathsf{CycTypS}, j\in[x_\mathsf{CycTyp}]$ such that $\mathsf{ILPDerCycAlloc}(\mathsf{CycTyp},j)=i\}$. By Condition~\ref{lem:TranCCPro2} of Lemma~\ref{lem:TranCCPro}, $|\mathsf{ILPDerCCTransf}(i)|=|\mathsf{CC}|$ and by Condition~\ref{lem:TranCPro2} of Lemma~\ref{lem:TranCPro}, for every $\mathsf{CycTyp}=(C,\mathsf{PaAlloc}_C,\mathsf{RobTyp})\in \mathsf{CycTypS}$ and $j\in[x_\mathsf{CycTyp}]$, $|\mathsf{ILPDerCycTransf}(\mathsf{CycTyp},j)|=|C|$. Thus,

\noindent$\ds{|\mathsf{ILPToRobExp}(i)|=|\mathsf{CC}|+\sum_{(C,\mathsf{PaAlloc}_C,\mathsf{RobTyp})\in \mathsf{CycTypS},j\in[x_\mathsf{CycTyp}]}\sum_{\mathsf{ILPDerCycAlloc}(\mathsf{CycTyp},j)=i}|C|}$. 

Recall that, for every $\mathsf{RobTyp}\in \mathsf{RobTypS}$ and for every $2\leq j\leq 2|\vc|$, $\mathsf{CycTypS}(\mathsf{RobTyp},j)$ $=\{\mathsf{CycTyp}=(C,\mathsf{PaAlloc}_C,\mathsf{RobTyp})\in \mathsf{CycTypS}~|~|C|=j\}$. Now, by Definition~\ref{def:CT}, for every $(C,\mathsf{PaAlloc}_C,\mathsf{RobTyp})\in \mathsf{CycTypS}$, $|C|\leq 2|\vc|$. Moreover, from Condition~\ref{def:DRAILPCon1} of Definition~\ref{def:DRAILP}, for every $(C,\mathsf{PaAlloc}_C,\mathsf{RobTyp})\in \mathsf{CycTypS}$ and $j\in[x_\mathsf{CycTyp}]$ such that $\mathsf{ILPDerCycAlloc}(\mathsf{CycTyp},j)=i$, $\mathsf{ILPDerRobTyp}(i)=\mathsf{RobTyp}$. Therefore,

 \noindent$\ds{|\mathsf{ILPToRobExp}(i)|=|\mathsf{CC}|+}$\\
 	$\ds{\sum_{2\leq j\leq 2|\vc|}\sum_{\mathsf{CycTyp}\in \mathsf{CycTypS}(\mathsf{RobTyp},j)}}$ $\ds{|\{1\leq t\leq x_\mathsf{CycTyp}~|~\mathsf{ILPDerCycAlloc}(\mathsf{CycTyp},t)=i\}|\cdot j}$.\\ So, $\ds{|\mathsf{ILPToRobExp}(i)|=|\mathsf{CC}|+}$
 
 \noindent$\ds{\sum_{2\leq j\leq 2|\vc|,j\neq4}\sum_{\mathsf{CycTyp}\in \mathsf{CycTypS}(\mathsf{RobTyp},j)}|\{1\leq t\leq x_\mathsf{CycTyp}~|~\mathsf{ILPDerCycAlloc}(\mathsf{CycTyp},t)=i\}|\cdot j+}$
 	
 	\noindent$\ds{\sum_{\mathsf{CycTyp}\in \mathsf{CycTypS}(\mathsf{RobTyp},4)}|\{1\leq t\leq x_\mathsf{CycTyp}~|~\mathsf{ILPDerCycAlloc}(\mathsf{CycTyp},t)=i\}|\cdot 4}$. 

Now, by Condition~\ref{def:DRAILPCon2} of Definition~\ref{def:DRAILP}, for every $2\leq j\leq 2|\vc|$, $j\neq 4$, 

\noindent$\ds{\sum_{{\mathsf{CycTyp}\in \mathsf{CycTypS}(\mathsf{RobTyp},j)}}|\{t\in[x_\mathsf{CycTyp}]~|~\mathsf{ILPDerCycAlloc}(\mathsf{CycTyp},t)=i\}|=N_j}$. Therefore, 

\noindent$\ds{|\mathsf{ILPToRobExp}(i)|=|\mathsf{CC}|+\sum_{2\leq j\leq 2|\vc|,j\neq4}N_j\cdot j +}$ 

\noindent$\ds{\sum_{\mathsf{CycTyp}\in \mathsf{CycTypS}(\mathsf{RobTyp},4)}|\{1\leq t\leq x_\mathsf{CycTyp}~|~\mathsf{ILPDerCycAlloc}(\mathsf{CycTyp},t)=i\}|\cdot 4}$. Recall that

\noindent$\ds{\mathsf{Bud}(\mathsf{RobTyp})=|\mathsf{CC}|+\sum_{2\leq j\leq 2|\vc|,j\neq 4}N_j\cdot j}$, so

\noindent $\ds{|\mathsf{ILPToRobExp}(i)|=\mathsf{Bud}(\mathsf{RobTyp})+}$\\ $\ds{\sum_{\mathsf{CycTyp}\in \mathsf{CycTypS}(\mathsf{RobTyp},4)}|\{1\leq t\leq x_\mathsf{CycTyp}~|~\mathsf{ILPDerCycAlloc}(\mathsf{CycTyp},t)=i\}|\cdot 4}$. 

Now, by Condition~\ref{def:DRAILPCon3} of Definition~\ref{def:DRAILP}, for every $a,b\in [k]$ such that $\mathsf{ILPDerRobTyp}(a)=\mathsf{ILPDerRobTyp}(b)=\mathsf{RobTyp}$, 

\noindent$\ds{|\sum_{{\mathsf{CycTyp}\in \mathsf{CycTypS}(\mathsf{RobTyp},4)}}}$ $\ds{|\{t\in[x_\mathsf{CycTyp}]~|~\mathsf{ILPDerCycAlloc}(\mathsf{CycTyp},t)=a\}|-}$
	
	$\ds{\sum_{{\mathsf{CycTyp}\in \mathsf{CycTypS}(\mathsf{RobTyp},4)}}|\{t\in[x_\mathsf{CycTyp}]~|~\mathsf{ILPDerCycAlloc}(\mathsf{CycTyp},t)=b\}||\leq 1}$. This implies that $\ds{\sum_{\mathsf{CycTyp}\in \mathsf{CycTypS}(\mathsf{RobTyp},4)}|\{1\leq t\leq x_\mathsf{CycTyp}~|~\mathsf{ILPDerCycAlloc}(\mathsf{CycTyp},t)=i\}|\leq}$ 
	
	\noindent$\ds{\lceil\frac{\sum_{\mathsf{CycTyp}\in \mathsf{CycTypS}(\mathsf{RobTyp},4)}x_{\mathsf{CycTyp}}}{|\{r\in [k]~|~\mathsf{ILPDerRobTyp}(r)=\mathsf{RobTyp}\}|}\rceil}$. By Definition~\ref{def:DRTILP}, 
	
	\noindent$|\{r\in [k]~|~\mathsf{ILPDerRobTyp}(r)=\mathsf{RobTyp}\}|=x_\mathsf{RobTyp}$, 
	
	\noindent so $\ds{\sum_{\mathsf{CycTyp}\in \mathsf{CycTypS}(\mathsf{RobTyp},4)}|\{1\leq t\leq x_\mathsf{CycTyp}~|~}$ $\ds{\mathsf{ILPDerCycAlloc}(\mathsf{CycTyp},t)=i\}|\leq}$\\ 
	$\ds{\lceil\frac{\sum_{\mathsf{CycTyp}\in \mathsf{CycTypS}(\mathsf{RobTyp},4)}x_{\mathsf{CycTyp}}}{x_\mathsf{RobTyp}}\rceil}$. Therefore, $\ds{|\mathsf{ILPToRobExp}(i)|\leq \mathsf{Bud}(\mathsf{RobTyp})+}$\\ $\ds{\lceil\frac{\sum_{\mathsf{CycTyp}\in \mathsf{CycTypS}(\mathsf{RobTyp},4)}x_{\mathsf{CycTyp}}}{x_\mathsf{RobTyp}}\rceil\cdot 4}$. 

Now, by Equation~\ref{ilp:6}, $\ds{\sum_{{\mathsf{CycTyp}\in \mathsf{CycTypS}(\mathsf{RobTyp},4)}}4\cdot x_\mathsf{CycTyp}\leq x_\mathsf{RobTyp}\cdot \mathsf{CycBud}(\mathsf{RobTyp})}$. Recall that $\mathsf{CycBud}(\mathsf{RobTyp})=\lfloor(B-\mathsf{Bud}(\mathsf{RobTyp}))\cdot \frac{1}{4}\rfloor\cdot 4$. \\So, $\ds{|\mathsf{ILPToRobExp}(i)|\leq \mathsf{Bud}(\mathsf{RobTyp})+\lceil\frac{\frac{1}{4}\cdot x_\mathsf{RobTyp}\cdot \lfloor(B-\mathsf{Bud}(\mathsf{RobTyp}))\cdot \frac{1}{4}\rfloor\cdot 4}{x_\mathsf{RobTyp}}\rceil\cdot 4=}$
 	
 	\noindent$\ds{\mathsf{Bud}(\mathsf{RobTyp})+\lceil\lfloor(B-\mathsf{Bud}(\mathsf{RobTyp}))\cdot \frac{1}{4}\rfloor\rceil\cdot 4=}$\\ $\ds{\mathsf{Bud}(\mathsf{RobTyp})+\lfloor(B-\mathsf{Bud}(\mathsf{RobTyp}))\cdot \frac{1}{4}\rfloor\cdot 4}$. Thus, $|\mathsf{ILPToRobExp}(i)|\leq \mathsf{Bud}(\mathsf{RobTyp})+(B-\mathsf{Bud}(\mathsf{RobTyp}))\cdot \frac{1}{4}\cdot 4=B$. Therefore, Condition~\ref{obs:equivsol4} of Lemma~\ref{obs:equivsol} is satisfied.
\end{proof}

Observation~\ref{obs:ilpCon} and Lemmas~\ref{lem:ILPCon1}--\ref{lem:ILPCon3} conclude the correctness of Lemma~\ref{lem:ILPCon}.
Now, we invoke Lemma~\ref{lem:ILPCon} in order to prove Lemma~\ref{lem:fpt2}:

\begin{proof}
Assume that $\mathsf{Reduction}(G,\vi,k,B)$ is a yes-instance of the {\sc Integer Linear Programming}. Let $x_z$, for every $z\in \mathsf{VerTypS}\cup \mathsf{RobTypS}\cup \mathsf{CycTypS}$, be values that satisfy the inequalities of $\mathsf{Reduction}(G,\vi,k,B)$. By Lemma~\ref{lem:ILPCon}, $\mathsf{ILPToRobExp}(1),\ldots, \mathsf{ILPToRobExp}(k)$ are multisets which satisfy the conditions of Lemma~\ref{obs:equivsol}. Therefore, by Lemma~\ref{obs:equivsol}, $(G,\vi,k,B)$ is a yes-instance of the \cg problem.
\end{proof}


Lemmas~\ref{lem:fpt1} and~\ref{lem:fpt2} conclude the correctness of Lemma~\ref{lem:fpt}. 

\subsection{Runtime Analysis}\label{sec:runtime}

In this section, we analyze the runtime of the reduction and also the size of the input for ILP. First, we give an upper bound for $|\mathsf{VerTypS}|$, $|\mathsf{RobTypS}|$ and $|\mathsf{CycTypS}|$, and analyze the runtime of computing these sets.


\begin{lemma}\label{lem:fptr1}
	Let $G$ be a connected graph, let $\vi\in V(G)$ and let $k,B\in \mathbb{N}$. Then:
	\begin{enumerate}
		\item $|\mathsf{VerTypS}|\leq 2^{2^{\OO(|\vc|)}}$.
		\item $|\mathsf{RobTypS}|\leq 2^{2^{\OO(|\vc|)}}$.
		\item $|\mathsf{CycTypS}|\leq 2^{2^{\OO(|\vc|)}}$.
	\end{enumerate}
In addition, creating $\mathsf{VerTypS}$, $\mathsf{RobTypS}$ and $\mathsf{CycTypS}$ take $2^{2^{\OO(|\vc|)}}+\OO(|\vc|\cdot|V(G)|+|E(G)|)$ time.
\end{lemma}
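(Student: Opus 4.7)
The plan is to bound each of the three type sets by first bounding the ``ingredients'' from which each type is built, and then multiplying out. Throughout, I will use that the number of equivalence classes in $\ind$ is at most $2^{|\vc|}$ (each class is determined by its neighborhood in $\vc$, which is a subset of $\vc$), and that for every $u^*\in \ind$, $|\mathsf{N}_{G^*}(u^*)|\le |\vc|$.

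First I would handle $\mathsf{VerTypS}$. A vertex type is $(u^*,\mathsf{NeiSubsets})$ with $\mathsf{NeiSubsets}\subseteq 2^{\mathsf{N}_{G^*}(u^*)\times\mathsf{2}}$. The multiset $\mathsf{N}_{G^*}(u^*)\times\mathsf{2}$ has at most $|\vc|$ distinct elements, each with multiplicity $2$, so its number of sub-multisets is at most $3^{|\vc|}$. Hence the number of choices of $\mathsf{NeiSubsets}$ is at most $2^{3^{|\vc|}}=2^{2^{\OO(|\vc|)}}$, and multiplying by the $\le 2^{|\vc|}$ choices of $u^*$ gives $|\mathsf{VerTypS}|\le 2^{2^{\OO(|\vc|)}}$.

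Next, for $\mathsf{RobTypS}$, I would first bound $|E(\overline{G})|$. Since $\mathsf{NumVer}(u^*)\le 2^{|\vc|}+|\vc|^2$, there are $|V(\overline{G})|\le |\vc|+2^{|\vc|}(2^{|\vc|}+|\vc|^2)=2^{\OO(|\vc|)}$ vertices, and every edge of $\overline{G}$ has at least one endpoint in $\vc$ (with multiplicity $2$), so $|E(\overline{G})|\le 2^{\OO(|\vc|)}$. Thus the number of candidate skeletons $\mathsf{CC}\subseteq E(\overline{G})$ is at most $2^{2^{\OO(|\vc|)}}$. Given $\mathsf{CC}$, the allocation $\mathsf{Alloc}_{\gr(\mathsf{CC})}$ is a function from $\mathsf{NeiOfInd}(\gr(\mathsf{CC}))$ (of size at most $|V(\overline{G})|=2^{\OO(|\vc|)}$) to $\mathsf{VerTypS}$, so there are at most $\bigl(2^{2^{\OO(|\vc|)}}\bigr)^{2^{\OO(|\vc|)}}=2^{2^{\OO(|\vc|)}}$ such functions. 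Finally, $\mathsf{NumOfCyc}$ is a vector of length $2|\vc|-1$ with entries in $\{0,\dots,2|\vc|^2\}$, contributing only $2^{\OO(|\vc|\log|\vc|)}$ choices. Multiplying all three factors gives $|\mathsf{RobTypS}|\le 2^{2^{\OO(|\vc|)}}$.

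For $\mathsf{CycTypS}$, I would bound the number of cycles $C\in\mathsf{Cyc}_{G^*}$ by $|V(G^*)|^{2|\vc|}\le(2^{|\vc|}+|\vc|)^{2|\vc|}=2^{\OO(|\vc|^2)}$. Given such $C$, the pair allocation $\mathsf{PaAlloc}_C$ maps at most $2|\vc|$ pairs into $\mathsf{VerTypS}$, giving $\bigl(2^{2^{\OO(|\vc|)}}\bigr)^{2|\vc|}=2^{2^{\OO(|\vc|)}}$ choices, and there are at most $2^{2^{\OO(|\vc|)}}$ choices of $\mathsf{RobTyp}$. Multiplying again yields $|\mathsf{CycTypS}|\le 2^{2^{\OO(|\vc|)}}$.

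For the runtime, I would first spend $\OO(|V(G)|+|E(G)|)$ to read the graph and $\OO(|\vc|\cdot |V(G)|)$ to compute $\ind$ (by sorting, for each $u\in V(G)\setminus\vc$, its neighborhood in $\vc$, which has size at most $|\vc|$) and to build $G^*$ and $\overline{G}$. Then I enumerate each of the three type sets; each type has size $2^{\OO(|\vc|)}$ bits to describe, and the checks required (connectivity, even degrees, validity of the allocations) run in time polynomial in the description length, so the whole enumeration takes $2^{2^{\OO(|\vc|)}}$ time. Combining gives the claimed $2^{2^{\OO(|\vc|)}}+\OO(|\vc|\cdot|V(G)|+|E(G)|)$ bound. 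I do not expect any real obstacle here; the main care is just to keep double-exponential counts from exploding further, which is ensured by the dominance $a\cdot 2^{2^{b}}=2^{2^{\OO(b)}}$ whenever $a=2^{\OO(b)}$.
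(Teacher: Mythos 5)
Your proposal is correct and follows essentially the same route as the paper: bound the number of choices for each ingredient of a type (equivalence class and family of even sub-multisets for vertex types; skeleton, allocation function, and cycle-count vector for robot types; short cycle in $G^*$, pair allocation, and robot type for cycle types) and multiply, then enumerate the three sets with per-type checks of cost $2^{\OO(|\vc|)}$, plus an $\OO(|\vc|\cdot|V(G)|+|E(G)|)$ pass to group the vertices of $V(G)\setminus\vc$ by their neighborhoods. The only differences are cosmetic (e.g.\ your tighter $3^{|\vc|}$ count of sub-multisets versus the paper's $2^{2|\vc|}$, and bucketing neighborhoods via a $2^{|\vc|}$-indexed array rather than sorting, which is how the paper keeps the graph-dependent term linear in $|\vc|\cdot|V(G)|+|E(G)|$).
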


\begin{proof}
Recall that a vertex type is a pair $(u^*,\U)$ satisfies the conditions of Definition~\ref{def:VT}. In particular, for every vertex type $(u^*,\U)$, $u^*\in \ind$. We do not need to know $\ind$ explicitly, it is enough to determine, for every subset $U$ of $\vc$, if there exists $u\in \indd=V(G)\setminus \vc$ such that $\mathsf{N}_G(u)=U$. For every such $U$, there exists $u^*\in \ind$ such that every $u\in u^*$, $\mathsf{N}_G(u)=U$. In particular, for every subset $U$ of $\vc$, we count the number of vertices $u\in V(G)\setminus \vc$ (which is potentially zero) such that $\mathsf{N}_G(u)=U$ in $\OO(2^{|\vc|}+|\vc|\cdot|V(G)|+|E(G)|)$ time: we create an array of size $2^{|\vc|}$, where each cell corresponds to some subset $U$ of $\vc$. Initially, we put $0$ in each cell. For each vertex $u\in \indd$, we calculate $\mathsf{N}_G(u)$, and search for its corresponding cell, using binary search, in $|\vc|$ time. Then, we add one to this cell. 
In addition, for every $u^*\in \ind$, $|2^{\mathsf{N}_{G^*}(u^*)\times \mathsf{2}}|\leq 2^{2|\vc|}$, so $\U$ has at most $2^{2^{2|\vc|}}$ options. Thus, $|\mathsf{VerTypS}|\leq 2^{|\vc|}\cdot 2^{2^{2|\vc|}}=2^{2^{\OO(|\vc|)}}$, and creating $\mathsf{VerTypS}$ takes $2^{2^{\OO(|\vc|)}}+\OO(|\vc|\cdot|V(G)|+|E(G)|)$ time.

Second, recall that a robot type is a triple $(\mathsf{CC},\mathsf{Alloc}_{\gr(\mathsf{CC})},\mathsf{NumOfCyc})$ satisfies the conditions of Definition~\ref{def:RT}. Notice that $\overline{G}$ (see Definition~\ref{def:GbarGraph}), has at most $|\vc|+2^{|\vc|}\cdot(2^{|\vc|}+|\vc|^2)$ vertices and at most $2|\vc|^2+2|\vc|\cdot(2^{|\vc|}+|\vc|)=2^{\OO(|\vc|)}$ edges. Thus, the number of options for choosing $\mathsf{CC}\subset E(\overline{G})$ is bounded by $2^{2^{\OO(|\vc|)}}$. Now, for each such $\mathsf{CC}\subset E(\overline{G})$, $|\mathsf{NeiOfInd}(\gr(\mathsf{CC}))|\leq 2^{|\vc|}\cdot(2^{|\vc|}+|\vc|^2)=2^{\OO(|\vc|)}$ (see Definition~\ref{def:SOP}). Therefore, since $|\mathsf{VerTypS}|\leq 2^{2^{\OO(|\vc|)}}$, the number of options for $\mathsf{Alloc}_{\gr(\mathsf{CC})}$ is bounded by $(2^{2^{\OO(|\vc|)}})^{2^{\OO(|\vc|)}}=2^{2^{\OO(|\vc|)}\cdot {2^{\OO(|\vc|)}}}={2^{2^{\OO(|\vc|)}}}$ (see Definition~\ref{def:SOPP}). The number of options for $\mathsf{NumOfCyc}$ (see Definition~\ref{def:RT}) is bounded by $(2|\vc|^2)^{2|\vc|}=|\vc|^{\OO(|\vc|)}$. Thus, $|\mathsf{RobTypS}|\leq 2^{2^{\OO(|\vc|)}}\cdot 2^{2^{\OO(|\vc|)}}\cdot |\vc|^{\OO(|\vc|)}=2^{2^{\OO(|\vc|)}}$.In addition, creating $\mathsf{RobTypS}$ takes $2^{2^{\OO(|\vc|)}}$ time, once $\mathsf{VerTypS}$ is computed.

Next, recall that a cycle type is a triple $(C,\mathsf{PaAlloc}_C,\mathsf{RobTyp})$ satisfies the conditions of Definition~\ref{def:CT}. Observe that the number of vertices in $G^*$ is bounded by $|\vc|+2^{|\vc|}$, and the number of cycles of length at most $2|\vc|$ in $G^*$ is bounded by $\sum_{2\leq i\leq 2|\vc|}(|\vc|+2^{|\vc|})^i\leq 2|\vc|(|\vc|+2^{|\vc|})^{2|\vc|}=2^{\OO|\vc|^2}$. Now, for each $C\in \mathsf{Cyc}_{G^*}$, $|\mathsf{EdgePairs}(C)|\leq |\vc|$ (see Definition~\ref{def:POC}). Thus, the number of options for $\mathsf{PaAlloc}_C$ is bounded by $(2^{2^{\OO(|\vc|)}})^{|\vc|}=2^{2^{\OO(|\vc|)}}$ (see Definition~\ref{def:PPOC}). Therefore, $|\mathsf{CycTypS}|\leq 2^{\OO|\vc|^2}\cdot 2^{2^{\OO(|\vc|)}}\cdot 2^{2^{\OO(|\vc|)}}=2^{2^{\OO(|\vc|)}}$. In addition, creating $\mathsf{CycTypS}$ takes $2^{2^{\OO(|\vc|)}}$ time, once $\mathsf{VerTypS}$ and $\mathsf{RobTypS}$ are computed.
\end{proof}

In the next lemma, we analyze the runtime and the size of the reduction.

\begin{lemma}\label{lem:fptr2}
	Let $G$ be a connected graph, let $\vi\in V(G)$ and let $k,B\in \mathbb{N}$. Then, $\mathsf{Reduction}(G,\vi,k,B)$ runs in time $2^{2^{\OO(|\vc|)}}+\OO(|\vc|\cdot|V(G)|+|E(G)|)$, and returns an instance for the ILP problem of size at most $2^{2^{\OO(|\vc|)}}\cdot \log(k+B+|V(G)|)$ with $2^{2^{\OO(|\vc|)}}$ variables.
\end{lemma}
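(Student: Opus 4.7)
The plan is to directly read off the runtime and size bounds by combining Lemma~\ref{lem:fptr1} with a careful enumeration of how many inequalities $\mathsf{Reduction}(G,\vi,k,B)$ produces, how many terms appear in each, and how large the coefficients can be.

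First, I would count the variables. By construction there is exactly one variable for each element of $\mathsf{VerTypS}\cup \mathsf{RobTypS}\cup \mathsf{CycTypS}$, and Lemma~\ref{lem:fptr1} bounds each of these three sets by $2^{2^{\OO(|\vc|)}}$. Hence the number of variables is at most $3\cdot 2^{2^{\OO(|\vc|)}} = 2^{2^{\OO(|\vc|)}}$.

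Next, I would bound the number of inequalities. Equation~\ref{ilp:1} contributes one inequality. Equations~\ref{ilp:2} contribute at most $|\ind|\leq 2^{|\vc|}$ inequalities. Equations~\ref{ilp:3} contribute at most one inequality per $(\mathsf{VerTyp},\W)$ pair, and since $|\mathsf{VerTypS}|\leq 2^{2^{\OO(|\vc|)}}$ and each $\W$ is a multiset of neighbors of bounded size, this gives at most $2^{2^{\OO(|\vc|)}}$ inequalities. Equations~\ref{ilp:4} contribute at most $|\vc|^2$ inequalities (one per edge with both endpoints in $\vc$). Equations~\ref{ilp:5} contribute at most $|\mathsf{RobTypS}|\cdot 2|\vc|\leq 2^{2^{\OO(|\vc|)}}$ inequalities, and Equations~\ref{ilp:6} contribute at most $|\mathsf{RobTypS}|\leq 2^{2^{\OO(|\vc|)}}$ inequalities. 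In total, there are at most $2^{2^{\OO(|\vc|)}}$ inequalities, each mentioning at most $2^{2^{\OO(|\vc|)}}$ variables. All coefficients are either integers bounded by a function of $|\vc|$ (like $j$, $N_j$, or the constant $4$) or are among $\{k,\, B,\, |u^*|,\, \mathsf{CycBud}(\mathsf{RobTyp}),\, 1\}$, and each of the latter fits in $\OO(\log(k+B+|V(G)|))$ bits. Multiplying together yields total encoding size $2^{2^{\OO(|\vc|)}}\cdot \log(k+B+|V(G)|)$.

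Finally I would analyze the construction time. By Lemma~\ref{lem:fptr1}, generating $\mathsf{VerTypS}$, $\mathsf{RobTypS}$, and $\mathsf{CycTypS}$ costs $2^{2^{\OO(|\vc|)}}+\OO(|\vc|\cdot|V(G)|+|E(G)|)$ (the additive graph-size term coming only from the preprocessing that identifies the equivalence classes in $\ind$ via the neighborhood-indexed array). The auxiliary index sets used in the equations, namely $\mathsf{CycTypS}(\mathsf{VerTyp},\W,j)$, $\mathsf{RobTypS}(\mathsf{VerTyp},\W,j)$, $\mathsf{CycTypS}(\{u,v\})$, $\mathsf{RobTypS}(\{u,v\})$, and $\mathsf{CycTypS}(\mathsf{RobTyp},j)$, can each be assembled by a single pass through the already-enumerated types, whose total size is $2^{2^{\OO(|\vc|)}}$. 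Emitting the inequalities is linear in the size of the output. Combining all contributions gives overall runtime $2^{2^{\OO(|\vc|)}}+\OO(|\vc|\cdot|V(G)|+|E(G)|)$, matching the stated bound; no step is a serious obstacle beyond careful bookkeeping of the doubly-exponential tower.
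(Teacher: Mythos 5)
Your proposal is correct and follows essentially the same route as the paper: invoke Lemma~\ref{lem:fptr1} for the type-set sizes and their construction time, count the inequalities contributed by Equations~\ref{ilp:1}--\ref{ilp:6}, observe that all coefficients are bounded by $\mathsf{max}\{k,B,|V(G)|\}$ (hence $\OO(\log(k+B+|V(G)|))$ bits each), and charge the assembly of the auxiliary index sets such as $\mathsf{CycTypS}(\mathsf{VerTyp},\W,j)$ to passes over the already-enumerated types. The only differences are cosmetic bookkeeping details (e.g.\ you bound the Equation~\ref{ilp:4} count by $|\vc|^2$ edges, which is fine), so there is no gap.
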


\begin{proof}
First, we create $\mathsf{VerTypS}$, $\mathsf{RobTypS}$ and $\mathsf{CycTypS}$ in $2^{2^{\OO(|\vc|)}}+\OO(|\vc|\cdot|V(G)|+|E(G)|)$ time (see Lemma~\ref{lem:fptr1}).
Now, we give an upper bound for the number of equations in $\mathsf{Reduction}(G,\vi,k,B)$, and the runtime of creating the equations. There is one equation in Equation~\ref{ilp:1}, that takes $\OO(|\mathsf{RobTypS}|)$ time to create, so, by Lemma~\ref{lem:fptr1}, $2^{2^{\OO(|\vc|)}}$ time.

There is one equation for each $u^*\in \ind$ in Equation~\ref{ilp:2}, so there are at most $2^{|\vc|}$ equations. Computing $|u^*|$ takes $\OO(2^{|\vc|}+|\vc|\cdot|V(G)|+|E(G)|)$ time (described in the proof of Lemma~\ref{lem:fptr1}). The rest of the computation of creating these equations, takes $\OO(|\mathsf{VerTypS}|)$ time, so, by Lemma~\ref{lem:fptr1}, it takes $2^{2^{\OO(|\vc|)}}$ time. 

There is one equation for every $\mathsf{VerTyp}=(u^*,\U)\in \mathsf{VerTypS}$, and every $\W\in \U$ in Equation~\ref{ilp:3}, so there are at most $|\mathsf{VerTypS}|\cdot 2^{2^{|\vc|}}= 2^{2^{\OO(|\vc|)}}\cdot 2^{2^{|\vc|}}= 2^{2^{\OO(|\vc|)}}$ equations. Now, we can create  the sets $\mathsf{CycTypS}(\mathsf{VerTyp},\W,j)$, for every $\mathsf{VerTyp}=(u^*,\U)\in \mathsf{VerTypS}$, every $\W\in \U$ and $1\leq j\leq |\vc|$ as follows. For every $\mathsf{CycTyp}=(C,\mathsf{PaAlloc}_C,\mathsf{RobTyp})\in \mathsf{CycTypS}$ we initial a table $\{\W\in 2^{\vc\times \mathsf{2}}~|~|\W|=2\}\times \mathsf{VerTyp}$ with $0$ in every cell. Then, we go over each $\{\{v_{i-1},v_{i}\},\{v_{i},v_{i+1}\}\}\in\mathsf{EdgePairs}(C)$, and add $1$ to $(\{v_{i-1},v_{i+1}\},\mathsf{PaAlloc}_C(\{v_{i-1},v_{i+1}\}))$. Then, for every cell $(\W,\mathsf{VerTyp})$ different than $0$, we add $\mathsf{CycTyp}$ to $\mathsf{CycTypS}(\mathsf{VerTyp},\W,j)$. It is easy to see the correctness of this process. Observe that it takes at most $|\mathsf{CycTypS}|\cdot |\mathsf{VerTypS}|\cdot 2^{2|\vc|}\cdot |\vc|=2^{2^{\OO(|\vc|)}}\cdot 2^{2^{\OO(|\vc|)}}\cdot 2^{2|\vc|}\cdot |\vc|=2^{2^{\OO(|\vc|)}}$ time.
 Similarly, we can create $\mathsf{RobTypS}(\mathsf{VerTyp},\W,j)$, for every $\mathsf{VerTyp}=(u^*,\U)\in \mathsf{VerTypS}$, every $\W\in \U$ and $1\leq j\leq 2^{|\vc|}+|\vc|^2$, in at most $|\mathsf{RobTypS}|\cdot |\mathsf{VerTypS}|\cdot (2^{|\vc|}+|\vc|^2)\cdot 2^{2|\vc|}=2^{2^{\OO(|\vc|)}}\cdot 2^{2^{\OO(|\vc|)}}\cdot (2^{|\vc|}+|\vc|^2)\cdot 2^{2|\vc|}=2^{2^{\OO(|\vc|)}}$. Now, each equation takes at most $|\mathsf{CycTypS}|\cdot |\mathsf{RobTypS}|=2^{2^{\OO(|\vc|)}}\cdot 2^{2^{\OO(|\vc|)}}$ time. So, to sum up, there are at most $2^{2^{\OO(|\vc|)}}$ equations in Equation~\ref{ilp:3}, and they take at most $2^{2^{\OO(|\vc|)}}+2^{2^{\OO(|\vc|)}}+2^{2^{\OO(|\vc|)}}\cdot 2^{2^{\OO(|\vc|)}}=2^{2^{\OO(|\vc|)}}$ time.

There are $|\vc|$ equations in Equation~\ref{ilp:4}. Observe that we can create $\mathsf{CycTypS}(\{u,v\})$ for every $\{u,v\}\in E$ such that $u,v\in \vc$, in $|\mathsf{CycTypS}|\cdot |\vc|\cdot |\vc|=2^{2^{\OO(|\vc|)}}\cdot  |\vc|\cdot |\vc|=2^{2^{\OO(|\vc|)}}$ time. Similarly, we can create $\mathsf{RobTypS}(\{u,v\})$ for every $\{u,v\}\in E$ such that $u,v\in \vc$, in $|\mathsf{RobTypS}|\cdot |\vc|\cdot |\vc|=2^{2^{\OO(|\vc|)}}\cdot  |\vc|\cdot |\vc|=2^{2^{\OO(|\vc|)}}$ time. Now, observe that each equation takes at most $\OO (|\mathsf{CycTypS}|\cdot |\mathsf{RobTypS}|)=2^{2^{\OO(|\vc|)}}$ time. So, there are $|\vc|$ equations in Equation~\ref{ilp:4}, and they take $2^{2^{\OO(|\vc|)}}+2^{2^{\OO(|\vc|)}}+2^{2^{\OO(|\vc|)}}\cdot|\vc|=2^{2^{\OO(|\vc|)}}$ time.

There are at most $|\mathsf{RobTypS}|\cdot 2|\vc|=2^{2^{\OO(|\vc|)}}$ equations in Equation~\ref{ilp:5}. Observe that we can create $\mathsf{CycTypS}(\mathsf{RobTyp},j)$ for every $\mathsf{RobTyp}\in \mathsf{RobTypS}$ and for every $2\leq j\leq 2|\vc|$, in $|\mathsf{CycTypS}|\cdot \OO(|\vc|)\cdot |\mathsf{RobTypS}|=2^{2^{\OO(|\vc|)}}$ time. Now, observe that each equation takes $\OO |\mathsf{CycTypS}|=2^{2^{\OO(|\vc|)}}$ time. So, there are $2^{2^{\OO(|\vc|)}}$ equations in Equation~\ref{ilp:5}, and they take $2^{2^{\OO(|\vc|)}}+2^{2^{\OO(|\vc|)}}\cdot 2^{2^{\OO(|\vc|)}}=2^{2^{\OO(|\vc|)}}$ time.

Finally, there are $|\mathsf{RobTypS}|=2^{2^{\OO(|\vc|)}}$ equations in Equation~\ref{ilp:6}. Computing\\ $\mathsf{CycBud}(\mathsf{RobTyp})$ for every $\mathsf{RobTyp}\in \mathsf{RobTypS}$ takes at most $\OO(|\mathsf{RobTypS}|\cdot 2^{\OO(|\vc|)})=2^{2^{\OO(|\vc|)}}$ time. Each equation in Equation~\ref{ilp:6} takes $\OO(|\mathsf{CycTypS}|)=2^{2^{\OO(|\vc|)}}$ time. So, there are $2^{2^{\OO(|\vc|)}}$ equations in Equation~\ref{ilp:6}, and they take $2^{2^{\OO(|\vc|)}}+2^{2^{\OO(|\vc|)}}\cdot2^{2^{\OO(|\vc|)}}=2^{2^{\OO(|\vc|)}}$ time.

Therefore, we get that there are at most $2^{2^{\OO(|\vc|)}}$ variables and $2^{2^{\OO(|\vc|)}}$ equations in $\mathsf{Reduction}(G,\vi,k,B)$. In addition, notice that the coefficients of the equations are bounded by $\mathsf{max}\{B,k,|V(G)|\}$.

In summary, $\mathsf{Reduction}(G,\vi,k,B)$ works in time $2^{2^{\OO(|\vc|)}}+\OO(|\vc|\cdot|V(G)|+|E(G)|)$ and returns an instance for the ILP problem of size at most $2^{2^{\OO(|\vc|)}}\cdot 2^{2^{\OO(|\vc|)}}\cdot \log(k+B+|V(G)|)=2^{2^{\OO(|\vc|)}}\cdot \log(k+B+|V(G)|)$ and at most $|\mathsf{VerTypS}|+|\mathsf{RobTypS}|+|\mathsf{CycTypS}|\leq 2^{2^{\OO(|\vc|)}}$ variables.
\end{proof}

Now, we invoke Lemmas~\ref{lem:fpt} and~\ref{lem:fptr2}, in order to prove the following corollary:

\begin{corollary}\label{cor:fpt}
There exists an algorithm that solves \cg in $2^{2^{2^{\OO(\vn)}}}\cdot (\log(k+B+|V(G)|))^{\OO(1)}+\OO(\vn\cdot|V(G)|+|E(G)|)$ time.
\end{corollary}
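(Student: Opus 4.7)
The plan is to combine Lemma~\ref{lem:fpt}, Lemma~\ref{lem:fptr2}, and Theorem~\ref{the:runningTimeILP} to handle the decision version of \cg in FPT time with respect to $\vn$, and then wrap this in a binary search over the budget $B$ to solve the optimization version.

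First, I would compute a minimum vertex cover $\vc$ of $G$ in $\OO(c^{\vn} + \vn \cdot |V(G)|)$ time using any standard FPT algorithm for \textsc{Vertex Cover} (for instance, the $\OO(1.2738^{\vn} + \vn \cdot |V(G)|)$ algorithm of Chen, Kanj, and Xia); this fits comfortably inside both the tower term and the additive term of the claimed runtime. Then, for a fixed $B$, I apply Lemma~\ref{lem:fptr2} to build $\mathsf{Reduction}(G,\vi,k,B)$ in $2^{2^{\OO(|\vc|)}} + \OO(|\vc|\cdot |V(G)| + |E(G)|)$ time, yielding an ILP instance with $t = 2^{2^{\OO(|\vc|)}}$ variables and size $m = 2^{2^{\OO(|\vc|)}} \cdot \log(k+B+|V(G)|)$. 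Invoking Theorem~\ref{the:runningTimeILP} solves this ILP in $t^{\OO(t)} \cdot m^{\OO(1)}$ time. A short calculation gives $\log(t^{\OO(t)}) = \OO(t \log t) = \OO\bigl(2^{2^{\OO(|\vc|)}} \cdot 2^{\OO(|\vc|)}\bigr) = 2^{2^{\OO(|\vc|)}}$, so $t^{\OO(t)} = 2^{2^{2^{\OO(|\vc|)}}}$, and $m^{\OO(1)}$ only contributes an additional $(\log(k+B+|V(G)|))^{\OO(1)}$ factor (since the $t$-part of $m$ is dominated by $t^{\OO(t)}$). By Lemma~\ref{lem:fpt}, the feasibility answer to this ILP coincides with the decision answer for \cg with budget $B$.

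To handle the minimization, I would binary search for the smallest feasible $B$ in the range $[1, 2|E(G)|]$, which is a valid upper bound because even a single robot can traverse a closed walk of length at most $2|E(G)|$ covering all edges (e.g., via an Eulerian circuit in the doubled graph). The binary search invokes the decision procedure $\OO(\log |E(G)|)$ times, and this $\log|E(G)|$ factor is absorbed into the $(\log(k+B+|V(G)|))^{\OO(1)}$ factor of the final bound. Crucially, the additive $\OO(\vn\cdot|V(G)|+|E(G)|)$ preprocessing work (computing $\mathsf{EQ}$, $G^*$, and $\overline{G}$, and counting representatives of each equivalence class, as in the proof of Lemma~\ref{lem:fptr1}) depends only on $(G,\vi,k)$ rather than on $B$, so it is executed once and not multiplied by $\log|E(G)|$. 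Summing everything yields the claimed $2^{2^{2^{\OO(\vn)}}}\cdot (\log(k+B+|V(G)|))^{\OO(1)}+\OO(\vn\cdot|V(G)|+|E(G)|)$ runtime. No step poses a genuine obstacle; the only care required is in the tower-of-twos bookkeeping when plugging $t$ and $m$ into Lenstra's theorem, and in separating $B$-dependent work (ILP construction and solving) from $B$-independent preprocessing.
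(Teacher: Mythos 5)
Your proposal is correct and follows essentially the same route as the paper: construct $\mathsf{Reduction}(G,\vi,k,B)$ via Lemma~\ref{lem:fptr2}, solve it with Theorem~\ref{the:runningTimeILP}, and invoke Lemma~\ref{lem:fpt} for correctness, with the same tower-of-twos bookkeeping. The only (harmless) deviations are that the paper obtains the vertex cover by a linear-time $2$-approximation rather than an exact FPT algorithm (absorbing $|\vc|\leq 2\vn$ into the $\OO(\cdot)$), and it treats $B$ as part of the input, so no binary search over $B$ is needed.
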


\begin{proof}
Let $(G,\vi,k,B)$ be an instance of \cg. By~\cite{papadimitriou1998combinatorial}, there exists a $2$-approximation for computing a minimal vertex cover of an input in linear time. Then, we activate $\mathsf{Reduction}(G,\vi,k,B)$ with the vertex cover $\vc$ computed by this algorithm. Observe that Lemma~\ref{lem:fpt} concludes the correctness of the algorithm. By  Lemma~\ref{lem:fptr2}, $\mathsf{Reduction}(G,\vi,k,$ $B)$ runs in time $2^{2^{\OO(|\vc|)}}+\OO(|\vc|\cdot|V(G)|+|E(G)|)$, and returns an instance for the ILP problem of size at most $2^{2^{\OO(|\vc|)}}\cdot \log(k+B+|V(G)|)$ and with $2^{2^{\OO(|\vc|)}}$ variables. So, by Theorem~\ref{the:runningTimeILP}, $\mathsf{Reduction}(G,\vi,k,B)$ can be solved in time $(2{^{2^{\OO(|\vc|)}}})^{2^{2^{\OO(|\vc|)}}}\cdot 2^{2^{\OO(|\vc|)}}\cdot \log(k+B+|V(G)|)^{\OO(1)}=2^{2^{2^{\OO(|\vc|)}}}\cdot \log(k+B+|V(G)|)^{\OO(1)}$. Therefore, the total runtime of the algorithm is $2^{2^{\OO(|\vc|)}}+\OO(|\vc|\cdot|V(G)|+|E(G)|)+2^{2^{2^{\OO(|\vc|)}}}\cdot (\log(k+B+|V(G)|))^{\OO(1)}=2^{2^{2^{\OO(|\vc|)}}}\cdot (\log(k+B+|V(G)|))^{\OO(1)}+\OO(|\vc|\cdot|V(G)|+|E(G)|)$. Now, since $|\vc|\leq 2\cdot\vn$, we conclude the correctness of the corollary.
\end{proof}
	
Corollary~\ref{cor:fpt} concludes the correctness of Theorem~\ref{th:fptVc}.

\section{Approximation Algorithm with Additive Error of $\OO(\vn)$}\label{sec:AppAlg}

In this section, we prove the following result.

\approxTheorem*

%

To prove the above theorem, we first give an approximation algorithm that takes a vertex cover $\vc$ of the input graph as an additional parameter and returns a solution with an additive approximation of $4|\vc|$. 

\subsection{The Algorithm}
We now describe the stages of our approximation algorithm, whose pseudocode is given as Algorithm~\ref{alg:AppAlg}. Here $\vc$ is the input vertex cover.

\begin{algorithm}[!t]
	\SetKwInOut{Input}{Input}
	\SetKwInOut{Output}{Output}
	\medskip
	{\textbf{function} $\mathsf{ApproxAlg}$}$(\langle G=(V,E),k,\vi,\mathsf{VC}\rangle)$\;
	$\vc'\gets \vc \cup \{\vi\}$\; \label{alg1:L2}
		Make $\vc'$ connected\; \label{alg1:L3}
	$I\gets V(G)\setminus \vc'$\; \label{alg1:L4}
	$\widehat{E}_\mathsf{IND}\gets \{\{u,v\}\in E(G)~|~u\in I\}$\; \label{alg1:L5}
	\For {every $u\in I$ with odd degree in $G$}
	{Let $\{u,v\}\in E$\;
		$\widehat{E}_\mathsf{IND}\gets \widehat{E}_\mathsf{IND}\cup\{\{u,v\}\}$\;	
	}\label{alg1:L9}
	\For {every $1\leq i\leq k$\label{alg1:L10}}
	{$\widehat{E}_i\gets \emptyset$\;
	}
	\label{alg1:L13}
	\For {every $u\in I$}
	{
		\While{\label{alg1:L144}There exists $\{u,v\}\in \widehat{E}_\mathsf{IND}$}
		{
			Let $\{u,v\},\{u,v'\}\in \widehat{E}_\mathsf{IND}$\;
			Let $1\leq i\leq k$ with minimum $|\widehat{E}_i|$\; \label{alg1:L16}
			$\widehat{E}_i\gets \widehat{E}_i\cup \{\{u,v\},\{u,v'\}\}$\;
			$\widehat{E}_\mathsf{IND}\gets \widehat{E}_\mathsf{IND}\setminus \{\{u,v\},\{u,v'\}\}$\;
		}
	}\label{alg1:L20}
	\For {\label{alg1:L211} every $\{u,v\}\in E$ such that $u,v\in \vc'$}
	{
		Let $1\leq i\leq k$ with minimum $|\widehat{E}_i|$\; \label{alg1:L21}
		$\widehat{E}_i\gets \widehat{E}_i\cup \{\{u,v\}\}$\;
	}
	\label{alg1:L24}
	Let $T$ be a spanning tree of $G[\vc']$\; \label{alg1:L25}
	\For {every $1\leq i\leq k$}
	{
		$\widehat{E}_i\gets \widehat{E}_i\cup E(T)$\;  \label{alg1:L27}
	} \label{alg1:L28}
	\For  {\label{alg1:L29} every $1\leq i\leq k$}
	{
		$\mathsf{MakeVCEvenDeg}(T,\widehat{E}_i,\vc')$\;\label{alg1:L30}
	}\label{alg1:L31}
	\For {\label{alg1:L32} every $1\leq i\leq k$}
	{
		Find an Eulerian cycle $\mathsf{RC}_i$ in $\widehat{G}[\widehat{E}_i]$\;\label{alg1:L335}
	}\label{alg1:L33}
	\Return $\{\mathsf{RC}_i\}_{i=1}^k$\;
	\caption{$B+\OO(|\vc|)$ Approximation}
	\label{alg:AppAlg}
\end{algorithm}

\smallskip\noindent{\bf Lines \ref{alg1:L2}--\ref{alg1:L3}: Making the vertex cover connected and adding $\vi$.}
We make $G[\vc]$ connected by adding at most $|\vc|-1$ vertices to $\vc$: We begin with a partition $V_1,\ldots V_t$ of $\vc$ into connected components of $G[\vc]$. Then, until $V_1,\ldots V_t$ unite into one set we do the following. For every $v\in V(G)\setminus \vc$, if $v$ has neighbors in two different sets among $V_1,\ldots V_t$, then: (i) we add $v$ to $\vc$, and (ii) we unite the sets that have at least one neighbor of $v$. Since $G$ is connected it is easy to see that by the end of the processes $G[\vc]$ is connected. Moreover, observe that this takes $\OO(|V(G)|+|E(G)|)$ runtime, and we add at most $|\vc|-1$ vertices to the initial vertex cover. In addition, we also add $\vi$. The new vertex cover we obtained is denoted by $\vc'$. For an example, see Figure~\ref{fig:approx1B}.

\smallskip\noindent{\bf Lines \ref{alg1:L4}--\ref{alg1:L9}: Making the degree of each vertex in the independent set even.}
We ensure that every vertex $u$ from the independent set $I=V\setminus \vc'$ has even degree as follows. Initially, $\widehat{E}_\mathsf{IND}=\{\{u,v\}\in E~|~u\in I\}$. For every $u\in I$ with odd degree in $G$, we simply duplicate an arbitrary edge $\{u,v\}$ incident to $u$, that is, we add $\{u,v\}$ to the multiset $\widehat{E}_\mathsf{IND}$ (e.g., see green edges in Figure~\ref{fig:approx1C}). Since $I$ is an independent set, adding $\{u,v\}$ does not change the degree of other vertices in $I$. 

\smallskip\noindent{\bf Lines \ref{alg1:L10}--\ref{alg1:L24}: Balanced partition of the edge set to the $k$ robots.}
First, we partition the edges incident to vertices in the independent set (the multiset $\widehat{E}_\mathsf{IND}$) to $k$ multisets ($\widehat{E}_i$ for every $1\leq i\leq k$), representing the $k$ robot cycles under construction. As stated in Lemma~\ref{obs:equivsol}, every such $\widehat{E}_i$ should represent an edge multiset of a multigraph that has an Eulerian cycle, thus representing the edge multiset of a robot cycle-graph. Therefore, we repeatedly take two edges incident to the same vertex in $I$, to preserve its even degree in the respective $\widehat{E}_i$ (e.g., see Figures~\ref{fig:approx1D}--\ref{fig:approx1G}). Next, we partition the edges with both endpoints in $\vc'$. As we seek a partition as ``balanced'' as possible, we repeatedly add an edge to $\widehat{E}_i$ with minimum number of edges. Observe that after this stage, every edge of $G$ is covered. Intuitively, we added edges that we ``must'' to add, and the partition is as balanced as it can be (e.g., see Figures~\ref{fig:approx2A}--\ref{fig:approx2D}). So, we get that $\mathsf{max}\{|\widehat{E}_1|,|\widehat{E}_2|,\ldots,$ $|\widehat{E}_k|\}\leq B$, where $B$ is the optimal budget (this is proved formally in Section~\ref{subsection:corrApp}). Still, we have three issues: (i) A multiset $\widehat{E}_i$ might not induce a connected graph. (ii) Vertices from $\vc'$ might have odd degree in the graphs induced by some of the $\widehat{E}_i$'s. (iii) Graphs induced by some of the $\widehat{E}_i$'s might not contain $\vi$.

\smallskip\noindent{\bf Lines \ref{alg1:L25}--\ref{alg1:L28}: Making the robot cycle-graphs connected and adding $\vi$ to it by adding a spanning tree of $G[\vc']$.}
We add to each $\widehat{E}_i$ the edges of a spanning tree of $G[\vc']$. After this stage, we get that $\gr(\widehat{E}_i)$ connected and $\vi\in V(\gr(\widehat{E}_i))$, for every $1\leq i\leq k$ (e.g., see Figures~\ref{fig:approx2F}--\ref{fig:approx2H}). 

\smallskip\noindent{\bf Lines \ref{alg1:L29}--\ref{alg1:L31}: Making the degree of each vertex in $\vc'$ even.}
Here, we use Algorithm \ref{alg:MakeVCInAEvenDeg} to get even degree for every vertex in each robot cycle-graph (e.g., see Figures~\ref{fig:approx2I}--\ref{fig:approx2K}).

\smallskip\noindent{\bf Lines \ref{alg1:L32}--\ref{alg1:L33}: Finding Eulerian cycle in each robot cycle-graph.}
Observe that at the beginning of this stage, for every $1\leq i\leq k$, $\gr(\widehat{E}_i)$ is connected, and each $u\in V(\gr(\widehat{E}_i))$ has even degree. Therefore, we can find an Eulerian cycle $\mathsf{RC}_i$ in $\gr(\widehat{E}_i)$.

\subsection{Algorithm~\ref{alg:MakeVCInAEvenDeg}: Making the Degree of Each Vertex in $\vc'$ Even}

We now describe and prove the correctness of Algorithm~\ref{alg:MakeVCInAEvenDeg}.
This algorithm gets as input a spanning tree of $G[\vc']$, $T$, one of the multisets $\widehat{E}_i$, denoted by $\widehat{E}$ and the vertex cover $\vc'$ from Algorithm~\ref{alg:AppAlg}. The algorithm, in every iteration, while $|V(T)|\geq 2$, chooses a leaf $u$ in $T$. If $u$ has even degree in $\gr(\widehat{E})$, we delete $u$ from $T$; otherwise, $u$ has odd degree in $\gr(\widehat{E})$, so we add $\{u,v\}$ to $\widehat{E}$, where $v$ is the neighbor of $u$ in $T$, and then we delete $u$ from $T$. Observe that the degree of $u$ in $\gr(\widehat{E})$ does not change after we delete it from $T$. Next, we prove that after the computation of Algorithm~\ref{alg:MakeVCInAEvenDeg}, the degree of each vertex in $\gr(\widehat{E})$ is even:

\begin{lemma}\label{lem:MakeEv}
	Let $1\leq i\leq k$, let $\widehat{E}_i$ be the multiset obtained from Algorithm \ref{alg:AppAlg} in Line~\ref{alg1:L28}, let $\vc'$ be the vertex cover computed by Algorithm \ref{alg:AppAlg}, and let $T$ be a spanning tree of $G[\vc']$. Then, $\mathsf{MakeVCEvenDeg}(T,\widehat{E}_i,\vc')$ runs in time $\OO(|\vc'|)$, adds at most $|\vc'|$ edges to $\widehat{E}_i$, and by its end every vertex in $\gr(\widehat{E}_i)$ has even degree. 
\end{lemma}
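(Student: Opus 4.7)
The plan is to prove the three claims by a simple induction on the number of iterations of the while loop in Algorithm~\ref{alg:MakeVCInAEvenDeg}, combined with a global parity argument at the end.

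First, the runtime and edge-count bounds are essentially immediate. Each iteration of the while loop picks a leaf $u$ of $T$, inspects its parity, possibly adds a single edge $\{u,v\}$ to $\widehat{E}_i$, and then removes $u$ from $T$. Since $T$ starts with $|V(T)| = |\vc'|$ vertices and each iteration decreases $|V(T)|$ by one, the loop runs at most $|\vc'|$ times. Each iteration performs $\OO(1)$ work (assuming standard representations that allow $\OO(1)$ leaf access, degree lookup, and edge insertion), yielding an $\OO(|\vc'|)$ total runtime, and adds at most one edge, giving at most $|\vc'|$ new edges overall.

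For correctness, I will maintain the invariant: \emph{after the iteration that removes a leaf $u$ from $T$, the degree of $u$ in $\gr(\widehat{E}_i)$ is even, and no subsequent iteration adds an edge incident to $u$}. The second part is immediate because every edge added in later iterations has the form $\{x,y\}$ where $x$ is a leaf of the current $T$ and $y$ is its neighbor in $T$, so both endpoints still lie in $V(T)$; since $u$ has been removed, neither $x$ nor $y$ equals $u$. The first part follows directly from the algorithm: if $u$ already has even degree we do nothing, otherwise we add exactly one edge $\{u,v\}$, flipping the parity of $u$'s degree to even.

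Therefore, when the loop terminates and exactly one vertex $w$ remains in $V(T)$, every vertex in $\vc' \setminus \{w\}$ has even degree in $\gr(\widehat{E}_i)$. The main obstacle is then showing that $w$ itself has even degree. For this I invoke the standard handshake parity argument: in any multigraph, the number of vertices of odd degree is even. Vertices in $I = V(G) \setminus \vc'$ have even degree in $\gr(\widehat{E}_i)$ throughout the entire execution of Algorithm~\ref{alg:AppAlg}, since Lines~\ref{alg1:L4}--\ref{alg1:L20} guarantee this after partitioning $\widehat{E}_\mathsf{IND}$, while both the spanning tree added in Line~\ref{alg1:L27} and every edge inserted by Algorithm~\ref{alg:MakeVCInAEvenDeg} have both endpoints in $\vc'$ and therefore do not change the parity of any vertex in $I$. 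Thus the only possible odd-degree vertex in $\gr(\widehat{E}_i)$ is $w$; but the total count of odd-degree vertices is even, so $w$ must also have even degree, completing the proof.
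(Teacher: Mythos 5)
Your proposal is correct and follows essentially the same route as the paper's proof: a loop invariant showing that every vertex removed from the tree has (and keeps) even degree, the observation that vertices of $V(G)\setminus \vc'$ retain even degree since no added edge touches them, and the handshake-parity argument to conclude that the single remaining vertex also has even degree, together with the same per-iteration accounting for the $\OO(|\vc'|)$ runtime and the bound of at most $|\vc'|$ added edges.
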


\begin{proof}
	First, observe that every $u\in V\setminus \vc'$ has even degree in $\gr(\widehat{E})$, and since we do not add any edges incident to $u$, its degree is even through all the stages of the algorithm. Now, we show by induction on the number of iterations of the while loop in Line~\ref{alg2:L3} that, at the beginning of any iteration, the following conditions hold:
	\begin{enumerate}
		\item $T'$ is a tree.
		\item Every $u\in V(G)\setminus V(T')$ has even degree in $\gr(\widehat{E})$.
	\end{enumerate}
	For the base case, notice that at the beginning of the first iteration, $T'$ is a spanning tree of $G[\vc']$, and $V(T')=\vc'$. So, from the observation stated in the beginning of this proof, we get that both conditions hold.
	
	Now, let $i>1$ be the $i$-th iteration of the while loop. From the inductive hypothesis, the conditions hold in the beginning of the $(i-1)$-th iteration. We show that they hold at the end of the $(i-1)$-th iteration, and so they hold in the beginning of the $i$-th iteration. From the condition of the while loop, we get that $T'$ has at least two vertices, so there is a leaf in $T'$. We have the following two cases:
	
	\smallskip\noindent{\bf Case 1.}
	If $u$ has even degree in $\gr(\widehat{E})$, then the algorithm deletes $u$ from $T'$. Now, $u$ is added to $V(G)\setminus V(T')$ and its degree is even in $\gr(\widehat{E})$. The degrees of the rest of the vertices from $V(G)\setminus V(T')$ in $\gr(\widehat{E})$ stay unchanged. Therefore, we get that the second condition holds.
	
	\smallskip\noindent{\bf Case 2.} In this case, $u$ has odd degree in $\gr(\widehat{E})$. Now, the algorithm adds $\{{u,v}\}$ to $\widehat{E}$, where $v$ is the neighbor of $u$ in $T'$, so now $u$ has even degree in $\gr(\widehat{E})$. The degrees of the rest of the vertices from $V(G)\setminus V(T')$ in $\gr(\widehat{E})$ stay unchanged. Thus, we get that the second condition holds.
	
	In addition, observe that, $u$ is a leaf. So, in both cases, $T'$ remains connected after deleting $u$ from it, so $T'$ remains a tree. Therefore, in both cases, both the conditions hold. This ends the proof of the inductive hypothesis. 
	
	Now, the while loop ends when $T'$ has less than two vertices. In this case, $T'$ is an isolated vertex $u$. By the inductive hypothesis we proved, every $v\in V(G)\setminus V(T')=V(G)\setminus \{u\}$ has even degree in $\gr(\widehat{E})$. Since the number of vertices with odd degree is even in every graph, we get that the degree of $u$ in $\gr(\widehat{E})$ is even. Therefore, by the end of the algorithm, every $u\in V$ has even degree in $\gr(\widehat{E})$. 
	
	Now, in every iteration of the while loop, we delete a vertex from $T'$, so we have at most $|\vc'|$ iterations. The rest of the calculations are done in $\OO(1)$ runtime, so the runtime of the algorithm is $\OO(|\vc'|)$. At every iteration of the while loop we add at most one edge to $\widehat{E}$, so the algorithm adds at most $|\vc'|$ edges to $\widehat{E}$. This ends the proof.
\end{proof}

\begin{algorithm}[!t]
	\SetKwInOut{Input}{Input}
	\SetKwInOut{Output}{Output}
	\medskip
	{\textbf{function} $\mathsf{MakeVCEvenDeg}$}$(T=(E_T,\vc'),\widehat{E},\mathsf{VC'})$\;
	$T'\gets T$\;

	\While{\label{alg2:L3}$|V(T')|\geq 2$}
	{
		Choose a leaf $u\in V(T)$\;
		\If{$u$ has odd degree in $\gr(\widehat{E})$}
		{Let $v$ be the neighbour of $u$ in $T'$\;
			$\widehat{E}\gets \widehat{E}\cup \{{u,v}\}$\;}
			$T'\gets T'\setminus \{u\}$\;
	}

	\caption{$\mathsf{MakeVCEvenDeg}$}
	\label{alg:MakeVCInAEvenDeg}
\end{algorithm}


\subsection{Correctness and Running Time of Algorithm~\ref{alg:AppAlg}}\label{subsection:corrApp}
We aim to prove the correctness of the following lemma:

\begin{lemma}\label{lem:algApp}
Let $G$ be a connected graph, let $\vi\in V(G)$, let $k\in \mathbb{N}$ and let $\mathsf{VC}$ be a vertex cover of $G$. Then, Algorithm \ref{alg:AppAlg} with the input $(G,k,\vi,\vc)$ runs in time $\OO((|V(G)|+|E(G)|)\cdot k)$ and returns a solution $\{\mathsf{RC}_i\}_{i=1}^k$ for \cg with $k$ agents such that $\mathsf{Val}(\{\mathsf{RC}_i\}_{i=1}^k)\leq B+4|\vc|$, where $B$ is the minimum budget for the instance $(G,k,\vi)$. 
\end{lemma}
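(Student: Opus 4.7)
The plan is to verify three things: (a) the output $\{\mathsf{RC}_i\}_{i=1}^k$ satisfies the four conditions of Lemma~\ref{obs:equivsol}; (b) $\max_i|E(\mathsf{RC}_i)|\leq B+4|\vc|$; and (c) the running time bound. First I would record the structural facts about the $\vc'$ produced in Lines~\ref{alg1:L2}--\ref{alg1:L3}: $G[\vc']$ is connected, $\vi\in\vc'$, and $|\vc'|\leq 2|\vc|$, since $G[\vc\cup\{\vi\}]$ has at most $|\vc|$ components (as $\vc$ is a vertex cover of the connected graph $G$, so $\vi$ shares a component of $G[\vc\cup\{\vi\}]$ with some $v\in\vc$) and each vertex inserted during the connection phase reduces the component count by at least one. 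Write $I=V(G)\setminus\vc'$ (independent, since $\vc'$ is a vertex cover), $m_{\vc'}=|E(G[\vc'])|$, $m_I=\sum_{u\in I}d_G(u)$, and let $q$ be the number of $u\in I$ with $d_G(u)$ odd; then after Lines~\ref{alg1:L4}--\ref{alg1:L9} we have $|\widehat{E}_\mathsf{IND}|=m_I+q$ and every $u\in I$ has even degree in $\gr(\widehat{E}_\mathsf{IND})$.

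\textbf{Validity.} I would check the conditions of Lemma~\ref{obs:equivsol} in turn. Condition~\ref{obs:equivsol3} holds because every edge of $G$ is placed in some $\widehat{E}_i$: edges incident to $I$ via Lines~\ref{alg1:L10}--\ref{alg1:L20}, edges of $G[\vc']$ via Lines~\ref{alg1:L211}--\ref{alg1:L24}. After the spanning tree $T$ of $G[\vc']$ is inserted into every $\widehat{E}_i$ in Lines~\ref{alg1:L25}--\ref{alg1:L28}, $\vc'\subseteq V(\gr(\widehat{E}_i))$, which gives Condition~\ref{obs:equivsol1} ($\vi\in V(\gr(\widehat{E}_i))$) and makes $\gr(\widehat{E}_i)$ connected, since every $I$-vertex appearing in $\widehat{E}_i$ is adjacent to $\vc'$ through its own incident edges. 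For the even-degree half of Condition~\ref{obs:equivsol2}, $I$-vertices keep even degree because their incident edges are only ever added in $u$-incident pairs in Lines~\ref{alg1:L10}--\ref{alg1:L20}, and neither $T$ nor the edges added by Algorithm~\ref{alg:MakeVCInAEvenDeg} touches $I$; $\vc'$-vertices become even by Lemma~\ref{lem:MakeEv}. Hence $\gr(\widehat{E}_i)$ admits the Eulerian cycle $\mathsf{RC}_i$ through $\vi$ extracted in Line~\ref{alg1:L335}.

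\textbf{Budget.} The heart of the analysis is the inequality $|\widehat{E}_\mathsf{IND}|+m_{\vc'}\leq kB$. To prove it, fix an optimal solution $\{\widehat{E}^*_i\}_{i=1}^k$ witnessing $B$ and set $d_i(u)=|\widehat{\mathsf{N}}_{\gr(\widehat{E}^*_i)}(u)|$. By Lemma~\ref{obs:equivsol}, each $d_i(u)$ is even. For $u\in I$, every edge incident to $u$ is covered by some $\widehat{E}^*_i$, so $\sum_i d_i(u)\geq d_G(u)$; since the sum is even, it is $\geq d_G(u)+1$ whenever $d_G(u)$ is odd. Summing over $u\in I$ and using that each $I$-incident edge has exactly one $I$-endpoint, the total multiplicity of $I$-incident edges across the $\widehat{E}^*_i$ is $\geq m_I+q=|\widehat{E}_\mathsf{IND}|$; adding the trivial lower bound $m_{\vc'}$ for covering the $G[\vc']$-edges gives $\sum_i|\widehat{E}^*_i|\geq|\widehat{E}_\mathsf{IND}|+m_{\vc'}$, and combining with $\sum_i|\widehat{E}^*_i|\leq kB$ yields the claim. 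Now I trace the maximum bin size through the algorithm: Lines~\ref{alg1:L10}--\ref{alg1:L20} perform greedy min-bin scheduling of $|\widehat{E}_\mathsf{IND}|/2$ size-$2$ items, so max $\leq 2\lceil|\widehat{E}_\mathsf{IND}|/(2k)\rceil\leq B+1$; Lines~\ref{alg1:L211}--\ref{alg1:L24} add $m_{\vc'}$ singleton items via greedy min-bin, and the standard argument bounds the new max by $\max(B+1,\lfloor(|\widehat{E}_\mathsf{IND}|+m_{\vc'})/k\rfloor+1)\leq B+1$. Finally, $T$ contributes $|\vc'|-1$ edges and Algorithm~\ref{alg:MakeVCInAEvenDeg} at most $|\vc'|$ more to each $\widehat{E}_i$, so $|\widehat{E}_i|\leq B+1+(|\vc'|-1)+|\vc'|=B+2|\vc'|\leq B+4|\vc|$.

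\textbf{Running time and main obstacle.} Lines~\ref{alg1:L2}--\ref{alg1:L9} run in $\OO(|V(G)|+|E(G)|)$ via standard traversals; Lines~\ref{alg1:L10}--\ref{alg1:L24} use a min-heap over bin sizes, processing each edge in $\OO(\log k)$. Lines~\ref{alg1:L25}--\ref{alg1:L31} first compute $T$ in $\OO(|V(G)|+|E(G)|)$ and then perform two $k$-fold passes over $\vc'$ in $\OO(k|\vc'|)$ time using Lemma~\ref{lem:MakeEv}. Each Eulerian cycle in Line~\ref{alg1:L335} is constructed in time linear in $|\widehat{E}_i|\leq B+4|\vc|=\OO(|E(G)|+|V(G)|)$, giving $\OO(k(|E(G)|+|V(G)|))$ in total. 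Summing yields $\OO((|V(G)|+|E(G)|)\cdot k)$. The hard part will be the parity lower bound $|\widehat{E}_\mathsf{IND}|+m_{\vc'}\leq kB$: it is what certifies that the duplications for odd-degree $I$-vertices cost nothing against the optimum, which enables the greedy min-bin phases to absorb all $I$- and $\vc'$-edges without exceeding $B+1$; once this is established, the remainder is just a controlled $\OO(|\vc'|)$ additive slack per pass through $\vc'$, giving the final $B+4|\vc|$ bound.
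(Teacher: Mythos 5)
Your proposal is correct and follows essentially the same route as the paper: lower-bound the optimum by $|\widehat{E}_\mathsf{IND}|+|E(G[\vc'])|$ using the even-degree/coverage (parity) argument, bound the greedy min-bin phases against that, and then charge the spanning tree plus the edges of Algorithm~\ref{alg:MakeVCInAEvenDeg} for at most $2|\vc'|\leq 4|\vc|$ extra edges per robot, with validity via Lemma~\ref{obs:equivsol} and Lemma~\ref{lem:MakeEv}. Your accounting is in fact slightly more explicit than the paper's (you carry a $B+1$ slack from the load-balancing analysis and compensate with the $|\vc'|-1$ tree edges, where the paper tersely asserts the max after Line~\ref{alg1:L24} is at most $B$), but the decomposition and key ideas coincide.
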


We split Lemma~\ref{lem:algApp}, into two lemmas: Lemma~\ref{lem:algApp1} where we prove the correctness of Algorithm \ref{alg:AppAlg}, and Lemma~\ref{lem:algApp2} where we analyze its runtime.

\begin{lemma}\label{lem:algApp1}
	Let $G$ be a connected graph, let $\vi\in V(G)$, let $k\in \mathbb{N}$ and let $\mathsf{VC}$ be a vertex cover of $G$. Then, Algorithm \ref{alg:AppAlg} with the input $(G,k,\vi,\vc)$ returns a solution $\{\mathsf{RC}_i\}_{i=1}^k$ for \cg with $k$ agents such that $\mathsf{Val}(\{\mathsf{RC}_i\}_{i=1}^k)\leq B+4|\vc|$, where $B$ is the minimum budget for the instance $(G,k,\vi)$. 
\end{lemma}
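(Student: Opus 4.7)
The plan is to verify two things: (i) the returned collection $\{\mathsf{RC}_i\}_{i=1}^k$ is a valid solution to \cg, and (ii) its value is at most $B + 4|\vc|$.

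For (i), I would invoke Lemma~\ref{obs:equivsol} on the multisets $\widehat{E}_1,\ldots,\widehat{E}_k$ held by the algorithm at Line~\ref{alg1:L32}, so that the Eulerian cycles extracted in Line~\ref{alg1:L335} are genuine $\vi$-robot cycles by Observation~\ref{obs:EuiIsRob}. The four conditions of Lemma~\ref{obs:equivsol} track the four algorithmic phases in turn. First, $\vi \in V(\gr(\widehat{E}_i))$ holds because $\vi \in \vc'$ is a vertex of the spanning tree $T$ added in Line~\ref{alg1:L27}. Second, $\gr(\widehat{E}_i)$ is connected because every edge ever placed into $\widehat{E}_i$ has an endpoint in $\vc'$, and $T$ spans $\vc'$; even degree at each vertex of $I$ is guaranteed by the pair-of-edges rule in the inner \texttt{while} loop of Lines~\ref{alg1:L144}--\ref{alg1:L20} together with the duplication for odd-degree independent vertices (Lines~\ref{alg1:L4}--\ref{alg1:L9}), and even degree at each vertex of $\vc'$ follows from Lemma~\ref{lem:MakeEv}. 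Third, every edge of $G$ is covered since every edge with an endpoint in $I$ is placed in $\widehat{E}_{\mathsf{IND}}$ and fully distributed in Lines~\ref{alg1:L10}--\ref{alg1:L20}, while every remaining edge has both endpoints in $\vc'$ and is placed by Lines~\ref{alg1:L211}--\ref{alg1:L24}.

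For (ii), the main step is a parity-and-scheduling argument that bounds $\max_i |\widehat{E}_i|$ at the end of Line~\ref{alg1:L24} by $B + O(1)$. I would first establish the lower bound $kB \geq |\widehat{E}_{\mathsf{IND}}| + |E(G[\vc'])|$: in any optimum $(\widehat{E}^{*}_1,\ldots,\widehat{E}^{*}_k)$, Condition~\ref{con:2EquiSol} of Lemma~\ref{obs:equivsol} forces each vertex $u \in I$ to have even degree $d^{*}_{u,i}$ in every $\widehat{E}^{*}_i$; hence $\sum_i d^{*}_{u,i}$ is even and at least $\deg_G(u)$, so $\sum_i d^{*}_{u,i} \geq \deg_G(u) + [\deg_G(u)\text{ odd}]$. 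Because $I$ is independent, every edge incident to $I$ has a unique endpoint in $I$, so summing over $u \in I$ gives $\sum_i (\text{number of edges of } \widehat{E}^{*}_i \text{ incident to } I) \geq |\widehat{E}_{\mathsf{IND}}|$; coupling with the trivial $\sum_i |\widehat{E}^{*}_i \cap E(G[\vc'])| \geq |E(G[\vc'])|$ yields the claim. Since the algorithm distributes exactly $|\widehat{E}_{\mathsf{IND}}| + |E(G[\vc'])|$ items in Lines~\ref{alg1:L10}--\ref{alg1:L24}, the average multiset size at the end of Line~\ref{alg1:L24} is at most $B$.

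Lines~\ref{alg1:L10}--\ref{alg1:L24} are an online list-scheduling: pairs of edges (size $2$) and single edges (size $1$) are repeatedly assigned to the currently minimum-loaded multiset. A standard Graham-style argument then gives $\max_i |\widehat{E}_i| \leq \text{avg} + p_{\max} \leq B + 2$ at the end of Line~\ref{alg1:L24} (take the last item $j$ placed on the max machine: right before its placement, that machine had minimum load $L_{\max}-p_j$, so the total load at the end is at least $k(L_{\max}-p_j)+p_j$, giving $L_{\max} \leq T/k + p_j$). In the remaining phases, Lines~\ref{alg1:L25}--\ref{alg1:L28} add exactly $|\vc'|-1$ edges (the spanning tree $T$) to each $\widehat{E}_i$, and Lines~\ref{alg1:L29}--\ref{alg1:L31} add at most $|\vc'|$ more by Lemma~\ref{lem:MakeEv}. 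Combined with $|\vc'| \leq 2|\vc|$---the construction of $\vc'$ adds at most $|\vc|-1$ vertices to connect $\vc$, plus possibly $\vi$---we obtain $|\widehat{E}_i| \leq B + 2 + (|\vc'|-1) + |\vc'| \leq B + 4|\vc|$, matching the stated bound.

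The main obstacle is the parity-based lower bound $\sum_i |\widehat{E}^{*}_i| \geq |\widehat{E}_{\mathsf{IND}}| + |E(G[\vc'])|$: one must leverage the independence of $I$ to avoid double-counting $I$-incident edges and combine it with the Eulerian even-degree constraint precisely to extract the ``$+1$'' per odd-degree $u \in I$, matching exactly what the algorithm pays via the duplications in Lines~\ref{alg1:L4}--\ref{alg1:L9}. The rest---Graham's inequality, the exact count of spanning-tree edges, the bound from Lemma~\ref{lem:MakeEv}, and the size estimate on $\vc'$---is routine bookkeeping.
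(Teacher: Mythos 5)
Your overall route is the same as the paper's: validity via Lemma~\ref{obs:equivsol} (connectivity through the spanning tree of $G[\vc']$, even degrees at $I$ from the pairing plus the duplications, even degrees at $\vc'$ from Lemma~\ref{lem:MakeEv}), and the budget bound via the lower bound $kB\geq|\widehat{E}_\mathsf{IND}|+|E(G[\vc'])|$, which you derive by exactly the parity argument at independent-set vertices that the paper uses. The one substantive difference is how you bound the loads after Line~\ref{alg1:L24}: the paper asserts $\max_i|\widehat{E}_i|\leq B$ at that point (which, to be fully justified, uses that in an optimal solution each robot covers an \emph{even} number of $I$-incident edge slots, hence at most $2\lfloor B/2\rfloor$ of them, so even the size-$2$ pair phase cannot push any robot above $B$), whereas you settle for a Graham-style bound of $B+2$.

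That weaker bound creates a genuine, though small, arithmetic gap: your own chain gives $B+2+(|\vc'|-1)+|\vc'|=B+2|\vc'|+1$, and since $|\vc'|$ can be as large as $2|\vc|$, this is $B+4|\vc|+1$, not the claimed $B+4|\vc|$. You need to recover one unit somewhere, and any of the following closes it: (a) sharpen your own Graham argument --- from $T\geq k(L_{\max}-p_j)+p_j$ you in fact get $L_{\max}\leq T/k+p_j(1-1/k)\leq B+2-2/k$, which by integrality is $L_{\max}\leq B+1$ for $k\geq2$ (and $L_{\max}\leq B$ trivially for $k=1$); (b) observe that Algorithm~\ref{alg:MakeVCInAEvenDeg} runs its loop only while $|V(T')|\geq2$, so it adds at most $|\vc'|-1$ edges, not $|\vc'|$; or (c) prove the paper's stronger claim $\max_i|\widehat{E}_i|\leq B$ at Line~\ref{alg1:L24} via the per-robot parity observation above, after which the final bound $B+2|\vc'|\leq B+4|\vc|$ follows as in the paper. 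With any one of these repairs your proof is complete and matches the stated constant.
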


Towards the proof of Lemma~\ref{lem:algApp1}, first, we prove that Algorithm~\ref{alg:AppAlg} returns a solution for \cg with $k$ agents:

\begin{lemma}\label{lem:algApp12}
	Let $G$ be a connected graph, let $\vi\in V(G)$, let $k\in \mathbb{N}$ and let $\mathsf{VC}$ be a vertex cover of $G$. Then, Algorithm \ref{alg:AppAlg} with the input $(G,k,\vi,\vc)$ returns a solution $\{\mathsf{RC}_i\}_{i=1}^k$ for \cg with $k$ agents. 
\end{lemma}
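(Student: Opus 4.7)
The plan is to verify that the multisets $\widehat{E}_1,\ldots,\widehat{E}_k$ constructed by Lines~\ref{alg1:L2}--\ref{alg1:L31} satisfy Conditions~\ref{obs:equivsol1}--\ref{obs:equivsol3} of Lemma~\ref{obs:equivsol}, and then use the standard fact that every connected multigraph in which all vertices have even degree admits an Eulerian cycle through any prescribed vertex; by Observation~\ref{obs:EuiIsRob} such a cycle starting at $\vi$ is a $\vi$-robot cycle, and jointly the $k$ cycles cover $E(G)$, giving a solution.

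First, I would verify Condition~\ref{obs:equivsol3} ($E(G)\subseteq \widehat{E}_1\cup\cdots\cup\widehat{E}_k$). Every edge with at least one endpoint in $I=V(G)\setminus\vc'$ is initially placed in $\widehat{E}_\mathsf{IND}$ (Line~\ref{alg1:L5}) and is then absorbed into some $\widehat{E}_i$ by the loop in Lines~\ref{alg1:L10}--\ref{alg1:L20}, which terminates exactly when $\widehat{E}_\mathsf{IND}$ becomes empty. Every edge with both endpoints in $\vc'$ is assigned in Lines~\ref{alg1:L211}--\ref{alg1:L24}. The remaining additions (the spanning tree $T$ in Line~\ref{alg1:L27} and the extra edges from $\mathsf{MakeVCEvenDeg}$) are also edges of $G$, so each $\widehat{E}_i$ is a multiset with elements from $E(G)$.

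Next, I would verify Conditions~\ref{obs:equivsol1} and~\ref{obs:equivsol2}. For Condition~\ref{obs:equivsol1}, since $\vi\in\vc'$ by Line~\ref{alg1:L2} and the spanning tree $T$ of the connected graph $G[\vc']$ is inserted into every $\widehat{E}_i$, we have $\vi\in V(\gr(\widehat{E}_i))$ (the degenerate case $|\vc'|=1$ forces $\widehat{E}_i=\emptyset$ to be handled by the trivial cycle of length $0$). For connectivity of $\gr(\widehat{E}_i)$, the tree $T\subseteq\widehat{E}_i$ spans $\vc'$ in one component, and any additional vertex of $\gr(\widehat{E}_i)$ must lie in $I$ and thus be adjacent via some edge in $\widehat{E}_i$ to a vertex of $\vc'$. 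For even degree: vertices of $\vc'$ are handled by Lemma~\ref{lem:MakeEv} via the call to $\mathsf{MakeVCEvenDeg}$ in Line~\ref{alg1:L30}; vertices of $I$ receive edges only in the inner while-loop at Line~\ref{alg1:L144}, which adds edges two at a time incident to the same $u\in I$, and no later step touches any edge incident to $I$.

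The main obstacle in the argument is justifying that the pairing step in Lines~\ref{alg1:L144}--\ref{alg1:L20} always succeeds in extracting two edges $\{u,v\},\{u,v'\}$ from $\widehat{E}_\mathsf{IND}$ whenever at least one edge incident to $u$ remains. For this I would argue that after Lines~\ref{alg1:L4}--\ref{alg1:L9}, the multiplicity-degree of every $u\in I$ in $\widehat{E}_\mathsf{IND}$ is even, since we duplicate exactly one edge incident to each $u$ of odd degree, and this parity is preserved throughout the loop because edges incident to $u$ are removed in pairs. Once Conditions~\ref{obs:equivsol1}--\ref{obs:equivsol3} of Lemma~\ref{obs:equivsol} are established, the Eulerian cycle $\mathsf{RC}_i$ computed at Line~\ref{alg1:L335}, started at $\vi$, yields the desired $\vi$-robot cycle, and hence $\{\mathsf{RC}_i\}_{i=1}^k$ is a solution for \cg with $k$ agents.
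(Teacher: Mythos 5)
Your proposal is correct and follows essentially the same route as the paper: establish that each $\widehat{E}_i$ satisfies the even-degree property (via Lemma~\ref{lem:MakeEv}), connectivity and membership of $\vi$ (via the spanning tree of $G[\vc']$ together with the vertex-cover property), and edge coverage, then extract an Eulerian cycle and invoke Observation~\ref{obs:EuiIsRob}. You merely spell out a few details the paper leaves implicit (the parity invariant justifying the pairing loop and the degenerate case $|\vc'|=1$), which does not change the argument.
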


\begin{proof}
We prove that $\{\mathsf{RC}_i\}_{i=1}^k$ is a solution for \cg with $k$ agents for the instance $(G,k,\vi)$. We begin by showing that for every $1\leq i\leq k$, $\mathsf{RC}_i$ is a robot cycle. From Lemma \ref{lem:MakeEv}, we get that after Line~\ref{alg1:L31} in Algorithm~\ref{alg:AppAlg}, each $u\in V(\mathsf{Graph}(\widehat{E}_i))$ has even degree. In addition, $\widehat{E}_i$ contains the edge set of a spanning tree of $G[\vc']$, therefore $\mathsf{Graph}(\widehat{E}_i)$ is connected. Thus, there exists an Eulerian cycle in $\mathsf{Graph}(\widehat{E}_i)$, and so $\mathsf{RC}_i$, constructed in Line~\ref{alg1:L335}, is well defined. Moreover, observe that $\vi\in V(\gr(\widehat{E}_i))$. So, by Observation~\ref{obs:EuiIsRob}, $\mathsf{RC}_i$ is a robot cycle in $G$. Now, since every edge belongs to at least one $\widehat{E}_i$, it holds that $E(G)\subseteq \widehat{E}_1\cup \ldots \cup\widehat{E}_k$, thus $E(\mathsf{RC}_1)\cup E(\mathsf{RC}_2)\cup,\dots,\cup E(\mathsf{RC}_k)=E(G)$. So ,$\{\mathsf{RC}_i\}_{i=1}^k$ is a solution for \cg with $k$ agents for the instance $(G,k,\vi)$.
\end{proof}

Now, we prove the correctness of Lemma~\ref{lem:algApp1}.

\begin{proof}
In Lemma~\ref{lem:algApp12} we proved that Algorithm \ref{alg:AppAlg} with the input $(G,k,\vi,\vc)$ returns a solution $\{\mathsf{RC}_i\}_{i=1}^k$ for \cg with $k$ agents. 

Now, let $\{\mathsf{RC}'_i\}_{i=1}^k$ be a solution for the instance $(G,k,\vi)$ with minimum $\mathsf{Val}(\{\mathsf{RC}'_i\}_{i=1}^k)$. Let $\indd=V(G)\setminus \vc'$. By Observation~\ref{obs:robIsEu}, for every $1\leq i\leq k$, $\mathsf{RC}'_i$ is an Eulerian cycle in the robot cycle-graph of $\mathsf{RC}'_i$. Therefore, each vertex in the robot cycle-graph of $\mathsf{RC}'_i$ has even degree. Notice that $\widehat{E}_\mathsf{IND}$, defined in Lines~\ref{alg1:L5}--\ref{alg1:L9} in Algorithm~\ref{alg:AppAlg}, is a multiset of minimum size such that (i) $\{\{u,v\}\in E(G)~|~u\in \indd \}\subseteq \widehat{E}_\mathsf{IND}$ and (ii) each $u\in \indd$ has even degree in $\gr(\widehat{E}_\mathsf{IND})$. So, the total number of edges (with repetition) with an endpoint in $\indd$ in $\mathsf{RC}'_1$,\ldots,$\mathsf{RC}'_k$ is greater or equal to $|\widehat{E}_\mathsf{IND}|$. Furthermore, in Lines~\ref{alg1:L211}--\ref{alg1:L24} in Algorithm~\ref{alg:AppAlg}, every edge with both endpoints in $\vc'$ is added to exactly one $\widehat{E}_i$. In addition, observe that we allocate, in each iteration of the loops in Lines~\ref{alg1:L144} and~\ref{alg1:L211}, the edges to a multiset with minimum elements.
Therefore, we get that for the multisets $\widehat{E}_i$, for every $1\leq i\leq k$, defined up until Line~\ref{alg1:L24}, it follows that $\mathsf{max}\{|\widehat{E}_1|,|\widehat{E}_2|,\ldots, |\widehat{E}_k|\}\leq\mathsf{max}\{|E(\mathsf{RC}'_1)|,|E(\mathsf{RC}'_2)|,\ldots, |E(\mathsf{RC}'_k)|\}$. 
Now, in Line~\ref{alg1:L27}, we add the edges of a spanning tree of $G[\vc']$ to $\widehat{E}_i$, for every $1\leq i\leq k$, so we add $|\vc'|$ edges. In Line~\ref{alg1:L30}, by Lemma~\ref{lem:MakeEv}, we add at most $|\vc'|$ additional edges to $\widehat{E}_i$, for every $1\leq i\leq k$. Recall that $\vc'$ is obtained from $\vc$ by adding at most $|\vc|-1$ vertices to make $G[\vc']$ connected, and by adding $\vi$. So, $|\vc'|\leq 2|\vc|$. Overall, we get that $\mathsf{Val}(\{\mathsf{RC}_i\}_{i=1}^k)\leq \mathsf{max}\{|E(\mathsf{RC}'_1)|,|E(\mathsf{RC}'_2)|,\ldots, |E(\mathsf{RC}'_k)|\}+2|\vc'|=B+2|\vc'|\leq B+4|\vc|$. This ends the proof.     
\end{proof}

\begin{lemma}\label{lem:algApp2}
	Let $G$ be a connected graph, let $\vi\in V(G)$, let $k\in \mathbb{N}$ and let $\mathsf{VC}$ be a vertex cover of $G$. Then, Algorithm \ref{alg:AppAlg} with the input $(G,k,\vi,\vc)$ runs in time $\OO((|V(G)|+|E(G)|)\cdot k)$.
\end{lemma}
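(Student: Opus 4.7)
The plan is to walk through Algorithm~\ref{alg:AppAlg} block by block, bound the cost of each block, and sum. The key quantitative facts I will use are: (i) $|\vc'|\leq 2|\vc|\leq 2|V(G)|$, (ii) $|\widehat{E}_\mathsf{IND}|\leq 2|E(G)|$ at the start of the partitioning loop, since the multiplicity of each edge in $G$ is at most doubled in Lines~\ref{alg1:L4}--\ref{alg1:L9}, and (iii) at the end of Line~\ref{alg1:L31}, each $|\widehat{E}_i|\leq |E(G)|+\OO(|V(G)|)$, as was essentially shown in the course of proving Lemma~\ref{lem:algApp1}.

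First I would handle the preprocessing in Lines~\ref{alg1:L2}--\ref{alg1:L9}. Computing the connected components of $G[\vc]$ and merging them via the independent-set vertices can be implemented by a single traversal in $\OO(|V(G)|+|E(G)|)$ time (e.g., BFS plus a union-find over the components of $G[\vc]$, with each $v\in V(G)\setminus\vc$ inspected once). Line~\ref{alg1:L5} constructs $\widehat{E}_\mathsf{IND}$ in $\OO(|E(G)|)$ time, and Lines~\ref{alg1:L6}--\ref{alg1:L9} iterate over $I$ once, adding at most one duplicated edge per odd-degree vertex, which is $\OO(|V(G)|+|E(G)|)$. So this block is $\OO(|V(G)|+|E(G)|)$.

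Next I would analyze the partition loops in Lines~\ref{alg1:L10}--\ref{alg1:L24}. The outer loop processes each edge of $\widehat{E}_\mathsf{IND}$ (of size $\OO(|E(G)|)$) and then each edge with both endpoints in $\vc'$, also $\OO(|E(G)|)$ in total. The nontrivial step is identifying an index $i$ with minimum $|\widehat{E}_i|$ in Lines~\ref{alg1:L16} and~\ref{alg1:L21}. I would maintain the sizes $(|\widehat{E}_1|,\ldots,|\widehat{E}_k|)$ and, for a simple bound, find the minimum by a linear scan in $\OO(k)$ per edge insertion; a standard min-heap would give $\OO(\log k)$ but $\OO(k)$ already suffices. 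Summing over $\OO(|E(G)|)$ edge placements gives $\OO(|E(G)|\cdot k)$. Insertions and deletions into $\widehat{E}_i$ and $\widehat{E}_\mathsf{IND}$ are $\OO(1)$ amortized using linked-list representations with backpointers, so the block is $\OO((|V(G)|+|E(G)|)\cdot k)$.

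Then I would bound Lines~\ref{alg1:L25}--\ref{alg1:L31}: a spanning tree $T$ of $G[\vc']$ is computed once in $\OO(|V(G)|+|E(G)|)$; copying $E(T)$ into each $\widehat{E}_i$ costs $\OO(|\vc'|\cdot k)=\OO(|V(G)|\cdot k)$; and by Lemma~\ref{lem:MakeEv} each call to $\mathsf{MakeVCEvenDeg}$ runs in $\OO(|\vc'|)=\OO(|V(G)|)$ time, giving $\OO(|V(G)|\cdot k)$ total. Finally, for Lines~\ref{alg1:L32}--\ref{alg1:L33}, since each $|\widehat{E}_i|\leq |E(G)|+\OO(|V(G)|)$ and $\gr(\widehat{E}_i)$ is connected with all even degrees by Lemma~\ref{lem:MakeEv}, an Eulerian cycle in $\gr(\widehat{E}_i)$ can be computed in time linear in $|V(\gr(\widehat{E}_i))|+|\widehat{E}_i|$ via Hierholzer's algorithm, i.e., $\OO(|V(G)|+|E(G)|)$ per robot, for a total of $\OO((|V(G)|+|E(G)|)\cdot k)$. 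Adding all blocks yields the claimed $\OO((|V(G)|+|E(G)|)\cdot k)$ bound. I expect the only subtle step to be arguing that the priority selection and multiset updates in the partitioning loop can be done within the stated budget; everything else reduces to standard linear-time graph primitives applied once per robot.
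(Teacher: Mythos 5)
Your proposal is correct, and its overall structure (a block-by-block accounting of the algorithm, with Lemma~\ref{lem:MakeEv} for the calls to $\mathsf{MakeVCEvenDeg}$ and a linear-time Eulerian-cycle routine per robot) matches the paper's proof. The one place where you genuinely diverge is the selection of an index $i$ with minimum $|\widehat{E}_i|$ in Lines~\ref{alg1:L16} and~\ref{alg1:L21}: you implement it by a linear scan over the $k$ sizes, paying $\OO(k)$ per edge placement and hence $\OO(|E(G)|\cdot k)$ for the whole partition phase, and you correctly observe this still fits inside the claimed $\OO((|V(G)|+|E(G)|)\cdot k)$ bound because the Eulerian-cycle phase already costs that much. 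The paper instead avoids any scan: it argues that during the independent-set phase the sizes stay balanced under a plain round-robin assignment ($i=j \bmod k +1$), and that at Line~\ref{alg1:L20} the sizes form two blocks differing by exactly $2$, so the subsequent single-edge placements can follow a fixed zigzag order ($t{+}1,\ldots,k$, then $k,\ldots,1$, and so on) that provably always hits a minimum-size multiset, giving $\OO(1)$ per placement and $\OO(|E(G)|)$ for the phase. Your route is simpler and, since you literally pick the minimum, it trivially preserves the property that Lemma~\ref{lem:algApp1} relies on, whereas the paper has to justify that its schedule realizes the minimum; the paper's route buys a tighter bound for the partitioning block, which would matter only if one wanted a runtime in which the $k$-factor multiplies something smaller than $|V(G)|+|E(G)|$. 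Either way, the stated bound of the lemma is met.
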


\begin{proof}
We analyze the runtime of Algorithm~\ref{alg:AppAlg} as follows. Making $\vc'$ connected in Line \ref{alg1:L3}, takes $\OO(|V(G)|+|E(G)|)$ time. In Line~\ref{alg1:L16}, to find $1\leq i\leq k$ such that $|\widehat{E}_i|$ is minimum, we do as follows. At the $j$-th iteration $i=j (mod~k)+1$. Thus, there are at most $\OO|E(G)|$ iterations, and every iteration takes $\OO(1)$ time. We mark the last index of multiset to get pairs of edges in Line~\ref{alg1:L16} by $t$. Observe that by the end of Line~\ref{alg1:L20}, $|\widehat{E}_1|=\cdots=|\widehat{E}_t|=|\widehat{E}_{t+1}|+2=\cdots =|\widehat{E}_{k}|+2$. Then, in Line~\ref{alg1:L21} to find $1\leq i\leq k$ such that $|\widehat{E}_i|$ is minimum, we first take $i=t +1(mod~k)$ until $k$. Then, we add the next edge to sets in the reverse order, that is, from $k$ to $1$ and so on. Observe that the chosen $i$ is indeed such that $|\widehat{E}_i|$ is minimum. Therefore, there are at most $|E(G)|$ iterations in Line~\ref{alg1:L20}, each takes $\OO(1)$ time. Finding a spanning tree of $G[\vc']$ in Line~\ref{alg1:L25} takes $\OO(|V(G)|+|E(G)|)$ time~\cite{cormen2022introduction}. By Lemma~\ref{lem:MakeEv}, each iteration in Line~\ref{alg1:L30} takes $\OO(|\vc|)$ time, so in total, all the $k$ iterations take $\OO(k\cdot |\vc|)$ time. Finding an Eulerian cycle in Line~\ref{alg1:L335} in each $\gr(\widehat{E}_i)$ takes $\OO(|E(G)|)$ time~\cite{fleischner1991x}, so in total, the $k$ iterations take $\OO((|V(G)|+|E(G)|)\cdot k)$ time. Therefore, Algorithm~\ref{alg:AppAlg} runs in time $\OO((|V(G)|+|E(G)|)\cdot k)$. This completes the proof. 
\end{proof}

The proofs of Lemmas~\ref{lem:algApp1} and \ref{lem:algApp2} conclude the correctness of Lemma~\ref{lem:algApp}.

Note that there exists a $2$-approximation for computing a minimal vertex cover of an input graph~\cite{papadimitriou1998combinatorial}. This result together with Lemma~\ref{lem:algApp} conclude the correctness of Theorem~\ref{the:approx}.


\section{\WO-Hardness for \cg}\label{sec:Hardk}
In this section, we aim to prove the following theorem: 

\hardTheorem*
	
We prove Theorem~\ref{th:w1hard} by showing a reduction from {\sc Exact Bin Packing} (see Definition~\ref{def:ExBinPack}).

First, we show that unary {\sc Exact Bin Packing} is \WOH with respect to $k$. It is known that unary {\sc Bin Packing} is \WOH with respect to $k$ \cite{DBLP:journals/jcss/JansenKMS13}. So, we give a reduction from {\sc Bin Packing} to {\sc Exact Bin Packing} in order to prove the following lemma:

\begin{lemma}\label{lem:w1h}
	Unary {\sc Exact Bin Packing} is \WOH with respect to $k$.
	\end{lemma}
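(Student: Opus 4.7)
The plan is to give a parameterized reduction from unary \textsc{Bin Packing}, which is known to be \WOH{} with respect to $k$~\cite{DBLP:journals/jcss/JansenKMS13}, to unary \textsc{Exact Bin Packing}, preserving the parameter $k$ exactly. The only difference between the two problems is that the bin sums must be exactly $B$ rather than at most $B$ (together with the precondition $\sum_i s(i) = B \cdot k$), so the natural approach is a padding argument that introduces a controlled number of unit-size dummy items to convert slack into exactness.

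Given an input $(I,s,B,k)$ of unary \textsc{Bin Packing}, set $S = \sum_{i \in I} s(i)$. If $S > B \cdot k$, the original instance is trivially a \noinstance, and we output any fixed \noinstance{} of \textsc{Exact Bin Packing} (for instance, $I = \{i_1,i_2\}$ with $s(i_1)=3$, $s(i_2)=1$, $B=2$, $k=2$). Otherwise, set $D = B \cdot k - S \geq 0$ and construct $(I',s',B,k)$ with $I' = I \cup \{d_1,\ldots,d_D\}$, where $s'(i) = s(i)$ for $i \in I$ and $s'(d_j) = 1$ for every added dummy. By construction $\sum_{i \in I'} s'(i) = B \cdot k$, so the precondition of \textsc{Exact Bin Packing} holds. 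The parameter $k$ is unchanged, and since $D \leq B \cdot k$ and both $B$ and $k$ are part of the unary input, the size of the produced instance (in unary) is polynomial in the size of the input, so the reduction runs in polynomial time.

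For correctness in the forward direction, if $I_1,\ldots,I_k$ is a valid packing of the original instance, then the slacks $\delta_j = B - \sum_{i \in I_j} s(i)$ are non-negative integers summing to $B \cdot k - S = D$, so the $D$ unit-size dummies can be distributed so that bin $j$ receives exactly $\delta_j$ of them, yielding an exact packing. Conversely, any exact packing $I'_1,\ldots,I'_k$ of the constructed instance restricts to $I_j = I'_j \cap I$, and $\sum_{i \in I_j} s(i) \leq \sum_{i \in I'_j} s'(i) = B$, which is a valid packing of the original. There is no real obstacle in the argument; the only point to verify carefully is that all bookkeeping (size of the new instance, preservation of $k$, unary encoding) behaves as needed, which is immediate from $D \leq B \cdot k$. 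Combined with the \WOH{}-ness of unary \textsc{Bin Packing} parameterized by $k$, this yields Lemma~\ref{lem:w1h}.
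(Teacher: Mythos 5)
Your proposal is correct and follows essentially the same route as the paper: a parameter-preserving padding reduction from unary \textsc{Bin Packing} that adds $B\cdot k-\sum_{i\in I}s(i)$ unit-size dummy items, with the same forward (distribute dummies to fill the slack) and reverse (discard dummies) arguments. The only difference is that you explicitly handle the trivial case $\sum_{i\in I}s(i)>B\cdot k$, which the paper leaves implicit; otherwise the two proofs coincide.
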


\begin{proof}
Let $(I,s,B,k)$ be an instance of {\sc Bin Packing} problem. Let $t=B\cdot k-\sum_{i\in I}s(i)$ and let $s':I\cup\{i_1,\ldots,i_t\}\rightarrow \mathbb{N}$ be a function defined as follows. For every $i\in I$, $s'(i)=s(i)$, and for every $i_\ell\in \{i_1,\ldots,i_t\}$, $s'(i_\ell)=1$. Observe that $(I\cup\{i_1,\ldots,i_t\},s',B,k)$ is an instance of {\sc Exact Bin Packing}. We show that $(I,s,B,k)$ is a yes-instance of {\sc Bin Packing} if and only if $(I\cup\{i_1,\ldots,i_t\},s',B,k)$ is a yes-instance of {\sc Exact Bin Packing}. 

Assume that $(I,s,B,k)$ is a yes-instance of {\sc Bin Packing}. Let $I_1,\ldots,I_k$ be a partition of $I$ into disjoint sets such that for every $1\leq j\leq k$, $\sum_{i\in I_j} s(i)\leq B$. For every $1\leq j\leq k$, let $t_j=B-\sum_{i\in I_j} s(i)$.  Let $I_1',\ldots,I_k'$ be a partition of $\{i_1,\ldots,i_t\}$ into $k$ disjoint sets such that for every $1\leq j\leq k$, $|I_j'|=t_j$. Observe that there exists such a partition since $\sum_{1\leq j\leq k}t_j=t$. Clearly, $I_1\cup I_1',\ldots,I_k\cup I_k'$ is a partition of $I\cup\{i_1,\ldots,i_t\}$ into disjoint sets such that for every $1\leq j\leq k$, $\sum_{i\in I_j\cup  I'_j} s(i)=B$. Therefore, $(I\cup\{i_1,\ldots,i_t\},s',B,k)$ is a yes-instance of {\sc Exact Bin Packing}. 

Now, assume that $(I\cup\{i_1,\ldots,i_t\},s',B,k)$ is a yes-instance of {\sc Exact Bin Packing}.  Let $I_1,\ldots,I_k$ be a partition of $I\cup\{i_1,\ldots,i_t\}$ into disjoint sets such that for every $1\leq j\leq k$, $\sum_{i\in I_j} s(i)=B$. Observe that $I_1\setminus \{i_1,\ldots,i_t\} ,\ldots,I_k\setminus \{i_1,\ldots,i_t\}$ is a partition of $I$ into disjoint sets such that for every $1\leq j\leq k$, $\sum_{i\in I_j} s(i)\leq B$. Therefore, $(I,s,B,k)$ is a yes-instance of {\sc Bin Packing}.

Clearly, the reduction works in polynomial time when the input is in unary. Thus, since unary {\sc Bin Packing} is \WOH with respect to $k$ \cite{DBLP:journals/jcss/JansenKMS13}, unary {\sc Exact Bin Packing} is \WOH with respect to $k$.
\end{proof}

\subsection{Reduction From {\sc Exact Bin Packing} to \cg}
Given an instance $(I,s,B,k)$ of {\sc Exact Bin Packing} problem, denote by $\mathsf{BinToRob}(I,s,B,k)$ the instance of \cg defined as follows. First, we construct the graph $T$ as follows. For each $i\in I$ we create a star with $s(i)-1$ leaves. We connect each such star with an edge to a vertex $r$. Formally, $V(T)=\{v^{i},v^{i}_{1}\ldots,v^i_{s(i)-1}~|~i\in I\}\cup \{r\}$ and $E(T)=\{\{v^i,v^i_j\}~|~i\in I, 1\leq j\leq s(i)-1\}\cup \{\{r,v_i\}~|~i\in I\}$. Now, we define $\mathsf{BinToRob}(I,s,B,k)=(T,r,k,2B)$. See Figure~\ref{fig:hardness} for an example. Next, we prove the correctness of the reduction:

\begin{figure}
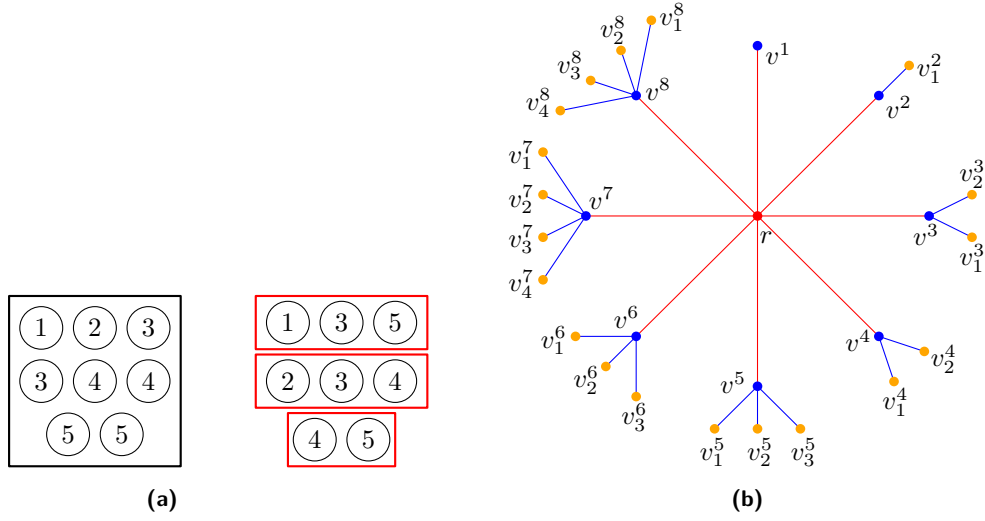

	\centering
	\begin{subfigure}[t]{.3\textwidth}
		\centering
		\includegraphics[page=1]{figures/robotExplore}
		\caption{}
		\label{}
	\end{subfigure}\hfill
	\begin{subfigure}[t]{.6\textwidth}
		\centering
		\includegraphics[page=2]{figures/robotExplore}
		\caption{}
		\label{}
	\end{subfigure}
	\caption{An illustration of a {\sc Exact Bin Packing} instance, a solution (in sub-figure (a)) and the equivalent instance of \cg constructed by the $\mathsf{BinToRob}$ function (in sub-figure (b)).}
	\label{fig:hardness}
\end{figure}

\begin{lemma}\label{lem:hardness}
Let $(I,s,B,k)$ be an instance of {\sc Exact Bin Packing}. Then, $(I,s,B,k)$ is a yes-instance if and only if $\mathsf{BinToRob}(I,s,B,k)$ is a yes-instance of \cg.
\end{lemma}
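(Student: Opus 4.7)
The plan is to prove both directions by exploiting the tree structure of $T$ together with the Eulerian characterization of solutions from Lemma~\ref{obs:equivsol}. Two structural observations drive the proof: for every item $i$, the edge $\{r,v^i\}$ is a bridge, the only connection between $r$ and the star centered at $v^i$; and each leaf $v^i_t$ has only the edge $\{v^i,v^i_t\}$ incident to it. By Lemma~\ref{obs:equivsol}, for any candidate solution $\{\widehat{E}_j\}_{j=1}^k$, the multigraph $\gr(\widehat{E}_j)$ is connected, contains $r$, and has all even degrees. Together these imply: (a) every used leaf edge $\{v^i,v^i_t\}$ has multiplicity at least $2$ in $\widehat{E}_j$ (so that $v^i_t$ has even degree), and (b) whenever $\widehat{E}_j$ contains any edge in the star of $v^i$, the bridge $\{r,v^i\}$ appears at least twice in $\widehat{E}_j$ (connectivity with $r$, combined with even degree at $v^i$).

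For the forward direction, given a partition $I_1,\ldots,I_k$ of $I$ with $\sum_{i\in I_j}s(i)=B$, I would build $\widehat{E}_j$ by including, for each $i\in I_j$, the edge $\{r,v^i\}$ twice and each leaf edge $\{v^i,v^i_t\}$ twice. Clearly $|\widehat{E}_j|=\sum_{i\in I_j}2s(i)=2B$; $\gr(\widehat{E}_j)$ is connected, contains $r$, and has all even degrees; and every edge of $T$ is covered since $\bigcup_j I_j=I$. By Lemma~\ref{obs:equivsol}, this yields a solution of value $2B$ for $\mathsf{BinToRob}(I,s,B,k)$.

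For the reverse direction, suppose $\{\widehat{E}_j\}_{j=1}^k$ is a solution as in Lemma~\ref{obs:equivsol} with $|\widehat{E}_j|\le 2B$ for every $j$. Set $I_j=\{i\in I: \{r,v^i\}\in \widehat{E}_j\}$ for each $j$; observation (b) implies that any leaf edge of star $i$ used by robot $j$ forces $i\in I_j$. Summing the lower bounds from (a) and (b) across all robots and items,
\[
\sum_{j=1}^k|\widehat{E}_j|\ \ge\ \sum_{i\in I}\bigl(2\,|\{j:i\in I_j\}|+2(s(i)-1)\bigr)\ \ge\ 2\sum_{i\in I}s(i)\ =\ 2Bk,
\]
where the last inequality uses that $\{r,v^i\}$ must be covered for every $i$. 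Combined with $\sum_j|\widehat{E}_j|\le 2Bk$, all inequalities become tight: every item lies in exactly one $I_j$, so $\{I_j\}_{j=1}^k$ partitions $I$; moreover $|\widehat{E}_j|=\sum_{i\in I_j}2s(i)=2B$, giving $\sum_{i\in I_j}s(i)=B$ as required.

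The main obstacle is establishing observation (b) cleanly: arguing that any robot which touches the star of $v^i$ must use the bridge $\{r,v^i\}$ at least twice. This combines the connectivity of $\gr(\widehat{E}_j)$ (so the bridge must be present for $v^i$ to be reachable from $r$) with the even-degree condition at $v^i$ (the leaf-edge multiplicities sum to an even number, forcing the bridge multiplicity to also be even and hence at least $2$). Once this observation is in place, the tight counting argument outlined above concludes the proof.
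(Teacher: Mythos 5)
Your proposal is correct and follows essentially the same route as the paper's proof. Both directions match: the forward construction doubles every edge of the stars assigned to each bin, and the reverse direction hinges on the two facts that even degree at leaves forces every used leaf edge to appear with even (hence at least $2$) multiplicity, and connectivity plus even degree at $v^i$ forces the bridge $\{r,v^i\}$ to appear with multiplicity at least $2$ in any $\widehat{E}_j$ touching star $i$; the paper establishes exactly these via its two cases and then tightens $\sum_j |\widehat{E}_j| \ge 2Bk$ against the budget bound $\le 2Bk$, just as you do. The only cosmetic difference is that you define $I_j$ directly by bridge membership and sum lower bounds item by item, whereas the paper first concludes disjointness of the $\widehat{E}_j$'s and then reads off the partition from vertex membership in $\gr(\widehat{E}_j)$; these are equivalent reformulations of the same counting argument.
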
 

\begin{proof}
First, assume that $(I,s,B,k)$ is a yes-instance. Let $I_1,\ldots,I_k$ be a partition of $I$ into disjoint sets such that for every $1\leq j\leq k$, $\sum_{i\in I_j} s(i)=B$. We prove that $\mathsf{BinToRob}(I,s,B,k)=(T,r,k,2B)$ is a yes-instance of \cg, by showing that there exist $k$ multisets $\widehat{E}_1,\ldots,\widehat{E}_k$ such that the conditions of Lemma~\ref{obs:equivsol} are satisfied. For every $1\leq j\leq k$, let $\widehat{E}_j=\{\{v^i,v^i_t\},\{v^i,v^i_t\}~|~$ $i\in I_j, 1\leq t\leq s(i)-1\}\cup \{\{v^i,r\},\{v^i,r\}\}$. Clearly, $r\in V(\gr(\widehat{E}_j))$, $\gr(\widehat{E}_j)$ is connected, and every vertex in $\gr(\widehat{E}_j)$ has even degree. Therefore, Conditions~\ref{obs:equivsol1} and~\ref{obs:equivsol2} are satisfied. In addition, since $I=I_1\cup\ldots \cup I_k$, we have that $E\subseteq \widehat{E}_1\cup \ldots \cup\widehat{E}_k$, so Condition~\ref{obs:equivsol3} is satisfied. Now, for every $1\leq j\leq k$, $|\widehat{E}_j|=|\{\{v^i,v^i_t\},\{v^i,v^i_t\}~|~$ $i\in I_j, 1\leq t\leq s(i)-1\}\cup \{\{v^i,r\},\{v^i,r\}\}|=\sum_{i\in I_j} 2(s(i)-1)+\sum_{i\in I_k}2=2\sum_{i\in I_k}s(i)=2B$. Thus, Condition~\ref{obs:equivsol4} is satisfied. Therefore, all the conditions of Lemma~\ref{obs:equivsol} are satisfied, so $\mathsf{BinToRob}(I,s,B,k)$ is a yes-instance of \cg.

Now, we prove the reverse direction. Assume that $\mathsf{BinToRob}(I,s,B,k)=(T,r,k,2B)$ is a yes-instance of \cg. From Lemma~\ref{obs:equivsol}, there exist $k$ multisets $\widehat{E}_1,\ldots,\widehat{E}_k$ such that the conditions of Lemma~\ref{obs:equivsol} hold. Let $1\leq j\leq k$. We first show that every $\{u,v\}\in \widehat{E}_j$ appears at least twice in $\widehat{E}_j$. Let $\{u,v\}\in \widehat{E}_j$. We the following two cases:

\smallskip\noindent{\bf Case 1: $\{u,v\}=\{v^i,v^i_t\}$ for some $i\in I$ and $1\leq t\leq s(i)-1$.} From Condition~\ref{obs:equivsol2} of Lemma~\ref{obs:equivsol}, $v^i_t$ has even degree in $\gr(\widehat{E}_j)$. Since $\{v^i,v^i_t\}$ is the only edge having $v^i_t$ as an endpoint in $T$, $\{v^i,v^i_t\}$ appears an even number of times in $\widehat{E}_j$, and so it appears at least twice in $\widehat{E}_j$.

\smallskip\noindent{\bf Case 2: $\{u,v\}=\{v^i,r\}$ for some $i\in I$.} From Condition~\ref{obs:equivsol2} of Lemma~\ref{obs:equivsol}, $v^{i}$ has even degree in $\gr(\widehat{E}_j)$. From Case 1, each $\{v^i,v^i_t\}\in \widehat{E}_j$ appears an even number of times in $\widehat{E}_j$. Therefore, since $r$ is the only neighbor of $v^i$ other than $v^i_t$, $1\leq t\leq s(i)-1$, $\{v^i,r\}$ appears an even number of times, which is greater or equal to $2$, in $\widehat{E}_j$.

Now, observe that $|E(T)|=\sum_{i\in I}s(i)=B\cdot k$, and from Condition~\ref{obs:equivsol4} of Lemma~\ref{obs:equivsol}, $\sum_{1\leq j\leq k}|\widehat{E}_j|\leq 2B\cdot k$. In addition, from Condition~\ref{obs:equivsol3} of Lemma~\ref{obs:equivsol}, (1): $E(T)\subseteq \widehat{E}_1\cup \ldots \cup\widehat{E}_k$. So, since we have already proved that for every $1\leq j\leq k$, each $\{u,v\}\in \widehat{E}_j$ appears at least twice in $\widehat{E}_j$, we get that for every $1\leq j<j'\leq k$, $\widehat{E}_j\cap \widehat{E}_{j'}=\emptyset$, and $\sum_{1\leq\ell \leq k}|\widehat{E}_\ell|= 2B\cdot k$; in turn, for every $1\leq j\leq k$, $|\widehat{E}_j|= 2B$, and each $\{u,v\}\in \widehat{E}_j$ appears exactly twice in $\widehat{E}_j$. Moreover, from Conditions~\ref{obs:equivsol1} and~\ref{obs:equivsol2} of Lemma~\ref{obs:equivsol}, for every $1\leq j\leq k$, $r\in V(\gr(\widehat{E}_j))$ and $\gr(\widehat{E}_j)$ is connected. Therefore, for every $1\leq j\leq k$ and $i\in I$, if $v^i\in V(\gr(\widehat{E}_j))$ then $\{\{v^i,v^i_t\}~|~ 1\leq t\leq s(i)-1\}\cup \{r,v^i\}\subseteq \widehat{E}_j$. Thus, for every $1\leq j<j'\leq k$, (2): $V(\gr(\widehat{E}_j))\cap V(\gr(\widehat{E}_{j'}))=\{r\}$. 

Now, for every $1\leq j\leq k$, let $I_j=\{i\in I~|~v^i\in V(\gr(\widehat{E}_{j}))\}$. By (1) and (2), $I_1,\ldots,I_k$ is a partition of $I$ into disjoint sets. We show, that for every $1\leq j\leq k$, $\sum_{i\in I_j} s(i)=B$. Let $1\leq j\leq k$. Then, $\sum_{i\in I_j} s(i)= \sum_{i\in I_j}|\{\{v_i,v_{i_t}\}~|~ 1\leq t\leq s(i)-1\}\cup \{r,v_i\}|=\frac{1}{2}|\widehat{E}'_{j'}|=\frac{1}{2}\cdot2\cdot B=B$. Therefore $I_1,\ldots,I_k$ is a solution for $(I,s,B,k)$, so $(I,s,B,k)$ is a yes-instance of {\sc Exact Bin Packing} problem. This ends the proof.
\end{proof}

Clearly, the reduction works in polynomial time when the input is in unary. In addition, observe that the treedepth of the tree, obtained by the reduction, is bounded by $3$. Now, recall that, by Lemma~\ref{lem:w1h},
unary {\sc Exact Bin Packing} is \WOH with respect to $k$. Thus, we conclude from Lemma~\ref{lem:hardness} the correctness of Theorem~\ref{th:w1hard}.

\newpage

\bibliography{Refs,RefsIntro}

\begin{thebibliography}{10}

\bibitem{DBLP:journals/dam/AverbakhB96}
Igor Averbakh and Oded Berman.
\newblock A heuristic with worst-case analysis for minimax routing of two
  travelling salesmen on a tree.
\newblock {\em Discret. Appl. Math.}, 68(1-2):17--32, 1996.
\newblock \href {https://doi.org/10.1016/0166-218X(95)00054-U}
  {\path{doi:10.1016/0166-218X(95)00054-U}}.

\bibitem{DBLP:journals/dam/AverbakhB97}
Igor Averbakh and Oded Berman.
\newblock (p - 1)/(p + 1)-approximate algorithms for p-traveling salesmen
  problems on a tree with minmax objective.
\newblock {\em Discret. Appl. Math.}, 75(3):201--216, 1997.
\newblock \href {https://doi.org/10.1016/S0166-218X(97)89161-5}
  {\path{doi:10.1016/S0166-218X(97)89161-5}}.

\bibitem{bollobas1998modern}
B{\'e}la Bollob{\'a}s.
\newblock {\em Modern graph theory}, volume 184.
\newblock Springer Science \& Business Media, 1998.

\bibitem{DBLP:journals/trob/BrassCGX11}
Peter Brass, Flavio Cabrera{-}Mora, Andrea Gasparri, and Jizhong Xiao.
\newblock Multirobot tree and graph exploration.
\newblock {\em {IEEE} Trans. Robotics}, 27(4):707--717, 2011.
\newblock \href {https://doi.org/10.1109/TRO.2011.2121170}
  {\path{doi:10.1109/TRO.2011.2121170}}.

\bibitem{cormen2022introduction}
Thomas~H Cormen, Charles~E Leiserson, Ronald~L Rivest, and Clifford Stein.
\newblock {\em Introduction to algorithms}.
\newblock MIT press, 2022.

\bibitem{DBLP:journals/corr/abs-2301-13307}
Romain Cosson, Laurent Massouli{\'{e}}, and Laurent Viennot.
\newblock Breadth-first depth-next: Optimal collaborative exploration of trees
  with low diameter.
\newblock {\em CoRR}, abs/2301.13307, 2023.
\newblock \href {https://arxiv.org/abs/2301.13307} {\path{arXiv:2301.13307}},
  \href {https://doi.org/10.48550/arXiv.2301.13307}
  {\path{doi:10.48550/arXiv.2301.13307}}.

\bibitem{DBLP:journals/mst/0001DK18}
Shantanu Das, Dariusz Dereniowski, and Christina Karousatou.
\newblock Collaborative exploration of trees by energy-constrained mobile
  robots.
\newblock {\em Theory Comput. Syst.}, 62(5):1223--1240, 2018.
\newblock \href {https://doi.org/10.1007/s00224-017-9816-3}
  {\path{doi:10.1007/s00224-017-9816-3}}.

\bibitem{DBLP:journals/iandc/DereniowskiDKPU15}
Dariusz Dereniowski, Yann Disser, Adrian Kosowski, Dominik Pajak, and
  Przemyslaw Uznanski.
\newblock Fast collaborative graph exploration.
\newblock {\em Inf. Comput.}, 243:37--49, 2015.
\newblock \href {https://doi.org/10.1016/j.ic.2014.12.005}
  {\path{doi:10.1016/j.ic.2014.12.005}}.

\bibitem{DBLP:journals/tcs/DisserMNSS20}
Yann Disser, Frank Mousset, Andreas Noever, Nemanja Skoric, and Angelika
  Steger.
\newblock A general lower bound for collaborative tree exploration.
\newblock {\em Theor. Comput. Sci.}, 811:70--78, 2020.
\newblock \href {https://doi.org/10.1016/j.tcs.2018.03.006}
  {\path{doi:10.1016/j.tcs.2018.03.006}}.

\bibitem{DBLP:conf/arcs/DyniaKS06}
Miroslaw Dynia, Miroslaw Korzeniowski, and Christian Schindelhauer.
\newblock Power-aware collective tree exploration.
\newblock In Werner Grass, Bernhard Sick, and Klaus Waldschmidt, editors, {\em
  Architecture of Computing Systems - {ARCS} 2006, 19th International
  Conference, Frankfurt/Main, Germany, March 13-16, 2006, Proceedings}, volume
  3894 of {\em Lecture Notes in Computer Science}, pages 341--351. Springer,
  2006.
\newblock \href {https://doi.org/10.1007/11682127\_24}
  {\path{doi:10.1007/11682127\_24}}.

\bibitem{DBLP:conf/mfcs/DyniaKHS06}
Miroslaw Dynia, Jaroslaw Kutylowski, Friedhelm~Meyer auf~der Heide, and
  Christian Schindelhauer.
\newblock Smart robot teams exploring sparse trees.
\newblock In Rastislav Kralovic and Pawel Urzyczyn, editors, {\em Mathematical
  Foundations of Computer Science 2006, 31st International Symposium, {MFCS}
  2006, Star{\'{a}} Lesn{\'{a}}, Slovakia, August 28-September 1, 2006,
  Proceedings}, volume 4162 of {\em Lecture Notes in Computer Science}, pages
  327--338. Springer, 2006.
\newblock \href {https://doi.org/10.1007/11821069\_29}
  {\path{doi:10.1007/11821069\_29}}.

\bibitem{fleischner1991x}
Herbert Fleischner.
\newblock X. 1 algorithms for eulerian trails.
\newblock {\em Eulerian Graphs and Related Topics: Part 1 (Annals of Discrete
  Mathematics)}, 2(50):1--13, 1991.

\bibitem{DBLP:journals/networks/FraigniaudGKP06}
Pierre Fraigniaud, Leszek Gasieniec, Dariusz~R. Kowalski, and Andrzej Pelc.
\newblock Collective tree exploration.
\newblock {\em Networks}, 48(3):166--177, 2006.
\newblock \href {https://doi.org/10.1002/net.20127}
  {\path{doi:10.1002/net.20127}}.

\bibitem{DBLP:journals/combinatorica/FrankT87}
Andr{\'{a}}s Frank and {\'{E}}va Tardos.
\newblock An application of simultaneous diophantine approximation in
  combinatorial optimization.
\newblock {\em Combinatorica}, 7(1):49--65, 1987.
\newblock \href {https://doi.org/10.1007/BF02579200}
  {\path{doi:10.1007/BF02579200}}.

\bibitem{DBLP:journals/jco/HigashikawaKLT14}
Yuya Higashikawa, Naoki Katoh, Stefan Langerman, and Shin{-}ichi Tanigawa.
\newblock Online graph exploration algorithms for cycles and trees by multiple
  searchers.
\newblock {\em J. Comb. Optim.}, 28(2):480--495, 2014.
\newblock \href {https://doi.org/10.1007/s10878-012-9571-y}
  {\path{doi:10.1007/s10878-012-9571-y}}.

\bibitem{DBLP:journals/jcss/JansenKMS13}
Klaus Jansen, Stefan Kratsch, D{\'{a}}niel Marx, and Ildik{\'{o}} Schlotter.
\newblock Bin packing with fixed number of bins revisited.
\newblock {\em J. Comput. Syst. Sci.}, 79(1):39--49, 2013.
\newblock \href {https://doi.org/10.1016/j.jcss.2012.04.004}
  {\path{doi:10.1016/j.jcss.2012.04.004}}.

\bibitem{DBLP:journals/mor/Lenstra83}
Hendrik W.~Lenstra Jr.
\newblock Integer programming with a fixed number of variables.
\newblock {\em Math. Oper. Res.}, 8(4):538--548, 1983.
\newblock \href {https://doi.org/10.1287/moor.8.4.538}
  {\path{doi:10.1287/moor.8.4.538}}.

\bibitem{DBLP:journals/mor/Kannan87}
Ravi Kannan.
\newblock Minkowski's convex body theorem and integer programming.
\newblock {\em Math. Oper. Res.}, 12(3):415--440, 1987.
\newblock \href {https://doi.org/10.1287/moor.12.3.415}
  {\path{doi:10.1287/moor.12.3.415}}.

\bibitem{DBLP:journals/dam/NagamochiO04}
Hiroshi Nagamochi and Kohei Okada.
\newblock A faster 2-approximation algorithm for the minmax p-traveling
  salesmen problem on a tree.
\newblock {\em Discret. Appl. Math.}, 140(1-3):103--114, 2004.
\newblock \href {https://doi.org/10.1016/j.dam.2003.06.001}
  {\path{doi:10.1016/j.dam.2003.06.001}}.

\bibitem{DBLP:conf/sirocco/OrtolfS14}
Christian Ortolf and Christian Schindelhauer.
\newblock A recursive approach to multi-robot exploration of trees.
\newblock In Magn{\'{u}}s~M. Halld{\'{o}}rsson, editor, {\em Structural
  Information and Communication Complexity - 21st International Colloquium,
  {SIROCCO} 2014, Takayama, Japan, July 23-25, 2014. Proceedings}, volume 8576
  of {\em Lecture Notes in Computer Science}, pages 343--354. Springer, 2014.
\newblock \href {https://doi.org/10.1007/978-3-319-09620-9\_26}
  {\path{doi:10.1007/978-3-319-09620-9\_26}}.

\bibitem{papadimitriou1998combinatorial}
Christos~H Papadimitriou and Kenneth Steiglitz.
\newblock {\em Combinatorial optimization: algorithms and complexity}.
\newblock Courier Corporation, 1998.

\end{thebibliography}

\end{document}